\newtheorem{theorem}{Theorem}[section]
\newtheorem{lemma}[theorem]{Lemma}
\newtheorem{claim}[theorem]{Claim}
\newtheorem{corollary}[theorem]{Corollary}
\newtheorem{proposition}[theorem]{Proposition}
\newtheorem{observation}[theorem]{Observation}
\newtheorem{definition}[theorem]{Definition}
\newtheorem{example}[theorem]{Example}
\newcommand{\ccsp}{\textup{CCSP}}
\newcommand{\ocsp}{\textup{OCSP}}
\newcommand{\csp}{\textup{CSP}}
\newcommand{\mvm}{\textup{MVM}}
\newcommand{\nand}{\textup{NAND}}
\newcommand{\imp}{\textup{IMP}}
\newcommand{\ccspg}{$\ccsp(\Gamma)$}
\newcommand{\ocspg}{$\ocsp(\Gamma)$}
\newcommand{\pr}{\textup{ret}}
\newcommand{\dom}{\textup{dom}}
\newcommand{\B}{{\cal B}}
\newcommand{\C}{{\mathcal C}}
\newcommand{\Z}{{\cal Z}}
\newcommand{\ba}{{\mathbf a}}
\newcommand{\bb}{{\mathbf b}}
\newcommand{\bs}{{\mathbf s}}
\newcommand{\bt}{{\mathbf t}}
\def\ang#1{{\langle #1\rangle}}
\let\sse=\subseteq
\def\nat{{\mathbb N}}
\newcommand\supph{{\sf supp}}
\newcommand{\Kr}{K^*}
\begin{document}

\title{Constraint satisfaction parameterized by solution size\thanks{An extended abstract of the paper appeared in the proceedings of ICALP 2011 \cite{bulatov-marx-ccsp-icalp2011}.}} 
\author{Andrei A.\ Bulatov\thanks{School of Computing Science,
 Simon Fraser University, Burnaby, Canada,
 abulatov@cs.sfu.ca.   Research supported by NSERC Discovery grant}
\and
D\'aniel Marx\thanks{Computer and Automation Research Institute,
Hungarian Academy of Sciences (MTA SZTAKI),
Budapest, Hungary, dmarx@cs.bme.hu.  Research partially supported by the European Research Council (ERC) grant    ``PARAMTIGHT: Parameterized complexity and the search for tight   complexity results,'' reference 280152 and OTKA grant NK105645.}}
\maketitle  
  
\begin{abstract}
  In the constraint satisfaction problem (CSP) corresponding to a
  constraint language (i.e., a set of relations) $\Gamma$, the goal is
  to find an assignment of values to variables so that a given set of
  constraints specified by relations from $\Gamma$ is satisfied. The
  complexity of this problem has received substantial amount of
  attention in the past decade. In this paper, we study the
  fixed-parameter tractability of constraint satisfaction problems
  parameterized by the size of the solution in the following sense:
  one of the possible values, say 0, is ``free,'' and the number of
  variables allowed to take other, ``expensive,'' values is
  restricted. A {\em size constraint} requires that exactly $k$ variables
  take nonzero values.  We also study a more refined version of this
  restriction: a {\em global cardinality constraint} prescribes how many
  variables have to be assigned each particular value. We study the
  parameterized complexity of these types of CSPs where the parameter
  is the required number $k$ of nonzero variables. As special cases,
  we can obtain natural and well-studied parameterized problems such as
  \textsc{Independent set}, \textsc{Vertex Cover}, \textsc{$d$-Hitting
    Set}, \textsc{Biclique}, etc.

  In the case of constraint languages closed under substitution of
  constants, we give a complete characterization of the
  fixed-parameter tractable cases of CSPs with size constraints, and
  we show that all the remaining problems are W[1]-hard. For CSPs with
  cardinality constraints, we obtain a similar classification, but for
  some of the problems we are only able to show that they are
  \textsc{Biclique}-hard. The exact parameterized complexity of the
  \textsc{Biclique} problem is a notorious open problem, although it
  is believed to be W[1]-hard.
\end{abstract}

\thispagestyle{plain}
 \clearpage
 \tableofcontents
 \clearpage

\section{Introduction}
In a constraint satisfaction problem (CSP) we are given a set of
variables, and the goal is to find an assignment of the variables
subject to specified constraints. A constraint is usually
expressed as a requirement that combinations of values of a certain
(usually small) set of variables belong to a certain relation. In the
theoretical study of CSPs, one of the key research
directions has been the complexity of the CSP when there are
restrictions on the type of allowed relations
\cite{Jeavons97:closure,Bulatov05:classifying,DBLP:journals/tocl/Bulatov11,%
Bulatov06:3-element,Barto08:graphs}. 
This research direction has been started by the seminal  Schaefer's Dichotomy
Theorem~\cite{MR80d:68058}, which showed that every
Boolean CSP (i.e., CSP with 0-1 variables) restricted in this way is
either solvable in polynomial time or is NP-complete. An outstanding
open question is the so called \emph{Dichotomy conjecture} of Feder
and Vardi \cite{Feder98:monotone,Kun-Szegedy-STOC2009}, which suggests that the dichotomy
remains true for CSPs over any fixed finite domain. The significance of a
dichotomy result is that it is very likely to provide a comprehensive
understanding of the algorithmic nature of the problem. Indeed, in
order to obtain the tractability part of such a conjecture, one needs
to identify and understand all the algorithmic ideas relevant for the problem. 

Parameterized complexity \cite{MR2001b:68042,grohe-flum-param}
investigates problem complexity in finer 
details than classical complexity. Instead of expressing the running
time of an algorithm as a function of the input size $n$ only, the
running time is expressed as a function of $n$ and a well-defined
parameter $k$ of the input instance (such as the size of the solution
$k$ we are looking for). For many problems and parameters, there is a
polynomial-time algorithm for every fixed value of $k$, i.e., the
problem can be solved in time $n^{f(k)}$. In this case, it makes sense
to ask whether there is an algorithm with running time $f(k)\cdot
n^{O(1)}$, that is, whether the combinatorial explosion can be
limited to the parameter $k$ only. Problems having algorithms with
running time of this form are called {\em fixed-parameter tractable
  (FPT)}; it turns out that many well-known NP-hard problems, such as
\textsc{$k$-Vertex Cover}, \textsc{$k$-Path}, and \textsc{$k$-Disjoint
  Triangles} are FPT. On the other hand, the theory of W[1]-hardness
suggests that certain problems (e.g., \textsc{$k$-Clique},
\textsc{$k$-Dominating Set}) are unlikely to be FPT.

Investigating the fixed-parameter tractability is relevant only for
those problems that can be solved in polynomial time for every fixed
value of $k$. For example, the canonical complete problems of the
W-hierarchy are (formula or circuit) satisfiability problems where the solution
is required to contain exactly $k$ ones; clearly, such a problem is
solvable in time $n^{O(k)}$, which is polynomial for every fixed value
of $k$. This leads us naturally to the study of Boolean CSP, 
where the goal is to find a solution with exactly $k$ ones.

The first attempt to study the parameterized complexity of Boolean CSP
was made in \cite{Marx05:parametrized}. If we consider 0 as a
``cheap'' value available in abundance, while 1 is ``costly'' and of
limited supply then the natural parameter is the number of 1's in a
solution. Boolean CSP asking for a solution that assigns exactly $k$
ones is known as the \textsc{$k$-Ones} problem (see, e.g.\
\cite{DBLP:journals/tocl/CreignouSS10,MR2002k:68058}).  Clearly, the
problem is polynomial-time solvable for every fixed $k$, as we can
search through all assignments with exactly $k$ ones, but it is not at
all obvious whether it is FPT.  For example, it is possible to express
\textsc{$k$-Vertex Cover} (which is FPT) and \textsc{$k$-Independent
  Set} (which is W[1]-hard) as a Boolean CSP.  Therefore,
characterizing the parameterized complexity of \textsc{$k$-Ones}
requires understanding a class of problems that includes, among many
other things, the most basic parameterized graph problems.  
It turned out that the parameterized
complexity of the \textsc{$k$-Ones} problem depends on a combinatorial property
called {\em weak separability} in \cite{Marx05:parametrized}. Assuming that the constraints are restricted to a
finite set $\Gamma$ of Boolean relations, if  every relation in $\Gamma$
is weakly separable, then the problem is FPT; if $\Gamma$ contains
even one relation violating weak separability, then the problem is
W[1]-hard. Another
natural problem of this flavor is deciding whether there exists a
solution with at most $k$ ones; however, this problem is always FPT,
as it follows from Lemma~\ref{lem:minextension} below even in a more
general case.

There have been further parameterized complexity studies of Boolean
CSP
\cite{DBLP:conf/icalp/KratschW10,DBLP:journals/disopt/Szeider11,krokhin-marx-icalp2008,kratsch-mfcs2010-maxexact},
but CSP's with larger domains (i.e., where the variables are not
Boolean) were not studied. In most cases, we expect that results for
larger domains are much more complex than for the Boolean case, and
usually require significant new ideas (compare e.g., Schaefer's
Theorem \cite{MR80d:68058} with the 3-element version
\cite{Bulatov06:3-element}). The goal of the present paper is to
generalize the results of \cite{Marx05:parametrized} to non-Boolean
domains. First, we have to define what the proper generalization of
\textsc{$k$-Ones} is if the variables are not Boolean. One natural
generalization assumes that there is a distinguished ``cheap'' value 0
and requires that in a solution there are exactly $k$ nonzero
variables. We will call this version of the CSP a \emph{constraint
  satisfaction problem with size constraints} and denote it by
\ocsp. In another generalization of \textsc{$k$-Ones}, we have a
cardinality requirement not only on the total number of nonzero
variables, but a separate cardinality requirement for each nonzero
value restricting the number of times this value is used: A mapping
$\pi:D\setminus\{0\}\to\nat$ is given in the input, and it is required that for
each nonzero value $d$, exactly $\pi(d)$ variables are assigned value
$d$.
In the CSP and AI literature, requirements of this form are called
{\em global cardinality constraints}
\cite{Bessiere04:global,Bourdais03:hibiscus,Gomes04:cardinality,bulatov-marx-ccsp-cacm,bulatov-marx-ccsp-lmcs,DBLP:conf/ijcai/GaspersS11,DBLP:journals/constraints/SamerS11} and
have been intensively studied. We will call this problem the
\emph{constraint satisfaction problem with 
cardinality constraints} and denote it by \ccsp. In both versions, the
parameter is the number of nonzero values required, that is, $k$ for
\ocsp, and $\sum_{d\in D\setminus\{0\}}\pi(d)$ for \ccsp.
 We
investigate both 
versions; as we shall see, there are interesting and unexpected
differences between the complexity of the two variants.

A natural minor generalization of CSPs is to allow the use of
constants in the input, that is, certain variables in the input can be
fixed to constant values, or equivalently, the constant unary relation
$\{d\}$ is allowed for every element $d$ of the domain.  Yet another equivalent
way of formulating this generalization is requiring that $\Gamma$ is
closed under substitution of constants.  It is known that to classify
the complexity of the general CSP with respect to polynomial-time
solvability, it suffices to classify constraint languages closed under
substitution of constants \cite{Bulatov05:classifying}.  This
motivates our assumption that the constant relations are available for
CSPs with cardinality and size constraints.
While there is no result similar to that from
\cite{Bulatov05:classifying} for the versions of CSPs we study here
(and thus this assumption somewhat diminishes the generality of our results),
this setting is still quite general and at the same time more robust.
Lots of technicalities can be avoided with this formulation. For
example, being able to substitute constants ensures that the decision and search
problems are equivalent: by repeatedly substituting constants and
solving the resulting decision problems, we can actually find a solution.

Is weak separability the right tractability criterion in the
non-Boolean case? It is not difficult to observe that the algorithm of
\cite{Marx05:parametrized} using weak separability generalizes for non-Boolean problems. (In
fact, we give a much simpler algorithm in this paper.)  
However, it is not true that only weakly separable relations are tractable.  It
turns out that there are certain degeneracies and symmetries that allow
us to solve the problem even for some relations that are not
weakly separable.  To understand these degenerate situations, we
introduce the notion of multivalued morphisms, which is a
generalization of homomorphisms.  While the use of algebraic
techniques and homomorphisms is a standard approach for understanding
the complexity of CSPs \cite{Bulatov05:classifying}, this notion is
new and seems to be essential 
for understanding the problem in our setting.

\textbf{Results.}  
For CSP with size constraints, we prove a dichotomy result:

\begin{theorem}\label{th:mainsingleintro}
  For every finite $\Gamma$ closed under substitution of constants,
  \ocspg\ is either \textup{FPT} or 
  \textup{W[1]}-hard.
\end{theorem}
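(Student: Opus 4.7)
My plan is to prove the dichotomy by identifying an algebraic tractability criterion based on the notion of \emph{multivalued morphisms} introduced in the preamble. A multivalued morphism of $\Gamma$ will be a map $\phi:D\to 2^D\setminus\{\emptyset\}$ that respects the partition of $D$ into zero and nonzero elements (e.g., $\phi(0)=\{0\}$ and $\phi(d)\subseteq D\setminus\{0\}$ for $d\neq 0$), and that preserves every $R\in\Gamma$ in the multivalued sense that $(a_1,\dots,a_r)\in R$ together with any choice of $b_i\in\phi(a_i)$ forces $(b_1,\dots,b_r)\in R$. Such a morphism lets us freely substitute nonzero values without altering the number of nonzero variables or breaking constraints, and therefore reduces $\ocsp(\Gamma)$ on the full domain to the same problem over a smaller effective domain.

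On the algorithmic side I would proceed in two layers. The outer layer iteratively applies multivalued morphisms that strictly shrink the image of some nonzero value, reducing $\Gamma$ to an irreducible core that admits no further nontrivial morphism of this kind. The inner layer handles such an irreducible core under the hypothesis that it is \emph{weakly separable} in a suitable generalization of the Boolean definition of \cite{Marx05:parametrized}, namely that the support structure of its tuples decomposes additively with respect to $0$. For weakly separable cores I intend to give a bounded search tree algorithm much simpler than the original one: branch on some variable forced to be nonzero, try each of its $|D|-1$ nonzero values, and recurse with parameter $k-1$, using weak separability together with Lemma~\ref{lem:minextension} to certify that partial assignments can always be completed to full solutions.

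For the W[1]-hardness direction I would argue that whenever $\Gamma$ is irreducible under multivalued morphisms and fails to be weakly separable, the failure of weak separability yields tuples in some relation (definable from $\Gamma$ using substitution of constants and conjunction) whose support-wise interaction cannot be decomposed. Because no collapsing multivalued morphism is available to explain this interaction, it must encode a genuine ``edge'' between pairs of nonzero assignments, which can be exploited via a gadget reduction from \textsc{$k$-Clique} or \textsc{$k$-Independent Set}: each vertex of the input graph is represented by a small block of variables whose nonzero configuration identifies that vertex, and the undecomposable tuples enforce adjacency or non-adjacency.

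The main obstacle will be the hardness side, specifically the full case analysis showing that every $\Gamma$ which is simultaneously irreducible and non-weakly-separable does admit such a gadget. One has to catalogue the possible shapes of the ``bad'' tuples witnessing failure of weak separability in the non-Boolean setting and, for each shape, either exhibit a collapsing multivalued morphism (contradicting irreducibility) or construct the reduction gadget. A secondary difficulty, which must be resolved before any of the above can be carried out, is calibrating the definition of multivalued morphism itself: strong enough that its existence genuinely reduces the complexity of $\ocsp(\Gamma)$, yet weak enough that its absence leaves in $\Gamma$ sufficient rigidity to encode a W[1]-hard problem. Closure of $\Gamma$ under substitution of constants is used throughout, both to define morphisms on sub-instances obtained by fixing variables and to pin down specific values inside hardness gadgets.
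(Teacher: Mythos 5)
Your plan identifies the right high-level ingredients (multivalued morphisms as the source of degeneracies, weak separability as the tractability engine, gadget reductions for hardness), but two of its load-bearing steps do not work as stated. First, the proposed FPT algorithm for the weakly separable case is not FPT. In a 0-valid instance no variable is ever ``forced to be nonzero'' -- the all-zero assignment satisfies every constraint -- so there is no unsatisfied constraint to drive a bounded search tree; the difficulty is entirely in reaching size exactly $k$, and branching on an arbitrarily chosen variable gives $n^{O(k)}$ choices, not $f(k)n^{O(1)}$. The paper instead enumerates all \emph{minimal} satisfying assignments of size at most $k$ (Lemma~\ref{lem:minsolution}), proves that under weak separability every solution is a disjoint union of these (Lemma~\ref{lem:disjoint-decomp}), and then exploits the fact that each variable lies in at most $d_\Gamma(k)$ minimal assignments to restrict the search to a bounded subfamily $S^*$ (Theorem~\ref{lem:finddisjoint}). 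Second, your calibration of multivalued morphisms points the wrong way: by requiring $\phi(d)\subseteq D\setminus\{0\}$ you exclude precisely the morphisms with $0\in\phi(x)$ that define the paper's four types of values (regular, semiregular, self-producing, degenerate), and a general multivalued morphism does \emph{not} preserve the number of nonzero variables -- only a contraction (an endomorphism with $h(d)\neq 0$ for $d\neq 0$) does. Consequently ``irreducible under MVMs and weakly separable'' is not the correct dividing line. The actual criterion (Theorem~\ref{th:singlemain}) involves a closed set $D_1$ produced by a frequency-based preprocessing (Lemma~\ref{lem:frequent}), a contraction to a set $D_2$ with no proper contraction, and crucially an exception: even if $\Gamma_{|D_2}$ is not weakly separable, the problem is FPT when $\Gamma_{|D_1}$ has a weakly separable value that is degenerate or self-producing, because $k$-frequent instances then always have solutions. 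Your plan has no analogue of the frequency reduction, so it cannot express this condition.

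On the hardness side you correctly anticipate that the case analysis is the main obstacle, but the sketch omits the components that make it go through: counterexamples must first be localized to components generated by a single value (Lemma~\ref{lem:regularcounter}); difference counterexamples require a reduction from \textsc{(Multicolored) Implications}, not from \textsc{Clique}/\textsc{Independent Set}; and the MVM gadgets only yield a reduction when one can certify that the inner homomorphisms they induce are $\bt$-recoverable, which in the paper is achieved via carefully chosen bag sizes $Z^{t,D'}_{i,d}$, the absence of proper contractions of $\Gamma_{|D_2}$, and a separate argument for bad partition sets (Lemmas~\ref{lem:regularhard}--\ref{lem:regularhardset}). As written, the proposal is a reasonable research plan but not a proof, and its two concrete algorithmic claims (the search-tree algorithm and the MVM-irreducibility criterion) would need to be replaced rather than merely elaborated.
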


The precise tractability criterion (which is quite technical) is
stated in Section~\ref{sec:size}. The algorithmic part of
Theorem~\ref{th:mainsingleintro} consists of preprocessing to
eliminate degeneracies and trivial cases, followed by the use of weak
separability. In the hardness part, we take a relation $R$ having a
counterexample to weak separability, and try to use it to show that a
known W[1]-hard problem can be simulated with this relation. In the
Boolean case \cite{Marx05:parametrized}, this can be done in a fairly
simple way: by identifying coordinates and substituting 0's, we can
assume that the relation $R$ is binary, and we need to prove hardness
only for two concrete binary relations.  For larger domains, however,
this approach does not work. We cannot reduce the
counterexample to a binary relation by identifying coordinates, thus it seems that a
complex case analysis would be needed. Fortunately, our hardness proof
is more uniform than that.  We introduce gadgets that control the
values that can appear on the variables. There are certain degenerate
cases when these gadgets do not work. However, these degenerate cases
can be conveniently described using multivalued morphisms, and these
cases turn out to be exactly the cases that we can use in the
algorithmic part of the proof.

In the case of cardinality constraints, we face an interesting
problem.  Consider the binary relation $R$ containing only tuples
$(0,0)$, $(1,0)$, and $(0,2)$. Given a CSP instance with constraints
of this form, finding a solution where the number of 1's is exactly
$k$ and the number of 2's is exactly $k$ is essentially equivalent to
finding an independent set of a bipartite graph with $k$ vertices in
both classes, or equivalently, a complete bipartite graph (biclique)
with $k+k$ vertices. However, the parameterized complexity of the
\textsc{$k$-Biclique} problem is a longstanding open question (it is
conjectured to be W[1]-hard \cite{Grohe07:other-side}). Thus at this point, it is
not possible to give a dichotomy result similar to
Theorem~\ref{th:mainsingleintro} in the case of cardinality constraints,
unless assuming that \textsc{Biclique} is hard.

\begin{theorem}\label{th:mainmultiintro}
  For every finite $\Gamma$ closed under substitution of constants,
  \ccspg\ is either \textup{FPT} or 
\textsc{Biclique}-hard.
\end{theorem}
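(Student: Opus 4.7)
The plan is to mirror the structure of the proof of Theorem~\ref{th:mainsingleintro} but with a refined tractability criterion that accounts for bipartite-like cardinality patterns. First I would formulate a precise combinatorial property of $\Gamma$, phrased in the weak separability plus multivalued morphism framework already developed for the size-constraint case, that captures all relations whose \ccspg\ instances can be solved in FPT time. The criterion should strictly extend the analogue used for \ocspg, since cardinality information is strictly more informative (the solver knows how many variables take each specific nonzero value), but it must still exclude small bipartite-type relations such as $R=\{(0,0),(1,0),(0,2)\}$ that witness Biclique-hardness.

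For the algorithmic direction, the plan is to preprocess via multivalued morphisms that collapse symmetric or interchangeable values, thereby reducing every \ccspg\ instance over a tractable $\Gamma$ to one whose relations are weakly separable relative to the per-value cardinality structure. I would then apply the weak-separability based algorithm developed earlier in the paper. Cardinality constraints are threaded through the branching by maintaining, in each branch, the residual per-value targets $\pi(d)$; since $\sum_{d\ne 0}\pi(d)=k$ is bounded by the parameter, the branching remains FPT.

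For the hardness direction, I would take a relation $R\in\Gamma$ violating the tractability criterion and use the gadget constructions from the proof of Theorem~\ref{th:mainsingleintro} to extract a concrete small ``witness'' relation encoding one of two canonical obstructions. If the gadget yields a relation rich enough to simulate free placement of a third distinguishing value, I reduce from $k$-Clique or another W[1]-hard problem exactly as in the \ocsp\ case; this already gives Biclique-hardness. If instead the gadgets yield only relations of the $\{(0,0),(1,0),(0,2)\}$-type, whose variables split into two sides with distinct value demands, then the required FPT reduction from \textsc{$k$-Biclique} is natural: each side of the sought biclique corresponds to the variables forced to take one particular nonzero value, and edges are encoded by copies of the constraint relation enforcing the bipartite non-adjacency pattern.

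The main obstacle I anticipate is verifying that these two cases are exhaustive: every counterexample to the tractability criterion must, after gadget simplification, admit either a W[1]-hardness reduction or a Biclique-hardness reduction, with no remaining ``in-between'' obstruction. This will require a careful case analysis of the multivalued morphism structure of non-tractable relations and a combinatorial classification of the minimal bad subrelations extractable by gadgets; showing that the only genuinely new phenomenon introduced by cardinality constraints, as opposed to size constraints, is precisely the bipartite pattern corresponding to \textsc{Biclique} is the heart of the argument.
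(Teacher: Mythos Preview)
Your proposal is a sketch of intentions rather than a proof, and it rests on a conceptual error that would send the actual case analysis in the wrong direction. You write that the tractability criterion for \ccspg\ ``should strictly extend the analogue used for \ocspg, since cardinality information is strictly more informative.'' But by Lemma~\ref{lem:ocsp-to-ccsp}, \ocspg\ reduces to \ccspg; the cardinality constraint is an additional requirement on solutions, not a hint to the algorithm. Hence the class of $\Gamma$ for which \ccspg\ is FPT is \emph{contained in}, not an extension of, the class for which \ocspg\ is FPT. The paper's criterion (Theorem~\ref{th:multimain}) reflects this: \ccspg\ is hard as soon as some restriction $\Gamma_{|D'}$ is a core and not weakly separable, a condition that can hold even when the \ocspg\ hardness criterion of Theorem~\ref{th:singlemain} fails.

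Relatedly, your plan misses the central new notion: the \emph{core} of $\Gamma$ (the component generated by the nondegenerate values). This is what organises both sides of the dichotomy. On the algorithmic side, the paper does not simply ``preprocess via multivalued morphisms that collapse symmetric values'' and then apply the weak-separability algorithm. Instead it (i) makes the instance frequent (Lemma~\ref{lem:frequent}), (ii) solves the problem \emph{restricted to the core}, which is weakly separable by hypothesis, via Theorem~\ref{lem:finddisjoint}, and then (iii) uses a new postprocessing lemma (Lemma~\ref{lem:ubiquitous}) to extend the core solution to the degenerate values so as to meet the remaining cardinality targets. Step~(iii) has no analogue in the \ocsp\ proof and is where the core concept earns its keep.

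On the hardness side, two structural facts you do not mention are decisive. First, for \ccspg\ one can freely restrict to any $D'\subseteq \dom(\Gamma)$ by setting $\pi(d)=0$ for $d\notin D'$ (Proposition~\ref{prop:restriction}); this lets one assume $\Gamma$ is a minimal core that is not weakly separable. Second, the \ocsp\ hardness proofs repeatedly used the absence of proper contractions to force endomorphisms to be permutations; for \ccspg\ this assumption is unavailable, and the paper replaces it with a delicate cardinality-counting argument (Claim~\ref{claim:recoverable}) to obtain $\bt$-recoverability. The \textsc{Biclique}-only outcome is not merely ``the bipartite pattern'' but arises precisely when the counterexample to weak separability lives in components generated by two \emph{self-producing} values (Lemma~\ref{lem:biclique}); your proposal gives no mechanism for isolating this case, and the exhaustiveness you flag as ``the main obstacle'' is exactly what the core/value-type machinery resolves.
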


Note that in many cases, we actually prove \textup{W[1]}-hardness
in Theorem~\ref{th:mainmultiintro} and we state
\textsc{Biclique}-hardness only in very specific situations. Similarly
to Theorem~\ref{th:mainsingleintro}, the algorithmic part of
Theorem~\ref{th:mainmultiintro} exploits weak separability after a
preprocessing phase, but in this case there is a final phase that extends the
solution using certain degeneracies. The hardness proof is similar 
to Theorem~\ref{th:mainsingleintro}, however, it requires at times different
combinatorial arguments, slightly different types of degeneracies, and
reductions from different hard parameterized problems.

The parameterized complexity of \textsc{Biclique} has been a
significant open problem for a number of years and in fact it is the
most natural and most easily stated problem on graphs whose
parameterized complexity status is open. The parameterized complexity
of similar basic problems (such as \textsc{$k$-Clique},
\textsc{$k$-Path}, \textsc{$k$-Disjoint Triangles}, etc.) is well
understood, but it seems that very different techniques are required
to tackle \textsc{Biclique}. Our results give a further incentive for
the study of \textsc{Biclique}: its hardness would explain hardness in
all the cases that are not classified as FPT or W[1]-hard in this paper.
In other words, in a very general
problem family, the hardness of \textsc{Biclique} is the only question
that is not yet understood, highlighting the importance of this
problem. We believe this observation to be a significant byproduct of
our study.

\textbf{Organization.}  The paper is organized as follows. In
Section~\ref{sec:preliminaries} we introduce constraint satisfaction
problems with cardinality constraints and  argue that only 0-valid
constraint languages need to be studied, so from that point on,
we consider only cc0-languages.
Weak separability and various
degeneracies of constraint languages are introduced and studied in
Section~\ref{sec:properties}.
The main classification results are proved in Section~\ref{sec:size}
(for size constraints) and Section~\ref{sec:cardinality} (for
cardinality constraints). In both sections, we first present the
algorithmic side of the classification, then prove hardness in various
cases. After going through certain fairly degenerate situations, the
most generic (and from the point of view of proof techniques, most
interesting) proofs are found at the end of these sections.
Section~\ref{sec:examples} contains examples that demonstrate some of
the concepts introduced in the paper. We will continuously refer to
these examples throughout the paper.

\section{Preliminaries}\label{sec:preliminaries}

%
Let $D$ be a set. We assume that $D$ contains a distinguished element
$0$. Let $D^n$ denote the set of all $n$-tuples of elements from
$D$. An $n$-ary \emph{relation} on $D$ is a subset of $D^n$, and a 
\emph{constraint language} $\Gamma$ is a set of relations on $D$. In this paper,
constraint languages are assumed to be finite. We denote by
$\dom(\Gamma)$ the set of all values that appear in tuples of the
relations in $\Gamma$. In proofs throughout the paper, we use languages 
derived from the original language $\Gamma$, so this set does not have to be
equal to $D$.
Given a set $D$ and a constraint language $\Gamma$, an instance of the
\emph{constraint satisfaction problem} (CSP) is a pair $I=(V,\C)$,
where $V$ is a set of \emph{variables}, and $\C$ is a 
set of \emph{constraints}. Each constraint is a pair $\ang{\bs,R}$,
where $R$ is a (say, $n$-ary) relation from $\Gamma$, and $\bs$ is an
$n$-tuple of variables (repetitions of variables are allowed).  
A \emph{satisfying assignment} of $I$ is a mapping $\tau:V\to D$ such that for
every $\ang{\bs,R}\in\C$ with $\bs=(s_1\ldots,s_n)$ the image
$\tau(\bs)=(\tau(s_1),\ldots,\tau(s_n))$ belongs to $R$. The question in
the CSP is whether there is a satisfying assignment for
a given instance. The CSP over constraint language $\Gamma$ is denoted
by $\csp(\Gamma)$.  

The \emph{size} of an assignment is the number of variables receiving
nonzero values. A \emph{size constraint} is a requirement to a satisfying
assignment  to have a prescribed size. The variant of
$\csp(\Gamma)$ that allows size constraints will be denoted by
$\ocsp(\Gamma)$.
A \emph{cardinality constraint} for a CSP instance $I$ is a mapping
$\pi:D\setminus\{0\}\to\nat$ with $\sum_{a\in D}\pi(a)\le|V|$. A
satisfying assignment $\tau$ of $I$ satisfies the cardinality
constraint $\pi$ if the number of variables mapped to each $a\in D$
equals $\pi(a)$. We denote by $\ccsp(\Gamma)$ the variant of
$\csp(\Gamma)$ where the input also contains a cardinality constraint
$\pi$ and the size constraint $k=\sum_{a\in D\setminus \{0\}}\pi(a)$
(this constraint is used a parameter); the question is, given an
instance $I$, an integer $k$, and a cardinality constraint $\pi$,
whether there is a satisfying assignment of $I$ of size $k$ and
satisfying $\pi$.  (Obviously, $\sum_{a\in D\setminus \{0\}}\pi(a)=k$
must hold, thus specifying $k$ in the input is in a sense
redundant. However, for the uniform treatment of both problems, it
will be convenient to assume that $k$ appears in the input in both
cases.)  A \emph{solution} of an \ocsp\ (resp., \ccsp) instance is a
satisfying assignment that also satisfies the size (resp., cardinality)
constraints. So, instances of \ocsp\ (resp., \ccsp) are triples
$(V,\C,k)$ (resp., quadruples $(V,\C,k,\pi)$). For both \ocspg\ and
\ccspg, we are interested in fixed-parameter tractability with respect
to the parameter $k$, i.e., the goal is an algorithm with running time
$f(k)\cdot n^{O(1)}$ for some computable function $f$. Note that we
are making a distinction between two terms: in ``satisfying
assignment,'' we do not require that the cardinality constraint is
satisfied, while a ``solution'' always means that the cardinality
constraint is satisfied.

If $\Gamma$ is a constraint language on the 2-element set $\{0,1\}$,
then to specify a global cardinality constraint it suffices to specify
the number of ones we want to have in a solution. This problem is also
known as the {\sc $k$-Ones}$(\Gamma)$ problem
\cite{DBLP:journals/tocl/CreignouSS10,Marx05:parametrized}. Note that this
problem can be viewed as both $\ccsp(\Gamma)$ and $\ocsp(\Gamma)$. 
In the general case, $\ocsp(\Gamma)$ polynomial time reduces to 
$\ccsp(\Gamma)$.

\begin{lemma}\label{lem:ocsp-to-ccsp}
For any constraint language $\Gamma$ over a set $D$, there is a 
polynomial-time reduction from $\ocsp(\Gamma)$ to $\ccsp(\Gamma)$.
\end{lemma}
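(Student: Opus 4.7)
The plan is to use a simple Turing (Cook) reduction that enumerates all possible cardinality profiles consistent with a given size. Given an \ocsp\ instance $(V,\C,k)$, any solution $\tau:V\to D$ induces a cardinality function $\pi_\tau:D\setminus\{0\}\to\nat$ defined by $\pi_\tau(a)=|\{v\in V:\tau(v)=a\}|$, which automatically satisfies $\sum_{a\in D\setminus\{0\}}\pi_\tau(a)=k$. Conversely, any solution of the \ccsp\ instance $(V,\C,k,\pi)$ is, by definition, a satisfying assignment of size exactly $k$, i.e., a solution of the \ocsp\ instance.

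Concretely, I would construct, for each mapping $\pi:D\setminus\{0\}\to\nat$ with $\sum_{a\in D\setminus\{0\}}\pi(a)=k$, the \ccsp\ instance $I_\pi=(V,\C,k,\pi)$, call the $\ccsp(\Gamma)$ oracle on each $I_\pi$, and answer yes if and only if at least one oracle call returns yes. Correctness is immediate from the equivalence in the previous paragraph: the \ocsp\ instance has a solution iff some $\pi$ admits a solution to $I_\pi$.

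For the complexity, observe that because $\Gamma$ is finite, $D$ (or at least the set $\dom(\Gamma)\cup\{0\}$ of values that can be used by solutions) is of fixed size. The number of mappings $\pi:D\setminus\{0\}\to\nat$ with coordinate sum exactly $k$ equals $\binom{k+|D|-2}{|D|-2}$, which is bounded by $(k+1)^{|D|-1}$. Since $|D|$ is a constant depending only on $\Gamma$, and $k\le|V|$, this is polynomial in the size of the input. Constructing each $I_\pi$ takes linear time (we just attach a different $\pi$), so the whole reduction runs in polynomial time.

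There is no real obstacle here; the only point to notice is that a polynomial-time reduction in this classical sense permits polynomially many oracle calls, and this is used in an essential way because a single many-one reduction would have to fix one $\pi$ in advance, which is not possible without knowing the solution. If a truly many-one reduction were required, one could alternatively append disjoint copies of all candidate $\pi$'s into a single larger instance using padding gadgets, but this extra complication is unnecessary for the statement as given.
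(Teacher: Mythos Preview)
Your proposal is correct and follows essentially the same approach as the paper: enumerate all $O(k^{|D|-1})$ cardinality constraints $\pi$ with $\sum_{a\in D\setminus\{0\}}\pi(a)=k$ and solve the corresponding \ccsp\ instances. The paper's justification is a one-line remark to exactly this effect, and your additional discussion of Turing versus many-one reductions is a reasonable (if unnecessary) elaboration.
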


Indeed, for any instance $(V,\C,k)$ of $\ocsp(\Gamma)$ it suffices to solve 
$O(k^{d-1})$ instances $(V,\C,k,\pi)$ of $\ccsp(\Gamma)$ such that
$\sum_{a\in D\setminus \{0\}}\pi(a)=k$.

Examples~\ref{exa:graphs}--\ref{exa:k-partite-complete-subgraph} demonstrate
some concrete problems that can be expressed in this framework.

\subsection{Closures and 0-validity}\label{sec:closures-0-validity}
A constraint language $\Gamma$ is called \emph{constant closed} (cc-, for
short) if along with every (say, $n$-ary) relation $R\in\Gamma$, any $i$, $1\le
i\le n$, and any $d\in D$ the relation obtained by \emph{substitution
  of constants} $ R^{|i;d}=\{(a_1,\ldots,a_{i-1},a_{i+1},\ldots,
a_n)\mid (a_1,\ldots,a_{i-1},d,a_{i+1},\ldots, a_n)\in R\}, $ also
belongs to $\Gamma$.  Substitution of constants $d_1,\ldots,d_q$ for
coordinate positions $i_1,\ldots,i_q$ is defined in a similar way; the
resulting relation is denoted by
$R^{|i_1,\ldots,i_q;d_1,\ldots,d_q}$. We call the smallest cc-language
containing a constraint language $\Gamma$ the \emph{cc-closure} of
$\Gamma$. It is easy to see that the cc-closure of $\Gamma$ is the set
of relations obtained from relations of $\Gamma$ by substituting
constants.  Let $f$ be a satisfying assignment for an instance
$I=(V,\C,k)$ of \ocspg\ and $S=\{v\mid f(v)\ne0\}$.  We say that the
instance $I'=(V',\C',k')$ is \emph{obtained by substituting the
  nonzero values of $f$ as constants} if $I'$ is constructed as
follows: $V'=V\setminus S$, and for each constraint $\ang{\bs,R}\in\C$
such that $i_1,\ldots i_r$ are the indices of the variables from $\bs$
contained in $S$ and $\{j_1,\ldots,j_q\}$ is
the set of indices of variables from $\bs$ not in $S$, we include in $\C'$ the
constraint $\ang{(v_{j_1},\ldots,v_{j_q}),R^{|i_1,\ldots,
    i_r;f(v_{i_1}),\ldots f(v_{i_r})}}$. The size constraint $k'$ is
set to $k$ minus the size of $f$ (we always make sure that $f$ 
is such that $k'\ge0$).  
Observe that assigning 0 to every
variable of $I'$ is a satisfying assignment.  This operation is
defined similarly for a \ccspg\ instance $I=(V,\C,k,\pi)$, but in this
case the new cardinality constraint $\pi'$ is given by
$\pi'(d)=\pi(d)-|\{v\in V\mid f(v)=d\}|$ (we always assume that $\pi$
and $f$ are such that $\pi'$ is nonnegative).
    
In general, \ocspg\ or \ccspg\ can become harder if we replace
$\Gamma$ with its cc-closure, see Examples~\ref{exa:cc-hard} and
\ref{exa:cc-hard2}. Thus a classification for cc-languages does not
immediately imply a classification for all languages.

A relation is said to be \emph{0-valid} if the all-zero tuple belongs
to the relation.  A constraint language $\Gamma$ is a
\emph{cc0-language} if every $R\in \Gamma$ is 0-valid, and every
0-valid relation $R'$ obtained from $R$ by substitution of constants
belongs to $\Gamma$ (see Example~\ref{exa:cc0-languages}). 
An instance $I$ is said to be \emph{0-valid} if the all-0 assignment
is a satisfying one. In particular every instance of $\ocsp(\Gamma)$ 
or $\ccsp(\Gamma)$ for a cc0-language $\Gamma$ is 0-valid.
The
following observation is clear (but note that $\Gamma_0$ is not
necessarily a cc-language, as substitution into a 0-valid relation does not necessary result in a 0-valid relation):

\begin{proposition}\label{prop:cc0}
  If $\Gamma$ is a cc-language and $\Gamma_0$ is the set of 0-valid
  relations in $\Gamma$, then $\Gamma_0$ is a cc0-language.
  \end{proposition}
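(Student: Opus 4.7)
The plan is to verify the two defining conditions of a cc0-language for $\Gamma_0$, both of which fall out almost immediately from the definitions. The first condition, that every relation in $\Gamma_0$ is 0-valid, is nothing but the very definition of $\Gamma_0$, so there is nothing to check.

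For the second condition, I would pick an arbitrary relation $R \in \Gamma_0$ and an arbitrary 0-valid relation $R'$ obtained from $R$ by substitution of constants, and show $R' \in \Gamma_0$. Since $R \in \Gamma_0 \subseteq \Gamma$ and $\Gamma$ is a cc-language, every relation obtained from $R$ by substituting constants lies in $\Gamma$; in particular $R' \in \Gamma$. Combined with the assumption that $R'$ is 0-valid, this places $R'$ in $\Gamma_0$ by the definition of $\Gamma_0$.

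There is no real obstacle here; the statement is essentially a bookkeeping observation. The one point worth emphasizing in the write-up is why the corresponding stronger claim (that $\Gamma_0$ is a cc-language) fails, as already flagged in the parenthetical preceding the proposition: substituting a \emph{nonzero} constant into a 0-valid relation may produce a relation that is no longer 0-valid, so such a relation need not belong to $\Gamma_0$. The cc0-language axiom sidesteps this by only demanding closure under those substitutions whose output happens to be 0-valid, and this is exactly the condition verified above.
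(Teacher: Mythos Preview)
Your proof is correct and is exactly the routine verification the paper has in mind; in fact, the paper does not even spell out a proof and simply calls the proposition ``clear,'' so your two-line check of the cc0-language axioms is all that is required.
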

  
  We say that tuple $\bt_1=(a_1,\dots,a_r)$ is an {\em extension} of
  tuple $\bt_2=(b_1,\dots,b_r)$ if they are of the same length and for
  every $1 \le i \le r$, $a_i=b_i$ if $b_i\neq 0$. Tuple $\bt_2$ is
  then called a \emph{subset} of $\bt_1$. A {\em minimal satisfying
    extension} of an assignment $f$ is an extension $f'$ of $f$ (where
  $f,f'$ are viewed as tuples) such that $f'$ is satisfying, and $f$
  has no satisfying extension $f''\neq f'$ such that $f'$ is an
  extension of $f''$. We show that the minimal satisfying extensions
  of size at most $k$ can be enumerated by a simple branching
  algorithm, implying a bound on the number of such
  extensions.

\begin{lemma}\label{lem:minextension}
Let $\Gamma$ be a finite constraint language over $D$. There are functions 
$d'_{\Gamma}(k)$ and $e'_{\Gamma}(k)$ such that for every 
instance of $\csp(\Gamma)$ with $n$ variables, every assignment $f$ has at most
$d'_{\Gamma}(k)$ minimal satisfying extensions of size at most $k$
and all these minimal extensions can be enumerated in time $e'_{\Gamma}(k)n^{O(1)}$.
\end{lemma}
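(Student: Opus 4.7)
My plan is to enumerate satisfying extensions by a bounded-depth branching search and then filter for minimality. Let $r$ be the maximum arity of the relations in $\Gamma$ and $d=|\dom(\Gamma)|$; both are constants depending only on $\Gamma$. Starting with $g:=f$, I would maintain an assignment $g$ that is always an extension of $f$. If $g$ satisfies every constraint, record it as a candidate and backtrack. Otherwise, pick any violated constraint $\ang{\bs,R}$ (so $g(\bs)\notin R$) and branch over all tuples $\bt\in R$ that are compatible with $g$ on $\bs$, i.e., $\bt_i=g(s_i)$ whenever $g(s_i)\neq 0$. For each such $\bt$, form $g'$ by setting $g'(s_i):=\bt_i$ at every position with $g(s_i)=0$ (keeping $g$ unchanged elsewhere) and recurse on $g'$; prune whenever the size of $g'$ exceeds $k$.

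The key point will be that each branching step strictly increases the size of the assignment: since $g(\bs)\notin R$ while $\bt\in R$, the two tuples differ at some coordinate, and that coordinate must be one where $g$ is zero and $\bt$ is nonzero (by construction they agree on the nonzero coordinates of $g$). Hence the search tree has depth at most $k$ and branching factor at most $|R|\le d^r$, giving at most $d^{rk}$ leaves, each handled in time $n^{O(1)}$. I also need to argue that every satisfying extension $f^*$ of size at most $k$ is actually reached: at each step one may simply take $\bt=f^*(\bs)$, which lies in $R$ (since $f^*$ is satisfying) and is compatible with the current $g$ (since $f^*$ extends $g$ throughout the recursion). Thus the list $L$ produced by the search contains every satisfying extension of size at most $k$ and has size at most $d^{rk}$.

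Finally, I would filter $L$ for subset-minimality by a pairwise comparison in time $O(|L|^2\cdot n)$. The resulting set is exactly the collection of minimal satisfying extensions of size at most $k$, because any satisfying $f''$ that is a proper subset of some $f'\in L$ has size strictly smaller than $|f'|\le k$ and is therefore also in $L$; so subset-minimality within $L$ coincides with minimality as defined in the paper. Setting $d'_{\Gamma}(k):=d^{rk}$ and $e'_{\Gamma}(k):=d^{2rk}\cdot n^{O(1)}$ then yields the claimed bounds. The main (and only nontrivial) obstacle is the strict-progress argument that caps the depth at $k$; everything else is a routine bounded search tree analysis of the sort standard in parameterized complexity.
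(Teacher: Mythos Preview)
Your approach is essentially the same bounded-search-tree argument as the paper's (the paper branches on a single zero-valued variable of the violated constraint rather than filling in an entire compatible tuple, and checks minimality locally by trying the $2^{k'}$ subsets of the newly assigned variables rather than by a global pairwise comparison over $L$, but these are cosmetic differences).

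There is, however, one genuine slip in your argument. The claim that ``$L$ contains every satisfying extension of size at most $k$'' is false: following the branch $\bt=f^*(\bs)$ at each step, the recursion halts at the \emph{first} satisfying $g$ it encounters, and this $g$ is a subset of $f^*$ but need not equal $f^*$. What is true---and is all you need---is that if $f^*$ is a \emph{minimal} satisfying extension of $f$, then any satisfying assignment that extends $f$ and is a subset of $f^*$ must be $f^*$ itself, so $f^*\in L$. With this correction your filtering step still works: every minimal satisfying extension of size at most $k$ lies in $L$, and any non-minimal element of $L$ has some minimal satisfying extension as a proper subset, which is therefore also in $L$; hence subset-minimality within $L$ coincides with true minimality. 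Your justification for the filtering (``any satisfying $f''$ that is a proper subset of some $f'\in L$ \ldots\ is therefore also in $L$'') relies on the overclaim and should be replaced by this argument.
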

\begin{proof}
  Let $k_0\le k$ be the size of $f$.  The minimal satisfying
  extensions of $f$ can be enumerated by a bounded-depth search tree
  algorithm. If $f$ is a satisfying assignment, then $f$ itself is the
  unique minimal satisfying extension of $f$. Suppose therefore that a
  constraint is not satisfied by $f$. If every variable of the
  constraint has nonzero value, then the assignment has no satisfying
  extension. Otherwise, we try to assign a nonzero value to one of the
  0-valued variables of the constraint, thus we branch into at most
  $(|D|-1)r_\text{max}$ directions, where $r_\text{max}$ is the
  maximum arity of the relations in $\Gamma$. If the assignment
  obtained by this modification is still not satisfying, then the
  process is repeated with some unsatisfied constraint. If the
  assignment is still not satisfying after making $k-k_0$ variables
  nonzero, then this branch of the search is terminated. If we obtain
  a satisfying assignment after assigning a nonzero value to a set $S$
  of at most $k'\le k-k_0$ variables, then we check whether this
  extension is minimal by trying each of the $2^{k'}$ subsets of the
  changed variables.  This last check ensures that every extension
  produced by the algorithm is indeed minimal. To see that every
  minimal satisfying extension $f'$ of size at most $k$ is eventually
  enumerated, observe that at every branching we can make a
  substitution that is compatible with $f'$, i.e., the current
  assignment is a subset of $f'$. Thus after at most $k$ steps, the
  algorithm finds a satisfying assignment that is a subset of $f'$. As
  $f'$ is a minimal satisfying extension, this subset cannot be proper
  and has to be $f'$ itself.

The height of the search tree is at most $k$ and the branching factor
is at most $(|D|-1)r_\text{max}$, thus at most $((|D|-1)r_\text{max})^k$
assignments are enumerated. The running time is polynomial at each node
and the final check at each leaf takes time $2^{k}n^{O(1)}$. Thus the
total running time is $e'_\Gamma(k)n^{O(1)}$ for a suitable $e'_\Gamma$.
\end{proof}

A consequence of Lemma~\ref{lem:minextension} is that, as in
\cite{Marx05:parametrized}, \ccspg\ and $\ocsp(\Gamma)$ can be 
reduced to a set of 0-valid instances. 
Let $k$ be the parameter of the $\ocsp(\Gamma)$ or \ccspg\ instance
(i.e., the required size of the solution).
We enumerate all the minimal 
satisfying extensions of size at most $k$ of the all zero assignment and obtain the
0-valid instances by substituting the nonzero values as constants.

\begin{corollary}\label{cor:0valid}
  Let $\Gamma$ be a cc-language and let $\Gamma_0\sse\Gamma$ be the
  set of all 0-valid relations. If $\ccsp(\Gamma_0)$ is
  \textup{FPT}/\textup{W[1]}-hard/\textsc{Biclique}-hard, then \ccspg\
  is as well. The same holds for $\ocsp(\Gamma)$ and
  $\ocsp(\Gamma_0)$.
\end{corollary}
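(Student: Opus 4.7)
The plan is to give an FPT-reduction from $\ccsp(\Gamma)$ (respectively $\ocsp(\Gamma)$) to a bounded collection of instances of $\ccsp(\Gamma_0)$ (respectively $\ocsp(\Gamma_0)$), by branching over the possible ``nonzero skeletons'' of a solution. The guiding observation, suggested in the paragraph preceding the corollary, is that any satisfying assignment $\tau$ of size at most $k$ contains some minimal satisfying extension $f$ of the all-zero assignment as a sub-tuple: among the satisfying subsets of $\tau$ in the tuple-subset sense, pick one that is minimal under extension. Since $|f|\le|\tau|\le k$, this $f$ appears in the list produced by Lemma~\ref{lem:minextension} when invoked with the all-zero starting assignment and threshold $k$. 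Hence it suffices to enumerate every minimal satisfying extension of size at most $k$ and, for each one, substitute its nonzero values as constants and solve the residual problem.

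The residual instances lie in $\Gamma_0$ for two reasons. First, $\Gamma$ is cc-closed, so every relation obtained from a relation of $\Gamma$ by substituting constants is again in $\Gamma$. Second, because $f$ is a satisfying assignment of the original instance, every substituted constraint remains satisfied when the non-substituted coordinates are set to zero, so each of the resulting relations contains the all-zero tuple and is therefore 0-valid, placing it in $\Gamma_0$. After substitution, the size parameter drops to $k-|f|$, and in the \ccsp\ case the cardinality constraint updates coordinatewise by $\pi'(d)=\pi(d)-|\{v : f(v)=d\}|$ (nonnegative for any $f$ that could extend to a valid solution; otherwise the branch is discarded). Correctness is a direct biconditional: a solution of any residual instance reassembles with $f$ into a solution of the original, and the argument above shows every original solution factors through at least one branch.

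For the complexity transfer, each of the three statements uses a different half of the picture. If $\ccsp(\Gamma_0)$ is FPT, the branching above yields an FPT algorithm for $\ccsp(\Gamma)$: by Lemma~\ref{lem:minextension} there are at most $d'_\Gamma(k)$ branches, enumerable in time $e'_\Gamma(k)\cdot n^{O(1)}$, and each residual instance has parameter at most $k$ and is solvable in FPT time by hypothesis. For W[1]-hardness and \textsc{Biclique}-hardness, the inclusion $\Gamma_0\subseteq\Gamma$ yields a trivial parameter-preserving reduction from the $\Gamma_0$-problem to the $\Gamma$-problem (the identity on instances), so the hardness transfers immediately. The identical argument handles $\ocsp(\Gamma)$ versus $\ocsp(\Gamma_0)$. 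The one delicate point is verifying the forward direction of the correctness claim, namely that the minimal satisfying sub-extension of $\tau$ genuinely has size at most $k$ and is produced by the enumeration, which follows from the size bound $|f|\le|\tau|=k$ together with the generality of Lemma~\ref{lem:minextension} applied to the all-zero seed.
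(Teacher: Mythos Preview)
Your proposal is correct and follows essentially the same approach as the paper: enumerate all minimal satisfying extensions of the all-zero assignment of size at most $k$ via Lemma~\ref{lem:minextension}, substitute their nonzero values as constants to obtain 0-valid instances over $\Gamma_0$, and use the inclusion $\Gamma_0\subseteq\Gamma$ for the hardness direction. The paper states the corollary with only the paragraph preceding it as justification; your write-up spells out the same argument in more detail.
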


By Prop.~\ref{prop:cc0}, if $\Gamma$ is a cc-language, then $\Gamma_0$
is a cc0-language. Thus Corollary~\ref{cor:0valid} reduces the
classification of the complexity of cc-languages into the
classification of cc0-languages. Thus in the rest of the paper, we have
to deal with cc0-languages only.

\subsection{Reductions}
\label{sec:reductions}

To obtain the W[1]-hardness results, we use the standard notion of
{\em parameterized reduction} \cite{MR2001b:68042,grohe-flum-param}.
 
\begin{definition}
A {\em parameterized reduction} from parameterized problem $A$ to
parameterized problem $B$ is a mapping $R$ from the instances of $A$ to
the instances of $B$ with the following properties:
\begin{enumerate}
\item $I$ is a yes-instance of $A$ if and only if $R(I)$ is a
  yes-instance of $B$,
\item $R(I)$ is computable in time $f(k)\cdot n^{O(1)}$, where
  $n$ is the size of $I$, $k$ is the parameter of $I$, and $f(k)$ is a
  computable function depending only on $k$,
\item the parameter of $R(I)$ is at most $g(k)$, where $k$ is the
  parameter of $I$ and $g(k)$ is a computable function depending only on $k$.
\end{enumerate}
\end{definition}

It is easy to see that if there is a parameterized reduction from $A$
to $B$ and $B$ is FPT, then $A$ is FPT as well.

In the hardness proofs, we will be reducing from the following
parameterized problems ($t$ is the parameter of the instance). 
\begin{itemize}
\item 
\textsc{Independent Set}.
 Given a graph $G$ with
 vertices $v_{i}$ ( $1\le j \le n$), find an
 independent set of size $t$.
\item 
\textsc{Multicolored Independent Set:} Given a graph $G$ with
  vertices $v_{i,j}$ ($1\le i \le t$, $1\le j \le n$), find an
  independent set of size $t$ of the form
  $\{v_{1,y_1},\dots,v_{t,y_t}\}$.
\item 
\textsc{Implications}:
 Given a directed graph $G$ and an integer
 $t$, find a set $C$ of vertices with exactly $t$ vertices such that
 there is no directed edge $\overrightarrow{uv}$ with $u\in C$ and
 $v\not\in C$. 
\item 
\textsc{Multicolored Implications:} Given a directed graph $G$
  with vertices $v_{i,j}$ ($1\le i \le t$, $1\le j \le n$), find a set
  $C=\{v_{1,y_1},\dots,v_{t,y_t}\}$ with exactly $t$
  vertices such that there is no directed edge $\overrightarrow{uv}$
  with $u\in C$ and $v\not\in C$.
\item 
\textsc{Biclique}. 
 Given a bipartite graph $G(A,B)$, find two
 sets $A'\subseteq A$ and $B'\subseteq B$, each  of size exactly $t$, such
 that every vertex of $A'$ is adjacent to every vertex of $B'$.
\end{itemize}

\textsc{Independent Set} is one of the basic W[1]-hard problems and it
is not difficult to show that \textsc{Multicolored Independent Set} is
W[1]-hard as well (see \cite{DBLP:journals/tcs/FellowsHRV09}). 
\textsc{Implications} was introduced and proved to be W[1]-hard in
\cite{Marx05:parametrized}. Lemma~\ref{lem:mimphard} below shows that
\textsc{Multicolored Implications} is W[1]-hard. The parameterized
complexity of \textsc{Biclique} is a longstanding open question, and
it is expected to be W[1]-hard (cf.~\cite[Section 8]{Grohe07:other-side}).

\begin{lemma}\label{lem:mimphard}
\textsc{Multicolored Implications} is \textup{W[1]}-hard.
\end{lemma}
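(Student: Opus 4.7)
The plan is to reduce \textsc{Multicolored Clique} (which is W[1]-hard, being equivalent via edge-complementation to \textsc{Multicolored Independent Set}) to \textsc{Multicolored Implications}. Given an instance $(G,V_1\cup\cdots\cup V_t)$, I build a directed graph $H$ whose vertices are partitioned into $t'=t+\binom{t}{2}$ color classes of two kinds. For each $1\le i\le t$, a \emph{vertex class} $X_i$ contains a vertex $x_v$ for every $v\in V_i$. For each $1\le i<i'\le t$, an \emph{edge class} $Y_{i,i'}$ contains a vertex $y_{u,v}$ for every edge $\{u,v\}\in E(G)$ with $u\in V_i$, $v\in V_{i'}$. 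The only arcs of $H$ are $\overrightarrow{y_{u,v}\,x_u}$ and $\overrightarrow{y_{u,v}\,x_v}$, so in particular every vertex of every $X_i$ is a sink.

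For correctness, any valid solution $C$ picks one $x_{v_i}\in X_i$ per vertex class and one $y_{u,v}\in Y_{i,i'}$ per edge class. Closure forces $x_u,x_v\in C$, and the single-per-class rule pins down $u=v_i$, $v=v_{i'}$. Hence for every $i<i'$ the pair $\{v_i,v_{i'}\}$ is an edge of $G$, so $\{v_1,\ldots,v_t\}$ is a multicolored $t$-clique. Conversely, any such clique furnishes the closed set $C=\{x_{v_i}\}_i\cup\{y_{v_i,v_{i'}}\}_{i<i'}$, and its only out-arcs land inside $C$. The construction is polynomial and the parameter $t'=t+\binom{t}{2}$ depends only on $t$, so this is a valid parameterized reduction.

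If the formal definition insists that all color classes have a common size $n$, I pad each $X_i$ with isolated sinks (ensuring $|X_i|\ge 2$) and each $Y_{i,i'}$ with ``bad'' pads whose out-arcs cover all of $X_i\cup X_{i'}$. A bad pad cannot be selected, since it would force at least $|X_i|\ge 2$ vertices of $X_i$ into $C$, violating single-per-class; and a sink in some $X_i$ cannot coexist with a real edge vertex in $Y_{i,\cdot}$, which would force a specific non-sink $x_u$ into $C\cap X_i$. Thus the correspondence is preserved even after padding. The heart of the argument is that the one-per-class constraint globally synchronizes the choices across vertex and edge classes, an idea standard in multicolored gadget reductions; no deep obstacle is anticipated, and the only mildly delicate point is the padding bookkeeping, which the ``bad pad'' construction handles uniformly.
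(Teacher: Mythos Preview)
Your proof is correct and follows essentially the same approach as the paper's: both build $t+\binom{t}{2}$ color classes split into ``vertex'' and ``edge'' blocks, with arcs from each edge-vertex to its two endpoints, so that closure plus one-per-class forces a clique. The only cosmetic differences are that the paper reduces from \textsc{Clique} rather than \textsc{Multicolored Clique}, and equalizes class sizes by first arranging $|V(H)|=|E(H)|$ (via adding isolated vertices or dropping acyclic components) rather than by your sink/bad-pad padding; both tricks serve the same purpose and your padding argument is sound.
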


\begin{proof}
  The proof is by reduction from \textsc{Clique}: let $H$ be a graph
  where a clique of size $k$ has to be found. It can be assumed that
  the number $n$ of vertices in $H$ equals the number of edges: adding
  isolated vertices or removing acyclic components does not change the
  problem. Let $u_1$, $\dots$, $u_n$ be the vertices of $H$ and let
  $e_1$, $\dots$, $e_n$ be the edges of $H$. We construct the graph
  $G$ of the \textsc{Multicolored Implications} instance as follows.
  Set $t:=k+\binom{k}{2}$. The vertex set of $G$ is $\{v_{i,j}\mid
  1\le i\le t,\ 1\le j\le n\}$.  Let $\iota(i,j)$ be a bijective
  mapping between the 2-element subsets of $\{1,\dots, k\}$ and the
  set $\{k+1,\dots, k+\binom{k}{2}\}$.  Intuitively, for $1\le i \le
  k$, the choice of vertex $v_{i,y_i}$ represents the choice of the
  $i$-th vertex of the clique $K$ of $H$ we are looking for.
  Furthermore, for $1 \le i < j \le k$, the choice of vertex
  $v_{\iota(i,j),y_{\iota(i,j)}}$ represents the edge between the
  $i$-th and $j$-th vertex of the clique $K$. To enforce this
  interpretation, for every $1 \le i < j \le n$ and $1 \le s \le n$,
  if $u_a$ and $u_b$ ($a<b$) are the endpoints of edge $e_s$ of $G$,
  then let us add two directed edges
  $\overrightarrow{v_{\iota(i,j),s}v_{i,a}}$ and
  $\overrightarrow{v_{\iota(i,j),s}v_{j,b}}$ to $H$. This completes
  the description of the reduction.

  Suppose that $G$ has a clique $u_{x_1}$, $\dots$, $u_{x_k}$ of size
  $k$ ($x_1<\dots<x_k$) and $e_{z_{i,j}}$ is the edge connecting $u_{x_i}$ and
  $u_{x_j}$. In this case, no directed edge of $H$ leaves the set $C$
  that contains $v_{i,x_i}$ for $1\le i\le k$ and
  $v_{\iota(i,j),z_{i,j}}$ for $1 \le i < j\le k$, hence it is a solution of the
  \textsc{Multicolored Implications} instance.

  For the reverse direction, suppose that $C=\{v_{i,y_i}\mid 1\le i\le
  t\}$ is a solution of the \textsc{Multicolored Implications}
  instance. We claim that $u_{y_1}$, $\dots$, $u_{y_k}$ is a clique of
  size $k$ in $G$. We claim that edge $e_{y_{\iota(i,j)}}$ connects
  $u_{y_i}$ and $u_{y_j}$ (in particular, this means that $y_i\neq
  y_j$). Let $s=y_{\iota(i,j)}$ and let $u_a$ and $u_b$ ($a<b$) be the
  two endpoints of edge $e_{s}$. Since $v_{\iota(i,j),s}\in C$, the
  edges $\overrightarrow{v_{\iota(i,j),s}v_{i,a}}$ and
  $\overrightarrow{v_{\iota(i,j),s}v_{j,b}}$ imply that $v_{i,a},
  v_{j,b}\in C$. As $C$ contains exactly one of the vertices
  $v_{i,1}$, $\dots$, $v_{i,n}$, it follows that $y_i=a$ and,
  similarly, $y_j=b$. That is, $u_{y_i}$ and $u_{y_j}$ are connected by
  the edge $e_s$.
\end{proof}

Let $\Gamma$ be a constraint language. A relation $R$ is
\emph{intersection definable} in $\Gamma$ if $R$ is the set of all
solutions to a certain instance of $\csp(\Gamma)$. 

\begin{proposition}\label{pro:int-defin}
Let $\Gamma$ be a constraint language and $R$ a relation
intersection definable in $\Gamma$. Then
$\ccsp(\Gamma\cup\{R\})$ ($\ocsp(\Gamma\cup\{R\})$) is polynomial time reducible
to $\ccsp(\Gamma)$ (respectively, to \ocspg). 
\end{proposition}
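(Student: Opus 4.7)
The plan is to replace every input constraint that uses $R$ by a gadget consisting of a fresh copy of the $\csp(\Gamma)$ instance whose solution set is $R$. Because intersection definability as defined here is purely conjunctive---the arity of $R$ equals the number of variables in the defining instance, with no existential quantification---the gadget can be installed without introducing any auxiliary variables, so the parameter $k$ and the cardinality constraint $\pi$ are preserved exactly.

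Fix an instance $(W,\mathcal{D})$ of $\csp(\Gamma)$ with $W=\{w_1,\dots,w_n\}$ whose set of satisfying assignments, listed in this coordinate order, equals $R$; such an instance exists by hypothesis. Given an input $(V,\C,k,\pi)$ of $\ccsp(\Gamma\cup\{R\})$, build an instance $(V,\C',k,\pi)$ of $\ccsp(\Gamma)$ as follows: copy every non-$R$ constraint of $\C$ into $\C'$ verbatim; for every constraint $\ang{(s_1,\dots,s_n),R}\in\C$, insert into $\C'$ a fresh copy of the constraints of $\mathcal{D}$ in which each $w_i$ is renamed to $s_i$ (if some $s_i$'s coincide, the corresponding $w_i$'s simply get identified, which is harmless). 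The variable set $V$ stays the same, so $k$ and $\pi$ are carried over unchanged. The $\ocsp$ case is handled identically, ignoring $\pi$.

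Correctness is immediate: an assignment $\tau:V\to D$ satisfies $\ang{(s_1,\dots,s_n),R}$ iff $(\tau(s_1),\dots,\tau(s_n))\in R$, iff the assignment $w_i\mapsto\tau(s_i)$ satisfies $\mathcal{D}$, iff $\tau$ satisfies the copy of $\mathcal{D}$ inserted in place of this $R$-constraint; non-$R$ constraints match trivially on both sides. Since $V$ is unchanged, the size of $\tau$ and its per-value counts are identical in the two instances, so $\pi$ (respectively the size bound $k$) holds in one iff it holds in the other. The reduction runs in polynomial time because $(W,\mathcal{D})$ has size depending only on $\Gamma$ and $R$, so emitting one copy per $R$-constraint costs at most $O(|\C|)$ steps. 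There is no real obstacle here---the argument is essentially a definition-chase---but it is worth noting that the proof works precisely because intersection definability is quantifier-free: if $R$ were defined by existentially quantifying some internal variables, those would have to be added to $V$, potentially disturbing the cardinality/size counts and requiring a more delicate reduction.
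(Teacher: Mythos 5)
Your proof is correct and follows essentially the same approach as the paper, which simply replaces every occurrence of $R$ by a copy of the defining instance; you spell out the details (variable identification, preservation of $k$ and $\pi$, and the role of quantifier-freeness) that the paper leaves implicit.
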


\begin{proof}
Indeed, let $I$ be an instance of \ccspg\ (\ocspg) that expresses $R$. To
reduce we just need to replace every occurrence of $R$ with $I$.
\end{proof}

\section{Properties of constraints}\label{sec:properties}

By Corollary~\ref{cor:0valid}, it is sufficient for proving
Theorems~\ref{th:mainsingleintro} and \ref{th:mainmultiintro} to
consider only cc0-languages. Thus in the rest of the paper, we always
assume that constraint languages are cc0-languages.

\subsection{Weak separability}
 
In this subsection we introduce the key property regulating the
complexity of CSPs with cardinality constraints.  In the Boolean case,
the tractability of 0-valid constraints depends only on weak
separability \cite{Marx05:parametrized}. This is not true exactly this
way for larger domains: as we shall see (Theorems~\ref{th:singlemain}
and \ref{th:multimain}), the complexity characterizations have further
technical conditions.

 \begin{definition}
Two tuples $\bt_1=(a_1,\dots,a_r)$ and $\bt_2=(b_1,\dots,b_r)$
 of the same length $r$
are called {\em disjoint} if $a_i=0$ or $b_i=0$ for every $1 \le i \le
r$. The {\em union} of two disjoint tuples $\bt_1$ and
$\bt_2$ is the tuple $\bt_1+\bt_2=(c_1,\dots, c_r)$ where $c_i=a_i$ if
$a_i\neq 0$ and $c_i=b_i$ otherwise. If $(a_1,\dots,a_r)$ is an extension of  
$(b_1,\dots,b_r)$, then their {\em difference} is the tuple
$(c_1,\dots, c_r)$ where $c_i=a_i$ if $b_i=0$ and $c_i=0$ otherwise. A
tuple $\bt=(c_1,\dots,c_r)$ is \emph{contained} in a set $C\sse D\setminus \{0\}$ if every nonzero $c_i$
 is in $C$.
 \end{definition}

The following property plays a crucial role in our classification:
 \begin{definition}
A 0-valid relation $R$ is said to be {\em weakly separable} if the following two
conditions hold:
 \begin{enumerate}
 \item 
 For every pair of disjoint tuples $\bt_1,\bt_2\in
  R$, we have $\bt_1+\bt_2\in R$.
 \item 
 For every pair of disjoint tuples $\bt_1,\bt_2$ with
  $\bt_2,\bt_1+\bt_2\in R$, we have $\bt_1\in R$.
 \end{enumerate}
A constraint language $\Gamma$ is said to be weakly separable if every 
relation from $\Gamma$ is weakly separable. 
 \end{definition}

 If constraint language $\Gamma$ is not weakly separable, then a
 triple $(R,\bt_1,\bt_2)$, $R\in\Gamma$, witnessing that is called a
 {\em union counterexample} if $\bt_1,\bt_2$ violate condition~(1),
 while if $\bt_1,\bt_2$ violate condition~(2) it is called a {\em
   difference counterexample.} Examples~\ref{exa:weaklysep1} and
 \ref{exa:weaklysep} demonstrate these notions and show how can we prove hardness in the Boolean case  if there is a
 counterexample to weak separability \cite{Marx05:parametrized}. However, as
 Example~\ref{exa:nonweaklysepeasy} shows, in case of larger domains,
 \ocspg\ or \ccspg\ can be fixed-parameter tractable even if the
 cc0-language $\Gamma$ is not weakly separable.

 Throughout the paper, we often refer to satisfying assignments that
 are nonzero, but have as few nonzero elements as possible.  A
 satisfying assignment $f$ is said to be a {\em minimal satisfying
   assignment} if it has a nonzero value and is not a proper extension
 of any nonzero satisfying assignment (note that we explicitly exclude
 the all-0 assignment from this definition). A consequence of
 Lemma~\ref{lem:minextension} bounds the number of minimal satisfying
 assignments:
\begin{lemma}\label{lem:minsolution}
Let $\Gamma$ be a finite constraint language. There are functions $d_\Gamma(k)$ and
$e_\Gamma(k)$ such that for any  
instance of $\csp(\Gamma)$ with $n$ variables, every
variable $v$ is nonzero in at most  
$d_{\Gamma(k)}$ minimal satisfying assignments of size at most $k$
and all these minimal satisfying assignments can be enumerated in time
$e_{\Gamma}(k)n^{O(1)}$. 
\end{lemma}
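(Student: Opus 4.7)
The plan is to reduce this lemma to Lemma~\ref{lem:minextension} by seeding with all possible single-variable configurations. For each variable $v\in V$ and each value $d\in D\setminus\{0\}$, let $f_{v,d}$ be the assignment giving $v$ the value $d$ and $0$ to every other variable. I would apply Lemma~\ref{lem:minextension} with starting assignment $f_{v,d}$: this produces, in time $e'_\Gamma(k)n^{O(1)}$, at most $d'_\Gamma(k)$ minimal satisfying extensions of $f_{v,d}$ of size at most $k$.

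The key step is to argue that every minimal satisfying assignment $f$ of size at most $k$ with $f(v)=d\ne 0$ appears in the list produced for the pair $(v,d)$. Clearly $f$ is a satisfying extension of $f_{v,d}$; suppose it were not a \emph{minimal} such extension. Then there would be a satisfying $f''\ne f$ that extends $f_{v,d}$ and is extended by $f$. But $f''(v)=d\ne 0$ makes $f''$ a nonzero satisfying assignment of which $f$ is a proper extension, contradicting the fact that $f$ is a minimal satisfying assignment. In particular, every fixed variable $v$ is nonzero in at most $(|D|-1)d'_\Gamma(k)$ such assignments, so we may set $d_\Gamma(k):=(|D|-1)d'_\Gamma(k)$.

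For the enumeration itself, I would form the union $L$ of the lists returned over all pairs $(v,d)$. By the preceding argument, $L$ contains every minimal satisfying assignment of size at most $k$, and $|L|\le n(|D|-1)d'_\Gamma(k)$. However, a minimal satisfying extension of $f_{v,d}$ need not itself be a globally minimal satisfying assignment, so the second step is to filter: for each $f\in L$, iterate over the fewer than $2^k$ nonempty proper subsets $S\subsetneq\supph(f)$ of its support and test whether the assignment agreeing with $f$ on $S$ and taking value $0$ elsewhere is satisfying; discard $f$ if any such $S$ is found. Building $L$ takes $n(|D|-1)e'_\Gamma(k)n^{O(1)}$ time and filtering takes $|L|\cdot 2^k\cdot n^{O(1)}$ time, which jointly fit the promised $e_\Gamma(k)n^{O(1)}$ bound.

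The only real subtlety is the asymmetry between the two notions of minimality: a minimal satisfying assignment with $f(v)=d$ is automatically a minimal satisfying extension of $f_{v,d}$ (the direction needed to bound the count and to guarantee completeness of $L$), whereas the converse fails in general and is handled by the explicit filtering step; this step stays inside the fixed-parameter budget precisely because the support of each candidate has size at most $k$.
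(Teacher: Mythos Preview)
Your proof is correct and follows essentially the same approach as the paper: seed with $\delta_{v,d}$ for each nonzero $d$, invoke Lemma~\ref{lem:minextension}, and take $d_\Gamma(k)=(|D|-1)d'_\Gamma(k)$. You are actually slightly more careful than the paper in one respect: the paper only argues that every minimal satisfying assignment with $f(v)\neq 0$ appears among the enumerated minimal satisfying extensions, without explicitly noting that the output list may contain extra assignments (minimal extensions of $\delta_{v,d}$ that admit a nonzero satisfying proper subset with $v$ mapped to $0$); your filtering step handles this cleanly and stays within the FPT budget.
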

\begin{proof}
  Let $\delta_{v,d}$ be the assignment that assigns value $d$ to
  variable $v$ and $0$ to every other variable. For a fixed variable
  $v$ and every nonzero $d\in \dom(\Gamma)$, let us use the algorithm of
  Lemma~\ref{lem:minextension} to enumerate all the minimal satisfying
  extensions of $\delta_{v,d}$. We claim that every minimal satisfying
  assignment $f$ with $f(v)\neq 0$ appears in the enumerated
  assignments. Indeed, if $f(v)=d$, then $f$ is a satisfying extension
  of $\delta_{v,d}$, and clearly it is a minimal satisfying extension,
  as no nonzero subset of $f$ is satisfying. Thus
  $d_\Gamma(k)=(|\dom(\Gamma)|-1)d'_\Gamma(k)$ bounds the number of minimal
  satisfying assignments where variable $v$ is nonzero. The time bound
  follows from the time bound of Lemma~\ref{lem:minextension}.
\end{proof}

The following combinatorial property is the key for solving weakly
separable instances (this property does not necessarily hold for arbitrary
relations, see Example~\ref{exa:nonweaklysep}):

\begin{lemma}\label{lem:disjoint-decomp}
Let $\Gamma$ be a weakly separable finite cc0-language 
over $D$ and $I$ an instance of \ccspg\ or $\ocsp(\Gamma)$.
\begin{enumerate}
\item Every
satisfying assignment $f$ of $I$ is the 
union of pairwise disjoint minimal satisfying assignments.
\item If there is a satisfying assignment $f$ with
$f(v)=d$ for some variable $v$ and $d\in D$, then there is a minimal
satisfying assignment $f'$ with $f'(v)=d$.
\end{enumerate}
\end{lemma}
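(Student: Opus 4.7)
The plan is to prove part (1) by induction on the number of nonzero positions of $f$, peeling off a minimal satisfying subset at each step, and then to deduce part (2) immediately from the decomposition produced in part (1). At no point do the size or cardinality data of $I$ enter, since the statement speaks only of satisfying assignments.

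For part (1), if $f$ is the all-zero assignment there is nothing to show (it is the empty union). Otherwise, I will first extract a minimal satisfying assignment $f_1$ that is a subset of $f$: start with $f$ itself and, as long as the current assignment fails to be a minimal satisfying assignment, replace it with a proper nonzero satisfying assignment of which it is an extension. Such a replacement exists precisely because the current assignment is satisfying and nonzero yet not minimal, and each replacement strictly decreases the number of nonzero positions, so the process terminates at a minimal satisfying $f_1$ with $f_1$ a subset of $f$ in the tuple sense.

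The key step is then to verify that $g := f - f_1$ is itself satisfying, which lets me recurse. Given any constraint $\ang{\bs,R}$ of $I$, the tuples $f_1(\bs)$ and $g(\bs)$ are disjoint by construction, satisfy $f_1(\bs) + g(\bs) = f(\bs) \in R$, and $f_1(\bs) \in R$; condition~(2) of weak separability, applied with $\bt_2 = f_1(\bs)$ and $\bt_1 = g(\bs)$, then yields $g(\bs) \in R$. Thus $g$ is satisfying and has strictly fewer nonzero positions than $f$, so the induction hypothesis decomposes it into pairwise disjoint minimal satisfying assignments; adjoining $f_1$ (which is disjoint from all of them because $f_1$ and $g$ are disjoint) gives the required decomposition of $f$. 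Part (2) is immediate from this: if $f(v) = d \neq 0$ and $f = f_1 + \cdots + f_m$ is the decomposition of part (1), then the pairwise-disjoint property forces exactly one $f_i$ to be nonzero at $v$, and its value there must equal $d$.

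The only real obstacle is keeping the subset/extension/difference bookkeeping straight and invoking condition~(2) of weak separability with the correct orientation; once oriented correctly, the inductive step is a one-line consequence. It is worth noting that condition~(1) of weak separability plays no role in this lemma, since we are decomposing a satisfying assignment rather than assembling one; condition~(1) will enter later when combining minimal satisfying assignments into solutions of prescribed size.
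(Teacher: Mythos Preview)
Your proof is correct and follows essentially the same approach as the paper: induction on the size of $f$, peel off a minimal satisfying subset, use weak separability to see that the difference is still satisfying, and recurse; part (2) then falls out of the decomposition. You are in fact more explicit than the paper in justifying the existence of a minimal satisfying subset and in identifying that it is condition~(2) of weak separability (not condition~(1)) that drives the argument.
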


\begin{proof}
  1.  The proof is by induction on the size of $f$; if the size is
  0, then there is nothing to show. Let $f'$ be a subset of $f$ that
  is a minimal satisfying assignment. Let $f''$ be the difference of
  $f$ and $f'$; since $\Gamma$ is weakly separable, $f''$ is also a
  satisfying assignment. By the induction hypothesis, $f''$ is the
  disjoint union of minimal satisfying assignments $f_1,\dots,f_\ell$,
  and hence $f$ is the disjoint union of $f_1,\dots,f_\ell$, $f'$.

  2.  Since $f$ is the disjoint union of minimal satisfying
  assignments $f_1,\dots,f_\ell$, there has to be an $1\le i \le \ell$
  for which $f_i(v)=d$.
\end{proof}

Lemma~\ref{lem:disjoint-decomp} allows us to solve weakly 
separable instances by enumerating all minimal satisfying 
assignments and then finding a subset of these assignments 
that are disjoint and together satisfy the size/cardinality constraint. 
When finding these disjoint assignments, we can exploit the 
fact that by Lemma~\ref{lem:minsolution}, each such 
assignment is non-disjoint from a bounded number of other assignments.
\begin{theorem}\label{lem:finddisjoint}
If $\Gamma$ over $D$ is a finite weakly separable cc0-language, then 
\ccspg\ and $\ocsp(\Gamma)$ are fixed-parameter tractable.
\end{theorem}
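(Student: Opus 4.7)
The plan is to combine the decomposition result of Lemma~\ref{lem:disjoint-decomp} with the enumeration bound of Lemma~\ref{lem:minsolution} and the color coding technique. By Lemma~\ref{lem:disjoint-decomp}(1), a satisfying assignment of size $k$ in a weakly separable instance is precisely a pairwise disjoint union of minimal satisfying assignments, each of size at most $k$, whose sizes sum to $k$ (and, for \ccspg, whose combined cardinality profile equals $\pi$). Hence the task reduces to an exact packing problem over the family of minimal satisfying assignments of size at most $k$.

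First I would enumerate this family. Applying the algorithm of Lemma~\ref{lem:minsolution} once for every variable $v\in V$ produces a list $\mathcal{F}$ that contains every minimal satisfying assignment of size at most $k$ (any such assignment is recovered when the procedure is run on one of its nonzero variables). Since each variable is nonzero in at most $d_\Gamma(k)$ such assignments, we have $|\mathcal{F}|\le n\cdot d_\Gamma(k)$ and the enumeration runs in time $n\cdot e_\Gamma(k)\cdot n^{O(1)}$.

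Next I would solve the exact packing problem by color coding. Color every variable of $V$ with one of $k$ colors independently and uniformly at random. If the instance has a solution, then with probability at least $k!/k^k\ge e^{-k}$ the $k$ nonzero variables of some solution receive pairwise distinct colors; by Lemma~\ref{lem:disjoint-decomp}(1) this solution is a disjoint union of minimal satisfying assignments from $\mathcal{F}$, each of which then has a \emph{colorful} support. Discard every $g\in\mathcal{F}$ whose support is not colorful and run a dynamic program whose states are pairs $(S,\pi')$ with $S\subseteq[k]$ and $\pi':D\setminus\{0\}\to\nat$: mark $(S,\pi')$ reachable iff some pairwise disjoint subfamily of the remaining $\mathcal{F}$ has combined color set $S$ and combined cardinality profile $\pi'$. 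A transition from $(S,\pi')$ appends a single colorful $g\in\mathcal{F}$ whose color set is disjoint from $S$. The instance has a solution iff $([k],\pi)$ is reachable; for \ocspg\ the coordinate $\pi'$ is dropped and we only test whether $[k]$ is reachable. There are at most $2^k\cdot(k+1)^{|D|-1}$ states and at most $|\mathcal{F}|$ transitions of polynomial cost from each state, giving total running time $2^{O(k)}\cdot n^{O(1)}$ for fixed $\Gamma$.

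Finally, I would derandomize the random coloring by substituting a standard $(n,k)$-perfect hash family of size $2^{O(k)}\log n$: at least one hash function makes the $k$ nonzero variables of the sought solution colorful, so running the dynamic program once per hash function yields a deterministic FPT algorithm. The only thing that needs checking is that the packing formulation is genuinely equivalent to the existence of a solution and that no relevant minimal satisfying assignment is missed during the enumeration; both are immediate consequences of Lemma~\ref{lem:disjoint-decomp}, so there is no real combinatorial obstacle beyond plugging the pieces together.
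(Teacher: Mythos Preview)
Your proposal is correct and takes a genuinely different route from the paper. Both arguments begin identically: enumerate the family $\mathcal{F}$ of minimal satisfying assignments of size at most $k$ (there are at most $n\cdot d_\Gamma(k)$ of them), and observe via Lemma~\ref{lem:disjoint-decomp} together with weak separability that a solution is precisely a pairwise disjoint subfamily of $\mathcal{F}$ with the right size/cardinality profile. The divergence is in how the resulting disjoint packing problem is solved. You use color coding plus a DP over color subsets, then derandomize with an $(n,k)$-perfect hash family. The paper instead kernelizes: for each cardinality profile $\pi'$ it keeps only the first $K=k\cdot d_\Gamma(k)$ members of $\mathcal{F}$ with that profile, obtaining a set $S^*$ whose size is bounded purely in terms of $k$ and $|D|$, and then brute-forces all subsets of $S^*$ of size at most $k$. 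The correctness of this restriction is a replacement argument: if $f\in S'\setminus S^*$ is used in a solution $S'$, then the other at most $k-1$ assignments of $S'$ together have fewer than $k$ nonzero variables, and by Lemma~\ref{lem:minsolution} each such variable lies in at most $d_\Gamma(k)$ members of $\mathcal{F}$, so some $f'\in S_{\pi_f}\subseteq S^*$ is disjoint from the rest and can replace $f$. Your approach is the standard hammer and gives a clean $2^{O(k)}n^{O(1)}$ bound once one accepts the hashing machinery; the paper's approach is more elementary and self-contained, reusing the per-variable degree bound of Lemma~\ref{lem:minsolution} a second time instead of importing color coding.
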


\begin{proof}
  We present the proof for \ccspg; the proof for \ocspg\ is analogous
  and actually simpler. Alternatively, 
  by Lemma~\ref{lem:ocsp-to-ccsp}, $\ocsp(\Gamma)$ reduces to
  $\ccsp(\Gamma)$ in polynomial time.  Thus the fixed-parameter
  tractability of \ccspg\ immediately implies those of \ocspg.

  First, we enumerate all minimal satisfying assignments of size at
  most $k$ using Lemma~\ref{lem:minsolution}; let $S$ be the set of
  all these assignments (note that $S$ has size at most
  $d_{\Gamma}(k)\cdot n$, where $n$ is the number of variables).  By
  Lemma~\ref{lem:disjoint-decomp}, every solution can be formed as the
  disjoint union of members of $S$.  Furthermore, by weak
  separability, the disjoint union of satisfying assignments is always
  satisfying.  Thus the question is whether it is possible to find a
  subset $S'$ of $S$ that contains pairwise disjoint assignments and
  their union satisfies the cardinality constraints.

  Clearly, $|S'|\le k$.  We can associate a mapping $\pi_f$ to each
  assignment $f$ in $S$, with the meaning that $f$ sets exactly
  $\pi_f(i)$ variables to value $i$. Let $K:=k\cdot d_\Gamma(k)$ for
  the function $d_\Gamma$ appearing in Lemma~\ref{lem:minsolution}.
  For each mapping $\pi'$, let $S_{\pi'}$ contain the first $K$
  assignments whose associated mapping is $\pi'$ (or all such
  assignments if there are less than $K$ of them). Let $S^*$ be the
  union of all these sets $S_{\pi'}$. We claim that if there is a
  solution $S'\subseteq S$, then there is a solution which is a subset
  of $S^*$. Thus we can find a solution by trying all subsets of size
  at most $k$ in $S^*$.

To prove the claim, let $S'$ be a solution such that $|S'\setminus S^*|$ is minimum
  possible. Let $f\in S'\setminus S^*$; this means that $f\not \in
  S_{\pi_f}$ and hence $|S_{\pi_f}|=K$.  Assignments in $S'\setminus
  \{f\}$ assign nonzero values to less than $k$ variables, denote by
  $X$ these variables. By Lemma~\ref{lem:minsolution}, each variable is
  nonzero in at most $d_\Gamma(k)$ assignments of $S$ (and of the
  subset $S_{\pi_f}$), and hence there is at least one assignment $f'\in
  S_{\pi_f}$ that is zero on every variable of $X$. Replacing $f$ with
  $f'\in S^*$ yields a solution with strictly smaller number of
  assignments not in $S^*$, contradicting the minimality of $S'$.
\end{proof}

\subsection{Morphisms}

Homomorphisms and polymorphisms are standard tools for understanding
the complexity of constraints
\cite{Bulatov05:classifying,Jeavons99:expressive}. We make use of the
notion of multivalued morphisms, a generalization of homomorphisms,
that in a different context has appeared in the literature for a while
(see, e.g.\ \cite{Rosenberg98:hyperstructures}) under the guise of
hyperoperation.
For a constraint language, we introduce a classification of values from its 
domain into 4 types according to the
existence of such morphisms of the language (Definition~\ref{def:4-types}). This
classification into 4 types of elements and the observation that these
types play an essential role in the way the MVM gadgets
(Section~\ref{sec:gadgets}) work are the main technical ideas behind
the hardness proofs.

For a subset $0\in D'\sse D$ and an
$n$-ary relation $R$ on $D$, by $R_{|D'}$ we denote the relation $R\cap
(D')^n$.
For a constraint language $\Gamma$, the language $\Gamma_{|D'}$ denotes the
set of all relations $R_{|D'}$ for $R\in\Gamma$.

An {\em endomorphism} of $\Gamma$ is a mapping $h:\dom(\Gamma)\to
\dom(\Gamma)$ such that $h(0)=0$ and for every $R\in \Gamma$ and
$(a_1,\dots,a_r)\in R$, the tuple $(h(a_1),\dots, h(a_r))$ is also in
$R$. Note that the requirement $h(0)=0$ is nonstandard, but it is very
natural in our setting. For a tuple $\bt=(a_1,\dots, a_r)\in
\dom(\Gamma)^r$, we denote by $h(\bt)$ the tuple $(h(a_1),\dots,
h(a_r))$. Observe that the mapping sending all elements of
$\dom(\Gamma)$ to $0$ is an endomorphism of any 0-valid language. An
{\em inner homomorphism} of $\Gamma$ from $D_1$ to $D_2$ with $0\in
D_1,D_2\sse \dom(\Gamma)$ is a mapping $h:D_1\to D_2$ such that
$h(0)=0$ and $h(\bt)\in R$ holds for any relation $R\in
\Gamma$ and $\bt\in R_{|D_1}$.

A {\em multivalued morphism} of $\Gamma$ is a mapping
$\phi:\dom(\Gamma)\to 2^{\dom(\Gamma)}$ such that $\phi(0)=\{0\}$ and
for every $R\in \Gamma$ and $(a_1,\dots,a_r)\in R$, we have
$\phi(a_1)\times \dots \times \phi(a_r)\subseteq R$. An \emph{inner
  multivalued morphism} $\phi$ from $D_1$ to $D_2$ where $0\in
D_1,D_2\sse \dom(\Gamma)$ is defined to be a mapping $\phi: D_1\to
2^{D_2}$ such that $\phi(0)=\{0\}$ and for every $R\in \Gamma$ and
$(a_1,\dots,a_r)\in R_{|D_1}$, we have $\phi(a_1)\times \dots \times
\phi(a_r)\subseteq R_{|D_2}$. If $\phi$ is an (inner) multivalued
morphism, and $\bt=(a_1,\dots,a_r)$ is a tuple, then we define
$\phi(\bt)=\phi(a_1)\times \dots \times \phi(a_r)$.
Example~\ref{exa:endo-homo} shows several concrete examples.

\begin{observation}\label{obs:submorphisms}
Let $\phi:\dom(\Gamma)\to 2^{\dom(\Gamma)}$ be a multivalued morphism
  [$\psi:D_1\to 2^{D_2}$ 
  be an inner multivalued morphism] of a constraint language $\Gamma$,
and let $\phi':\dom(\Gamma)\to 2^{\dom(\Gamma)}$ [resp., $\psi':D_1\to
  2^{D_2}$] be a mapping 
such that $\phi'(d)\sse \phi(d)$ for $d\in \dom(\Gamma)$ [resp., $\psi'(d)\sse
  \psi(d)$ for $d\in D_1$]. Then $\phi'$ is a multivalued morphism
[$\psi'$ is an inner multivalued morphism]. 
\end{observation}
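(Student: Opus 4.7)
The plan is to verify directly from the definitions that $\phi'$ satisfies the two defining properties of a multivalued morphism: namely, that $\phi'(0) = \{0\}$ and that for every $R \in \Gamma$ of arity $r$ and every tuple $(a_1, \dots, a_r) \in R$, the product $\phi'(a_1) \times \dots \times \phi'(a_r)$ is contained in $R$. Both properties should follow immediately from the corresponding properties of $\phi$ by monotonicity, so the bulk of the argument is simply book-keeping.

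First, for the base-point condition, I would note that $\phi'(0) \sse \phi(0) = \{0\}$, so $\phi'(0)$ is either $\{0\}$ or $\emptyset$. The statement is naturally read under the convention $\phi'(0) = \{0\}$ (otherwise $\phi'$ is not even a candidate for a multivalued morphism), and if needed one can simply redefine $\phi'(0) := \{0\}$ without affecting the second property. Second, for the product-containment condition, the argument is pure monotonicity of the Cartesian product: since $\phi'(a_i) \sse \phi(a_i)$ for each $i$, we obtain the chain of inclusions
\[
\phi'(a_1) \times \dots \times \phi'(a_r) \;\sse\; \phi(a_1) \times \dots \times \phi(a_r) \;\sse\; R,
\]
where the last inclusion is precisely the hypothesis that $\phi$ is a multivalued morphism of $\Gamma$.

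For the inner multivalued morphism variant involving $\psi:D_1 \to 2^{D_2}$ and $\psi':D_1 \to 2^{D_2}$, the argument carries over verbatim: for a tuple $\bt=(a_1,\dots,a_r) \in R_{|D_1}$, the same monotonicity chain yields $\psi'(a_1) \times \dots \times \psi'(a_r) \sse \psi(a_1) \times \dots \times \psi(a_r) \sse R_{|D_2}$, and the base-point condition $\psi'(0) = \{0\}$ is handled identically. Since the conclusion $\psi'(\bt) \sse R_{|D_2}$ (rather than just $\sse R$) is what is required, we use only the fact that $\psi'(a_i) \sse \psi(a_i) \sse D_2$, so the components indeed lie in $D_2$.

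There is essentially no obstacle here: the observation is true for the formal reason that the condition defining a (inner) multivalued morphism is downward closed under pointwise shrinking of $\phi$. I would state this in one or two lines and move on; the value of the observation lies not in the proof but in its later use, where it lets us pass freely to smaller multivalued morphisms when isolating the structure needed for the hardness reductions.
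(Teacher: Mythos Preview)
Your proposal is correct and matches the paper's treatment: the paper states this as an \emph{Observation} without proof, precisely because it follows immediately from monotonicity of the Cartesian product, which is exactly the one-line argument you give. Your remark about the edge case $\phi'(0)=\emptyset$ is a fair quibble with the statement rather than with the proof, and in the paper's usage $\phi'(0)=\{0\}$ is always implicitly assumed.
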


The {\em product} $g\circ h$ of two endomorphisms or inner
homomorphisms is defined by $(g\circ h)(x)=h(g(x))$ for every $x\in
D$. That is, $g\circ h$ means that $g$ is applied first and then
$h$. It is easy to see that $g\circ h$ is also an endomorphism (inner
homomorphism). If $\phi$ and $\psi$ are (inner) multivalued morphisms,
then their product $\phi\circ\psi$ is given by
$(\phi\circ\psi)(x)=\bigcup_{y\in\phi(x)}\psi(y)$. Finally, let $g$ be
an endomorphism or an inner homomorphism, and $\phi$ is an (inner)
multivalued morphism. Then $\phi\circ g$ is given by $(\phi\circ
g)(x)=\{g(y)\mid y\in\phi(x)\}$ and $g \circ \phi$ is given by $(g
\circ \phi)(x)=\phi(g(x))$. Both mappings are (inner) multivalued morphisms.

We classify the values based on a type of degeneracy defined in 
the following way (see     Example~\ref{exa:produces}):

\begin{definition} 
For $x,y\in \dom(\Gamma)$, we say that {\em $x$ produces $y$ in
    $\Gamma$} if $\Gamma$ has a multivalued morphism $\phi$ with
  $\phi(x)=\{0,y\}$ and $\phi(z)=\{0\}$ for every $z\neq x$. 
\end{definition}

In other words, for every $R\in \Gamma$ and $\bt\in R$, replacing an
arbitrary subset of the $x$ values with $y$ and making every other
coordinate 0 gives another tuple in $R$.  Observe that the relation
``$x$ produces $y$'' is transitive, but not necessarily reflexive.
The following classification of values plays a central role in the
paper:
\begin{definition}\label{def:4-types}
A value $y\in \dom(\Gamma)$ is

\begin{enumerate}
\item {\em regular} if there is no multivalued morphism $\phi$
  where $0,y\in \phi(x)$ for some $x\in \dom(\Gamma)$,
\item {\em semiregular} if there is a multivalued morphism $\phi$
  where 
  $0,y\in \phi(x)$ for some $x\in \dom(\Gamma)$, but there is no $x\in
  \dom(\Gamma)$ that produces 
  $y$,
\item {\em self-producing} if $y$ produces $y$, and for every $x$
  that produces $y$, $y$ also produces $x$.
\item {\em degenerate} otherwise.
\end{enumerate}
\end{definition}

Example~\ref{exa:types} demonstrates all four types.
It will sometimes be convenient to say that a value $y$ has {\em type}
1, 2, 3, or 4 corresponding to the four cases of
Definition~\ref{def:4-types}, and use the natural order on these
types.  
We need some simple properties of the four types of elements.

\begin{proposition}\label{prop:homomtype}
If there is an endomorphism $h$ with $h(x)=y$, then the type of $x$
cannot be larger than that of $y$. 
\end{proposition}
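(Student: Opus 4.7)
The plan is to perform a case analysis on the type of $y$, using two constructions: the push-forward $\phi \circ h$, defined by $(\phi \circ h)(u) = \{h(v) : v \in \phi(u)\}$, and the pull-back $h \circ \phi$, defined by $(h \circ \phi)(u) = \phi(h(u))$. Both are multivalued morphisms by the paragraph introducing products of morphisms, and in the pull-back step I additionally invoke Observation~\ref{obs:submorphisms} to trim away unwanted branches. Throughout I use that $h(0) = 0$, since $h$ is an endomorphism.

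For type$(y) \in \{1,2\}$ I argue contrapositively via push-forward. If $x$ has type at least $2$, there exist a multivalued morphism $\phi$ and an element $u$ with $\{0,x\} \sse \phi(u)$; then $(\phi \circ h)(u) \supseteq \{h(0), h(x)\} = \{0, y\}$, so $y$ is not regular either, and type$(y) \ge 2$. If in addition $x$ has type at least $3$, some $z$ produces $x$, witnessed by $\phi$ with $\phi(z) = \{0,x\}$ and $\phi(v) = \{0\}$ for $v \ne z$; then $\phi \circ h$ sends $z$ to $\{0, y\}$ and every other element to $\{0\}$, so $z$ produces $y$ and type$(y) \ge 3$. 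This settles the subcases type$(y) = 1$ and type$(y) = 2$.

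The substantive case is type$(y) = 3$, where I must rule out $x$ being degenerate. Assume for contradiction that type$(x) = 4$. Since $x$ is neither regular nor semiregular, some $z$ produces $x$ via a morphism $\psi$ with $\psi(z) = \{0,x\}$ and $\psi(v) = \{0\}$ elsewhere. To contradict degeneracy I will show that $x$ is in fact self-producing. I first verify clause (2) of Definition~\ref{def:4-types} for $x$: given any $z'$ that produces $x$, witnessed by $\psi'$, the push-forward $\psi' \circ h$ shows that $z'$ produces $y$; since $y$ is self-producing, $y$ produces $z'$ via some morphism $\phi$ with $\phi(y) = \{0, z'\}$ and $\phi(v) = \{0\}$ otherwise; the pull-back $h \circ \phi$ assigns $\{0, z'\}$ to every $h$-preimage of $y$ and $\{0\}$ elsewhere, and by Observation~\ref{obs:submorphisms} the sub-mapping $\phi''$ defined by $\phi''(x) = \{0, z'\}$ and $\phi''(v) = \{0\}$ for $v \ne x$ is still a multivalued morphism, witnessing that $x$ produces $z'$. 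Applying this with $z' = z$ gives that $x$ produces $z$; combined with $z$ produces $x$ and the transitivity of ``produces'' noted right after its definition, $x$ produces $x$, establishing clause (1). Hence $x$ is self-producing, contradicting type$(x) = 4$.

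The case type$(y) = 4$ is vacuous, since $4$ is the maximal type. The main obstacle is the type-$3$ case: the universally quantified clause in the definition of self-producing cannot be transported by push-forward alone, and one must symmetrise via the pull-back $h \circ \phi$, relying crucially on Observation~\ref{obs:submorphisms} to discard $h$-preimages of $y$ other than $x$ itself.
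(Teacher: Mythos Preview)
Your proof is correct and uses the same two operations the paper uses: the push-forward $\phi\circ h$ to transport ``$0,x\in\phi(u)$'' and ``$z$ produces $x$'' over to $y$, and the pull-back $h\circ\phi$ (followed by trimming via Observation~\ref{obs:submorphisms}) to transport ``$y$ produces $z'$'' back to $x$. The only organizational difference is that you case on the type of $y$ and, in the type-$3$ case, derive a contradiction by showing $x$ would be self-producing, whereas the paper cases on the type of $x$ and, in the degenerate case, directly exhibits a witness that $y$ is degenerate (namely the same $a$ that produces $x$ but is not produced by $x$); the underlying morphism manipulations are identical.
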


\begin{proof}
  Suppose that $x$ is semiregular, and $\psi$ is a multivalued
  morphism witnessing that, i.e., $x,0\in \psi(a)$ for some value $a$.
  Then $0,y\in (\psi\circ h)(a)$, meaning that $y$ cannot be regular.
  Suppose that some value $a$ produces $x$, let $\psi$ be the
  corresponding multivalued morphism. Then $\psi \circ h$ witnesses
  that $a$ produces $y$ as well. Therefore, if $x$ is self-producing,
  then $y$ is either self-producing or degenerate. Finally, suppose
  that $x$ is degenerate and let $a$ be a value such that $a$ produces
  $x$, but $x$ does not produce $a$.  Now $a$ produces $y$: this is
  shown by taking the product $\psi\circ h$ of a multivalued morphism
  $\psi$ witnessing that $y$ produces $x$ and the endomorphism $h$. If
  $y$ produces $a$ and this is witnessed by multivalued morphism
  $\psi'$, then (as shown by $h\circ \psi'$) $x$ would produce $a$, a
  contradiction. Thus $a$ produces $y$, but $y$ does not produce $a$,
  i.e., $y$ is degenerate as well.
\end{proof}

\begin{proposition}\label{prop:regproduce}
Every degenerate value $y$ is produced by a nondegenerate
value $x$.
\end{proposition}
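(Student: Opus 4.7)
The plan is to single out a producer of $y$ that is maximal under the ``produces'' preorder and to show it must be nondegenerate. Since $y$ is degenerate, it is in particular not semiregular, so the set $P=\{x\in\dom(\Gamma)\mid x\text{ produces }y\}$ is nonempty; moreover, by transitivity of ``produces'', $P$ is closed under taking producers (any producer of any element of $P$ again lies in $P$). Let $\sim$ be the equivalence relation ``mutually produce'' on $P$; it induces a partial order on $\sim$-classes, and since $\dom(\Gamma)$ is finite, this partial order admits a maximal class. Pick any $x^*\in P$ in such a maximal class. I will show $x^*$ is nondegenerate, which gives the desired producer of $y$.

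The combinatorial crux is that mutually producing values each produce themselves. If $\phi$ witnesses that $a$ produces $b$ (so $\phi(a)=\{0,b\}$ and $\phi(z)=\{0\}$ for $z\neq a$) and $\psi$ witnesses that $b$ produces $a$, then $(\phi\circ\psi)(a)=\psi(0)\cup\psi(b)=\{0,a\}$ while $(\phi\circ\psi)(z)=\{0\}$ for $z\neq a$, so $\phi\circ\psi$ is a multivalued morphism witnessing that $a$ produces $a$.

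Now assume for contradiction that the chosen $x^*$ is degenerate. Not being self-producing, either (i) $x^*$ does not produce itself, or (ii) there is some $w$ producing $x^*$ with $x^*$ not producing $w$. In case (ii), transitivity gives $w\in P$, and $w$ lies strictly above the class of $x^*$ in the quotient order, contradicting maximality. In case (i), the non-semiregularity of $x^*$ (also a consequence of degeneracy) supplies some $w$ producing $x^*$; were $x^*$ also to produce $w$, the composition identity of the previous paragraph would force $x^*$ to produce itself, contradicting (i). Hence $x^*$ does not produce $w$, and $w$ again breaks the maximality of the class of $x^*$. The main obstacle is spotting the composition identity for mutually producing values, which is the only place where the structure of multivalued morphisms really enters; everything else is a standard maximality argument in a finite preorder combined with a careful unpacking of ``not self-producing.''
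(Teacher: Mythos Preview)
Your proof is correct and follows essentially the same idea as the paper's: both ascend the ``produces'' relation from $y$ and use finiteness together with the composition identity (mutual production implies self-production) to reach a nondegenerate producer, with the paper building an explicit chain $y=x_0,x_1,\ldots$ while you pick a maximal element of the quotient poset directly. One small quibble: ``mutually produce'' is not reflexive, so $\sim$ as stated is not an equivalence relation; add reflexivity by fiat (or speak of strongly connected components of the ``produces'' digraph), and the argument goes through unchanged.
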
 
\begin{proof}
  We construct a sequence $x_0,x_1,x_2,\dots$ as follows: Let
  $x_0=y$. If $x_i$ is degenerate, then there is a value $x_{i+1}\neq x_i$ that
  produces $x_i$, but $x_i$ 
  does not produce $x_{i+1}$. As $\dom(\Gamma)$ is
  finite, either $x_i$ is non-degenerate for some $i$ or $x_i=x_j$ for
  some $i>j$. If some $x_i$ is non-degenerate, then by
  transitivity such $x_i$ produces $x_0=y$ and we are done.   
Suppose that $x_i=x_j$ for some $i>j$.
Since $x_{j+1}\neq x_j$, we have
  $i>j+1$. Thus $x_i=x_j$ produces $x_{j+1}$, contradicting the
  definition of $x_{j+1}$.
\end{proof}

\begin{lemma}\label{lem:substitute}
Let $\Gamma$ be a cc0-language, $R\in \Gamma$ a relation, and $h$ an endomorphism of $\Gamma$.
If $\bt_1$ and $\bt_2$ are disjoint tuples such that $\bt_1,\bt_1+\bt_2\in R$,
then $\bt_1+h(\bt_2)\in R$.
\end{lemma}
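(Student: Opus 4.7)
The plan is to use the cc0-closure of $\Gamma$ to substitute the nonzero coordinates of $\bt_1$ as constants into $R$, then apply $h$ to the resulting smaller-arity relation. Let $I_1\sse\{1,\ldots,r\}$ be the set of positions where $\bt_1$ is nonzero (here $r$ is the arity of $R$), and let $R'$ be the relation obtained from $R$ by substituting, for each $i\in I_1$, the constant $\bt_1[i]$ at position $i$. So $R'$ has arity $r-|I_1|$, indexed by positions outside $I_1$.

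The first key observation is that $R'$ is $0$-valid: the all-zero tuple on the coordinates outside $I_1$, combined with the constants $\bt_1[i]$ on $i\in I_1$, yields exactly the tuple $\bt_1$ itself, which lies in $R$ by hypothesis. Since $\Gamma$ is a cc0-language, every $0$-valid relation obtained from a relation of $\Gamma$ by substitution of constants lies in $\Gamma$, so $R'\in\Gamma$ and $h$ is an endomorphism of $R'$ as well.

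Let $\bt_2'$ be the restriction of $\bt_2$ to coordinates outside $I_1$. Because $\bt_1$ and $\bt_2$ are disjoint, $\bt_2[i]=0$ for every $i\in I_1$, so reconstructing the full $r$-tuple from $\bt_1$ on $I_1$ and $\bt_2'$ elsewhere yields exactly $\bt_1+\bt_2$, which is in $R$. Hence $\bt_2'\in R'$. Applying the endomorphism $h$ coordinatewise gives $h(\bt_2')\in R'$, which by definition of $R'$ means that the full tuple with $\bt_1[i]$ on $i\in I_1$ and $h(\bt_2'[i])=h(\bt_2[i])$ on $i\notin I_1$ belongs to $R$. On $I_1$ this reconstructed tuple equals $\bt_1[i]$ (and on these positions $h(\bt_2)[i]=h(0)=0$, using that $h$ is an endomorphism with $h(0)=0$), while outside $I_1$ it equals $h(\bt_2)[i]$ (and $\bt_1$ is zero there). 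Thus the reconstructed tuple is exactly $\bt_1+h(\bt_2)$, completing the argument.

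The only delicate point, and the place where one must be careful, is step two: a generic substitution of constants can produce a relation outside of $\Gamma$ when working with cc0-languages rather than full cc-languages. The hypothesis $\bt_1\in R$ is exactly what guarantees $0$-validity of $R'$ and hence membership in $\Gamma$; once that is in place, the rest of the argument is a direct application of the endomorphism property and unfolding definitions.
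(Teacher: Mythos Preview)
Your proof is correct and follows essentially the same approach as the paper: substitute the nonzero coordinates of $\bt_1$ into $R$, observe that the resulting relation $R'$ is $0$-valid (because $\bt_1\in R$) and hence lies in $\Gamma$, then apply $h$ to the tuple corresponding to $\bt_2$ in $R'$ and reconstruct. You also correctly highlight the only subtle point, namely that $0$-validity of $R'$ is what makes the cc0-closure hypothesis sufficient.
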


\begin{proof}
  Without loss of generality, we assume that
  $\bt_1=(a_1,\dots,a_r,0,\dots,0)$ and $\bt_2=(0,\dots,0,b_1,\dots,
  b_q,0,\dots,0)$. Let $R'$ be the relation obtained from $R$ by
  substituting $a_1, \dots, a_r$ on the first $r$ coordinates,
  respectively. Since $\bt_1\in R$, we know that $R'$ is 0-valid and
  hence it is in $\Gamma$. Thus $h$ is an endomorphism of $R'$ as well. From
  $\bt_1+\bt_2\in R$, we have $(b_1,\dots,b_q,0,\dots,0)\in R'$.
  Applying $h$ on this tuple gives $(h(b_1),\dots,h(b_q),0,\dots,0)\in
  R'$, which implies, by definition of $R'$, that $(a_1,\dots,
  a_r,h(b_1),\dots,h(b_q),0,\dots,0)=\bt_1+h(\bt_2)\in R$.
\end{proof}

We can extend Lemma~\ref{lem:substitute} to multivalued morphisms with
the same proof:
\begin{lemma}\label{lem:substitute2}
  Let $\Gamma$ be a cc0-language, $R\in \Gamma$ a relation, and $\phi$
  a multivalued morphism of $\Gamma$.  If $\bt_1$ and $\bt_2$ are
  disjoint tuples such that $\bt_1,\bt_1+\bt_2\in R$, then
  $\bt_1+\bt_2'\in R$ for every $\bt_2'\in \phi(\bt_2)$.
\end{lemma}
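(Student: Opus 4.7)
The plan is to mimic the proof of Lemma~\ref{lem:substitute} essentially verbatim, with the multivalued morphism $\phi$ taking the role that the endomorphism $h$ played there. The one genuine point to verify is that the defining property $\phi(0)=\{0\}$ of a multivalued morphism is exactly what is needed to make the final ``reassembly'' step go through, so that multi-valuedness does not spoil the argument.

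Concretely, I would first reorder coordinates so that the nonzero entries of the two disjoint tuples sit in disjoint blocks, and write $\bt_1=(a_1,\dots,a_r,0,\dots,0)$ and $\bt_2=(0,\dots,0,b_1,\dots,b_q,0,\dots,0)$. Let $R'$ be obtained from $R$ by substituting $a_1,\dots,a_r$ on the first $r$ coordinates. Because $\bt_1\in R$, the all-zero tuple lies in $R'$, so $R'$ is 0-valid; since $\Gamma$ is a cc0-language, $R'\in\Gamma$, and hence $\phi$ is a multivalued morphism of $R'$ as well.

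Next, from $\bt_1+\bt_2\in R$ and the definition of $R'$, I get $(b_1,\dots,b_q,0,\dots,0)\in R'$. Applying $\phi$ to this tuple yields
\[
\phi(b_1)\times\cdots\times\phi(b_q)\times\phi(0)\times\cdots\times\phi(0)\;\subseteq\; R'.
\]
Here $\phi(0)=\{0\}$, so every element of this product has the form $(b_1',\dots,b_q',0,\dots,0)$ with $b_i'\in \phi(b_i)$. Now any $\bt_2'\in\phi(\bt_2)$ is, by the same use of $\phi(0)=\{0\}$ on every coordinate where $\bt_2$ is zero, necessarily of the form $(0,\dots,0,b_1',\dots,b_q',0,\dots,0)$ with $b_i'\in\phi(b_i)$. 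Hence $(b_1',\dots,b_q',0,\dots,0)\in R'$, and unfolding the definition of $R'$ gives $\bt_1+\bt_2'=(a_1,\dots,a_r,b_1',\dots,b_q',0,\dots,0)\in R$, as required.

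I do not expect any real obstacle: the only thing to watch is to distinguish cleanly between coordinates where $\bt_2$ is nonzero (contributing factors $\phi(b_i)$ from which $b_i'$ is chosen) and coordinates where $\bt_2$ is zero (collapsing to $\{0\}$, so the entry of $\bt_2'$ is forced to $0$ and $\bt_1$ fills in its own value in that position). Once this bookkeeping is in place, the argument is an immediate adaptation of Lemma~\ref{lem:substitute}, using the cc0 hypothesis to pass from $R$ to the substituted relation $R'\in\Gamma$ and then using the morphism property of $\phi$ on $R'$.
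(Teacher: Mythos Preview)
Your proposal is correct and is exactly the approach the paper takes: the paper states that Lemma~\ref{lem:substitute2} follows from Lemma~\ref{lem:substitute} ``with the same proof,'' and your write-up spells out precisely this adaptation, correctly noting that $\phi(0)=\{0\}$ is the key property making the substitution argument go through.
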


\subsection{Components}

The structure of endomorphisms and inner homomorphisms plays an
important role in our study. The following notion helps to capture
this structure. 
\begin{definition} 
Let $\Gamma$ be a cc0-language. A {\em retraction}
  to $X\subseteq D\setminus \{0\}$ is a 
  mapping $\pr_X$ such that $\pr_X(x)=x$ for $x\in X$ and $\pr_X(x)=0$
  otherwise.  A nonempty subset $C\sse D\setminus \{0\}$ is a {\em
    component} of $\Gamma$ if $\pr_C$ is an endomorphism of
  $\Gamma$. A component $C$ is {\em 
    minimal} if there is no component that is a proper subset of
  $C$.
\end{definition}

Note that by definition, a component contains only nonzero values. The
set $D\setminus \{0\}$ is trivially a component.  If a set $C$ is
not a component, then there is a relation $R\in\Gamma$ and $\bt\in R$
such that $\bt'=\pr_C\bt\not\in R$.

We prove certain combinatorial properties of components.  It is easy
to see that, as the composition of retractions is a retraction, the set of 
components is closed under intersection (if the intersection is not empty):
\begin{observation}\label{prop:compintersect}
  The intersection of two nondisjoint components is also a
  component. Hence for every nonempty $X\subseteq D\setminus \{0\}$,
  there is a unique inclusion-wise minimal component that contains
  $X$; this component is called the component {\em generated by $X$}
  (or simply the component of $X$).
\end{observation}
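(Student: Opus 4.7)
The plan is to prove both parts by showing that the retraction to an intersection of two components factors as a composition of the individual retractions, which then inherits being an endomorphism.

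For the first claim, I would check directly that
\[
  \pr_{C_1 \cap C_2} \;=\; \pr_{C_1} \circ \pr_{C_2},
\]
using the convention $(g \circ h)(x) = h(g(x))$ from the paragraph preceding Definition~\ref{def:4-types}. A short case split on whether $x$ lies in $C_1 \cap C_2$, in $C_1 \setminus C_2$, in $C_2 \setminus C_1$, or outside $C_1 \cup C_2$ (with $x=0$ handled separately) verifies the equality. Since $C_1$ and $C_2$ are components, $\pr_{C_1}$ and $\pr_{C_2}$ are endomorphisms, and the paper has already observed that the composition of two endomorphisms is again an endomorphism. Combined with the hypothesis that $C_1 \cap C_2$ is nonempty, this yields that $C_1 \cap C_2$ is a component.

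For the second claim, let $\mathcal{F}$ be the family of all components containing the given nonempty set $X$. It is nonempty because $D \setminus \{0\}$ is a component and contains $X$, and it is finite because $D$ is finite. Any two members of $\mathcal{F}$ contain $X$, hence are nondisjoint, and their intersection still contains $X$; by the first part, this intersection is in $\mathcal{F}$. A straightforward induction on $|\mathcal{F}|$ gives that $\bigcap_{C \in \mathcal{F}} C$ itself belongs to $\mathcal{F}$, and by construction it is contained in every component that contains $X$, so it is the unique inclusion-wise minimal such component.

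There is no substantive obstacle here: the only point that requires care is getting the order of composition right (the convention $g \circ h$ means apply $g$ first) so that the factorization $\pr_{C_1 \cap C_2} = \pr_{C_1} \circ \pr_{C_2}$ is stated correctly; once this is in place, the argument is a one-line reduction to the already-noted closure of endomorphisms under composition.
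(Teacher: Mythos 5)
Your proof is correct and takes essentially the same approach as the paper, which merely notes that the composition of retractions is a retraction; you make this precise by verifying the factorization $\pr_{C_1 \cap C_2} = \pr_{C_1} \circ \pr_{C_2}$ and then close the second claim with a finite-intersection argument, exactly as intended.
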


In case of cc0-languages, components are closed also under union:
\begin{proposition}\label{prop:compunion}
If $\Gamma$ is a cc0-language, then the union
of two components is also a component.
\end{proposition}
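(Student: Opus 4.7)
The plan is to show directly that for components $C_1$ and $C_2$ of a cc0-language $\Gamma$, the retraction $\pr_{C_1\cup C_2}$ is an endomorphism. Fix an arbitrary relation $R\in\Gamma$ of arity $r$ and an arbitrary tuple $\bt=(a_1,\dots,a_r)\in R$; the goal is to show $\pr_{C_1\cup C_2}(\bt)\in R$. The idea is to apply $\pr_{C_1}$ and $\pr_{C_2}$ one at a time, using the cc0 property to ensure that the intermediate relation obtained by substituting constants stays in $\Gamma$, and is therefore closed under the second retraction.

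Concretely, let $J=\{i : a_i\in C_1\}$ and let $R'\in\Gamma$ be the relation obtained from $R$ by substituting the constants $a_i$ (for $i\in J$) into the corresponding coordinates. Since $C_1$ is a component we have $\pr_{C_1}(\bt)\in R$, which is exactly the statement that the all-zero tuple lies in $R'$; thus $R'$ is 0-valid and, because $\Gamma$ is a cc0-language, $R'\in\Gamma$. Now let $\bt''$ denote the projection of $\bt$ to the coordinates outside $J$; by definition of $R'$, we have $\bt''\in R'$. Because $\bt''$ has all its $C_1$-entries erased, applying $\pr_{C_2}$ to $\bt''$ retains exactly the entries of $\bt''$ whose values lie in $C_2\setminus C_1$ and zeros the rest. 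Since $C_2$ is a component and $R'\in\Gamma$, we have $\pr_{C_2}(\bt'')\in R'$.

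Finally, unwinding the substitution, $\pr_{C_2}(\bt'')\in R'$ means that the tuple obtained by re-inserting the values $a_i$ at the positions $i\in J$ belongs to $R$. This re-inserted tuple agrees with $\bt$ at every coordinate whose value lies in $C_1\cup C_2$ and is $0$ elsewhere, i.e.\ it is precisely $\pr_{C_1\cup C_2}(\bt)$. Hence $\pr_{C_1\cup C_2}(\bt)\in R$, as required.

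The only real step to be careful about is verifying that $R'\in\Gamma$, which is exactly where the cc0 hypothesis is used: the substitution of arbitrary constants would only give us a cc-language, but we need the substituted relation to be 0-valid to stay in $\Gamma$, and this 0-validity is supplied for free by $\pr_{C_1}(\bt)\in R$. Once this hinge is in place, the rest is just bookkeeping about which coordinates each retraction keeps and zeros out.
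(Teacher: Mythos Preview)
The proposal is correct and takes essentially the same approach as the paper. The paper's proof invokes Lemma~\ref{lem:substitute} (which packages the ``substitute the $C_1$-entries as constants to get a 0-valid relation $R'\in\Gamma$, apply the endomorphism $\pr_{C_2}$ inside $R'$, then unsubstitute'' step), whereas you unfold that lemma inline; otherwise the two arguments are identical.
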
 

\begin{proof}
Suppose that $C_1,C_2$ are components. For a relation $R\in\Gamma$ and tuple
$\bt$, let $\bt_1=\pr_{C_1}\bt$, $\bt_2=\pr_{C_2\setminus C_1}\bt$,
$\bt_3=\pr_{D\setminus (C_1\cup C_2\cup\{0\})}\bt$; clearly, $\bt=\bt_1+\bt_2+\bt_3$. As
  $C_1$ is a component, we have $\bt_1\in R$. Thus by
  Lemma~\ref{lem:substitute}, we have
  $\bt_1+\pr_{C_2}(\bt_2+\bt_3)=\bt_1+\bt_2=\pr_{C_1\cup C_2}\bt\in R$. This is
  true for every $R$ and $\bt$, thus $C_1\cup C_2$ is a component.
\end{proof}

A consequence of Proposition~\ref{prop:compunion} is that the component
generated by a subset $X\subseteq D\setminus \{0\}$ is the union
$C_X=\bigcup_{d\in X}C_d$, where $C_d$ is the component generated by
$d$. Indeed, $C_X$ is a component by Proposition~\ref{prop:compunion},
and it is clear that no proper subset of $C_X$ can be a component
containing $X$.

\begin{proposition}\label{prop:setitemgen}
If $\Gamma$ is a cc0-language and $d\in \dom(\Gamma)$ is in the
component generated by $X$ for some $X\sse\dom(\Gamma)\setminus \{0\}$, then there is
a $d'\in X$ such that $d$ is
in the component generated by $d'$.
\end{proposition}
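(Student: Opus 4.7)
The plan is to show that the component generated by $X$ is exactly the union $\bigcup_{d'\in X}C_{d'}$, where $C_{d'}$ denotes the component generated by $\{d'\}$; the conclusion of the proposition is then immediate, since any $d$ in this union lies in some $C_{d'}$ with $d'\in X$.

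First I would observe that $\bigcup_{d'\in X}C_{d'}$ is itself a component. Since $\dom(\Gamma)$ is finite, $X$ is finite, so this follows by iterated application of Proposition~\ref{prop:compunion}. This union obviously contains $X$, so it is one of the components containing $X$, and therefore, by definition of the component generated by $X$ (and by Observation~\ref{prop:compintersect}, which guarantees the ``generated by'' notion is well-defined), the minimal such component, call it $C_X$, is contained in it:
\[
C_X \sse \bigcup_{d'\in X}C_{d'}.
\]

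Next I would argue the reverse inclusion. For each $d'\in X$, the set $C_X$ is a component containing $d'$, so by minimality of $C_{d'}$ we have $C_{d'}\sse C_X$. Taking the union over $d'\in X$ gives
\[
\bigcup_{d'\in X}C_{d'}\sse C_X.
\]
Combining the two inclusions yields $C_X=\bigcup_{d'\in X}C_{d'}$. Now if $d\in C_X$, then $d\in C_{d'}$ for some $d'\in X$, i.e., $d$ lies in the component generated by $d'$, which is exactly what the proposition asserts.

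There is no real obstacle here; the statement is essentially a direct consequence of the fact that components are closed under union in a cc0-language (Proposition~\ref{prop:compunion}) together with the minimality definition of ``the component generated by.'' The only point to be a bit careful about is to invoke Proposition~\ref{prop:compunion} only finitely many times, which is fine because $\dom(\Gamma)$ (and hence $X$) is finite by our standing finiteness assumption on constraint languages.
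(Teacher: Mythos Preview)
Your proof is correct and takes essentially the same approach as the paper. The paper establishes, in the paragraph immediately preceding the proposition, that the component generated by $X$ equals $\bigcup_{d'\in X}C_{d'}$ as a consequence of Proposition~\ref{prop:compunion}, and then states the proposition without a separate proof; your argument spells out both inclusions explicitly but is otherwise identical.
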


The difference of two components is not necessarily a component, but in this case there is a difference counterexample:
\begin{proposition}\label{prop:compdiff}
If $C_1$ and $C_2$ are two  components of $\Gamma$ such that the nonempty set $C_1\setminus C_2$ is not a component, then $\Gamma$ has a difference counterexample contained in $C_1$.
\end{proposition}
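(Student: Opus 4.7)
The plan is to read off a difference counterexample directly from the failure of $C_1\setminus C_2$ to be a component. Since $C_1\setminus C_2$ is nonempty but not a component, the retraction $\pr_{C_1\setminus C_2}$ is not an endomorphism of $\Gamma$, so there exist a relation $R\in\Gamma$ and a tuple $\bt\in R$ with $\pr_{C_1\setminus C_2}(\bt)\notin R$.

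First I would normalize $\bt$ to lie inside $C_1$. Because $C_1$ is a component, the tuple $\bt'=\pr_{C_1}(\bt)$ is in $R$, and $\pr_{C_1\setminus C_2}(\bt')=\pr_{C_1\setminus C_2}(\bt)\notin R$, so I may replace $\bt$ by $\bt'$ and assume $\bt$ is contained in $C_1$. Now I split $\bt$ according to membership in $C_2$ by setting
\[
\bt_1=\pr_{C_1\setminus C_2}(\bt),\qquad \bt_2=\pr_{C_1\cap C_2}(\bt).
\]
Every nonzero coordinate of $\bt$ lies in $C_1$, so it is either in $C_1\setminus C_2$ or in $C_1\cap C_2$, and hence $\bt_1$ and $\bt_2$ are disjoint with $\bt_1+\bt_2=\bt\in R$. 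Both tuples are obviously contained in $C_1$.

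It remains to verify that $\bt_2\in R$. If $C_1\cap C_2=\emptyset$, then $\bt_2$ is the all-zero tuple, which lies in $R$ by 0-validity of $\Gamma$. Otherwise $C_1\cap C_2$ is a nondisjoint intersection of two components and is itself a component by Observation~\ref{prop:compintersect}, so $\pr_{C_1\cap C_2}$ is an endomorphism and $\bt_2\in R$. Combined with $\bt_1\notin R$ (our starting assumption) and $\bt_1+\bt_2\in R$, the triple $(R,\bt_1,\bt_2)$ is exactly a difference counterexample, and it is contained in $C_1$ as required.

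The proof is essentially routine once one realizes that the two natural pieces of $\bt$ cut out by $C_2$ are the ingredients of the counterexample; the only mild subtlety is the $C_1\cap C_2=\emptyset$ case, handled by 0-validity. No new ideas beyond Observation~\ref{prop:compintersect} are needed.
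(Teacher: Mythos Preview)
Your proof is correct and follows essentially the same approach as the paper: both arguments take the tuple witnessing that $\pr_{C_1\setminus C_2}$ is not an endomorphism, split it along $C_2$, and read off the difference counterexample. The paper normalizes by replacing $C_2$ with $C_1\cap C_2$ up front, while you instead normalize $\bt$ by applying $\pr_{C_1}$; the effect is the same. One small remark: the case $C_1\cap C_2=\emptyset$ that you single out cannot actually occur, since then $C_1\setminus C_2=C_1$ would be a component, contrary to hypothesis---so that branch is harmless but unnecessary.
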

\begin{proof}
  As $C_1\cap C_2$ is also a component by
  Observation~\ref{prop:compintersect}, we may assume that
  $C_2=C_1\cap C_2$, that is, $C_2\subseteq C_1$.  As $C_1\setminus
  C_2$ is not a component, there is an $R\in \Gamma$ and a $\bt\in R$
  such that $\bt_1=\pr_{C_1\setminus C_2} \bt\not\in R$. Since both
  $C_1$ and $C_2$ are components, we have $\bt_2=\pr_{C_2}\bt\in R$ and
  $\bt_1+\bt_2=\pr_{C_1}\bt\in R$. Thus $(R,\bt_1,\bt_2)$ is a difference
  counterexample.
\end{proof}

The following statement will be used when we restrict the language to
a subset of the domain:
\begin{proposition}\label{prop:comprestrict}
Let $0\in D' \sse \dom(\Gamma)$ be such that $D'\setminus \{0\}$ is a component of
$\Gamma$. For every $d\in D'$, 
\begin{enumerate}
\item the component generated by $d$ is the same in $\Gamma$
  and $\Gamma_{|D'}$.
\item the type of $d$ in $\Gamma_{|D'}$ is
not greater than that in $\Gamma$.
\end{enumerate}
\end{proposition}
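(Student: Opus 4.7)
The plan is to establish both parts by showing that structural objects of $\Gamma_{|D'}$ (components, multivalued morphisms) can be transferred back and forth to $\Gamma$, using the fact that $\pr_{D'\setminus\{0\}}$ is an endomorphism of $\Gamma$.

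For part~(1), I would prove the two inclusions between the component $C$ of $d$ in $\Gamma$ and the component $C'$ of $d$ in $\Gamma_{|D'}$. For $C'\subseteq C$: note $C\subseteq D'\setminus\{0\}$ since $D'\setminus\{0\}$ is a component containing $d$ and components are closed under intersection (Observation~\ref{prop:compintersect}). Hence the endomorphism $\pr_{C}$ of $\Gamma$ obviously remains an endomorphism of $\Gamma_{|D'}$, so $C$ is a component of $\Gamma_{|D'}$ containing $d$, giving $C'\subseteq C$ by minimality. For $C\subseteq C'$: I would show $\pr_{C'}$ is an endomorphism of $\Gamma$. For any $R\in\Gamma$ and $\bt\in R$, write $\pr_{C'}\bt = \pr_{C'}(\pr_{D'\setminus\{0\}}\bt)$ (the two sides agree coordinate-wise since $C'\subseteq D'\setminus\{0\}$); since $\pr_{D'\setminus\{0\}}\bt\in R_{|D'}$ and $\pr_{C'}$ is an endomorphism of $\Gamma_{|D'}$, this tuple lies in $R_{|D'}\subseteq R$. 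Thus $C'$ is a component of $\Gamma$ containing $d$, and minimality gives $C\subseteq C'$.

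For part~(2), the core technical tool is a transfer lemma for multivalued morphisms: any multivalued morphism $\phi$ of $\Gamma_{|D'}$ extends to a multivalued morphism $\phi'$ of $\Gamma$ by setting $\phi'(y)=\phi(y)$ for $y\in\dom(\Gamma_{|D'})$ and $\phi'(y)=\{0\}$ otherwise. To verify this, given $\bt\in R$, one considers $\bt_1=\pr_{D'\setminus\{0\}}\bt\in R_{|D'}$; since all nonzero entries of $\bt_1$ lie in $\dom(\Gamma_{|D'})$, a coordinate-wise check shows $\phi'(\bt)=\phi(\bt_1)\subseteq R_{|D'}\subseteq R$. A similar restriction works the other way: if $\phi$ is a multivalued morphism of $\Gamma$ whose image is contained in $D'$, then the restriction $\phi\!\upharpoonright_{\dom(\Gamma_{|D'})}$ is a multivalued morphism of $\Gamma_{|D'}$.

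With this transfer in hand, I would argue the inequality of types by contrapositive, ruling out each bad case. If $d$ is semiregular, self-producing, or degenerate in $\Gamma_{|D'}$, then a witnessing multivalued morphism $\phi$ exists in $\Gamma_{|D'}$; the extension $\phi'$ witnesses the same condition in $\Gamma$ (note the witnesses for ``$x$ produces $y$'' and for ``$0,y\in\phi(x)$'' both extend cleanly), forcing the type of $d$ in $\Gamma$ to be at least that of $\Gamma_{|D'}$. The only subtle step is the self-producing case: if $d$ were degenerate in $\Gamma_{|D'}$ but self-producing in $\Gamma$, some $x\in\dom(\Gamma_{|D'})$ produces $d$ in $\Gamma_{|D'}$ (hence in $\Gamma$), so by self-producing $d$ produces $x$ in $\Gamma$ via some $\phi$; since the image of this $\phi$ is $\{0,x\}\subseteq D'$, the restriction above yields that $d$ produces $x$ in $\Gamma_{|D'}$, contradicting degeneracy. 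I expect this restriction-back step, together with the careful coordinate-wise verification of the extension lemma, to be the main technical obstacle, since one must track that $\pr_{D'\setminus\{0\}}\bt$ has all its nonzero entries in $\dom(\Gamma_{|D'})$ (not merely in $D'$).
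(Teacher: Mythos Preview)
Your proposal is correct and follows essentially the same approach as the paper. In both parts you are doing exactly what the paper does: for part~(1) the paper writes the key step as noting that $\pr_{D'\setminus\{0\}}\circ\pr_{C'}$ is an endomorphism of $\Gamma$, which is your identity $\pr_{C'}\bt=\pr_{C'}(\pr_{D'\setminus\{0\}}\bt)$; for part~(2) the paper's compositions $\pr_{D'\setminus\{0\}}\circ\psi$ (extending a morphism of $\Gamma_{|D'}$ to $\Gamma$) and $\phi\circ\pr_{D'\setminus\{0\}}$ (restricting a morphism of $\Gamma$ to $\Gamma_{|D'}$) are exactly your ``extend by $\{0\}$ outside $D'$'' and ``restrict when the image lies in $D'$'' transfer lemmas, and the paper proves the same biconditional (``$a$ produces $b$ in $\Gamma$ iff in $\Gamma_{|D'}$, for $a,b\in D'$'') before running the same case analysis. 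Your worry about $\dom(\Gamma_{|D'})$ versus $D'$ is not a real obstacle: since $D'\subseteq\dom(\Gamma)$ and $D'\setminus\{0\}$ is a component, applying $\pr_{D'\setminus\{0\}}$ to any tuple witnessing $b\in\dom(\Gamma)$ shows $\dom(\Gamma_{|D'})=D'$.
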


\begin{proof}
Let $C$ (resp., $C'$)  be the component generated by $d$ in $\Gamma$
(resp., $\Gamma_{|D'}$). Since $D'\setminus \{0\}$ is a component containing $d$, we
have $C\subseteq D'\setminus \{0\}$. As $\pr_{C}$ is an endomorphism of $\Gamma$, it is an
endomorphism of $\Gamma_{|D'}$ as well, thus $C'\subseteq C$.
Furthermore,
as $\pr_{D'\setminus 0}\circ \pr_{C'}$ is an endomorphism of $\Gamma$,
the set $C'$ is a component of $\Gamma$, implying $C\subseteq C'$.

For the second statement, suppose that $d$ is semiregular in
$\Gamma_{|D'}$ and let $\psi$ be a multivalued morphism witnessing
this, i.e., $0,d\in \psi(c)$ for some $c\in D'$. Then $\pr_{D'\setminus \{0\}}\circ
\psi$ witnesses that $d$ cannot be regular in $\Gamma$. Let us next
show that for any $a,b\in D'$, $a$ produces $b$ in $\Gamma$ if
and only if $a$ produces $b$ in $\Gamma_{|D'}$. The forward direction
follows from the fact that for any multivalued morphism $\phi$ of
$\Gamma$, the mapping $\phi \circ \pr_{D'\setminus \{0\}}$ is a multivalued morphism
of $\Gamma_{|D'}$, and if $\phi(a)=\{0,b\}$, then
$(\phi \circ \pr_{D'\setminus \{0\}})(a)=\phi(a)=\{0,b\}$. 
For the backward direction, let $\psi$ be a multivalued morphism of
$\Gamma_{|D'}$ witnessing that $a$ produces $b$. Then $\pr_{D'}\circ
\psi$ witnesses that $a$ produces $b$ in $\Gamma$. It is now
immediate that if $d$ is degenerate in $\Gamma_{|D'}$, then it is
degenerate in $\Gamma$ as well, and if $d$ is self-producing in
$\Gamma_{|D'}$, then (as $d$ produces itself) it cannot be regular or
semiregular in $\Gamma$.
\end{proof}

The importance of components comes from the following result: there is
 a counterexample to weak separability where each of $\bt_1$ and $\bt_2$ is
 contained in one component. This observation will be essential in the
 hardness proofs.

\begin{lemma}\label{lem:regularcounter}
If $\Gamma$ is not weakly separable, then there is a
counterexample $(R,\bt_1,\bt_2)$ which is either
 \begin{enumerate}
 \item 
 a union counterexample, and $\bt_1$ (resp., $\bt_2$) is
  contained in a component generated by a 
  value $a_1$ (resp., $a_2$), or
 \item 
 a difference counterexample, and both $\bt_1$ and $\bt_2$ are
  contained in a component generated by a 
  value $a_1$.
 \end{enumerate}
\end{lemma}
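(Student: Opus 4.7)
The plan is to argue by strong induction on $|V_1\cup V_2|$, where $V_i$ denotes the set of distinct nonzero values appearing in $\bt_i$. The base case $|V_1\cup V_2|\leq 1$ is immediate: both tuples then lie in the single-generated component of the unique nonzero value, so the given counterexample already has the desired form. For the inductive step, if $V_1\cup V_2\subseteq C_a$ for some $a\in\dom(\Gamma)$ then the counterexample is already good (take $a_1=a_2=a$); otherwise $V:=V_1\cup V_2$ properly straddles several single-generated components.

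For each $d\in V$ I would then examine the projections $\pr_{C_d}(\bt_1)$ and $\pr_{C_d}(\bt_2)$; both lie in $R$ since $C_d$ is a component. If for some $d$ the restriction still witnesses a counterexample of the same type---that is, $\pr_{C_d}(\bt_1+\bt_2)\notin R$ in the union case or $\pr_{C_d}(\bt_1)\notin R$ in the difference case---then we obtain a smaller counterexample whose value set is contained in $V\cap C_d\subsetneq V$, and the inductive hypothesis applies. Otherwise we are in the \emph{nice} case where every single-gen restriction preserves the relevant membership conditions.

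In the nice case I would first try to apply Proposition~\ref{prop:compdiff}: if some $d\in V$ and some component $C_2$ satisfy that $C_d\setminus C_2$ is nonempty but not a component, then (after using Observation~\ref{prop:compintersect} to replace $C_2$ by $C_d\cap C_2$) Proposition~\ref{prop:compdiff} delivers a good difference counterexample with both tuples contained in the single-generated component $C_d$. Failing that, $C_a\setminus C_b$ is empty or a component for every pair of single-generated components, which forces the collection of single-gens to be tree-like: any two are either nested or disjoint. Consequently $V$ is contained in a disjoint union of maximal single-generated components $C_{d_1},\dots,C_{d_k}$ with $k\geq 2$.

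In this tree-like subcase I would treat union and difference uniformly by letting $\bt:=\bt_1+\bt_2$ (union) or $\bt:=\bt_1$ (difference); niceness gives the disjoint decomposition $\bt=\sum_{j=1}^{k}\pr_{C_{d_j}}(\bt)$ with each piece in $R$ while $\bt\notin R$. Choose an inclusion-minimal $I\subseteq\{1,\dots,k\}$ with $|I|\geq 2$ and $\sum_{j\in I}\pr_{C_{d_j}}(\bt)\notin R$. If $|I|=2$, the pair $(\pr_{C_{d_{i_1}}}(\bt),\pr_{C_{d_{i_2}}}(\bt))$ with $I=\{i_1,i_2\}$ is a good union counterexample, each tuple sitting in a single-generated component. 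If $|I|\geq 3$, pick any $i^*\in I$; niceness guarantees that substituting the nonzero values of $\pr_{C_{d_{i^*}}}(\bt)$ at their positions into $R$ yields a 0-valid relation $R'\in\Gamma$ (using that $\Gamma$ is a cc0-language), and inside $R'$ the partial-sum structure of the pieces over $I\setminus\{i^*\}$ exactly mirrors the original with $|I|$ dropped by one, enabling a secondary induction on $|I|$ that terminates at $|I|=2$. The main obstacle is precisely this tree-like subcase: verifying that the cc0-substitution faithfully transports the minimal-$I$ structure into $R'$, and ensuring that the two nested inductions on $|V_1\cup V_2|$ and on $|I|$ compose correctly, is where the careful bookkeeping lies.
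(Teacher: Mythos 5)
Your proof is correct and reaches the same conclusion, but organizes the argument quite differently from the paper. The paper's proof is a direct case analysis: if two single-generated components intersect, it immediately invokes Proposition~\ref{prop:compdiff}; otherwise the single-generated components partition $D\setminus\{0\}$, the union case decomposes $\bt_1$ and $\bt_2$ \emph{separately} into projections $\pr_{K_i}(\bt_1),\pr_{K_i}(\bt_2)$ (each trivially in $R$, so no ``niceness'' hypothesis is needed), finds a smallest bad subfamily, and substitutes in one shot; and the difference case is disposed of by contradiction under the standing assumption that no union counterexample exists (all pieces of $\bt_1$ lie in $R$, their disjoint union lies in $R$, hence $\bt_1\in R$). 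Your version instead runs an outer induction on $|V_1\cup V_2|$ to manufacture a ``nice'' counterexample, decomposes the \emph{combined} tuple $\bt_1+\bt_2$ (resp.\ $\bt_1$) into component projections---which is exactly what needs the niceness you established---and then shrinks the minimal bad index set $I$ one element at a time by cc0-substitution. Both routes use the same three ingredients (component retractions, Proposition~\ref{prop:compdiff}, and substitution of constants), so neither is more elementary; the paper's is shorter because it avoids the induction scaffolding and settles the difference case with one contradiction, while yours has the mild advantage of handling union and difference counterexamples uniformly in the final decomposition. Two points to tidy: in your second paragraph, $\pr_{C_d}(\bt_1)$ is \emph{not} automatically in $R$ in the difference case (only $\pr_{C_d}(\bt_2)$ and $\pr_{C_d}(\bt_1+\bt_2)$ are), which is precisely why you test $\pr_{C_d}(\bt_1)\notin R$ there; and the ``tree-like'' description overstates the failing branch---if $C_a\subsetneq C_b$ for generators $a,b\in V$ then $b\in C_b\setminus C_a$, so $C_b\setminus C_a$ cannot be a component, and the single-gens for $d\in V$ are in fact pairwise disjoint rather than merely nested-or-disjoint.
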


\begin{proof}
  Let $K_1$, $\dots$, $K_r$ be the distinct components generated by
  the nonzero values in $\Gamma$. Assume first that there are two
  components $K_i$, $K_j$ that intersect; without loss of generality,
  we can assume that $K_i\setminus K_j\neq \emptyset$. Let $a$ be a
  value that generates $K_i$. Clearly, $a\not \in K_i\cap K_j$:
  otherwise Prop.~\ref{prop:compintersect} implies that $K_i\cap
  K_j\subset K_i$ is a component containing $a$, contradicting the
  assumption that $a$ generates $K_i$.  Thus $K_i\setminus K_j\subset
  K_i$ is not a component, since otherwise it would be a strictly
  smaller component containing $a$. Now Prop.~\ref{prop:compdiff}
  implies that there is a difference counterexample contained in the component $K_i$
  generated by $a$, satisfying the requirements.

Thus in what follows, we can assume that $K_1$, $\dots$, $K_r$ are
pairwise disjoint, i.e., they partition $D$. Suppose
that there is a union counterexample $(R,\bt_1,\bt_2)$. Tuple $\bt_1$ can be
represented as a union $\bt_{1,1}+\dots+\bt_{1,r_1}$ of nonzero
disjoint tuples such that each $\bt_{1,i}=\pr_{K_{j_i}}\bt_1$ is
contained in one of the components 
$K_1,\dots,K_r$.  The tuples $\bt_{2,1},\dots,\bt_{2,r_2}$ are
defined similarly.  Let $\bs_1,\dots,\bs_{r_1+r_2}$ be an
arbitrary ordering of these $r_1+r_2$ tuples. It is clear that each
$\bs_i$ is in $R$, since $K_1$, $\dots$, $K_r$ are components. As the
union of these tuples is not in $R$, there
is an integer $j\ge 1$ such that the union of any $j$ of these tuples
is in $R$, but there is a subset of $j+1$ tuples whose union is not in
$R$. Without loss of generality, suppose that  $\bigcup_{i=1}^{j+1}\bs_i$ is
not in $R$. If $j=1$, then $(R,\bs_1,\bs_2)$ is a required
counterexample. If $j>1$ then define
$\bs_0:=\sum_{i=1}^{j-1}\bs_i$. By assumption, 
$\bs_0\in R$, hence substituting the nonzero values of $\bs_0$ into $R$ as
constants gives a 0-valid relation $R'$. Furthermore,
$\bs_0+ \bs_j, \bs_0+\bs_{j+1}\in R$ by the definition of $j$; let $\bs'_j,\bs'_{j+1}\in
R'$ be the corresponding tuples. Now $(R',\bs'_j,\bs'_{j+1})$ is a union
counterexample: tuples $\bs'_j$, $\bs'_{j+1}$ are disjoint
and $\bs'_j+\bs'_{j+1}\not\in R'$ follows from $\bs_0+\bs_j+
\bs_{j+1}\not\in R$. 

Thus we can assume that there is no union counterexample.
Suppose that there is difference counterexample
$(R,\bt_1,\bt_2)$.  Assume that $(R,\bt_1,\bt_2)$ is minimal in the sense that
$\bt_1+\bt_2$ has minimal number of nonzero coordinates among such
counterexamples.  We claim 
that $\bt_1+\bt_2$ is contained in one of the components $K_i$ defined
above. Suppose that $\bt_1+\bt_2$ contains nonzero values from components
$K_1,\dots,K_g$ for $g\ge 2$. We show that $\pr_{K_i}(\bt_1)\in R$
for every $1 \le i \le g$.  This clearly holds if $\pr_{K_i}(\bt_1)$
equals $\pr_{K_i}(\bt_1+\bt_2)$ or the zero tuple. Thus we can assume that
$\pr_{K_i}(\bt_1+\bt_2)\in R$ is the disjoint union of nonzero tuples
$\pr_{K_i}(\bt_1)$ and $\pr_{K_i}(\bt_2)\in R$. If $\pr_{K_i}(\bt_1)\not\in
R$, then $(R,\pr_{K_i}(\bt_1),\pr_{K_i}(\bt_2))$ is a difference
counterexample, contradicting the minimality of $(R,\bt_1,\bt_2)$ as $g\ge 2$. Thus
$\pr_{K_i}(\bt_1)\in R$ for every $i$. However, the disjoint union of
these tuples is also in $R$ (since by assumption there is no union
counterexample), that is, $\bt_1\in R$, a contradiction. It follows that
$\bt_1+\bt_2$ belongs to some component $K_i$, i.e., there is a value $a$ that
generates $K_i$.
\end{proof}

\subsection{Multivalued morphism gadgets}\label{sec:gadgets}
The main technical tool in the hardness proofs are the gadgets defined
in this section. Intuitively, assuming that $D=\{0,1,\dots,\Delta\}$,
we want a gadget consisting of variables $v_0$, $v_1$, $\dots$,
$v_{\Delta}$ that has only two possible satisfying assignments: (1)
either 0 appears on every $v_i$, or (2) value $i$ appears on $v_i$ for
every $i$. Such gadgets would allow us to use a counterexample to weak
separability to prove hardness in similar way as hardness is proved in
the Boolean case (see Example~\ref{exa:weaklysep}). However, due to
the endomorphisms of the constraint language, such a gadget is not
always possible to create: a nontrivial endomorphism can be used to
transform satisfying assignments into new ones.

Therefore, our goal is more modest: we want a gadget that is either
fully zero or represents an endomorphism in every satisfying
assignment. That is, if variable $v_i$ gets value $c_i$, then the
mapping $h$ defined by $h(i)=c_i$ is an endomorphism. We 
enforce this requirement by introducing, for every $R\in \Gamma$ and
$(a_1,\dots,a_r)\in R$, a constraint $\langle
(v_{a_1},\dots,v_{a_r}),R\rangle$. Such a constraint ensures that
applying the mapping $h$ given by an assignment to the tuple
$(a_1,\dots,a_r)$ gives a tuple in $R$. 

The gadgets we use in the reductions are more general than the one
described in the previous paragraph: instead of a single variable
$v_i$ representing value $i$, we have a bag $B_i$ of a variables
representing this value. Setting the size of these bags and the
cardinality/size constraint is an essential and delicate part of the
reduction. The requirement that we want to enforce now is that if
$\psi(i)$ is the set of values appearing on the variables of $B_i$ in
a satisfying assignment, then $\psi$ is a multivalued morphism of
$\Gamma$. This can be ensured in a way similar to the construction in
the previous paragraph (see below for details).

A minor technical detail is that we defined morphisms in such a way
that 0 is always mapped to 0, thus there is no need to introduce
variables representing what 0 is mapped to; it is more convenient to
use constant 0's instead. We need the following definition to
formulate this conveniently.  For a relation $R$ and a tuple $\bt\in
R$, we denote by $\supph(\bt)$ the set of coordinate positions of $\bt$
occupied by nonzero elements.  Let $\supph_\bt(R)$ denote the relation
obtained by substituting 0 into all coordinates of $R$ except for
$\supph(\bt)$, i.e.\ if $R$ is $r$-ary and
$\supph(\bt)=\{1,\dots,r\}\setminus \{i_1,\ldots,i_q\}$, then
$\supph_\bt(R)=R^{|i_1,\ldots, i_q;0,\ldots,0}$.

For a cc0-language $\Gamma$ and some $0\in D'\sse \dom(\Gamma)$, a
{\em multivalued morphism gadget $\mvm(\Gamma,D')$} consists of
$|D'|-1$ bags of vertices $B_d$, $d\in D'\setminus\{0\}$. The number of
variables in each bag will be specified every time we use the gadget in a proof.
The gadget is equipped with the following set of constraints. For
every $R\in \Gamma$ and every tuple $\bt=(a_1,\dots,a_r)\in R_{|D'}$,
we add all possible constraints $\ang{\bs,\supph_\bt(R)}$ where
$\supph(\bt)=\{i_1,\ldots,i_q\}$, $\bs=(v_{{i_1}},\dots,
v_{{i_q}})$, and $v_{{i_j}}\in B_{a_{i_j}}$ for every 
$1\le j \le q$. The {\em standard assignment} of
a gadget assigns $a$ to every variable in bag $B_a$. Observe that the
standard assignment satisfies all the constraints of the gadget. We
say that bag $B_a$ and the variables in bag $B_a$ {\em represent}
$a$. When we say that a gadget is {\em fully nonzero,} then we mean
that all the variables are assigned nonzero values.

\begin{proposition}\label{prop:mvmgadget}
  Let $0\in D'\sse \dom(\Gamma)$. Consider a satisfying
  assignment $f$ of an $\mvm(\Gamma,D')$ gadget. If $h_f:D'\to
  2^{\dom(\Gamma)}$ is a mapping such that $h_f(a)$ is the set of values
  appearing in bag $B_a$ of the gadget and $h_f(0)=\{0\}$, then $h_f$ is
  an inner multivalued morphism of $\Gamma$ from $D'$ to $\dom(\Gamma)$.
\end{proposition}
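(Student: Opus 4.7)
The plan is to unfold the definition of inner multivalued morphism and verify the required condition directly from the constraints built into the gadget. By definition, we need to show that for every $R\in\Gamma$ and every tuple $\bt=(a_1,\ldots,a_r)\in R_{|D'}$, the product $h_f(a_1)\times\cdots\times h_f(a_r)$ is contained in $R$ (which, regarded as a subset of $\dom(\Gamma)^r$, equals $R_{|\dom(\Gamma)}$). The mapping $h_f$ also satisfies $h_f(0)=\{0\}$ by hypothesis, so one condition is free.

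Fix such an $R$ and $\bt$, and take an arbitrary tuple $\bb=(b_1,\ldots,b_r)\in h_f(a_1)\times\cdots\times h_f(a_r)$. The first key observation is that whenever $a_j=0$ we have $h_f(a_j)=\{0\}$, which forces $b_j=0$; hence the nonzero coordinates of $\bb$ are contained in $\supph(\bt)=\{i_1,\ldots,i_q\}$. The second key observation is that for each $i_j\in\supph(\bt)$, since $b_{i_j}\in h_f(a_{i_j})$ and $h_f(a_{i_j})$ is by definition exactly the set of values taken by $f$ on the variables of bag $B_{a_{i_j}}$, we can choose a variable $v_{i_j}\in B_{a_{i_j}}$ with $f(v_{i_j})=b_{i_j}$.

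By the construction of $\mvm(\Gamma,D')$, \emph{every} choice of variables $v_{i_j}\in B_{a_{i_j}}$ (for $j=1,\ldots,q$) gives rise to a constraint $\ang{(v_{i_1},\ldots,v_{i_q}),\supph_\bt(R)}$ in the gadget. In particular the constraint produced by our chosen $v_{i_j}$'s is present, and since $f$ satisfies it, the tuple $(b_{i_1},\ldots,b_{i_q})=(f(v_{i_1}),\ldots,f(v_{i_q}))$ belongs to $\supph_\bt(R)$. Unwinding the definition of $\supph_\bt(R)$ (which was obtained from $R$ by substituting $0$ at every coordinate outside $\supph(\bt)$), this is exactly the statement that the full tuple $\bb$, whose coordinates outside $\supph(\bt)$ are all $0$, lies in $R$. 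This is what we needed.

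There is no real obstacle here beyond careful bookkeeping; the proof is essentially a direct translation of the gadget definition. The only two points that require a moment of care are (i) that the clause ``all possible constraints'' in the gadget definition lets us pick, for every $(b_{i_1},\ldots,b_{i_q})$ realized on bags $B_{a_{i_1}},\ldots,B_{a_{i_q}}$, a constraint witnessing membership in $\supph_\bt(R)$, and (ii) that $h_f(0)=\{0\}$ is used (together with $\supph_\bt$) to extend membership in $\supph_\bt(R)$ to membership of the full tuple $\bb$ in $R$.
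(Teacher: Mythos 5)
Your proof is correct and follows essentially the same approach as the paper's: fix $R$, $\bt\in R_{|D'}$, and $\bb\in h_f(\bt)$; pick a variable in each bag $B_{a_{i_j}}$ on which $f$ takes the value $b_{i_j}$; invoke the corresponding gadget constraint on $\supph_\bt(R)$; and unwind the definition of $\supph_\bt(R)$ to conclude $\bb\in R$. The only difference is stylistic elaboration of the two bookkeeping points, which the paper states more tersely.
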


\begin{proof}
  Let $R$ be a relation of $\Gamma$ and let $\bt=(a_1,\dots,a_r)\in
  R_{|D'}$. Let $(b_1,\dots, b_r)\in h_f(\bt)$. By the definition of $h_f$, for every $1\le i\le r$, either
  $a_i=b_i=0$ or $a_i\neq 0$ and there is a variable $v_i$ in bag
  $B_{a_i}$ having value $b_i$. Let
  $\supph_\bt=\{{i_1},\dots,{i_{r'}}\}$. The gadget is defined such
  that there is a constraint
  $\ang{(v_{i_1},\dots,v_{i_{r'}}),\supph_\bt(R)}$, implying that
  $(b_{i_1},\dots,b_{i_{r'}})\in \supph_\bt(R)$ and hence
  $(b_1,\dots,b_r)\in R$. It follows that $h_f(\bt)\subseteq R$.
\end{proof}

Note that if $D'=\dom(\Gamma)$, then $h_f$ is a multivalued morphism
of $\Gamma$.  Mapping $h_f$, or any mapping $h'$ with $h'(a)\subseteq
h_f(a)$ (for $a\in D'$), is said to be an (inner) multivalued morphism
{\em given} by the gadget and assignment $f$. If $|h'(a)|=1$ for all 
$a\in D'$, we call it an endomorphism (inner homomorphism) 
{\em given} by the gadget and assignment $f$.

We define two types of gadgets connecting MVM gadgets. The
gadget $\nand(G_1,G_2)$ on $\mvm(\Gamma,D')$ gadgets $G_1$,
$G_2$ consists of the following constraints. For every $R\in \Gamma$
and disjoint tuples $\bt_1=(a_1,\dots,a_r)$, $\bt_2=(b_1,\dots,b_r)$ in
$R_{|D'}$, we add a constraint $\ang{\bs,\supph_{\bt_1+\bt_2}(R)}$, for every
$\bs=(v_{i_1},\ldots, v_{i_q})$ with $\{i_1,\ldots
i_q\}=\supph(\bt_1+\bt_2)$, such that $v_{i_j}$
is in bag $B_{a_{i_j}}$ of $G_1$ if $a_{i_j}\neq 0$ and $v_{i_j}$ is in bag
$B_{b_{i_j}}$ of $G_2$ if $b_{i_j}\neq 0$.

It is not difficult to see (Lemma~\ref{lem:nandconstraint} below) that
if one of $G_1$, $G_2$ has the standard assignment and the other is
fully zero, then all the constraints in $\nand(G_1,G_2)$ are
satisfied. On the other hand, if both $G_1$ and $G_2$ have the
standard assignment and there is a union counterexample, then
$\nand(G_1,G_2)$ is not satisfied. For the reductions, we need this
second conclusion not only if both $G_1$ and $G_2$ have the standard
assignment, but also assignments that ``behave well'' in some
sense. The right notion for our purposes is the following: An inner
homomorphism $h:D'\to \dom(\Gamma)$ of $\Gamma$ is {\em
  $\bt$-recoverable} for some tuple $\bt$ if $h$ is invertible on
elements of $\bt$ in the following sense: $\Gamma$ has a multivalued morphism $\phi$ such
that $\bt\in (h\circ \phi)(\bt)$.  In particular, this is true if there is an endomorphism $h'$ of $\Gamma$ with 
$(h\circ h')(\bt)=\bt$. We say that a $\mvm(\Gamma,D')$  gadget is
$\bt$-recoverable in a given assignment if at least one of the inner homomorphisms given by
it is $\bt$-recoverable.

\begin{lemma}\label{lem:nandconstraint}
Let $0\in D'\sse \dom(\Gamma)$ and let there be a $\nand(G_1,G_2)$
gadget on $\mvm(\Gamma,D')$ gadgets $G_1$, 
$G_2$.
\begin{enumerate}
 \item 
 If one of $G_1$ and $G_2$ has the standard assignment and the
  other gadget is fully zero, then all constraints of $\nand(G_1,G_2)$ 
  are satisfied.
\item 
Let $f$ be a satisfying assignment of $\nand(G_1,G_2)$. If $h_i$ is 
an inner homomorphism given by gadget $G_i$ and assignment $f$ 
for $i=1,2$ and $\bt_1,\bt_2$ are disjoint tuples in $R_{|D'}$, 
then $h_1(\bt_1)+h_2(\bt_2)\in R$.

\item If there is a union counterexample $(R,\bt_1,\bt_2)$ in
  $\Gamma_{|D'}$ and gadget $G_i$ is $\bt_i$-recoverable in assignment $f$ (for
  $i=1,2$), then some constraint of $\nand(G_1,G_2)$ is not satisfied by $f$.
\end{enumerate}
\end{lemma}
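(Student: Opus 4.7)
The three parts are handled in sequence, each building on the previous.

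For part (1), the plan is to unwind the definitions. Consider an arbitrary constraint $\ang{\bs,\supph_{\bt_1+\bt_2}(R)}$ of $\nand(G_1,G_2)$, coming from disjoint $\bt_1=(a_1,\ldots,a_r),\bt_2=(b_1,\ldots,b_r)\in R_{|D'}$, where each coordinate $v_{i_j}$ of $\bs$ lies in a bag of $G_1$ (when $a_{i_j}\neq 0$) or of $G_2$ (when $b_{i_j}\neq 0$). If $G_1$ is standard and $G_2$ is fully zero, then $f$ assigns to $\bs$ the tuple whose $j$-th coordinate is $a_{i_j}$ if $a_{i_j}\neq 0$ and $0$ if $b_{i_j}\neq 0$ (using disjointness); padding this tuple back with zeros outside $\supph(\bt_1+\bt_2)$ gives exactly $\bt_1\in R$, so $f(\bs)\in\supph_{\bt_1+\bt_2}(R)$. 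The symmetric case is identical.

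For part (2), fix disjoint $\bt_1,\bt_2\in R_{|D'}$ and the given inner homomorphisms $h_1,h_2$. By the definition of an inner homomorphism given by $G_i$, the bag $B_{a_{i_j}}$ of $G_1$ contains a variable with value $h_1(a_{i_j})$ and the bag $B_{b_{i_j}}$ of $G_2$ contains a variable with value $h_2(b_{i_j})$. Choose $\bs$ in the $\nand$ constraint corresponding to $\bt_1,\bt_2$ to consist of precisely these variables. Since $f$ satisfies this constraint, the resulting tuple lies in $\supph_{\bt_1+\bt_2}(R)$, and padding it with zeros outside the support yields $h_1(\bt_1)+h_2(\bt_2)\in R$.

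For part (3), I proceed by contradiction: assume $f$ satisfies every constraint of $\nand(G_1,G_2)$ while $(R,\bt_1,\bt_2)$ is a union counterexample and each $G_i$ is $\bt_i$-recoverable, witnessed by an inner homomorphism $h_i$ and a multivalued morphism $\phi_i$ of $\Gamma$ with $\bt_i\in\phi_i(h_i(\bt_i))$. By part (2), $h_1(\bt_1)+h_2(\bt_2)\in R$, and also $h_1(\bt_1)\in R$ since $h_1$ is an inner homomorphism and $\bt_1\in R_{|D'}$. Apply Lemma~\ref{lem:substitute2} to the disjoint pair $h_1(\bt_1), h_2(\bt_2)$ with morphism $\phi_2$: picking $\bt_2\in\phi_2(h_2(\bt_2))$ as the perturbation yields $h_1(\bt_1)+\bt_2\in R$. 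Since $\bt_2\in R$ as well (it is part of the counterexample), apply Lemma~\ref{lem:substitute2} once more, now to the disjoint pair $\bt_2, h_1(\bt_1)$ with morphism $\phi_1$, and pick $\bt_1\in\phi_1(h_1(\bt_1))$ as the perturbation to conclude $\bt_2+\bt_1\in R$, contradicting the fact that $(R,\bt_1,\bt_2)$ is a union counterexample.

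The main obstacle is mostly notational: one must keep straight the interplay between the full-arity relation $R$, the ``supported part'' $\supph_\bt(R)$ living on the nonzero coordinates, and the way bags of $G_1$ and $G_2$ are interleaved inside a single $\nand$ constraint. Once parts (1) and (2) clarify this bookkeeping, the conceptual content is concentrated in part (3), whose double application of Lemma~\ref{lem:substitute2} is precisely what the notion of $\bt$-recoverability is tailored to drive.
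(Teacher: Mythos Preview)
Your proposal is correct and follows essentially the same approach as the paper's proof: part (1) by unwinding the definition to recover $\bt_1$ (or $\bt_2$) as the satisfying tuple, part (2) by choosing the witnessing variables in the bags, and part (3) by two successive applications of Lemma~\ref{lem:substitute2} exactly as you describe. The only cosmetic difference is that the paper handles part (1) with a ``without loss of generality'' instead of stating the symmetric case, and is terser in part (2); your more explicit bookkeeping is if anything an improvement.
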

\begin{proof}
  1. Suppose without loss of generality that $G_1$ has the standard
  assignment and $G_2$ is fully zero. Consider a relation $R\in
  \Gamma$ and disjoint tuples $\bt_1=(a_1,\dots,a_r)$,
  $\bt_2=(b_1,\dots,b_r)\in R_{|D'}$. In a corresponding constraint
  $\ang{(v_1,\dots,v_r),\supph_{\bt_1+\bt_2}(R)}$, if $a_i\neq 0$, then
  variable $v_i$ has value $a_i$ (since it is in bag $B_{a_i}$ of
  $G_1$) and has value 0 otherwise. Thus $(a_1,\dots,a_r)\in R$
  implies that the constraint is satisfied.

2. If the $\nand(G_1,G_2)$ instance is satisfied,
then one of the constraints corresponding to $\bt_1$ and $\bt_2$ ensures  that $h_1(\bt_1)+h_2(\bt_2)\in R$.

3. For $i=1,2$, let $h_i$ be a $\bt_i$-recoverable inner homomorphism
given by $G_i$ and let $\phi_i$ be a multivalued morphism such that
$\bt_i\in (h_i\circ \phi_i)(\bt_i)$.  Since $h_i$ is an inner
homomorphism, we have that $h_1(\bt_1),h_2(\bt_2)\in R$.  Statement 2
implies that $h_1(\bt_1)+h_2(\bt_2)\in R$.  By
Lemma~\ref{lem:substitute2}, we have that $h_1(\bt_1)+\bt'_2\in R$ for
any $\bt'_2\in \phi_2(h_2(\bt_2))$; in particular, this means that
$h_1(\bt_1)+\bt_2$ is in $R$. As $\bt_2\in R$, we can apply
Lemma~\ref{lem:substitute2} once more to get that $\bt'_1+\bt_2\in R$
for any $\bt'_1\in \phi_1(h_1(\bt_1))$; in particular, $\bt_1+\bt_2\in
R$, a contradiction.
\end{proof}

The $\imp(G_1,G_2)$ gadget is defined similarly, but instead of
$\bt_1,\bt_2\in R_{|D'}$, we require $\bt_2,\bt_1+\bt_2\in
R_{|D'}$. 

\begin{lemma}\label{lem:impconstraint}
Let $0\in D'\sse \dom(\Gamma)$ and let there be an $\imp(G_1,G_2)$
gadget on $\mvm(\Gamma,D')$ gadgets $G_1$, 
$G_2$.
\begin{enumerate}
\item 
 If $G_2$ has the standard assignment and $G_1$ either has the
  standard assignment or fully zero, then all constraints of $\imp(G_1,G_2)$ 
  are satisfied.
\item 
Let $f$ be a satisfying assignment of $\imp(G_1,G_2)$. If $h_i$ 
is an inner endomorphism given by gadget $G_i$ and assignment $f$ 
for $i=1,2$ and $\bt_1,\bt_2$ are disjoint tuples such with  
$\bt_2,\bt_1+\bt_2\in R_{|D'}$, then $h_1(\bt_1)+h_2(\bt_2)\in R$.

\item 
 If there is a difference counterexample $(R,\bt_1,\bt_2)$ in
  $\Gamma_{|D'}$, $G_1$
  is $\bt_1$-recoverable in assignment $f$ and $G_2$
  gives an inner homomorphism $h_2$ with $h_2(\bt_2)=0$, then some
   constraint of $\imp(G_1,G_2)$  is not satisfied.
\end{enumerate}
\end{lemma}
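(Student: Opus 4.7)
The plan is to mirror the proof of Lemma~\ref{lem:nandconstraint}, noting that the only difference between $\imp(G_1,G_2)$ and $\nand(G_1,G_2)$ is that in the former we index constraints by disjoint pairs satisfying $\bt_2,\bt_1+\bt_2\in R_{|D'}$ rather than $\bt_1,\bt_2\in R_{|D'}$.

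For part 1, I would argue directly from the form of a constraint $\langle\bs,\supph_{\bt_1+\bt_2}(R)\rangle$: its $G_1$-coordinates lie in bags $B_{a_{i_j}}$ for $a_{i_j}\in\supph(\bt_1)$ and its $G_2$-coordinates lie in bags $B_{b_{i_j}}$ for $b_{i_j}\in\supph(\bt_2)$. If both gadgets take the standard assignment, then $\bs$ receives the projection of $\bt_1+\bt_2$ to $\supph(\bt_1+\bt_2)$, which lies in $\supph_{\bt_1+\bt_2}(R)$ because $\bt_1+\bt_2\in R$. If $G_1$ is fully zero and $G_2$ is standard, then $\bs$ receives the projection of $\bt_2$ to $\supph(\bt_1+\bt_2)$, which lies in $\supph_{\bt_1+\bt_2}(R)$ because $\bt_2\in R$ and $\supph(\bt_2)\sse\supph(\bt_1+\bt_2)$. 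Part 2 will follow from the same bookkeeping: the constraint attached to the given pair $\bt_1,\bt_2$ exists by definition of $\imp$, and since each $h_i$ is single-valued, the $\bs$-values are forced to equal $h_1(\bt_1)$ on the $G_1$-side and $h_2(\bt_2)$ on the $G_2$-side, giving $h_1(\bt_1)+h_2(\bt_2)\in R$.

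The substantive work is part 3. I would apply part 2 to the difference counterexample $(R,\bt_1,\bt_2)$, which is a legitimate input since $\bt_2,\bt_1+\bt_2\in R_{|D'}$; this yields $h_1(\bt_1)+h_2(\bt_2)\in R$, and the hypothesis $h_2(\bt_2)=0$ collapses it to $h_1(\bt_1)\in R$. The $\bt_1$-recoverability of $G_1$ under $f$ supplies a multivalued morphism $\phi_1$ of $\Gamma$ with $\bt_1\in(h_1\circ\phi_1)(\bt_1)=\phi_1(h_1(\bt_1))$. Since $h_1(\bt_1)\in R$, the defining property of a multivalued morphism forces $\phi_1(h_1(\bt_1))\sse R$, hence $\bt_1\in R$, contradicting that $(R,\bt_1,\bt_2)$ is a difference counterexample. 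The only point to watch is the composition convention used in the paper: $h_1\circ\phi_1$ means ``$h_1$ first, then $\phi_1$,'' so $(h_1\circ\phi_1)(x)=\phi_1(h_1(x))$; once this is unpacked, the closing move is a single application of the multivalued-morphism property and, in contrast to the analogous step for $\nand$, does not need to invoke Lemma~\ref{lem:substitute2}.
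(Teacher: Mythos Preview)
Your proposal is correct and follows essentially the same argument as the paper's proof: for part~3, both you and the paper pick a $\bt_1$-recoverable $h_1$, use the $\imp$ constraint (equivalently, part~2) together with $h_2(\bt_2)=0$ to obtain $h_1(\bt_1)\in R$, and then apply the multivalued morphism $\phi_1$ directly to conclude $\bt_1\in R$. Your observation that Lemma~\ref{lem:substitute2} is not needed here (in contrast to the $\nand$ case) is accurate and matches the paper.
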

\begin{proof}
  1. Similarly to the proof of Lemma~\ref{lem:nandconstraint}, the
  constraint corresponding to $(R,\bt_1,\bt_2)$ is satisfied, since
  $\bt_2,\bt_1+\bt_2\in R$.

  2. As in Lemma~\ref{lem:nandconstraint},
  follows from the definition of $\imp(G_1,G_2)$. 

  3. Let $h_1$ be a $\bt_1$-recoverable inner homomorphism given by
  $G_1$. Let $\phi_1$ be a multivalued morphism such that $\bt_1\in
  (h_1\circ \phi_1)(\bt_1)$. One of the constraints of $\imp(G_1,G_2)$
  ensure that $h_1(\bt_1)+h_2(\bt_2)=h_1(\bt_1)$ is in $R$. Now
  $\bt_1\in \phi_1(h_1(\bt_1))$ is also in $R$, a contradiction.
\end{proof}

When the multivalued morphism gadgets are used in the reductions, it
will be essential that the bags of the gadgets have very specific
sizes. We will ensure somehow that in a solution each bag is either
fully zero or fully nonzero. Our aim is to choose the sizes of the
bags 
 in such a way that if the sum of the sizes of a collection of 
bags add up to  a certain integer,
then this is only possible if the collection contains exactly one bag
of each size.

In most of the reductions, the $\mvm(\Gamma,D')$ gadgets are arranged in $t$ groups (corresponding to the $t$ groups of vertices in the \textsc{Multicolored Independent Set} or \textsc{Multicolored Implications} instance we are reducing from). For a gadget in group $i$, the bag representing $d\in D'\setminus \{0\}$ has size 
$Z_{i,d}^{t,D'}$, which is defined as follows.
Fix an integer $t$ and a set $0\in D'\sse D$. It will be convenient to
assume that $D'=\{0,1,\ldots,\Delta\}$. For $1 \le i \le t$ and $1 \le d\le \Delta$, we define  
\[
Z_{i,d}^{t,D'}:=(4t\Delta)^{2t\Delta+(i\Delta+d)}+(4t\Delta)^{5t\Delta-(i\Delta+d)}. 
\]
By $\Z^{t,D'}$ we denote the set of integers $Z_{i,d}^{t,D'}$ for
$1\le i \le t$ and $1\le d \le \Delta$.  Note that these integers have
exactly two nonzero digits if written in base-$(4t\Delta)$ expansion;
the positions of these two digits depend on $i\Delta+d$. Furthermore,
the ``larger nonzero digit'' of any number in $\Z^{t,D'}$ is always
larger than the ``smaller nonzero digit'' of any other number in
$\Z^{t,D'}$.  We will use the following property of these integers:

\begin{lemma}\label{lem:unique}
  Let us fix $t$ and $D'=\{0,1,\ldots,\Delta\}$. If $A$ is a subset of
  $\Z^{t,D'}$ and $\B$ is a 
  multiset of values from $\Z^{t,D'}$ such that $|\sum_{S\in A}S-\sum_{S\in
    \B}S|< (4t\Delta)^{2t\Delta}$, then $\B$ is a set and $\B=A$.
\end{lemma}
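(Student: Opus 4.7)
Write $B:=4t\Delta$ and $X:=\sum_{S\in A}S-\sum_{S\in\B}S$; the hypothesis is $|X|<B^{2t\Delta}$. My plan has two steps: first show that $X=0$, and then argue from $\sum_A S=\sum_\B S$ that $\B=A$ by analyzing base-$B$ digits.

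For the first step I exploit that every $Z_{i,d}^{t,D'}=B^{2t\Delta+(i\Delta+d)}+B^{5t\Delta-(i\Delta+d)}$ has both exponents at least $2t\Delta+\Delta+1$: the first because $i\Delta+d\geq\Delta+1$, the second because $5t\Delta-(i\Delta+d)\geq 4t\Delta-\Delta\geq 2t\Delta+\Delta+1$ whenever $t\geq 2$. Hence every $Z\in\Z^{t,D'}$, and therefore also $X$, is divisible by $B^{2t\Delta+\Delta+1}$. Combined with $|X|<B^{2t\Delta}<B^{2t\Delta+\Delta+1}$ this forces $X=0$, i.e.\ $N:=\sum_A S=\sum_\B S$.

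For the second step I use the separation property highlighted before the lemma: for $t\geq 3$ the ``low'' positions $\{2t\Delta+(i\Delta+d)\}\subseteq[2t\Delta+\Delta+1,3t\Delta+\Delta]$ and ``high'' positions $\{5t\Delta-(i\Delta+d)\}\subseteq[4t\Delta-\Delta,5t\Delta-\Delta-1]$ form disjoint intervals, and each assignment is injective in $(i,d)$. Consequently $N$ has base-$B$ digit $1$ at exactly the $2|A|$ positions arising from $A$ and digit $0$ elsewhere. I argue $\B=A$ by induction on $|A|+|\B|$. If $A=\emptyset$ then $N=0$ and positivity of each $S$ forces $\B=\emptyset$; otherwise let $(i^*,d^*)\in A$ minimize $i\Delta+d$, set $P:=5t\Delta-(i^*\Delta+d^*)$ (the top nonzero digit position of $N$), and show that $(i^*,d^*)\in\supph(\B)$, after which subtracting $Z_{i^*,d^*}^{t,D'}$ from both sides of $N=\sum_\B S$ reduces to a smaller instance. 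A size comparison rules out any pair $(i,d)\in\supph(\B)$ with $i\Delta+d<i^*\Delta+d^*$: such a pair alone would contribute at least $B^{5t\Delta-(i\Delta+d)}\geq B^{P+1}$ to $\sum_\B S$, contradicting $\sum_A S\leq 2t\Delta\cdot B^P<B^{P+1}$. The remaining case---that every pair in $\supph(\B)$ has $i\Delta+d>i^*\Delta+d^*$---is the hard one: then the pre-carry coefficient of $\sum_\B S$ at position $P$ is $0$, so the digit $1$ required there must arise from a carry of exactly $1$ out of position $P-1$ (any larger carry would produce a nonzero digit above $P$). I then trace this forced carry downward from $P$ through the high range, across the (possibly empty) gap, and into the low range; using that the same multiplicity $b_{i,d}$ governs both the paired low position $2t\Delta+(i\Delta+d)$ and the high position $5t\Delta-(i\Delta+d)$, the resulting chain of constraints on the $b_{i,d}$'s turns out to be incompatible with the $0/1$ target pattern on the low-range digits.

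The main obstacle is this carry-tracing argument: making it precise requires careful bookkeeping of the integer divisions at each position along the chain, and hinges on the low-high symmetry of $Z_{i,d}^{t,D'}$, which ties a single multiplicity to two widely separated positions and thereby turns a local requirement on high-range carries into a global incompatibility with the low-range digit pattern.
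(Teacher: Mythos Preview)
Your Step~1 is fine (and in fact already gives more than you use: each $Z^{t,D'}_{i,d}$ is divisible by $B^{2t\Delta+1}$, so $|X|<B^{2t\Delta}$ forces $X=0$). The real issue is Step~2: you set yourself up for a carry-tracing argument in the ``hard'' case and then do not carry it out; you even flag it as the main obstacle. The paper avoids this obstacle entirely with a one-line modular observation that replaces all of your carry bookkeeping.

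Here is the point you are missing. In your hard case you have $(i^*,d^*)\in A$ minimizing $i\Delta+d$, and every $(i,d)\in\supph(\B)$ has $i\Delta+d>i^*\Delta+d^*$. Instead of looking at the \emph{top} position $P$ and chasing carries downward, look at the \emph{bottom} position
\[
L:=2t\Delta+(i^*\Delta+d^*).
\]
The separation property you already cited (``every high position exceeds every low position'') says that for any $(i,d)$ with $i\Delta+d>i^*\Delta+d^*$ both exponents of $Z^{t,D'}_{i,d}$ are strictly larger than $L$: the low exponent by definition, and the high exponent because it is a high position and $L$ is a low position. Hence every element of $\supph(\B)$, and every element of $A\setminus\{(i^*,d^*)\}$, is divisible by $B^{L+1}$. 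Therefore $\sum_{S\in\B}S\equiv 0\pmod{B^{L+1}}$ while $\sum_{S\in A}S\equiv B^{L}\pmod{B^{L+1}}$, contradicting $\sum_A S=\sum_\B S$. That is the whole argument; no carries need to be traced.

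The paper packages this even more compactly: it first cancels $A\cap\B$, takes the largest remaining $T$, handles $T\in\B\setminus A$ by your size comparison, and handles $T\in A\setminus\B$ by exactly the mod-$B^{L+1}$ argument above. Your inductive framework is not wrong, but it is heavier than needed, and the carry-tracing plan you outline for the hard case is both incomplete and unnecessary: the low digit of the largest element, not the high digit, is where the contradiction appears immediately.
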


\begin{proof}
 It can be assumed that $A\cap \B=\emptyset$, since
  removing integers from both $A$ and $\B$ does not change the difference
  of the sums. Let $T$ be the largest integer in $A\cup \B$. Assume
  first that $T\in \B\setminus A$. This means that $A$ contains only
  integers strictly smaller than $T$, and (as $A$ is a set) an integer
  appears at most once in $A$. Since there are $t\Delta$ integers in
  $\Z^{t,D'}$ and every integer smaller than $T$ is at most $2T/(4t\Delta)$, we have
  that the sum of the integers in $A$ is at most $T/2$. Thus the
  difference of the sums is at least $(4t\Delta)^{2t\Delta}$, a contradiction.

  Assume now that $T\in A\setminus \B$. Suppose that
  $T=Z^{t,D'}_{i,d}$ and let $X:=(4t\Delta)^{2t\Delta+(i\Delta+d)}$.
  Since $T$ is the largest integer in $A\cup \B$ (i.e, $i\Delta+d$ is
  as small as possible), every integer in $A\cup \B$ other than $T$ is
  divisible by $(4t\Delta)^{2t\Delta+(i\Delta+d+1)}=X\cdot 4t\Delta$, while $T$ is equal to $X$ modulo
  $X\cdot 4t\Delta$. Thus $\sum_{S\in A}S$ is $X$ modulo $X\cdot
  4t\Delta$, while $\sum_{S\in \B}S$ is divisible by $X\cdot 4t\Delta$.
  Therefore, $|\sum_{S\in A}S-\sum_{S\in \B}S|$ is at least $\min\{X,
  4t\Delta\cdot X-X\}>(4t\Delta)^{2t\Delta}$, a contradiction.
\end{proof}

\subsection{Frequent instances}\label{sec:frequent-instances}

If a value $d$ ``can appear'' only on a small number of variables,
then we can branch on all possible ways this value appears and then
reduce the problem to simpler instances whose domain does not contain
$d$. Formally, we say that an instance of $\ccsp(\Gamma)$ or
$\ocsp(\Gamma)$, with parameter $k$ is {\em $c$-frequent} (for some
integer $c$) if for every $d\in \dom(\Gamma)\setminus\{0\}$ there are
at least $c$ variables $v$ such that $f(v)=d$ for a  satisfying assignment $f$
of size at most $k$ (note that we do not require that these satisfying
assignments satisfy the cardinality constraints). The algorithm of
Lemma~\ref{lem:minextension} can be used to decide  whether
an instance is $c$-frequent.
As we shall see in Lemma~\ref{lem:frequent}, if an instance is not
$c$-frequent, then it can be reduced to $c$-frequent instances by
trying all possibilities for the values that appear on less than $c$
variables.  We prove a stronger result, where the instances satisfy
an additional technical requirement.  A subset $0\in D'\subseteq
\dom(\Gamma)$ is {\em closed} (with respect to $\Gamma$) if $\Gamma$
has no inner homomorphism from $D'$ to $\dom(\Gamma)$ that maps some
element of $D'$ to an element in $\dom(\Gamma)\setminus D'$.

\begin{lemma}\label{lem:frequent}
Let $\Gamma$ be a finite cc0-language. 
  Given an instance $I$ of $\ccsp(\Gamma)$ or $\ocsp(\Gamma)$ with parameter
  $k$ and an integer $c$, we can construct in time
  $f_{\Gamma}(k,c)n^{O(1)}$ a set of $c$-frequent instances
  such that

\begin{enumerate}
\item instance $I$ has a solution if and only if at least one of the
  constructed instances has a solution,
\item each instance $I_i$ is 
  an instance of $\ccsp(\Gamma_{|D_i})$, respectively,
  $\ocsp(\Gamma_{|D_i})$, for some $D_i\subseteq \dom(\Gamma)$ closed
  in $\Gamma$, and
\item the parameter $k_i$ of $I_i$ is at most $k$.
\end{enumerate}
\end{lemma}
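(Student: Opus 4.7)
The plan is a branching algorithm proved by induction on the pair $(|\dom(\Gamma)|,k)$ ordered lexicographically. First, I apply Lemma~\ref{lem:minextension} to each assignment $\delta_{v,d}$ (mapping $v$ to $d$ and everything else to $0$) for every variable $v$ and every $d\in\dom(\Gamma)\setminus\{0\}$; this enumerates in time $e'_\Gamma(k)n^{O(1)}$ all minimal satisfying extensions of size at most $k$, from which I read off the sets $V_d$ of variables that can take value $d$ in some satisfying assignment of $I$ of size at most $k$. If $|V_d|\ge c$ for every nonzero $d$, then $I$ is already $c$-frequent over $\Gamma$; since $\dom(\Gamma)$ is trivially closed in $\Gamma$, I output the single pair $(I,\dom(\Gamma),k)$ and stop.

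Otherwise, fix a rare value $d$ with $|V_d|<c$. Any solution of size at most $k$ assigns $d$ to some subset $S\subseteq V_d$, possibly empty. I branch on the choice of $S$. For each nonempty $S\subseteq V_d$ (at most $2^{c-1}$ choices), substitute $d$ as a constant on the variables in $S$; the resulting instance $I_S$ is still over $\Gamma$ (since $\Gamma$ is cc0 and $I_S$ remains $0$-valid), its parameter drops to $k-|S|<k$, and I recurse. For the case $S=\emptyset$, no variable takes $d$, so I view $I$ as an instance of $\ccsp(\Gamma')$ for $\Gamma':=\Gamma_{|\dom(\Gamma)\setminus\{d\}}$, a cc0-language over a strictly smaller domain, and recurse on this. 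Each branch strictly decreases either $k$ or $|\dom|$, so the total branching tree has size bounded by a function $f_\Gamma(k,c)$ and the algorithm runs in time $f_\Gamma(k,c)n^{O(1)}$.

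The main obstacle is ensuring the output domains $D_i$ are closed in the original $\Gamma$ rather than merely in the intermediate language $\Gamma'$ arising in the $S=\emptyset$ branches: a subset $D_i\subseteq\dom(\Gamma)\setminus\{d\}$ may be closed in $\Gamma'$ but fail to be closed in $\Gamma$, because an inner homomorphism of $\Gamma$ whose image touches $d$ is invisible to $\Gamma'$. I plan to resolve this with a post-processing step. After recursion returns a candidate $(I_i,D_i)$, I enumerate the (finitely many) inner homomorphisms of $\Gamma$ from $D_i$ to $\dom(\Gamma)$; whenever one sends some $d_1\in D_i$ outside $D_i$, I launch another round of branching on the occurrences of $d_1$, using the small set $V_{d_1}$ computed along the way. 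Since $|\dom(\Gamma)|$ is a fixed constant of the language, only a constant number of such closedness repairs can cascade, so this adds only a constant factor to the branching tree and preserves the bound.
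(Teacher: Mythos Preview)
There are two genuine gaps.

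\textbf{The $0$-validity claim fails.} For nonempty $S$ you assert that $I_S$ ``remains $0$-valid.'' This is false: substituting the nonzero constant $d$ on $S$ yields a $0$-valid instance only when the assignment that is $d$ on $S$ and $0$ elsewhere already satisfies $I$, which is not guaranteed---even on the branch corresponding to the intended solution, since that solution may be nonzero outside $S$ as well. Concretely, for $R=\{(0,0,0),(1,0,0),(0,1,0),(1,1,1)\}$ and a single constraint $\langle(v_1,v_2,v_3),R\rangle$, all three variables lie in $V_1$, yet taking $S=\{v_1,v_2\}$ the substituted unary relation is $\{(1)\}$, which is not $0$-valid. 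Since $\Gamma$ is only a cc0-language, such a relation lies outside $\Gamma$ and your induction hypothesis does not apply to $I_S$. The paper's fix is essential here: after fixing $S$ (in fact only those of size exactly $\pi(d)$) it enumerates via Lemma~\ref{lem:minextension} the minimal satisfying \emph{extensions} of this partial assignment and substitutes all nonzero values of each extension; the residual instance is then genuinely $0$-valid and, crucially, $d$ is dropped from the domain in \emph{every} branch, not only the $S=\emptyset$ one.

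\textbf{The closedness repair is incoherent.} You propose, when an inner homomorphism $g$ of $\Gamma$ sends some $d_1\in D_i$ to $d'\notin D_i$, to branch again ``on the occurrences of $d_1$, using the small set $V_{d_1}$.'' But $d_1\in D_i$ and $I_i$ is $c$-frequent, so $|V_{d_1}|\ge c$: this set is not small and branching on it gains nothing. The paper does not repair; it shows such a $g$ cannot exist. From $c$-frequency of $d_1$ in $I_i$ one takes $c$ witnesses $f_j$ of size at most $k_i$ with $f_j(v_j)=d_1$; composing with $g$ and lifting back through the substitutions yields $c$ satisfying assignments of the earlier instance $I^{(1)}$ (the one at which $d'$ was deleted) in which $v_j$ takes value $d'$, contradicting the fact that $d'$ was rare there. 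This contradiction argument relies on every branch removing the current rare value from the domain---which is exactly what the fix to the first gap buys you.
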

\begin{proof}
We state the proof for $\ccsp(\Gamma)$, the proof is the same for
$\ocsp(\Gamma)$.  The reduction performs the following branching step
repeatedly.  Let $I'$ be the current instance, which is an instance of
$\ccsp(\Gamma_{|D'})$ for some $D'$. If $I'$ is not $c$-frequent, then
we branch as follows.  Let $S_d$ be the set of those variables where
value $d$ can appear in a satisfying assignment of size at most $k$.
This set can be found using the algorithm Lemma~\ref{lem:minextension}
as follows: To decide if $v\in S_d$, let us assign $d$ to $v$ and find
if there is any minimal satisfying extension of this assignment of
size at most $k$.  Suppose that $|S_d|<c$. We branch into
$\binom{|S_d|}{\pi(d)}< 2^{|S_d|}\le 2^c$ directions by considering every subset
$S'_d\subseteq S_d$ of size exactly $\pi(d)$ and creating an assignment that gives value $d$ to
the variables of $S'_d$, and 0 to the remaining variables.  If this
assignment does not satisfy $I'$, then we use
Lemma~\ref{lem:minextension} to enumerate all the minimal satisfying
extensions $f'_1$, $\dots$, $f'_t$ of this assignment. For each such
satisfying assignment $f'_i$, we obtain an instance $I'_i$ by
substituting the nonzero variables as constants. Since $f'_i$ is a
satisfying assignment, every relation of $I'_i$ is 0-valid, hence it
is an instance of $\ccsp(\Gamma_{|D'})$ as well. Furthermore, since we
have already considered all possible appearances of value $d$, the
correctness of the algorithm does not change if we consider $I'_i$ as
an instance of $\ccsp(\Gamma_{|D'\setminus \{d\}})$. If the instance
$I'_i$ is still not $c$-frequent, then we repeat the branching step.

  In each step, the maximum number of directions we branch into is at
  most a constant depending only on $c$, $\Gamma$, and the current
  parameter $k'\le k$. The depth of the branching tree is
  at most $|D|$, since we decrease the domain at each step. Thus it is
  clear that the running time is $f_{\Gamma}(k,c)n^{O(1)}$, for an
  appropriate function $f_{\Gamma}(k,c)$.

Let instance $I_i$ of $\ccsp(\Gamma_{|D_i})$ be a 0-valid $c$-frequent
instance obtained by the algorithm and let $k_i$ be its size
constraint.  To show that $D_i$ is closed, we argue as follows.
Suppose that there is an inner homomorphism $g:D_i\to D$ of $\Gamma$
such that $g(d)=d'$ for some $d\in D_i$, $d'\in D\setminus D_i$. Since
$I_i$ is $c$-frequent, there are $c$ variables $v_1$, $\dots$, $v_c$
and satisfying assignments $f_1$, $\dots$, $f_c$ of size at most $k_i$
such that $f_j(v_j)=d$. On the branch of the algorithm that produced
instance $I_i$, there has to be an instance $I^{(1)}_i$ of
$\ccsp(\Gamma_{|D^{(1)}})$  that is
reduced to an instance $I^{(2)}_i$ 
of $\ccsp(\Gamma_{|D^{(2)}})$ such
that $D_i\subseteq D^{(1)}$, 
$D^{(2)}=D^{(1)}\setminus \{d'\}$, and $d'$ is not $c$-frequent in
$I^{(1)}$. If we consider instance $I_i$ as an instance of
$\ccsp(\Gamma_{|D^{(1)}})$, then
the assignment $f_j\circ g$ assigns value 
$d'$ to $v_i$. Since instance $I_i$ is obtained from instance
$I^{(1)}$ via substitutions, we get that variables $v_1$, $\dots$,
$v_c$ can get value $d'$ in $I^{(1)}$ in assignments whose  size does
not exceed the parameter of $I^{(1)}$. This contradicts the assumption that $d'$ is
not $c$-frequent in $I^{(1)}$.
\end{proof}

\section{Classification for size constraints}\label{sec:size}

Unlike in the Boolean case, weak separability of $\Gamma$ is not
equivalent to the tractability of $\ocsp(\Gamma)$: it is possible that
$\Gamma$ is not weakly separable, but $\ocsp(\Gamma)$ is FPT (see
Example~\ref{exa:nonweaklysepeasy}). However, if there is a subset
$D'\sse \dom(\Gamma)$ of the domain such that $\Gamma_{|D'}$ is not
weakly separable and $D'$ has ``no special problems'' in a certain
technical sense, then $\ocsp(\Gamma)$ is W[1]-hard.  We need the
following definitions.  A value $d\in \dom(\Gamma)$ is {\em weakly
  separable} if $\Gamma_{|\{0,d\}}$ is weakly separable.  A {\em
  contraction} of $\Gamma$ to $D'$ with $0\in D' \subseteq
\dom(\Gamma)$ is an endomorphism $h: \dom(\Gamma)\to D'$ such that
$h(d)\neq 0$ for any $d\in \dom(\Gamma)\setminus \{0\}$. Contraction
$h$ is \emph{proper} if $D'\subset \dom(\Gamma)$.  As the contraction
can be applied on any solution of an $\ocsp(\Gamma)$ instance without
changing the number of nonzero variables, restriction to $D'$ does not
change the problem, i.e., replacing every $R$ with $R_{|D'}$ does not
change the solvability of the instance (see
Example~\ref{exa:nonweaklysep-easy}).

The main result for the size constraints CSP is the following
dichotomy theorem.

\begin{theorem}\label{th:singlemain}
Let $\Gamma$ be a finite cc0-language. If there are two
sets $\{0\}\sse D_2 \subseteq D_1 \subseteq \dom(\Gamma)$ such that
 \begin{enumerate}
 \item 
 $D_1$ is closed in $\Gamma$,
 \item 
 $\Gamma_{|D_1}$ has a contraction $h$ to $D_2$,
 \item 
 $\Gamma_{|D_2}$ has no proper contraction,
 \item 
 $\Gamma_{|D_1}$ has no weakly separable value that is either
  degenerate or self-producing, and
 \item 
 $\Gamma_{|D_2}$ is not weakly separable,
 \end{enumerate}
then \ocspg\ is \textup{W[1]}-hard. If there are no such $D_1,D_2$, then \ocspg\ is
fixed-parameter tractable.
\end{theorem}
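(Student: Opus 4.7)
The proof splits into the algorithmic direction and the hardness direction. For the algorithmic side, assume no $D_1,D_2$ satisfying (1)--(5) exist. I would first apply Lemma~\ref{lem:frequent} to reduce to a bounded number of $c$-frequent subinstances, each living on a closed subset $D'\subseteq \dom(\Gamma)$. Within such a subinstance, iterate the following: let $D_1:=D'$ and let $D_2$ be a smallest subset of $D_1$ such that $\Gamma_{|D_1}$ contracts to it (so condition~(3) holds by minimality). Since the hypothesis fails for this pair, either condition~(5) fails (i.e., $\Gamma_{|D_2}$ is weakly separable) or condition~(4) fails. If $\Gamma_{|D_2}$ is weakly separable, compose the contraction $h$ with any satisfying assignment (this preserves the size) and invoke Theorem~\ref{lem:finddisjoint}. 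Otherwise, pick a weakly separable value $d\in D_1$ that is degenerate or self-producing; using Propositions~\ref{prop:homomtype} and \ref{prop:regproduce} we can branch away every occurrence of $d$ (a self-producing $d$ is decoupled from $\Gamma_{|D_1\setminus\{d\}}$, while a degenerate $d$ is produced by a non-degenerate value, so its occurrences can be replaced), reducing to a strictly smaller domain on which the argument recurses.

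For the hardness side, assume $D_1,D_2$ satisfy (1)--(5). By Lemma~\ref{lem:regularcounter} applied to $\Gamma_{|D_2}$, there is either a union counterexample $(R,\bt_1,\bt_2)$ with each $\bt_j$ contained in the component generated by some $a_j$, or a difference counterexample with both tuples contained in the component generated by a single $a_1$. I would reduce from \textsc{Multicolored Independent Set} in the first case and from \textsc{Multicolored Implications} in the second. For each vertex $v_{i,j}$ of the colored instance, create one $\mvm(\Gamma_{|D_1},D_2)$ gadget $G_{i,j}$, where the bag representing $d\in D_2\setminus\{0\}$ has size $Z^{t,D_2}_{i,d}$. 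Set the size parameter to $k:=\sum_{i=1}^{t}\sum_{d\in D_2\setminus\{0\}} Z^{t,D_2}_{i,d}$; by Lemma~\ref{lem:unique} together with the fact that a gadget with some bags fully zero and others fully nonzero would give a nonzero inner multivalued morphism violating the minimality forced by (3), every solution has, for each color $i$, exactly one gadget $G_{i,y_i}$ fully nonzero and all other gadgets in color class $i$ fully zero. Non-edges (respectively forbidden implications) are then encoded by $\nand(G_{i,y_i},G_{i',y_{i'}})$ gadgets (respectively $\imp$ gadgets) built from $(R,\bt_1,\bt_2)$; Lemmas~\ref{lem:nandconstraint}(1) and \ref{lem:impconstraint}(1) ensure that honest independent sets / closed sets yield satisfying assignments.

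The main obstacle is the converse: showing that any satisfying assignment forces the chosen vertices $\{v_{i,y_i}\}$ to form an independent set (resp.\ closed set). For this, the inner multivalued morphism $h_f$ given by a fully nonzero gadget $G_{i,y_i}$ (Proposition~\ref{prop:mvmgadget}) must be $\bt_j$-recoverable, so that Lemmas~\ref{lem:nandconstraint}(3) and \ref{lem:impconstraint}(3) deliver the contradiction. This is precisely where conditions (1)--(4) are used: (1) keeps the image of $h_f$ inside $D_1$; (2) combined with (3) lets us post-compose with the contraction $h$ and further endomorphisms of $\Gamma_{|D_2}$ to obtain a morphism whose image meets the entire component generated by each $a_j$; and (4) rules out the remaining pathology, in which a weakly separable value of $D_1$ that is degenerate or self-producing would let the gadget collapse $\bt_j$ in a non-invertible way. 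The conclusion is that some multivalued morphism $\phi$ satisfies $\bt_j\in(h_f\circ\phi)(\bt_j)$, establishing $\bt_j$-recoverability and closing the reduction.
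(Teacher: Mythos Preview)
Your hardness argument has a real gap. You treat the reduction as if every value in $D_2$ (or $D_1$) were regular, but conditions (1)--(5) do not force this. The claim that ``a gadget with some bags fully zero and others fully nonzero would give a nonzero inner multivalued morphism violating the minimality forced by (3)'' conflates two different things: condition~(3) forbids proper \emph{contractions} of $\Gamma_{|D_2}$, but a bag containing both $0$ and a nonzero value yields a multivalued morphism $\phi$ with $0,d\in\phi(c)$, which is evidence that $d$ is not regular, not that a contraction exists. Likewise, condition~(4) only excludes \emph{weakly separable} degenerate or self-producing values; non--weakly-separable ones may well be present, and then your bags need not be fully zero or fully nonzero, Lemma~\ref{lem:unique} no longer controls the bag pattern, and the $\bt_j$-recoverability argument does not go through. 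The paper handles exactly these obstructions by a case split you skipped: if $\Gamma_{|D_1}$ has degenerate or self-producing values (necessarily not weakly separable by~(4)), hardness is proved by a completely different, essentially Boolean, argument (Lemma~\ref{lem:hardproduce}); if there is a semiregular value, a separate reduction from \textsc{Implications} is used (Lemma~\ref{lem:semihard}); only after both are disposed of can one assume every value in $\Gamma_{|D_2}$ is regular and run the $\mvm$-gadget reduction you sketch. Even then a further split is needed, depending on whether the endomorphisms given by the gadgets form a \emph{bad partition set} (Lemma~\ref{lem:regularhardset}), since otherwise the sum of the per-gadget endomorphisms need not be an endomorphism and the permutation argument behind $\bt_j$-recoverability fails.

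On the algorithmic side, your handling of the case where (4) fails is also off. You propose to ``branch away every occurrence'' of a weakly separable degenerate or self-producing value $d$ and recurse on a smaller domain, but after Lemma~\ref{lem:frequent} the instance is $k$-frequent, so $d$ may appear on arbitrarily many variables and branching is unbounded; Propositions~\ref{prop:homomtype} and~\ref{prop:regproduce} do not give a replacement mechanism. The paper's point here is sharper: $k$-frequency together with the fact that $d$ is produced by some value and is weakly separable means the instance \emph{always} has a solution (one simply takes the union of $k_i$ disjoint assignments of the form $\delta_{v,d}$), so no recursion is needed at all.
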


 We present an algorithm solving the FPT cases of the problem in
 Section~\ref{sec:ocspalgorithm}.  Section~\ref{sec:ocsphardness}
 presents an important case of the hardness proof, demonstrating 
 the concepts introduced in Section~\ref{sec:properties}.

\subsection{The algorithm}\label{sec:ocspalgorithm}

Let $I=(V,\C,k)$ be an instance of \ocspg.
Let us use the algorithm of 
Lemma~\ref{lem:frequent} to obtain instances  $I_1,\dots,I_\ell$ such that $I_i$ is a
$k$-frequent instance of $\ocsp(\Gamma_{|D^i})$ for some closed set
$D^i\subseteq \dom(\Gamma)$.  Fix some $i$ and let $h$ be a contraction of
$\Gamma_{|D^i}$ such that $|h(D^i)|$ is minimum possible. Set
$D_1:=D^i$ and $D_2:=h(D^i)$. 

In the cases where Theorem~\ref{th:singlemain} claims fixed-parameter
tractability, the pair $D_1,D_2$ violates one of the properties
(1)--(5). By the way $D_1$ and $D_2$ defined, it is clear that (1) and
(2) hold. For (3), suppose that $\Gamma_{|D_2}$ has a proper
contraction $g$. Then $h\circ g$ is a contraction of $\Gamma_{|D_1}$
such that $|g(h(D_1))|$ is strictly less than $|h(D_1)|$, a
contradiction.

If $D_1$ violates (4), then instance $I_i$ always has a solution.
Indeed, suppose that $d\in D_1$ is weakly separable and $d$ is
produced by $d'\in D_1$ (possibly $d=d'$).  Let $k_i$ be the parameter
of $I_i$; then $k_i\le k$ by Lemma~\ref{lem:frequent}(3). Since $I_i$
is $k$-frequent, if we denote by $S$ the set of variables of $I_i$
where $d'$ can appear in a satisfying assignment of size at most $k$,
then $|S|\ge k$.
As $d'$ produces $d$, $\Gamma_{|D_1}$ has a multivalued morphism
$\phi$ such that $\phi(d')=\{0,d\}$ and $\phi(a)=\{0\}$ for $a\in
D_1\setminus\{d'\}$. Applying multivalued morphism $\phi$ on {\em any} satisfying
assignment $f$ with $f(v)=d'$ shows that the assignment $\delta_{v,d}$
with $\delta_{v,d}(v)=d$ and 0 everywhere else is a satisfying
assignment of $I_i$. Therefore, for every $v\in S$, assignment
$\delta_{v,d}$ satisfies $I_i$. As $d$ is weakly separable in
$\Gamma_{|D_1}$, the disjoint union of $k_i$ such assignments
$\delta_{v,d}$ is a solution to $I_i$.
Finally, suppose that (5) is violated and $\Gamma_{|D_2}$ is weakly separable. Instance
$I_i$ of $\ocsp(\Gamma_{|D_1})$ has a solution if and only if it has a
solution restricted to $D_2$ (because of the contraction $h$), and the latter can be decided using
the algorithm of 
Lemma~\ref{lem:finddisjoint}.

\subsection{Hardness}\label{sec:ocsphardness}

Suppose that we have sets $D_1$ and $D_2$ as in
Theorem~\ref{th:singlemain}. The aim of this section is to show that
\ocspg\ is \textup{W[1]}-hard in this case. The reduction is based on
a counterexample to the weak separability of $\Gamma_{|D_2}$, which
exists by condition (5) of Theorem~\ref{th:singlemain}. To ensure that
the MVM gadgets work as intended, we have to make use of conditions
(1)--(4) as well. Our first goal is to handle the cases when some
value in $\Gamma_{|D_1}$ is not regular
(Sections~\ref{sec:degen-self-prod-ocsp}--\ref{sec:semiregular-values-ocsp}). The
main part of the proof is to prove hardness in the case when every
value in $\Gamma_{|D_1}$ is regular
(Section~\ref{sec:regular-values-ocsp}). This case contains the most
important proof ideas; the reader is suggested to skim
Sections~\ref{sec:degen-self-prod-ocsp}--\ref{sec:semiregular-values-ocsp}
and concentrate on Section~\ref{sec:regular-values-ocsp} on a first
reading.

\subsubsection{Degenerate and self-producing values}
\label{sec:degen-self-prod-ocsp}
Recall that a relation $R$ is intersection definable in a constraint language 
$\Gamma$ if $R$ is the set of all solutions to a certain instance of $\csp(\Gamma)$. 
Let $U_\Gamma$ be the set of all 0-valid unary relations 
intersection definable in the set of 0-valid relations from $\Gamma$. 

\begin{lemma}\label{lem:unaryproduce}
Let $D_1$ be a closed set in $\Gamma$. If $P$ is the set of nonzero
values produced by $d\in D_1$ in $\Gamma_{|D_1}$, then 
$P\cup \{0\} \in U_\Gamma$. 
\end{lemma}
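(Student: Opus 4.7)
The plan is to construct a single-variable instance of $\csp(\Gamma_0)$ whose solution set is exactly $P\cup\{0\}$. For each $y\in\dom(\Gamma)\setminus(P\cup\{0\})$ I will exhibit a 0-valid relation $R_y^{\ast}\in\Gamma$---obtained by substituting zeros into certain coordinates of some relation of $\Gamma$, which is permitted since $\Gamma$ is cc0-closed---whose diagonal $\{a:(a,\dots,a)\in R_y^{\ast}\}$ contains $P\cup\{0\}$ but not $y$. Imposing the constraint $\langle(v,\dots,v),R_y^{\ast}\rangle$ for every such $y$ then pins $v$ down to $P\cup\{0\}$, witnessing $P\cup\{0\}\in U_\Gamma$.

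When $y\in D_1\setminus(P\cup\{0\})$ the failure of $d$ to produce $y$ in $\Gamma_{|D_1}$ means that the mapping $\phi$ with $\phi(d)=\{0,y\}$ and $\phi(z)=\{0\}$ for $z\neq d$ is not a multivalued morphism of $\Gamma_{|D_1}$, so there exist $R\in\Gamma$, a tuple $\bt\in R_{|D_1}$, and a nonempty subset $S'$ of the coordinates of $\bt$ equal to $d$ such that the tuple having $y$ at the coordinates in $S'$ and $0$ elsewhere is not in $R$. Define $R_y^{\ast}$ by substituting $0$ at every coordinate of $R$ outside $S'$. Then $R_y^{\ast}$ is 0-valid and of arity $|S'|$, so it lies in $\Gamma$; its diagonal consists of those values $a$ for which the tuple with $a$ at the coordinates in $S'$ and $0$ elsewhere is in $R$, and this diagonal excludes $y$ by choice of $S'$. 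For any $p\in P$, the multivalued morphism $\phi_p$ witnessing that $d$ produces $p$ sends $\bt$ into $R_{|D_1}$, so the tuple with $p$ at $S'$ and $0$ elsewhere lies in $R$, whence $p$ is in the diagonal. When instead $y\in\dom(\Gamma)\setminus D_1$, closedness of $D_1$ implies that the mapping sending $d$ to $y$ and every other element of $D_1$ to $0$ cannot be an inner homomorphism from $D_1$ to $\dom(\Gamma)$, which produces the same type of witness (with $S'$ taken to be the set of all $d$-coordinates of $\bt$) to which the same construction applies.

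The delicate case is $y=d$ when $d$ does not produce itself, because then $d\notin P\cup\{0\}$ and we must exclude $d$ itself from the constructed diagonal. The construction above still applies (through the first case), and the essential point is that we substitute zeros---rather than the values of $\bt$---at the coordinates outside $S'$: had we kept the $\bt$-values there, the slice of the diagonal corresponding to $a=d$ would reduce to $\bt$ itself, which lies in $R$, and we could not forbid $d$. Substituting zeros turns that slice into the tuple with $d$ at $S'$ and $0$ elsewhere, whose non-membership in $R$ is precisely the hypothesis witnessing that $d$ does not produce $d$.
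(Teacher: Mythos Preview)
Your proof is correct and follows essentially the same approach as the paper. Both arguments express $P\cup\{0\}$ as the solution set of a one-variable instance built from constraints of the form $\langle(v,\dots,v),R'\rangle$, where each $R'$ is obtained from some $R\in\Gamma$ by substituting $0$ into all coordinates outside a subset of the $d$-positions of some tuple $\bt\in R_{|D_1}$.

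The only difference is organizational: the paper adds \emph{all} such constraints at once (ranging over every $R$, every $\bt\in R_{|D_1}$ containing $d$, and every nonempty subset of its $d$-positions) and then shows directly that the resulting solution set equals $P\cup\{0\}$, invoking closedness of $D_1$ in a single step at the end. You instead work contrapositively, selecting for each $y\notin P\cup\{0\}$ one specific constraint that excludes $y$, and splitting into the cases $y\in D_1$ and $y\notin D_1$ to locate the witness. Your constraints are a subset of the paper's, and the verification that every such constraint admits all of $P\cup\{0\}$ is identical in both proofs. Your third paragraph about the case $y=d$ is a valid observation but not a genuinely separate case---it falls under your first case and requires no additional argument.
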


\begin{proof}
For every  ($r$-ary) $R\in \Gamma$, every tuple
$\bb\in R_{|D_1}$ where $d$ appears, and every subset $\ba=(a_1,\dots,a_r)$ of
$\bb$ that contains only 0 and $d$ (recall that it means that $a_i=b_i$ 
whenever $a_i\ne0$), we set $R_\ba=R^{|i_1,\ldots, i_q;0,\ldots,0}$, where
$i_1,\ldots,i_q$ are the positions such that $a_{i_j}=0$.  Let
$T$ be the unary relation expressed by the instance $(\{v\},\C)$,
where $\C$ contains constraints $\{\ang{(v,\ldots,v),R_\ba})$ for all
such $R\in\Gamma$, $\bb$, and $\ba$. We claim that $T=P\cup\{0\}$.

For every $a\in P$, the fact that $d$ produces $a$ in $\Gamma_{|D_1}$
implies the tuple obtained from $\bb\in R_{|D_1}$ by replacing  value
$d$ with $0$ or $a$ and replacing 
everything else with 0 gives a tuple of $R$. Thus setting $v$ to $a$
is a satisfying assignment.

On the other hand, suppose that there is a satisfying assignment with value 
$a$ on $v$. It follows that $\psi(d)=\{a,0\}$ and $\psi(d')=\{0\}$ for every
$d'\in D_1\setminus \{d\}$ is an inner multivalued morphism $\psi$ of 
$\Gamma$ from $D_1$. Indeed, similar to the previous paragraph it 
means that the tuple obtained from $\bb\in R_{|D_1}$ by replacing  value
$d$ with $0$ or $a$ and replacing everything else with 0 gives a tuple of $R$. 
By Observation~\ref{obs:submorphisms}, the mapping $h$ such
that $h(d)=a$ and $h(d')=0$ for every $d'\in D_1\setminus \{d\}$ is
an inner homomorphism. Since $D_1$ is closed, $a\in D_1$, thus $d$ produces $a$ in
$\Gamma_{|D_1}$.
\end{proof}

The following lemma proves hardness in the case when $\Gamma_{|D_1}$
contains some degenerate and self-producing values. In this case, by
condition (4) of Theorem~\ref{th:singlemain}, none of the values are
weakly separable.
\begin{lemma}\label{lem:hardproduce}
If $D_1$ is a closed set such that $\Gamma_{|D_1}$ has degenerate or
self-producing values,  but no such value is weakly separable, then
\ocspg\ is \textup{W[1]}-hard. 
\end{lemma}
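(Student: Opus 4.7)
The plan is to carry out a parameterized reduction from \textsc{Independent Set} or \textsc{Implications} (both W[1]-hard), using the MVM, NAND, and IMP gadgets developed in Section~\ref{sec:gadgets}. Which of the two source problems is used depends on whether the counterexample to weak separability of the value we select is of union or of difference type.

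The first step is to select a value $d\in D_1$ that is degenerate or self-producing (such a value exists by hypothesis and is therefore not weakly separable) whose production orbit $P_d$ in $\Gamma_{|D_1}$ is minimal by inclusion among all such values. This minimality is used to secure the following \emph{reciprocity} property: every nonzero $a\in P_d$ also produces $d$. Indeed, since $d$ strictly produces $a$, such $a$ cannot be regular or semiregular; if $a$ is self-producing then the symmetry clause in Definition~\ref{def:4-types} forces $a$ to produce $d$; and if $a$ is degenerate but does not produce $d$, then $a$ is itself a degenerate non-weakly-separable value with $P_a\subsetneq P_d$ (strict containment coming from $d\in P_d\setminus P_a$ in the self-producing subcase, with an auxiliary application of Proposition~\ref{prop:regproduce} to bootstrap the descent when $D_1$ has no self-producing value), contradicting the minimality of $d$. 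The same analysis also ensures $P_d\setminus\{0\}$ is nonempty.

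Given such $d$ and a counterexample $(R,\bt_1,\bt_2)$ to weak separability of $\Gamma_{|\{0,d\}}$, the reduction introduces, for each vertex $v$ of the input graph of the source problem, a single variable $x_v$ together with the constraints of an $\mvm(\Gamma,\{0,d\})$ gadget with bag $B_d=\{x_v\}$. By Proposition~\ref{prop:mvmgadget} these constraints pin $x_v$ to $P_d\cup\{0\}$. For each (directed, in the \textsc{Implications} case) edge of the input graph, I attach a $\nand(G_u,G_v)$ or $\imp(G_u,G_v)$ gadget built from $R,\bt_1,\bt_2$. The size parameter of the \ocspg\ instance equals the required source-problem solution size.

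To verify correctness, fix any $y\in P_d\setminus\{0\}$. For the forward direction, assign $x_v:=y$ for $v$ in the chosen source-problem solution and $x_v:=0$ elsewhere; the MVM constraints are satisfied because $y\in P_d$, and every NAND/IMP constraint is satisfied by applying the multivalued morphism witnessing that $d$ produces $y$ to the relevant tuples of $R$ (which lie in $R$ by construction of the gadget), justified by Lemma~\ref{lem:substitute2}; the ``bad'' combinations that could fail correspond exactly to the forbidden configurations excluded by validity of the source solution. Conversely, from a size-$k$ satisfying assignment $f$, set $C:=\{v:f(v)\neq 0\}$; each $f(v)\in P_d$ produces $d$ by reciprocity, so the inner homomorphism given by each $G_v$ with $v\in C$ is $\bt_i$-recoverable, and parts~(3) of Lemmas~\ref{lem:nandconstraint} and~\ref{lem:impconstraint} then rule out two adjacent vertices in $C$ (union case) or an edge leaving $C$ (difference case), making $C$ a valid source-problem solution of size $k$. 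The principal technical obstacle is the descent argument that establishes the reciprocity property, especially in the subcase where $D_1$ contains no self-producing value; once reciprocity is in hand, the remainder of the proof is a direct application of the gadget lemmas of Section~\ref{sec:gadgets}.
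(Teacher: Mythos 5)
Your reduction has the right general shape (restrict variables to the set of values produced by some $d$, then use a counterexample to weak separability of $\Gamma_{|\{0,d\}}$ to encode edges), but the step that makes the reverse direction work --- the ``reciprocity'' property that every nonzero $a\in P_d$ produces $d$ --- is false, and your descent argument cannot establish it. The problem is that a degenerate value $a\in P_d$ may have an \emph{empty} production orbit: $a$ is degenerate merely because something produces it without being produced by it, and $a$ itself need not produce anything. Such an $a$ is not a competitor in your minimization over nonempty production orbits, so minimality of $P_d$ gives no contradiction, yet $a$ does not produce $d$. This is exactly the configuration in Example~\ref{exa:types}: there $2$ produces $2$ and $3$ and nothing else produces anything, so $P_2=\{2,3\}$ is inclusion-minimal among nonempty orbits while $3\in P_2$ does not produce $2$. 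Without reciprocity your gadgets need not be $\bt_i$-recoverable, so Lemmas~\ref{lem:nandconstraint}(3)/\ref{lem:impconstraint}(3) do not apply and a pair of adjacent nonzero variables carrying values $y,y'\in P_d$ need not violate any constraint. A secondary inaccuracy: the $\mvm(\Gamma,\{0,d\})$ gadget only constrains tuples of $R_{|\{0,d\}}$, so by Proposition~\ref{prop:mvmgadget} it pins $x_v$ to the set of $y$ for which $d\mapsto y$ is an inner homomorphism \emph{from} $\{0,d\}$, which can be strictly larger than $P_d\cup\{0\}$ (production quantifies over all tuples of $R_{|D_1}$, projected to the $d$-coordinates); the paper's Lemma~\ref{lem:unaryproduce} builds the unary relation $P\cup\{0\}$ from that larger family of substitutions and needs the closedness of $D_1$ to identify it with the production orbit.

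The paper avoids reciprocity altogether. It shows via Lemma~\ref{lem:unaryproduce} that $P\cup\{0\}$ is \emph{intersection definable}, then minimizes not over production orbits but over nonempty intersection-definable unary relations $P'\cup\{0\}\subseteq P\cup\{0\}$. Minimality in this class is robust under further substitutions: from the binary relation $R'$ obtained from the counterexample, any bad pair (e.g.\ $(y,y')\in R'$ in the union case, or $(y,0)\in R'$ in the difference case) yields, by substituting a constant into one coordinate, a $0$-valid unary relation separating $y$ from $x$, whose intersection with $P'\cup\{0\}$ would be a strictly smaller nonempty definable set --- a contradiction. This gives the uniform behaviour of $R'$ on all of $P'$ that your recoverability argument was meant to supply, and then \textsc{Independent Set} or \textsc{Implications} reduces directly via Proposition~\ref{pro:int-defin}, with no MVM/NAND/IMP machinery needed. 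To repair your proof you would need to replace the orbit-minimality and reciprocity step with an argument of this kind.
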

\begin{proof}
Let $d\in D_1$ be a value that produces at least one nonzero value in $\Gamma_{|D_1}$; 
  let $P$ be the set of (nonzero) values produced by $d$. By
  Lemma~\ref{lem:unaryproduce}, $P\cup \{0\}\in U_\Gamma$. Let
  $P'\subseteq P$ be a smallest nonempty set such that $P'\cup
  \{0\}\in U_\Gamma$.

  Let $x$ be an arbitrary nonzero element of $P'$. As $x$ is produced
  by $d$, the assumption of the lemma implies that $x$ is not weakly
  separable in $\Gamma_{|D_1}$. Suppose first that $\Gamma_{|\{0,x\}}$
  has a union counterexample $(R,\bt_1,\bt_2)$, where $R\in
  \Gamma_{|\{0,x\}}$ is $r$-ary. Let $A,B$ be the set of coordinates
  of $R$ such that $\bt_1$ equals $x$ in positions from $A$, $\bt_2$
  equals $x$ in positions from $B$, and they are equal to 0
  otherwise. By substituting 0's, it can be assumed without loss of
  generality that $A\cup B=\{1,\ldots,r\}$ and $A=\{1,\ldots,q\}$. Let
  $R'$ be the binary relation expressed by the instance
  $(\{v,w\},\{\ang{(v,\ldots,v,w,\ldots,w),R})$ where $v$ occupies the
  first $q$ positions.
  As is easily seen, $(0,0),(x,0),(0,x)\in R'$, but $(x,x)\not\in R'$.
  We show that $(y,y')\not\in R'$ for arbitrary nonzero values
  $y,y'\in P'$. If this is true, then by
  Proposition~\ref{pro:int-defin} this binary relation $R'$ and the
  unary relation restricting to $P'\cup\{0\}$ can be used to reduce
  \textsc{Independent Set} to \ocspg. That is, a binary constraint
  with relation $R'$ can represent each edge: this ensures that that a
  set of variables can be simultaneously nonzero if and only if they
  correspond to an independent set in the graph.

  Suppose that $(y,y')\in R'$ for some $y,y'\in P'$ (including the
  possibility that $y$ or $y'$ is equal to $x$). If $(y,0)\not\in R'$,
  then by substituting 0 in the second coordinate we get a 0-valid
  unary relation that includes $x$, but does not include $y$,
  contradicting the minimality of $P'$ (the intersection of this set 
  with $P'\cup\{0\}$ is a proper subset of $P'\cup\{0\}$). It follows that $(y,x)\in R'$:
  otherwise, by substituting $y$ in the first coordinate, we would get
  a 0-valid relation containing $y'$, but not $x$. Finally, by
  substituting $x$ in the second coordinate, we get a 0-valid unary
  relation containing $y$, but not $x$, a contradiction.

Suppose next that   $\Gamma_{|\{0,x\}}$ has a difference
counterexample. Again, without loss of generality, it can be assumed
that $(0,0),(0,x),(x,x)\in R'$, but $(x,0)\not\in R'$ for a binary
relation $R'$ intersection definable in $\Gamma$. We claim that
$(y,0)\not\in R'$ for any nonzero $y\in 
P'$: otherwise, by substituting 0 in the second coordinate, we would
get a 0-valid unary relation containing $y$, but not $x$. Now by
Proposition~\ref{pro:int-defin},  we can 
use this binary relation $R'$ to reduce \textsc{Implications} to \ocspg.
\end{proof}

\subsubsection{Semiregular values}
\label{sec:semiregular-values-ocsp}
Thus in the following, we assume that $\Gamma_{|D_1}$ has no
degenerate or self-producing values. Next we prove hardness in the
case when there is a semiregular value $d\in D_1$ in $\Gamma_{|D_1}$.
We say that a multivalued morphism $\phi$ 
{\em witnesses} that $d$ is semiregular if $0,d\in \phi(c)$ for some $c\in\dom(\Gamma)$.

\begin{lemma}\label{lem:semihard}
If $D_1$ is a closed set in $\Gamma$ such that there are no self-producing or degenerate values in
$\Gamma_{|D_1}$, but there is a semiregular value $d\in D_1$, then \ocspg\ is
\textup{W[1]}-hard. 
\end{lemma}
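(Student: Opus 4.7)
I plan to prove Lemma~\ref{lem:semihard} by reducing a known W[1]-hard problem---most naturally \textsc{Multicolored Independent Set} or \textsc{Multicolored Implications}---to \ocspg\ through the multivalued morphism gadget machinery of Section~\ref{sec:gadgets}, turning the multivalued morphism witnessing semiregularity of $d$ into the ``choice'' mechanism that MVM gadgets normally lack in this regime.

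\textbf{Extracting a clean witness.} First, using Observation~\ref{obs:submorphisms}, I shrink the witness of semiregularity to a multivalued morphism $\phi$ of $\Gamma_{|D_1}$ with $\phi(c)=\{0,d\}$ for some $c\in D_1$ and $|\phi(z)|=1$ for every $z\neq c$. Because no value produces $d$, the case $\phi(z)=\{0\}$ for all $z\neq c$ is excluded, so some $z\neq c$ must satisfy $\phi(z)=\{e_z\}$ with $e_z\neq 0$. The two single-valued specializations of $\phi$ (picking $c\mapsto 0$ and $c\mapsto d$) are then distinct inner homomorphisms of $\Gamma_{|D_1}$ that agree off $c$, and $\phi$ inverts each of them on the coordinates arising from MVM-gadget standard assignments, delivering the $\bt$-recoverability needed by Lemmas~\ref{lem:nandconstraint}(3) and~\ref{lem:impconstraint}(3).

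\textbf{Assembling the reduction.} Since by hypothesis no value in $\Gamma_{|D_1}$ is degenerate or self-producing while $d$ is non-regular, I expect $\Gamma_{|D_1}$ to fail weak separability; Lemma~\ref{lem:regularcounter} then yields a counterexample $(R,\bt_1,\bt_2)$ whose nonzero entries lie in a single component generated by a value. I treat a union counterexample by reduction from \textsc{Multicolored Independent Set} and a difference counterexample by reduction from \textsc{Multicolored Implications}. Fixing a closed subset $\{0\}\cup\supph(\bt_1+\bt_2)\subseteq D'\subseteq D_1$, I instantiate an $\mvm(\Gamma,D')$ gadget $G_{i,j}$ per source-instance vertex $v_{i,j}$, with the bag representing $a$ of size $Z_{i,a}^{t,D'}$, and I join $G_{i,j}$ to $G_{i',j'}$ by a $\nand$ (resp.\ $\imp$) gadget for each source-graph edge. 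Setting $k=\sum_{i=1}^{t}\sum_{a\in D'\setminus\{0\}}Z_{i,a}^{t,D'}$ forces, via Lemma~\ref{lem:unique}, exactly one fully nonzero gadget per color $i$. The forward direction uses standard assignments together with parts~(1) of Lemmas~\ref{lem:nandconstraint} and~\ref{lem:impconstraint}; the reverse direction uses parts~(3) combined with the $\bt_i$-recoverability established above to derive a contradiction whenever an illegal selection is made.

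\textbf{Main obstacle.} The delicate step is ensuring that \emph{every} fully nonzero gadget in any solution gives a $\bt_i$-recoverable inner homomorphism, not only the standard one. Semiregularity allows a $c$-bag to split between $d$ and $0$ in a satisfying assignment, producing non-standard inner homomorphisms whose recoverability is not immediate. I would control these either by restricting the reduction to a subdomain on which $\phi$ acts as a pure $c$-switch, or by composing $\phi$ with endomorphisms extracted from the non-degeneracy and non-self-producibility hypotheses (via Propositions~\ref{prop:homomtype} and~\ref{prop:regproduce}) to assemble a single multivalued morphism that inverts every inner homomorphism that can appear on $\bt_1,\bt_2$. Reconciling this flexibility with the arithmetic rigidity of $\Z^{t,D'}$, so that the size constraint still pins down exactly one selected gadget per color, is where the bulk of the case analysis will sit.
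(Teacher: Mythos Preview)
Your proposal follows the template of Lemmas~\ref{lem:regularhard} and~\ref{lem:regularhardimp} (large $Z^{t,D'}$ bag sizes, Lemma~\ref{lem:unique}, recoverability, then part~(3) of a $\nand$/$\imp$ lemma). That template relies critically on every value in the gadget's domain being regular, which is precisely what fails here. You correctly identify this as the main obstacle---with a semiregular value present, a bag can legitimately contain both $0$ and a nonzero value, so the ``fully zero or fully nonzero'' dichotomy that Lemma~\ref{lem:unique} needs is not available---but you do not resolve it. Your suggested fixes (restricting to a subdomain, composing morphisms for recoverability) are not worked out, and recoverability is in any case a separate issue from the bag dichotomy you need first.

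The paper's proof takes a different route that sidesteps this obstacle. It reduces from plain \textsc{Implications}, not the multicolored version, and uses $\mvm(\Gamma,S)$ gadgets where $S=\phi(D_1)$ is the \emph{image} of a witnessing multivalued morphism $\phi$ chosen so that $|S|$ is minimum. Bag sizes are simple: $1$ for $B_d$ and $Z=t+1$ for all others, with $k=tZ(|S|-2)+t$. The minimality of $|S|$ is the key lever: if any bag $B_c$ with $c\neq d$ contained both $0$ and a nonzero value, or if $B_d$ were nonzero while some other bag were not fully nonzero, then composing $\phi$ with the resulting inner morphism would produce a witness of semiregularity with strictly smaller image, contradicting minimality. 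A short modular count then forces exactly $t$ fully-nonzero gadgets. Finally, the paper does \emph{not} use a pre-existing weak-separability counterexample together with Lemma~\ref{lem:impconstraint}(3). Instead, an edge $\overrightarrow{v_xv_y}$ with $G_x$ fully nonzero and $G_y$ fully zero is combined with Lemma~\ref{lem:impconstraint}(2) and the definition of $\phi$ to show directly that some $d'\in D_1$ \emph{produces} the value sitting in bag $B_d$ of $G_x$, contradicting the hypothesis that $\Gamma_{|D_1}$ has no self-producing or degenerate values.

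In short, the missing idea in your plan is the extremal choice of $\phi$ minimizing $|\phi(D_1)|$; this is what replaces regularity and forces the bag structure. Without it your argument stalls at exactly the point you flagged.
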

\begin{proof}
 From the semiregular values
  of $\Gamma_{|D_1}$, let us choose $d$ and its witness $\phi$ such
  that the size of $S:=\phi(D_1)$ is minimum possible (note that $0\in
  S$).

The proof is by reduction from \textsc{Implications}. Let $G,t$ be an 
instance of \textsc{Implications}. For each vertex
$v_i$ of $G$, we introduce a gadget $\mvm(\Gamma,S)$ denoted by $G_i$. The size of
each bag of each gadget is $Z:=t+1$, except the bag corresponding to $d$,
whose size is 1. We set the cardinality constraint to
$k:=tZ(|S|-2)+t$. To finish the construction of the instance, we encode the
directed edges of the \textsc{Implications} instance by adding the
gadget $\imp(G_{x},G_{y})$ for each directed
edge $\overrightarrow{v_xv_y}$.

Suppose that there is a solution $C$ of size exactly $t$ for the
\textsc{Implications} instance. If vertex $v_i$ is in $C$, then set
the standard assignment on gadget $G_{i}$. It is clear that this
results in an assignment of size exactly $k=tZ(|S|-2)+t$ and the
constraints of the $\mvm(\Gamma,S)$ gadgets are satisfied. From the
fact that there is no directed edge $\overrightarrow{v_xv_y}$ with
$v_x\in C$, $v_y\not\in C$ and from Lemma~\ref{lem:impconstraint}(1),
it follows that the constraints of  $\imp(G_{x},G_{y})$ are
also satisfied.

For the other direction, we have to show that if there is a solution
of size exactly $k$ for the \ocspg\ instance, then there is a
solution $C$ of size exactly $t$ for \textsc{Implications}. Observe
first that only values from $D_1$ can appear in a solution: if some
value $c\in D\setminus D_1$ appears on a gadget $G_{x}$, then there is
a corresponding inner homomorphism $g_{x}$ of $\Gamma$ from $S$ to $D$
such that for a certain inner homomorphism from $D_1$ to $D$
given by $\phi\circ g_{x}$, value $c$ appears in the image of
$D_1$, contradicting the fact that $D_1$ is closed.

Next we show that if the variable in bag $B_d$ of $G_{x}$ is nonzero,
then $G_{x}$ is fully nonzero. Suppose that the variable in bag $B_d$
of $G_x$ is nonzero, but there is a nonzero variable in bag $B_c$ for
some some nonzero $c\in S$ (since there is a single variable in bag
$B_d$, we have $c\neq d$).  Let $g$ be an inner homomorphism of
$\Gamma_{|D_1}$ (from $S$) given by $G_{x}$ such that $g(d)\neq 0$ and
$g(c)=0$. The multivalued morphism $\phi \circ g$ witnesses that
$g(d)$ is semiregular in $\Gamma_{|D_1}$ (as there are no
self-producing or degenerate values by assumption), and $(\phi\circ
g)(D_1)=g(S)$ has size strictly less than $S$, contradicting the
minimality of $S$.

Let us show that for every $v_x$ and $c\in S\setminus \{0,d\}$, the bag
$B_c$ of
$G_{x}$ is either fully zero or fully nonzero.
Suppose that both 0 and $d'$ appear in this bag. Let $\psi'_{x}$ be an
inner multivalued morphism of $\Gamma_{|D_1}$ from $S$ given by $G_{x}$ such that
$\psi'_{x}(c)=\{0,d'\}$ and $|\psi'_{x}(c')|=1$ for every $c'\in S$,
$c'\neq c$. The multivalued morphism $\phi \circ \psi'_{x}$ witnesses
that $d'$ is semiregular in $\Gamma_{|D_1}$ (note that by assumption, there are no self-producing or degenerate values in $\Gamma_{|D_1}$). Thus the minimality of
$S$ would be violated by $\psi'_{x}(d)=\{0\}$, hence the variable in
bag $B_d$ is nonzero. In this case, by the
previous paragraph, the nonzero variable in bag $B_d$ implies that
every variable of $G_{x}$ is nonzero.

Since the bags not corresponding to $d$ have size $Z$ and the
cardinality constraint $k$ equals $t$ modulo $Z$, there have to be at
least $t$ gadgets where  bag $B_d$ is nonzero. We
have seen that in these gadgets all the other $Z(|S|-2)$ variables are
nonzero as well.  Thus it follows that there are exactly $t$ gadgets
where all the variables are nonzero, and every variable of every other
gadget is zero.

Let $C$ be the set such that $v_x\in C$ if and only if the variables
of $G_{x}$ are nonzero; the previous paragraph implies that $|C|=t$.
We claim that $C$ is a solution for the \textsc{Implications}
instance.  Suppose that there is an edge $\overrightarrow{v_xv_y}$
with $v_x\in C$ and $v_y\not\in C$. Let $c$ be the value of the
variable of $G_{x}$ in the bag $B_d$. We arrive to a contradiction by
showing that in this case $c$ is produced by some value $d'$ in
$\Gamma_{|D_1}$ (recall that by assumption, there are no self-producing or
degenerate values in $\Gamma_{|D_1}$). Let $d'\in D_1$ be such that
$0,d\in \phi(d')$. To show that $d'$ produces $c$, let $\bt\in R$ for
some $R\in \Gamma_{|D_1}$, and let $\bt_c$ be a tuple such that $c$ is
the only nonzero value appearing in $\bt_c$ and whenever $c$ appears
in some coordinate of $\bt_c$, then $d'$ appears in the same
coordinate of $\bt$. If we show that every such $\bt_c$ is in $R$,
then we prove that $d'$ produces $c$. Let $\bt_d$ be the same as
$\bt_c$ with every $c$ replaced by $d$.  The multivalued morphism
$\phi$ shows that there is a tuple $\bt'\in R_{|S}$ disjoint from
$\bt_c$ such that $\bt_d+\bt'\in R_{|S}$. Gadget $G_x$ gives an inner
homomorphism $f_x$ with $f_x(d)=c$ and gadget $G_y$ gives an inner
homomorphism $f_y$ with $f_y(\bt')=\mathbf0$, the zero tuple. As both $\bt'\in R_{|S}$ and $\bt_d+\bt'\in R_{|S}$ hold, it follows from 
Lemma~\ref{lem:impconstraint}(2) that the $\imp(G_{x},G_{y})$ constraint
implies that $f_x(\bt_d)+f_y(\bt')=f_x(\bt_d)=\bt_c\in R$, and we are
done.
\end{proof}

\subsubsection{Regular values}
\label{sec:regular-values-ocsp}
Let $D_1,D_2$ be a pair satisfying (1)--(5) of
Theorem~\ref{th:singlemain}. By previous sections, we can assume in
the following, that every value is regular in $\Gamma_{|D_1}$. It
follows that every value is regular in $\Gamma_{|D_2}$ as well: if
$\psi$ is a multivalued morphism of $\Gamma_{|D_2}$ with $0,d\in
\psi(c)$ for some nonzero $c,d\in D_2$, then, as $h$ is a contraction, $h\circ \psi$ witnesses
that $d$ is not regular in $\Gamma_{|D_1}$.

A technical tool in the proofs is that given a set of endomorphisms that
are disjoint in the sense that the image of a value is nonzero in
exactly one of the endomorphisms, we would like to construct a mapping
that is the ``sum'' of these mapping, hoping that it is also an
endomorphism.  Formally, we say that a set $p_1$, $\dots$, $p_\ell$ of
endomorphisms of $\Gamma$ is a {\em partition set} if, for every $d\in
D\setminus \{0\}$, $p_i(d)\neq 0$ for exactly one $i$. The {\em sum}
of the partition set is the mapping $h:D\to D$ defined such that
$h(d)$ is the unique nonzero value in $p_1(d)$, $\dots$, $p_\ell(d)$.
The partition set is {\em good} if the sum of these pairwise disjoint
endomorphisms is also an endomorphism; otherwise, the partition set is
{\em bad.} 
We can define partition sets similarly for inner endomorphisms.

The hardness proofs are simpler if we assume that there are no bad
partition sets: we prove W[1]-hardness under this assumption in
Lemma~\ref{lem:regularhard} below if there is a union counterexample
and in Lemma~\ref{lem:regularhardimp} if there is a difference
counterexample.

Note that if a partition set is bad, then there is a union
counterexample in $\Gamma$ using the values
$\bigcup_{i=1}^{\ell}p_i(\dom(\Gamma))$. Indeed, suppose that there is
a relation $R\in \Gamma$ and a tuple $\bt\in R$ such that
$h(\bt)=p_1(\bt)+p_2(\bt)+\dots +p_\ell(\bt)\not\in R$. If $1<\ell'\le
\ell$ is the smallest value such that $p_1(\bt)+p_2(\bt)+\dots+
p_{\ell'}(\bt)\not\in R$, then $\bt_1=p_1(\bt)+p_2(\bt)+\dots+
p_{\ell'-1}(\bt)$ and $\bt_2=p_{\ell'}(\bt)$ is a union counterexample.
Lemma~\ref{lem:regularhardset} exploits this union counterexample to
show hardness in case there is a bad partition set, completing the
proof of Theorem~\ref{th:singlemain}.

\begin{lemma}\label{lem:regularhard}
If every value is regular in $\Gamma_{|D_2}$, there is no bad
partition set in $\Gamma_{|D_2}$,  and there is a union
counterexample in $\Gamma_{|D_2}$, then \ocspg\ is \textup{W[1]}-hard.
\end{lemma}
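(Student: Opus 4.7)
The plan is to prove W[1]-hardness by reduction from \textsc{Multicolored Independent Set}. Given a $t$-partite graph with vertex set $\{v_{i,j}:1\le i\le t,\ 1\le j\le n\}$, introduce for each $v_{i,j}$ an MVM gadget $G_{i,j}=\mvm(\Gamma,D_2)$ whose bag $B_d$ contains $Z^{t,D_2}_{i,d}$ variables, and for every edge $\{v_{i,j},v_{i',j'}\}$ add a $\nand(G_{i,j},G_{i',j'})$ gadget. Set the size constraint to $k:=\sum_{i,d}Z^{t,D_2}_{i,d}$. The forward direction is routine: given an independent set $\{v_{1,y_1},\dots,v_{t,y_t}\}$, placing the standard assignment on each $G_{i,y_i}$ and zero elsewhere produces a satisfying assignment of size exactly $k$; the MVM constraints hold by construction and each NAND gadget has at most one of its two incident MVM gadgets carrying the standard assignment, so Lemma~\ref{lem:nandconstraint}(1) applies.

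For the backward direction, let $f$ be a satisfying assignment of size $k$. Since $D_1$ is closed (condition~(1)), $f$ only takes values in $D_1$, and composing with the contraction $h:D_1\to D_2$ (condition~(2)) yields an assignment of the same size taking values in $D_2$. Regularity of every value of $\Gamma_{|D_2}$ guarantees, via Proposition~\ref{prop:mvmgadget} and Observation~\ref{obs:submorphisms}, that each bag is fully zero or fully nonzero, since a bag containing $\{0,c\}$ for some nonzero $c$ would yield a multivalued morphism violating the regularity of $c$. Lemma~\ref{lem:unique} applied to the size $k$ then produces, for each pair $(i,d)$, a unique index $j^*(i,d)$ with the bag $B_d$ of $G_{i,j^*(i,d)}$ fully nonzero. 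Setting $D^+_{i,j}=\{d:j^*(i,d)=j\}$, the nonempty $D^+_{i,j}$ partition $D_2\setminus\{0\}$, and each yields, after selecting a value per bag and then contracting by $h$, an endomorphism $g_{i,j}$ of $\Gamma_{|D_2}$ whose image of $d$ is nonzero iff $d\in D^+_{i,j}$. Within each color class $i$, the nonzero $g_{i,j}$ form a partition set, so the no-bad-partition-set hypothesis makes their sum an endomorphism of $\Gamma_{|D_2}$ nowhere zero on $D_2\setminus\{0\}$; condition~(3) then forces this sum to be an automorphism $G_i$, and by finiteness $G_i^{-1}$ is also an endomorphism.

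It remains to extract the independent set. For each color class $i$ one chooses a representative vertex $v_{i,y_i}$; if no edge of the graph joins two representatives then $\{v_{i,y_i}\}_{i=1}^t$ is the sought multicolored independent set. Toward a contradiction, suppose that $\{v_{i,y_i},v_{i',y_{i'}}\}$ is an edge. Representatives are chosen so that the support of $\bt_1$ lies in $D^+_{i,y_i}$ and the support of $\bt_2$ lies in $D^+_{i',y_{i'}}$; then the multivalued morphism witnessing $\bt_1$-recoverability of $G_{i,y_i}$ is obtained by composing the retraction onto $D^+_{i,y_i}$ with $G_i^{-1}$, and symmetrically for $\bt_2$. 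Lemma~\ref{lem:nandconstraint}(3) then provides a NAND constraint unsatisfied by $f$. The main obstacle is exactly this last step when several gadgets in the same color class are only partially on: one has to argue that the representatives can be chosen so that the supports of $\bt_1$ and $\bt_2$ each stay inside a single $D^+_{i,y_i}$, and that each such $D^+_{i,y_i}$ is a component of $\Gamma_{|D_2}$ (so that the retraction is an endomorphism). This is where the interaction between regularity, the absence of bad partition sets, and the minimality embedded in condition~(3) is used most essentially; taken together these hypotheses constrain the $D^+_{i,j}$ to behave like components and allow the counterexample's support to be routed into one of them.
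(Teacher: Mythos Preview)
Your reduction is missing a crucial ingredient: the intra-color NAND gadgets. The paper adds, for every $1\le x\le t$ and every $y\neq y'$, the gadget $\nand(G_{x,y},G_{x,y'})$, not only NANDs along the edges of the input graph. These extra gadgets are precisely what force, in each color class, the gadget that is $\bt_1$-recoverable to coincide with the one that is $\bt_2$-recoverable. Without them your backward direction breaks: all you can conclude from an edge-NAND is that there is no edge between $v_{x,y^1_x}$ and $v_{x',y^2_{x'}}$ (and symmetrically), which does not yield a multicolored independent set. Your final paragraph tries to sidestep this by ``choosing representatives so that the support of $\bt_1$ lies in $D^+_{i,y_i}$ and the support of $\bt_2$ lies in $D^+_{i',y_{i'}}$,'' but a single choice $y_i$ per color must work for both roles simultaneously, and your construction gives you no handle to enforce $y^1_i=y^2_i$.

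Two further points. First, you never invoke Lemma~\ref{lem:regularcounter}, which is what lets you assume $\bt_1$ is contained in the component generated by a single element $a_1$ (and similarly for $\bt_2$); this is exactly what guarantees that the entire support of $\bt_1$ lands in a single $D^+_{i,j}$ rather than being spread across several. Second, your claim that the $D^+_{i,j}$ ``behave like components'' is correct but left unproved: the one-line argument is that $g_{i,j}\circ G_i^{-1}=\pr_{D^+_{i,j}}$ is a composition of endomorphisms, hence an endomorphism, so $D^+_{i,j}$ is a component. With Lemma~\ref{lem:regularcounter}, the component generated by $a_1$ is then contained in the unique $D^+_{i,j}$ containing $a_1$, and $G_i^{-1}$ (equivalently $g^{s-1}$ in the paper's notation) witnesses $\bt_1$-recoverability of $g_{i,j}$. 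Once you add the intra-color NANDs and cite Lemma~\ref{lem:regularcounter}, your argument becomes the paper's.
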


\begin{proof}
  The reduction is from \textsc{Multicolored Independent Set} (see
  Section~\ref{sec:reductions}). Assume $D_2=\{0,\ldots,\Delta\}$. 
For each vertex $v_{x,y}$ ($1\le x
  \le t$, $1 \le y \le n$), we introduce a gadget $\mvm(\Gamma,D_2)$
  denoted by $G_{x,y}$. The bag of $G_{x,y}$ corresponding to value
  $d\in D_2\setminus \{0\}$ has size $Z^{t,D_2}_{x,d}$. The size
  constraint is $k:=\sum_{x=1}^t\sum_{d\in D_2\setminus
    \{0\}}Z^{t,D_2}_{x,d}$.  If $v_{x,y}$ and $v_{x',y'}$ are
  adjacent, then we add the gadget $\nand(G_{x,y},G_{x',y'})$.
  Furthermore, for every $1\le x \le t$, $1 \le y,y' \le n$, $y\neq y'$, we add
  the $\nand(G_{x,y},G_{x,y'})$ gadget.

Suppose that there is a solution $C$ for the
\textsc{Multicolored Independent Set} instance.  If vertex $v_{x,y}$
is in $C$, then set the standard assignment on gadget $G_{x,y}$,
otherwise set the zero assignment. It is 
clear that this results in an assignment satisfying the size constraint.
The constraints of the $\mvm(\Gamma,D_2)$ gadgets are
satisfied and the constraints of $\nand(G_{x,y},G_{x,y'})$ are
satisfied as well (by Lemma~\ref{lem:nandconstraint}(1)).

For the other direction, suppose that there is a solution satisfying
the size constraint. First, we observe that a solution contains
values only from $D_1$. Indeed, if $c\not\in D_1$ appears in
bag $B_d$ of a gadget $G_{x,y}$, then $G_{x,y}$ gives an
inner homomorphism $g$ of $\Gamma$ from $D_2$ with $g(d)=c$. Now $h\circ g$  maps a  
value of $D_1$ to $c$, contradicting the assumption that $D_1$ is a
closed set. Furthermore, by applying the contraction
 $h$
on a solution, it can be assumed that only values from $D_2$ are used.
Thus the $\mvm(\Gamma,D_2)$ gadgets give multivalued morphisms of $\Gamma_{|D_2}$.
Since every value is regular in $\Gamma_{|D_2}$, each bag is either
fully zero or fully nonzero.  The sizes of the nonzero bags add up
exactly to the size constraint $k$. Thus by Lemma~\ref{lem:unique},
the only way this is possible if there is exactly one nonzero bag with
size $Z^{t,D_2}_{x,d}$ for every $1\le x \le t$ and $d \in
D_2\setminus \{0\}$.  However, the nonzero bags of sizes
$Z^{t,D_2}_{x,d_1}$ and $Z^{t,D_2}_{x,d_2}$ could appear in different
gadgets.

Take a union counterexample $(R,\bt_1,\bt_2)$ in $\Gamma_{|D_2}$; by
Lemma~\ref{lem:regularcounter}, we can assume that $\bt_1$, $\bt_2$
are in the components of $\Gamma_{|D_2}$ generated by some $a_1,a_2\in
D_2$, respectively.  We show that for every $1 \le x \le t$, there are
values $y^1_x$ and $y^2_x$ such that every endomorphism of
$\Gamma_{|D_2}$ given by $G_{x,y^1_x}$ (resp., $G_{x,y^2_x}$) is
$\bt_1$-recoverable (resp., $\bt_2$-recoverable).  For a fixed $x$,
let $g_1$, $\dots$, $g_n$ be arbitrary endomorphisms of
$\Gamma_{|D_2}$ given by $G_{x,1}$, $\dots$, $G_{x,n}$, respectively.
Since the sizes of nonzero bags are all different, these endomorphisms
are pairwise disjoint and hence they form a partition set.  As there
is no bad partition set in $\Gamma_{|D_2}$, their sum $g$ is an
endomorphism of $\Gamma_{|D_2}$ and in fact a contraction. Since $\Gamma_{|D_2}$ has no proper
contraction by assumption, $g$ has to be a permutation and hence $g^s$
is the identity for some $s\ge 1$. There is a unique $1 \le y^1_x \le
n$ such that $g_{y^1_x}(a_1)=g(a_1)\neq 0$.  The endomorphism $g_{y^1_x}
\circ g^{s-1}$ maps $a_1$ to $(g\circ g^{s-1})(a_1)=g^s(a_1)=a_1$ and maps every $a\in D_2$ either to 0 or $a$; i.e.,
$g_{y^1_x}\circ g^{s-1}=\pr_S$ for some set $S\subseteq D_2\setminus
\{0\}$ containing $a_1$. As $S$ is a component containing $a_1$, it
has to contain the component generated by $a_1$ and $S$ contains every
value of $\bt_1$. It follows that $g_{y^1_x}$ given by $G_{y^1_x}$ is
$\bt_1$-recoverable: $g^{s-1}(g_{y^1_x}(\bt_1))=\bt_1$.
A similar argument works 
for $y^2_x$, thus the required values $y^1_x$, $y^2_x$ exist.
Let us observe that it is not possible that $y^1_x\neq y^2_x$: by Lemma
\ref{lem:nandconstraint}(3) the constraints of
$\nand(G_{x,y^1_x},G_{x,y^2_x})$ are not satisfied in this
case. Let $C$ contain vertex $v_{x,y}$ if $y=y^1_x=y^2_x$. It
follows that $C$ is a multicolored independent set: if
vertices $v_{x,y}$, $v_{x',y'}$ are adjacent, then again by Lemma
\ref{lem:nandconstraint}(3), some constraint of
$\nand(G_{x,y},G_{x',y'})$=$\nand(G_{x,y^1_x},G_{x',y^2_x})$
is not satisfied.
\end{proof}

The proof using a difference counterexample is similar:
\begin{lemma}\label{lem:regularhardimp}
If every value is regular in $\Gamma_{|D_2}$, there is no bad
partition set in $\Gamma_{|D_2}$, and there is a difference
counterexample in $\Gamma_{|D_2}$, then \ocspg\ is \textup{W[1]}-hard.
\end{lemma}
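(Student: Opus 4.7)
The plan is to adapt the reduction of Lemma~\ref{lem:regularhard}, replacing NAND gadgets with IMP gadgets and reducing from \textsc{Multicolored Implications} (W[1]-hard by Lemma~\ref{lem:mimphard}) in place of \textsc{Multicolored Independent Set}. Given an instance with vertices $v_{x,y}$ ($1\le x\le t$, $1\le y\le n$), for each vertex introduce an $\mvm(\Gamma,D_2)$ gadget $G_{x,y}$ whose bag $B_d$ has size $Z^{t,D_2}_{x,d}$, and set the size constraint to $k=\sum_{x,d}Z^{t,D_2}_{x,d}$. For every directed edge $\overrightarrow{v_{x,y}v_{x',y'}}$, add an $\imp(G_{x,y},G_{x',y'})$ gadget. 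By Lemma~\ref{lem:regularcounter}(2), we may choose a difference counterexample $(R,\bt_1,\bt_2)$ in $\Gamma_{|D_2}$ with both $\bt_1$ and $\bt_2$ contained in the component $C_{a_1}$ generated by some $a_1\in D_2$.

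For the forward direction, given a solution $C$ of \textsc{Multicolored Implications}, put the standard assignment on every $G_{x,y}$ with $v_{x,y}\in C$ and the zero assignment on the rest. The size constraint is met exactly. For each edge $\overrightarrow{v_{x,y}v_{x',y'}}$: if $v_{x,y}\notin C$, then $G_{x,y}$ is zero and Lemma~\ref{lem:impconstraint}(1) gives satisfaction; if $v_{x,y}\in C$, then the implication forces $v_{x',y'}\in C$, so $G_{x',y'}$ is standard and Lemma~\ref{lem:impconstraint}(1) applies again.

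For the reverse direction, proceed as in Lemma~\ref{lem:regularhard}: values must lie in $D_1$ (closedness), can be pushed to $D_2$ via the contraction $h$, regularity forces each bag to be fully zero or fully nonzero, and Lemma~\ref{lem:unique} combined with the size constraint forces that for each color $x$ and each $d\in D_2\setminus\{0\}$, exactly one gadget $G_{x,y}$ has bag $B_d$ nonzero. The same-color endomorphisms $g^x_1,\dots,g^x_n$ then form a partition set; by the no-bad-partition-set hypothesis their sum $g^x$ is an endomorphism, and it is a contraction, so the no-proper-contraction hypothesis forces $g^x$ to be a permutation with $(g^x)^s=\mathrm{id}$ for some $s$. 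The crucial new ingredient is the observation that each $D_y:=\{d:g^x_y(d)\neq 0\}$ is a component of $\Gamma_{|D_2}$: since $g^x_y=g^x\circ\pr_{D_y}$ and $(g^x)^{s-1}$ is an endomorphism inverse to $g^x$, the map $\pr_{D_y}=(g^x)^{s-1}\circ g^x_y$ is an endomorphism. Let $y^1_x$ be the unique index with $a_1\in D_{y^1_x}$; then $D_{y^1_x}$ is a component containing $a_1$, hence contains all of $C_{a_1}$, and in particular contains the supports of both $\bt_1$ and $\bt_2$. For $y\neq y^1_x$, disjointness of the $D_y$'s yields $g^x_y(\bt_2)=\mathbf{0}$.

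Finally, set $C=\{v_{x,y^1_x}:1\le x\le t\}$, a set of exactly $t$ vertices, one per color. Gadget $G_{x,y^1_x}$ is $\bt_1$-recoverable: taking $\phi=(g^x)^{s-1}$ as a (single-valued) multivalued morphism, we get $(g^x_{y^1_x}\circ\phi)(\bt_1)=(g^x)^{s-1}(g^x(\bt_1))=\bt_1$ since $\bt_1$ is supported in $D_{y^1_x}$. For every $y'\neq y^1_{x'}$, the inner homomorphism from $G_{x',y'}$ sends $\bt_2$ to $\mathbf{0}$. Thus any directed edge $\overrightarrow{v_{x,y^1_x}v_{x',y'}}$ with $y'\neq y^1_{x'}$ would make the corresponding IMP gadget unsatisfied by Lemma~\ref{lem:impconstraint}(3), contradicting that we started from a solution; hence $C$ solves the \textsc{Multicolored Implications} instance. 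The main obstacle is the component identification $D_y$, which is where the no-bad-partition-set and no-proper-contraction hypotheses combine to produce the structural information needed to align the difference counterexample with the gadget assignments across same-color gadgets.
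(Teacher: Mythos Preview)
Your proof is correct and follows essentially the same route as the paper's: the same $\mvm$ construction with the $Z^{t,D_2}_{x,d}$ bag sizes, the same partition-set/permutation argument to locate a unique $y_x$ per color, and the same invocation of Lemma~\ref{lem:impconstraint}(3) via $\bt_1$-recoverability of $G_{x,y_x}$ together with $g^x_y(\bt_2)=\mathbf{0}$ for $y\neq y_x$. One cosmetic point: the paper uses the convention $(f\circ g)(x)=g(f(x))$, so your identity $\pr_{D_y}=(g^x)^{s-1}\circ g^x_y$ should be written $g^x_y\circ (g^x)^{s-1}$ (and likewise $g^x_y=\pr_{D_y}\circ g^x$) to match; the underlying argument is the same.
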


\begin{proof}
The reduction is from \textsc{Multicolored Implications}. For each
vertex $v_{x,y}$ ($1\le x \le t$, $1 \le y \le n$), we introduce a
gadget $\mvm(\Gamma,D_2)$ denoted by $G_{x,y}$. The bag $B_d$
of $G_{x,y}$ has size $Z^{t,D_2}_{x,d}$. The
size constraint is $k:=\sum_{i=x}^t\sum_{d\in D_2\setminus \{0\}}Z^{t,D_2}_{x,d}$.
If there is a directed edge $\overrightarrow{v_{x,y},v_{x',y'}}$, then
we add a constraint
$\imp(G_{x,y},G_{x',y'})$.

Suppose that there is a solution $C$ of size exactly $t$ for the
\textsc{Multicolored Implications} instance. If vertex $v_i$ is in $C$, then set
the standard assignment on gadget $G_{v_i}$, otherwise set the zero
assignment. It is clear that this 
results in an assignment satisfying the size constraint. The
constraints of the $\mvm(\Gamma,D_2)$ gadgets are satisfied and the
$\imp(G_{x,y},G_{x,y'})$ constraints are satisfied as well
(Lemma~\ref{lem:impconstraint}(1)). 

For the other direction, suppose that there is a solution satisfying
the size constraint. As in Lemma~\ref{lem:regularhard}, we can
assume that only values from $D_2$ are used in the solution. Since
every value is regular in 
$\Gamma_{|D_2}$, every bag is either fully zero or fully nonzero. The
sizes of the nonzero bags add up exactly to the size constraint
$k$. Thus 
by Lemma~\ref{lem:unique}, the only way this is possible is if there
is exactly one nonzero bag of size $Z^{t,D_2}_{x,d}$ for 
every $1\le x \le t$ and $d \in D_2$.

 Let us choose a difference counterexample
$(R,\bt_1,\bt_2)$ in $\Gamma_{|D_2}$; by Lemma~\ref{lem:regularcounter},
we can assume that $\bt_1+\bt_2$ is in the component $C_1$ of $\Gamma_{|D_2}$
generated by some $a_1\in D_2$. We show that for every
$1 \le x \le t$, there is an integer $y_x$ such that
$G_{x,y_x}$ is $\bt_1$-recoverable.

Let $g_1$, $\dots$, $g_n$ be arbitrary
endomorphisms given by $G_{x,1}$, $\dots$, $G_{x,n}$, respectively.
The uniqueness of the sizes of the nonzero bags implies that these
endomorphisms are pairwise disjoint and they form a partition set.
We assumed that there is no bad partition set in $\Gamma_{|D_2}$,
thus the sum $g$ of the set is an endomorphism. Since $\Gamma_{|D_2}$ has no
proper contraction, we have that $g$ is a permutation and $g^s$
is the identity for some $s\ge 1$. There is a $1 \le y_x \le n$ such
that $g_{y_x}(a_1)\neq 0$. The homomorphism $g_{y_x} \circ
g^{s-1}$ maps every value $a\in D_2$ either to 0 or $a$; i.e.,
$g_{y_x}\circ g^{s-1}=\pr_S$ for some set $S\subseteq D_2$
containing $a_1$. This means that $S$ is a component containing $a_1$,
hence $C_1\subseteq S$. It follows that $g_{y_x}$
is $\bt_1$-recoverable. Moreover, as the bag $B_a$ of $G_{x,y_x}$ for $a\in C_1$ is fully nonzero, we have that bag $B_a$ of $G_{x,y}$ for $a\in C_1$ and $y\neq y_x$ is fully zero.

Let $C=\{v_{1,y_1},\dots,v_{t,y_t}\}$.  It follows immediately that
$C$ does not violate any of the implications: if there is a
directed edge $\overrightarrow{v_{x,y_x}v_{x',y'}}$, $y'\neq y_{x'}$,
then the gadget $G_{x,y_x}$ is $\bt_1$-recoverable and
$h(\bt_2)=0$ for every homomorphism given by $G_{x',y'}$, as $\bt_2$
is contained in $C_1$, and
therefore Lemma~\ref{lem:impconstraint}(3) implies that
$\imp(G_{x,y_x},G_{x',y'})$ is not satisfied.
\end{proof}

The last step is to handle the case when there is a bad partition
set. As mentioned earlier, this implies that there is a union
counterexample; the following proof exploits this fact.

\begin{lemma}\label{lem:regularhardset}
If every value is regular in $\Gamma_{|D_2}$ and there is a bad
partition set in $\Gamma_{|D_2}$, then \ocspg\ is \textup{W[1]}-hard.
\end{lemma}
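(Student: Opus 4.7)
The plan is to reduce from \textsc{Multicolored Independent Set}, following the template of Lemma~\ref{lem:regularhard} but using the union counterexample that any bad partition set furnishes. Concretely, given a bad partition set $p_1,\ldots,p_\ell$ in $\Gamma_{|D_2}$, pick $\bt\in R$ with $\sum_i p_i(\bt)\notin R$ and the least $\ell'$ for which $p_1(\bt)+\cdots+p_{\ell'}(\bt)\notin R$, so that $\bt_1:=p_1(\bt)+\cdots+p_{\ell'-1}(\bt)$ and $\bt_2:=p_{\ell'}(\bt)$ are disjoint tuples of $R$ whose union is not in $R$. I would further choose $p_1,\ldots,p_\ell$ so that $\ell$ is minimal among bad partition sets, which makes every strict sub-collection of the $p_i$'s a good partition set on the values it moves; this minimality will be the key algebraic ingredient.

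The reduction itself is identical in shape to Lemma~\ref{lem:regularhard}: for each vertex $v_{x,y}$ of the input graph introduce an $\mvm(\Gamma,D_2)$-gadget $G_{x,y}$ with bag $B_d$ of size $Z^{t,D_2}_{x,d}$; add $\nand(G_{x,y},G_{x',y'})$ whenever $v_{x,y},v_{x',y'}$ are adjacent or are distinct vertices of the same color class; set the size constraint to $k:=\sum_{x,d}Z^{t,D_2}_{x,d}$. The standard bookkeeping from Lemma~\ref{lem:regularhard} then applies verbatim: closedness of $D_1$ together with the contraction $h$ confines any solution to $D_2$-values; regularity of every value in $\Gamma_{|D_2}$ forces each bag to be fully zero or fully nonzero; Lemma~\ref{lem:unique} pins down exactly one fully nonzero bag of each size $Z^{t,D_2}_{x,d}$; and the endomorphisms $g_1,\ldots,g_n$ given by $G_{x,1},\ldots,G_{x,n}$ form a partition set of $\Gamma_{|D_2}$ for each color class $x$.

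The main obstacle, and the departure from Lemma~\ref{lem:regularhard}, is the recoverability step, since that proof critically used goodness of the gadget-induced partition set to make its sum a contraction (hence a permutation by condition~(3) of Theorem~\ref{th:singlemain}) and thereby produce $\bt_i$-recoverable gadgets. The plan is to bifurcate per color class. If $g_1,\ldots,g_n$ form a good partition set, the argument of Lemma~\ref{lem:regularhard} goes through unchanged and produces indices $y^1_x,y^2_x$ whose gadgets are $\bt_1$- and $\bt_2$-recoverable; the $\nand$ constraints then force $y^1_x=y^2_x$ and the images encode a multicolored independent set. If the induced partition set is bad, there is a minimal $r$ with $g_{y_1}(\bt)+\cdots+g_{y_r}(\bt)\notin R$; the pair $(R,\,g_{y_1}(\bt)+\cdots+g_{y_{r-1}}(\bt),\,g_{y_r}(\bt))$ is then a union counterexample, and I would apply Lemma~\ref{lem:nandconstraint}(3) to the intra-color constraint $\nand(G_{x,y_{r-1}},G_{x,y_r})$ to derive a contradiction with feasibility. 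The delicate part, which I expect to be the principal difficulty, is certifying the $\bt$-recoverability needed to invoke Lemma~\ref{lem:nandconstraint}(3) in this branch; this is to be handled by induction on the size of the active support of the partition set, falling back at the base case on the minimality of $\ell$ and on the absence of a proper contraction of $\Gamma_{|D_2}$.
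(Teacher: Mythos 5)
Your high-level instincts are right (reduce from \textsc{Multicolored Independent Set} via $\nand$ gadgets, exploit the union counterexample hidden inside the bad partition set, pin down nonzero bags with Lemma~\ref{lem:unique}), but the core of your plan diverges from the paper in a way that leaves a real gap, and the gap is exactly where you flag ``the principal difficulty.''

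The critical issue is the bifurcation in your last paragraph. You run gadgets over $D_2$ with sizes $Z^{t,D_2}_{x,d}$, get a gadget-induced partition set $g_1,\dots,g_n$ for each color class, and then split on whether this new partition set is good or bad. But in the ``bad'' branch you need $\bt$-recoverability to invoke Lemma~\ref{lem:nandconstraint}(3), and none of your proposed mechanisms delivers it. Concretely, your candidate counterexample is $(R,\ g_{y_1}(\bt)+\cdots+g_{y_{r-1}}(\bt),\ g_{y_r}(\bt))$, and you would apply $\nand(G_{x,y_{r-1}},G_{x,y_r})$ — but $G_{x,y_{r-1}}$ gives the inner homomorphism $g_{y_{r-1}}$, not the prefix sum $g_{y_1}+\cdots+g_{y_{r-1}}$, so there is no reason this gadget is $\bigl(g_{y_1}(\bt)+\cdots+g_{y_{r-1}}(\bt)\bigr)$-recoverable, and likewise $G_{x,y_r}$ is not obviously $g_{y_r}(\bt)$-recoverable (recovering $g_{y_r}(\bt)$ from $g_{y_r}^2(\bt)$ needs an argument you haven't given). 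Your minimality of $\ell$ together with ``absence of a proper contraction of $\Gamma_{|D_2}$'' does not give you an invertibility handle on $g_1,\dots,g_n$, since the missing ingredient — some composite of the $g_i$'s being a permutation — is precisely what you are trying to establish. The ``induction on the size of the active support'' is not fleshed out and I do not see a base case or inductive step that closes this.

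The paper's proof avoids the bifurcation entirely by two changes you do not make. First, it picks a bad partition set $p_1,\dots,p_\ell$ minimizing $|D_3|$, where $D_3=\bigcup_i p_i(D_2)$ (minimizing the image size, not $\ell$), and then builds $\mvm(\Gamma,D_3)$ gadgets with sizes $Z^{t,D_3}_{x,d}$; the gadget-induced maps $g_i$ are inner homomorphisms from $D_3$ to $D_2$. Second, and crucially, it never asks whether $g_1,\dots,g_m$ is a good or bad partition set: it composes the sum $g$ with the sum $p$ of the \emph{given} bad partition set to get $g\circ p:D_3\to D_3$, and shows directly that $g\circ p$ is a permutation. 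That step is where minimality of $|D_3|$ does its work: one forms the partition set $P=\{p_{z_1}\circ g_{z_2}\circ p_{z_3}\}$, observes $p^*(D_2)\subseteq(g\circ p)(D_3)$, and derives a contradiction whether $P$ is good (then $p^*$ is a permutation, so $|p^*(D_2)|=|D_2|\ge|D_3|$) or bad (then $|p^*(D_2)|<|D_3|$ violates minimality). Once $g\circ p$ is a permutation, recoverability for all $x$ follows uniformly, no case split. Without this composition trick and the $|D_3|$-minimality, your argument does not close; I'd suggest reworking the setup to match the $D_3$-gadget framing.
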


\begin{proof}
  Let $p_1$, $\dots$, $p_\ell$ be a minimal bad partition set of $\Gamma_{|D_2}$ in the
  sense that $D_3:=\bigcup_{i=1}^{\ell}p_i(D_2)$ has
  minimum size.  Assume $D_3=\{0,\ldots,\Delta\}$. 
Because of the bad partition set, $\Gamma_{|D_3}$
  contains a union counterexample. Furthermore, every value $d\in
  D_3\setminus \{0\}$
  is regular in $\Gamma_{|D_3}$: if $\Gamma_{|D_3}$ has a multivalued morphism $\psi$
  with $0,d\in \psi(c)$ for some $c\in D_3$, and $p_i(c')=c$ for some
  $1\le i\le \ell$ and $c'\in D_2$, then $p_i \circ \psi$ witnesses that $d$ is not
  regular in $\Gamma_{|D_2}$.

The reduction is the same as in Lemma~\ref{lem:regularhard}, with the
only difference is that we use $\mvm(\Gamma,D_3)$ gadgets instead of
$\mvm(\Gamma,D_2)$ and the sizes of the bags are set using the
values $Z^{t,D_3}_{x,d}$. It remains true that a solution for
\textsc{Multicolored Independent Set} implies a solution for the
\ocspg\ instance.

For the other direction, let us argue first that only values from
$D_1$ appear in a solution.  Suppose that a value $d\not\in D_1$
appears in bag $B_c$ of a gadget $G_{x,y}$, which means that $G_{x,y}$
gives an inner homomorphism $g$ from $D_3$ to $D$ with $g(c)=d$. Let
$1\le s \le \ell$ be such that $p_s(c')=c$ for some $c'\in D_2$. Now
$h \circ p_s \circ g$ is an inner homomorphism from $D_1$ to $D$
mapping a value of $D_1$ to $d$, contradicting the assumption that
$D_1$ is a closed set. Furthermore, by applying the contraction $h$ on
a solution, it can be assumed in the following that only values from
$D_2$ appear in the solution. That is, each multivalued gadget
describes an inner multivalued morphism from $D_3$ to $D_2$.

We show that every bag is either fully zero or fully nonzero. Suppose
that $\psi$ is an inner multivalued morphism of $\Gamma_{|D_2}$ from
$D_3$ given by a gadget with $0,d\in \psi(c)$ for some nonzero $c\in D_3$ and $d\in D_2$. Suppose
that $p_s(c')=c$ for some $c'\in D_2$ and $1\le s \le \ell$. Now $p_s\circ \psi$
witnesses that $d$ is not regular in $\Gamma_{|D_2}$, and this
contradiction shows that every bag is either zero or fully nonzero.
The sizes of the nonzero bags add up exactly to the size constraint
$k$. Thus by Lemma~\ref{lem:unique}, there is exactly one nonzero bag
with size $Z^{t,D_3}_{x,d}$ for every $1\le x \le t$ and $d \in
D_2\setminus \{0\}$.

We know that there is a union counterexample in $\Gamma_{|D_3}$
(because of the bad partition set whose image is in $D_3$).
Let us choose a union counterexample $(R,\bt_1,\bt_2)$ in $\Gamma_{|D_3}$;
by Lemma~\ref{lem:regularcounter}, we can assume that $\bt_i$ is in the
component of $\Gamma_{|D_3}$ generated by some $a_i\in D_3$, for
$i=1,2$.  We show that for every $1 \le x \le t$, there are values
$y^1_x$ and $y^2_x$ such that $G_{x,y^1_x}$ (resp., $G_{x,y^2_x}$)
gives a $\bt_1$-recoverable (resp., $\bt_2$-recoverable) inner
homomorphism from $D_3$ to $\Gamma_{|D_2}$.

Let $p$ be the sum of this bad partition set $p_1$, $\dots$, $p_\ell$
(note that $p$ is {\em not} an endomorphism of $\Gamma_{|D_2}$).  The
uniqueness of the sizes of the nonzero bags imply that at most
$|D_3|-1$ of the gadgets $G_{x,1}$, $\dots$, $G_{x,n}$ have nonzero
bags. Furthermore, if we choose one inner homomorphism given by each
such gadget, then it is clear that these inner homomorphisms $g_1$,
$\dots$, $g_m$ form a partition set, i.e., for any $a\in D_3$, the
value $g_{i}(a)$ is nonzero for exactly one $i$. Let $g$ be the sum of
$g_1$, $\dots$, $g_m$ (note that we have no reason to assume that $g$
is an inner homomorphism from $D_3$ to $\Gamma_{|D_2}$).

We show that $g\circ p$ is a permutation of $D_3$.  Let $P$ be the set
of all endomorphisms of $\Gamma_{|D_2}$ that arise in the form
$p_{z_1} \circ g_{z_2} \circ p_{z_3}$ for some $1 \le z_1,z_3\le
\ell$, $1 \le z_2 \le m$. Observe that for every $a\in D_2$, there is
a unique triple $(z_1,z_2,z_3)$ such that $(p_{z_1} \circ g_{z_2}
\circ p_{z_3})(a)$ is nonzero: this follows from the fact that both
$p_1,\dots,p_\ell$ and $g_1,\dots,g_m$ are partition sets. Thus the
endomorphisms in $P$ also form a partition set. Let $p^*$ be the sum
of this set.  We have $p^*(D_2)\subseteq (g\circ p)(D_3)$: if
$p^*(a)=b$, then there is an $a'\in D_3$ with $p_{z_1}(a)=a'$ and
$(g_{z_2}\circ p_{z_3})(a')=b$ for some $z_1,z_2,z_3$.  If $g\circ p$
is not a permutation of $D_3$, then $(g\circ p) (D_3)$ has size
strictly smaller than $|D_3|$, and hence $p^*(D_2)$ has size strictly
smaller than $|D_3|$ as well.  If $p^*$ is an endomorphism of
$\Gamma_{|D_2}$, then (as there are no proper contractions by
assumption) it has to be a permutation, contradicting
$|p^*(D_2)|<|D_3|\le |D_2|$.  Otherwise, suppose that $p^*$ is not an
endomorphism, i.e., the partition set $P$ is bad.  Now
$|p^*(D_2)|<|D_3|=|p(D_2)|$ contradicts the minimality of the bad
partition set $p_1$, $\dots$, $p_\ell$.

Since $g\circ p$ is a permutation, there is an $s\ge 1$ such that
$(g\circ p)^s$ is the identity. This means that for an arbitrary
sequence $u_1, u'_1,\dots,u_s,u'_s$, the endomorphism $(g_{u_1}\circ
p_{u'_1} \circ \dots \circ g_{u_s} \circ p_{u'_s})$ is a retraction
$\pr_S$ of $\Gamma_{|D_3}$, and we can choose the sequence such that
$a_1\in S$. As $S$ 
is a component containing $a_1$, it contains all the values of $\bt_1$.
It follows that $g_{u_1}$ is $\bt_1$-recoverable, hence we can set
$y^1_{x}:=u_1$. The values $y^2_{x}$ can be defined similarly. From
this point, we can finish the proof as in Lemma~\ref{lem:regularhard}.
\end{proof}

\section{Classification for cardinality constraints}\label{sec:cardinality}

The characterization of the complexity of \ccspg\ requires a new
definition, which was not relevant for \ocspg. 
The {\em core} of $\Gamma$ is the component generated by the set of
all nondegenerate values in $\dom(\Gamma)$. Note that by
Proposition~\ref{prop:regproduce}, the set of nondegenerate values
is not empty, and thus the core is not empty. We say that
$\Gamma$ is a core if the core of $\Gamma$ is $\dom(\Gamma)$ (see
Example~\ref{exa:ccsphard}). 

\begin{lemma}\label{lem:corecore}
Let $\Gamma$ be a finite cc0-language over $D$. If $C\subseteq D$ is the core of $\Gamma$, then $\Gamma_{|C\cup\{0\}}$ is a core.
\end{lemma}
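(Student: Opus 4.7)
Let $D'=C\cup\{0\}$, so $C=D'\setminus\{0\}$ is the core of $\Gamma$, i.e., the component of $\Gamma$ generated by the set $N$ of all nondegenerate values of $\Gamma$. I want to show that the core of $\Gamma_{|D'}$, namely the component of $\Gamma_{|D'}$ generated by the nondegenerate values in $\dom(\Gamma_{|D'})$, is all of $\dom(\Gamma_{|D'})\setminus\{0\}$.

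The overall strategy is to lift information from $\Gamma$ to $\Gamma_{|D'}$ via Proposition~\ref{prop:comprestrict}, using that $C$ is itself a component of $\Gamma$. First I would check the minor point that $N\subseteq\dom(\Gamma_{|D'})$: since $C$ is a component, $\pr_{C}$ is an endomorphism of $\Gamma$, so any tuple of $\Gamma$ witnessing that some $d'\in N\subseteq C$ lies in $\dom(\Gamma)$ can be projected to $C$ without destroying the occurrence of $d'$, yielding a tuple of $\Gamma_{|D'}$ in which $d'$ appears. The same argument shows $\dom(\Gamma_{|D'})\setminus\{0\}\subseteq C$.

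Now take an arbitrary $d\in\dom(\Gamma_{|D'})\setminus\{0\}$; then $d\in C$. By Proposition~\ref{prop:setitemgen} applied in $\Gamma$, there is some $d'\in N$ such that $d$ lies in the component of $\Gamma$ generated by $d'$. Since $d'\in N\subseteq C=D'\setminus\{0\}$, Proposition~\ref{prop:comprestrict}(1) says that the component generated by $d'$ is the same in $\Gamma$ and in $\Gamma_{|D'}$, so $d$ lies in the component of $\Gamma_{|D'}$ generated by $d'$. Proposition~\ref{prop:comprestrict}(2) then guarantees that the type of $d'$ in $\Gamma_{|D'}$ is at most its type in $\Gamma$, and since $d'$ is nondegenerate in $\Gamma$ (type $\le 3$), it remains nondegenerate in $\Gamma_{|D'}$.

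Therefore $d'$ is a nondegenerate value in $\dom(\Gamma_{|D'})$, and $d$ belongs to the component in $\Gamma_{|D'}$ generated by $d'$, hence to the component generated by all nondegenerate values of $\Gamma_{|D'}$, i.e., to the core of $\Gamma_{|D'}$. Since $d$ was arbitrary, this core equals $\dom(\Gamma_{|D'})\setminus\{0\}$, so $\Gamma_{|D'}$ is a core. I do not expect a real obstacle here; the only place requiring slight care is the bookkeeping about $\dom(\Gamma_{|D'})$ versus $D'$, which is handled by the initial remark that $\pr_{C}$ is an endomorphism.
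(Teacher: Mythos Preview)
Your proof is correct and follows essentially the same approach as the paper's: both rely on Proposition~\ref{prop:comprestrict}(1) and (2) to transfer nondegeneracy and component structure from $\Gamma$ to $\Gamma_{|C\cup\{0\}}$. The paper's version is terser---it argues directly that the nondegenerate values of $\Gamma$ remain nondegenerate in the restriction and still generate all of $C$ there---whereas you unpack the same argument element by element via Proposition~\ref{prop:setitemgen}, and you are more careful about the distinction between $D'$ and $\dom(\Gamma_{|D'})$; but the substance is the same.
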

\begin{proof}
  Every nondegenerate value of $\Gamma$ is in $C$. By
  Lemma~\ref{prop:comprestrict}(2), every such value is nondegenerate also
  in $\Gamma_{|C\cup\{0\}}$ and by Lemma~\ref{prop:comprestrict}(1),
  they generate the same component $C$ in $\Gamma_{|C\cup\{0\}}$ as in
  $\Gamma$. Thus $\Gamma_{|C\cup\{0\}}$ is a core.
\end{proof}

The statement of the classification theorem for \ccsp\ is actually
simpler than for \ocsp: we prove hardness if some core is not weakly
separable.
\begin{theorem}\label{th:multimain}
Let $\Gamma$ be a finite cc0-language. If there is a set
$D'$ with $0\in D'\subseteq \dom(\Gamma)$ such that $\Gamma_{|D'}$ is a core and not weakly
separable, then \ccspg\ is \textsc{Biclique}-hard, and fixed-parameter
tractable otherwise. 
\end{theorem}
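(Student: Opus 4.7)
The plan is to prove both directions along the lines of Theorem~\ref{th:singlemain}, with the concept of core playing the role of separating the ``combinatorial'' content (weak separability) from the ``trivial'' degeneracies (production).

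For the fixed-parameter tractable direction, first apply Lemma~\ref{lem:frequent} to reduce the input to a collection of $c$-frequent instances over closed subsets $D^i\sse \dom(\Gamma)$, for a suitable $c=c(k)$. For each such instance, let $C^i$ denote the core of $\Gamma_{|D^i}$. By Lemma~\ref{lem:corecore}, $\Gamma_{|C^i\cup\{0\}}$ is itself a core, and the hypothesis forces it to be weakly separable: otherwise $D'=C^i\cup\{0\}$ would witness the hardness condition. Every remaining value lies in $D^i\setminus (C^i\cup\{0\})$ and is degenerate in $\Gamma_{|D^i}$; by Proposition~\ref{prop:regproduce}, it is produced by some nondegenerate $x\in C^i$. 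The $c$-frequency bound, combined with the production relation, allows us to branch over polynomially many ways (in $k$) to satisfy the cardinality constraints for the degenerate values and substitute them as constants, leaving a residual 0-valid instance whose constraints lie in $\Gamma_{|C^i\cup\{0\}}$. We then invoke Theorem~\ref{lem:finddisjoint} on this weakly separable residual to complete the solution while respecting the remaining per-value cardinality constraints.

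For the hardness direction, take $D'$ with $\Gamma_{|D'}$ a core and not weakly separable. Lemma~\ref{lem:regularcounter} supplies a union or difference counterexample whose tuples lie in components generated by individual values. We adapt the gadget framework of Section~\ref{sec:ocsphardness}: $\mvm(\Gamma,D')$ gadgets whose bag sizes are tuned so that the cardinality constraint forces each bag to be fully zero or fully nonzero, linked by $\nand$- and $\imp$-constraints built from the counterexample. The analysis splits into the same cases by value type as in Lemmas~\ref{lem:hardproduce}, \ref{lem:semihard}, \ref{lem:regularhard}, \ref{lem:regularhardimp}, and~\ref{lem:regularhardset}, reducing from \textsc{Independent Set}, \textsc{Implications}, \textsc{Multicolored Independent Set}, or \textsc{Multicolored Implications} to obtain W[1]-hardness in most cases. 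The core condition on $\Gamma_{|D'}$ is essential: it rules out inner homomorphisms that would annihilate a value on which the counterexample relies, so the $\bt$-recoverability conclusions of Lemmas~\ref{lem:nandconstraint}(3) and~\ref{lem:impconstraint}(3) and the partition-set arguments from the regular case go through with only minor bookkeeping changes to accommodate per-value cardinality constraints instead of a single size constraint.

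The main obstacle will be the residual case in which only \textsc{Biclique}-hardness is available. This arises in the regular setting when the counterexample is essentially binary across two components generated by distinct values $a_1,a_2$: the cardinality constraint decouples into independent requirements for the number of variables taking value $a_1$ and those taking value $a_2$, and the $\nand$-style link between the two sides mirrors the complement of a bipartite graph, so the natural encoding is precisely that of \textsc{Biclique}. The flexibility to reshuffle nonzero values between bags, which powered the sum-of-endomorphisms argument of Lemma~\ref{lem:regularhard}, is eliminated by the per-value cardinality constraints; consequently, the sharper W[1]-hardness reductions cannot be pushed through and a reduction from \textsc{Biclique} is the best we can obtain, yielding exactly the \textsc{Biclique}-hard conclusion stated in the theorem.
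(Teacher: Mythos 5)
Your overall architecture matches the paper's (frequency reduction plus restriction to the core for the algorithm; a value-type case analysis with MVM gadgets for hardness; \textsc{Biclique} arising from a binary counterexample split between two components), but two steps you treat as routine are the substantive content of the proof, and as sketched they fail. In the FPT direction you handle the degenerate (non-core) values \emph{before} solving the core, by ``branching over polynomially many ways (in $k$)'' to place them. No such bound exists: a degenerate value $d$ with $\pi(d)=\Theta(k)$ admits $\binom{n}{\pi(d)}=n^{\Theta(k)}$ placements, and the frequency reduction makes the candidate set large rather than small, so this branching is not fixed-parameter tractable; moreover an arbitrary placement of degenerate values need not extend to a core solution. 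The paper reverses the order: it first solves the instance restricted to the core (Theorem~\ref{lem:finddisjoint}) and then proves, with no branching at all, that any core solution extends to the degenerate values (Lemma~\ref{lem:ubiquitous}). That extension lemma is not bookkeeping: it requires an induction on $|\dom(\Gamma)|$ and a counting argument using $F:=k^2(|\dom(\Gamma)|+d_\Gamma(k))$ together with the per-variable bound of Lemma~\ref{lem:minsolution} to guarantee enough variables disjoint from the core solution remain for each produced value.

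In the hardness direction, the claim that ``the partition-set arguments from the regular case go through with only minor bookkeeping changes'' is wrong. Lemma~\ref{lem:regularhard} concludes that the sum of a partition set is a permutation only because condition (3) of Theorem~\ref{th:singlemain} guarantees $\Gamma_{|D_2}$ has no proper contraction; for \ccsp\ no such assumption is available (Proposition~\ref{prop:restriction} lets you restrict the domain but gives you nothing like contraction-freeness). The paper replaces the permutation argument with Claim~\ref{claim:recoverable}, proved via Claims~\ref{claim:recoverable2} and~\ref{claim:recoverable3} by a double induction that exploits the exact bag sizes and the cardinality constraints to establish $\bt$-recoverability; this is the technical heart of Lemma~\ref{lem:multihardcore} and cannot be elided. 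Your identification of the \textsc{Biclique} bottleneck is essentially right in spirit, though the precise condition is that both generating values $a_1,a_2$ of the union counterexample are self-producing, so their components are singletons and no morphism structure remains to upgrade the reduction to W[1]-hardness.
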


Note that our proof shows \textup{W[1]}-hardness in most of
the cases: there is only one specific situation in the proof
(Lemma~\ref{lem:biclique}) where only \textsc{Biclique}-hardness is
shown. One can extract from the proof the following sufficient condition for proving
W[1]-hardness:
\begin{corollary}
  Let $\Gamma$ be a core that is not weakly separable and minimal in
  the sense that there is no subset $0\in D'\subset \dom(\Gamma)$ such
  that $\Gamma_{|D'}$ is a core and not weakly separable.  If $\Gamma$
  contains at least one semiregular or regular value, then \ccspg\ is
  \textup{W[1]}-hard.
\end{corollary}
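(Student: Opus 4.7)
The plan is to trace through the proof of Theorem~\ref{th:multimain} and verify that the single subcase in which merely \textsc{Biclique}-hardness is derived — the one that invokes Lemma~\ref{lem:biclique} — cannot occur under the additional hypothesis that some value of $\Gamma$ is semiregular or regular. In analogy with the hardness analysis for the size-constraint case in Section~\ref{sec:ocsphardness}, the proof of Theorem~\ref{th:multimain} is organized by the types of values appearing in $\dom(\Gamma)$ (degenerate, self-producing, semiregular, regular) and by whether a union or difference counterexample to weak separability is available via Lemma~\ref{lem:regularcounter}.

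First I would locate the subcase of the proof in which Lemma~\ref{lem:biclique} is invoked. I expect this to be the situation in which every value of $\dom(\Gamma)$ is degenerate or self-producing and the counterexample to weak separability has the ``bipartite'' shape exemplified by the relation $\{(0,0),(1,0),(0,2)\}$ in the introduction, where the two sides of the counterexample lie in distinct components unrelated by any nontrivial multivalued morphism and the cardinality constraints force a biclique-like selection. In all other subcases the reductions should parallel the W[1]-hardness reductions for \ocsp: Lemma~\ref{lem:semihard} from \textsc{Implications} when a semiregular value is available, and Lemmas~\ref{lem:regularhard}, \ref{lem:regularhardimp}, \ref{lem:regularhardset} from \textsc{Multicolored Independent Set} or \textsc{Multicolored Implications} when only regular values remain.

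Next I would use the hypotheses of the corollary to rule out the Biclique subcase. The minimality of $\Gamma$ as a non-weakly-separable core means that no proper closed subdomain of $\dom(\Gamma)$ preserves both the core property and the non-separability, so by Lemma~\ref{lem:corecore} and Proposition~\ref{prop:comprestrict}(2) the types of values are stable under the restrictions one would naturally perform. Thus the semiregular or regular value $d$ supplied by the hypothesis survives throughout the reduction. Combining Proposition~\ref{prop:regproduce} with Lemma~\ref{lem:regularcounter}, one should be able to locate a counterexample $(R,\bt_1,\bt_2)$ whose supports lie in components that interact with $d$, so that $\mvm$ gadgets anchored on $d$ can enforce the rigid behaviour used in Lemmas~\ref{lem:regularhard}--\ref{lem:regularhardset}.

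The main obstacle will be showing that a \emph{single} non-self-producing value suffices to drive a W[1]-hardness reduction rather than merely a \textsc{Biclique} reduction. The difficulty is that $d$ might a priori lie in a component disjoint from the components of the counterexample tuples, so gadgets built around $d$ would not constrain the values appearing in the counterexample. To handle this, one would combine Propositions~\ref{prop:compunion} and \ref{prop:setitemgen} with the core hypothesis: either $d$ can be absorbed into a larger component that also contains the counterexample, or $\Gamma$ restricted to a suitable subdomain containing $d$ yields a smaller non-weakly-separable core, contradicting minimality. Once this interaction is secured, the argument concludes by invoking the appropriate W[1]-hardness reduction (Lemma~\ref{lem:semihard} if $d$ is semiregular, or Lemmas~\ref{lem:regularhard}--\ref{lem:regularhardset} if $d$ is regular) depending also on whether the available counterexample is of union or difference type.
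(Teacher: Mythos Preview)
Your overall plan is on the right track: the corollary is indeed extracted from the hardness proof of Theorem~\ref{th:multimain} by observing that Lemma~\ref{lem:biclique} is invoked only in the branch where \emph{every} nonzero value is self-producing or degenerate, so the hypothesis of a semiregular or regular value immediately rules that branch out. However, two points in your proposal need correction.

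First, you invoke the \ocsp\ hardness lemmas (Lemma~\ref{lem:semihard}, Lemmas~\ref{lem:regularhard}--\ref{lem:regularhardset}) rather than the \ccsp\ ones. The paper's hardness proof for \ccsp\ has its own dedicated lemmas: Lemma~\ref{lem:semihardmulti} (a semiregular value in a core gives W[1]-hardness) and Lemma~\ref{lem:multihardcore} (a regular value, no semiregular value, plus the minimality hypothesis, gives W[1]-hardness). These two lemmas already exactly match the two cases of the corollary's hypothesis, so the corollary is literally: if $\Gamma$ has a semiregular value apply Lemma~\ref{lem:semihardmulti}; otherwise it has a regular value and no semiregular value, so apply Lemma~\ref{lem:multihardcore}. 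Nothing more is needed.

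Second, your ``main obstacle'' paragraph is superfluous and somewhat misguided. You worry that the distinguished regular/semiregular value $d$ might lie in a component disjoint from the counterexample and propose using Propositions~\ref{prop:compunion}, \ref{prop:setitemgen} and minimality to force interaction. But this work is already absorbed into Lemma~\ref{lem:multihardcore}: its proof uses minimality to argue that the counterexample from Lemma~\ref{lem:regularcounter} can be taken so that the generating values $a_1,a_2$ are nondegenerate, and then rules out the case where both are self-producing precisely because a regular value exists (that case would force $\dom(\Gamma)=\{0,a_1,a_2\}$ with no regular values). So the ``interaction'' you seek is established inside that lemma, not at the level of the corollary.
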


\subsection{The algorithm}
We present an algorithm solving the FPT cases of the problem. The
algorithm consists of three steps. First, as a preprocessing step, we
use Lemma~\ref{lem:frequent} to ensure that every value is
``frequent.'' Next we solve the problem restricted to the core, which
is weakly separable by our assumption and hence the algorithm of
Lemma~\ref{lem:finddisjoint} can be used. Finally, we show that by a
postprocessing step, we can extend the solution on the core to the
original domain. For this last step, we need the following lemma.  For
a set of variables let $\delta_{v,d}$ be the assignment that assigns
value $d$ to variable $v$ and 0 to every other variable.  If 
$\Gamma$ is weakly separable, then satisfying assignments of this form can be
freely combined together (as there is no union counterexample). The
following lemma shows something similar under the weaker assumption
that $\Gamma_{|D'}$ is weakly separable whenever it is a core.
\begin{lemma}\label{lem:ubiquitous}
  Suppose that for every $D'$ with $0\in D'\subseteq \dom(\Gamma)$, if
  $\Gamma_{|D'}$ is a core, then it is weakly separable. Let $I$ be an
  instance of $\ccsp(\Gamma)$ having the following property: for every
  nonzero $d\in \dom(\Gamma)$, there are at least $k|\dom(\Gamma)|$
  variables $v$ such that $\delta_{v,d}$ is a satisfying assignment.
  Then $I$ has a solution satisfying the cardinality constraints and
  such a solution can be found in polynomial time.
\end{lemma}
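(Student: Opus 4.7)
The plan is to build the solution in two stages: first produce a satisfying assignment that uses only core values via weak separability of the core, then convert some variables to the required non-core values by applying the multivalued morphisms guaranteed by Proposition~\ref{prop:regproduce}.

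Let $C$ be the core of $\Gamma$. By Lemma~\ref{lem:corecore}, $\Gamma_{|C\cup\{0\}}$ is itself a core, and the hypothesis then gives that it is weakly separable. Since the core contains every nondegenerate value, any $d\in\dom(\Gamma)\setminus(C\cup\{0\})$ is degenerate, and Proposition~\ref{prop:regproduce} supplies a nondegenerate $x_d\in C$ that produces it; fix a multivalued morphism $\phi_d$ with $\phi_d(x_d)=\{0,d\}$ and $\phi_d(z)=\{0\}$ for $z\neq x_d$. For each nonzero $e\in\dom(\Gamma)$, let $P_e$ denote the polynomially computable set of variables $v$ with $\delta_{v,e}$ satisfying; the hypothesis guarantees $|P_e|\ge k|\dom(\Gamma)|$. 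Greedily pick pairwise disjoint sets $V_c\subseteq P_c$ with $|V_c|=\pi(c)$ for each $c\in C$ and $W_d\subseteq P_{x_d}$ with $|W_d|=\pi(d)$ for each non-core $d$; since only $k$ variables are selected in total while every pool has at least $k|\dom(\Gamma)|$ candidates, this is always feasible. Define the candidate $g$ to assign $c$ to every $v\in V_c$, $d$ to every $v\in W_d$, and $0$ elsewhere; then $g$ obeys the cardinality constraint $\pi$.

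The remaining task is to verify that $g$ satisfies $I$. I plan to prove a stronger auxiliary statement by induction on $\sum_d |W_d^d|$: for any subsets $V_c^*\subseteq V_c$ and any partition of each $W_d$ into disjoint parts $W_d^x$, $W_d^d$, and a remainder, the assignment giving value $c$ on $V_c^*$, value $x_d$ on $W_d^x$, value $d$ on $W_d^d$, and $0$ elsewhere satisfies every constraint. In the base case $\sum_d|W_d^d|=0$ the assignment uses only core values and is a disjoint union of the satisfying minimal assignments $\delta_{v,c}$ and $\delta_{v,x_d}$, so weak separability of $\Gamma_{|C\cup\{0\}}$ delivers satisfiability. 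For the inductive step, I pick a non-core $d$ with $W_d^d\neq\emptyset$ and compare two strictly smaller assignments covered by the hypothesis: $\tilde g$, obtained by moving $W_d^d$ into $W_d^x$ (so the variables there take value $x_d$ instead of $d$), and $g_0$, obtained by moving $W_d^d$ into the remainder (so they take value $0$). Both are satisfying by induction. Writing $\tilde g=g_0+B$ with $B$ supported on $W_d^d$ with value $x_d$ there, Lemma~\ref{lem:substitute2} applied with $\phi_d$ to each constraint gives $g_0+B'$ satisfying for every $B'\in\phi_d(B)$; choosing $B'$ to assign $d$ to each variable of $W_d^d$ yields $g_0+B'=g$, completing the induction.

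The main obstacle to anticipate is that several non-core values may share the same producing $x_d$, so there is no single multivalued morphism of $\Gamma$ that simultaneously performs all the required conversions; the induction above is arranged to sidestep this by invoking each $\phi_d$ separately on a single $d$-block at a time, relying on the fact that before any conversion the relevant base assignment can be reduced to one using purely core values, where weak separability is available. Since computing the pools $P_e$, performing the greedy selection, and writing down $g$ are all polynomial, this establishes the claimed polynomial-time algorithm.
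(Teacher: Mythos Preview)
Your argument is correct, but it differs from the paper's. The paper proceeds by induction on $|\dom(\Gamma)|$: it greedily assigns the core values, substitutes them as constants to obtain a $0$-valid instance $I'$ over the strictly smaller domain $\dom(\Gamma)\setminus K$, checks that $I'$ again satisfies the hypothesis of the lemma (using that each non-core value is produced by a core value, so enough singleton assignments $\delta'_{v,c}$ survive after the substitution), and then recurses. You instead construct the entire target assignment $g$ in one shot and certify it by a finer internal induction, converting one non-core block at a time via Lemma~\ref{lem:substitute2}. Your route avoids re-verifying the lemma's hypothesis for a derived instance and sidesteps the domain recursion entirely; the paper's route is more modular and makes explicit that the problem literally reduces to the same lemma on a smaller domain. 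Both rely on the same two ingredients --- weak separability of the core and Proposition~\ref{prop:regproduce} --- so the difference is organizational rather than conceptual. One small point worth stating explicitly in your write-up: when you invoke weak separability in the base case, you are using it for $\Gamma_{|C\cup\{0\}}$, while the constraints of $I$ live in $\Gamma$; this is fine because the tuples in question lie in $R_{|C\cup\{0\}}$ for each constraint relation $R$, but spelling that out removes any ambiguity.
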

\begin{proof}
  We prove the statement by induction on $|\dom(\Gamma)|$; for
  $\dom(\Gamma)=\{0\}$, we have nothing to show.  Let $K$ be the core
  of $\Gamma$ and $\pi$ the cardinality constraint in $I$. By
  Lemma~\ref{lem:corecore}, $\Gamma_{|K\cup\{0\}}$ is a core, hence
  weakly separable by assumption. For every $d\in K$, let $V_d$ be the
  set of those variables $v$ for which $\delta_{v,d}$ is a satisfying
  assignment. Since $|V_d|\ge k|\dom(\Gamma)|$, with greedy selection
  we can find a $V'_d\subseteq V_d$ of size exactly $\pi(d)$ for every
  $d\in K$ such that these sets are pairwise disjoint. Consider the
  assignment $f$ that assigns, for every $d\in K$, value $d$ to every
  variable of $V'_d$ and 0 to every variable that is not in
  $S:=\bigcup_{d\in K}V'_d$. Since $f$ can be obtained as the disjoint
  union of assignments $\delta_{v,d}$ with $v\in V'_d$ and
  $\Gamma_{|K\cup\{0\}}$ is weakly separable, we have
  that $f$ is a satisfying assignment. Let $I'=(V',\C',\pi')$ be the
  0-valid instance obtained by substituting the nonzero values of $f$
  as constants. Note that $\pi'(d)=0$ for every $d\in K$, since $f$
  assigns value $d$ to exactly $\pi(d)$ variables.  It is clear that
  if $I'$ has a solution, then $I$ has a solution.

  For any $v\in V'$ and $d$, let $\delta'_{v,d}$ be the assignment of
  $I'$ that assigns $d$ to variable $v$ and 0 to every other variable.
  By definition, every value in $\dom(\Gamma)\setminus K$ is
  degenerate in $\Gamma$.  Thus by Proposition~\ref{prop:regproduce}, for every
  $c\in \dom(\Gamma)\setminus K$, there is a $d\in K$ such that $d$
  produces $c$ in $\Gamma$. We claim that $\delta'_{v,c}$ is a
  satisfying assignment of $I'$ for any variable $V_d\setminus S$.
  Using the weak separability of $\Gamma_{|K\cup\{0\}}$, we get that
  $f+\delta_{v,d}$ is a satisfying assignment of $I$ for any $v\in
  V_d\setminus S$. Thus $\delta'_{v,d}$ on $V'$ is a satisfying
  assignment of $I'$, and, using that fact that $d$ produces $c$ in
  $\Gamma$, we get that $\delta'_{v,c}$ is a satisfying assignment of
  $I'$.  As $|S|\le k$, there are at least $|V_d|-k\ge |\dom(\Gamma)|k-k=
  (|\dom(\Gamma)|-1)k\ge |\dom(\Gamma)\setminus K|\cdot k$ variables
  $v$ such that $\delta'_{v,c}$ is a satisfying
  assignment of $I'$. Since $\pi'(d)=0$ for every $d\in K$, instance
  $I'$ can be viewed as an instance of
  $\ccsp(\Gamma_{|\dom(\Gamma)\setminus K})$.  Thus we can apply the
  induction hypothesis to conclude that $I'$ has a solution.
\end{proof}

\begin{lemma}\label{lem:ccsp-fpt}
  Suppose that for every $D'$ with $0\in D'\subseteq \dom(\Gamma)$, if
  $\Gamma_{|D'}$ is a core, then it is weakly separable. Then \ccspg\ is fixed-parameter tractable.
\end{lemma}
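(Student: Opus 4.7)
The plan is to combine preprocessing via Lemma~\ref{lem:frequent} with the main algorithmic engine Lemma~\ref{lem:ubiquitous}, bridging the two via the weak separability of the core guaranteed by the hypothesis together with Lemma~\ref{lem:corecore}.

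First I would apply Lemma~\ref{lem:frequent} with threshold $c := k\cdot|\dom(\Gamma)|$ (or a suitably larger polynomial in $k$ and $|\dom(\Gamma)|$) to reduce the input $I$ to a bounded family of $c$-frequent instances $I_1,\ldots,I_\ell$ of $\ccsp(\Gamma_{|D^i})$ on closed sets $D^i\subseteq\dom(\Gamma)$, each with parameter at most $k$; the original $I$ has a solution iff some $I_i$ does, and the reduction runs in FPT time. Fix one such $I_i$ and let $K$ be the core of $\Gamma':=\Gamma_{|D^i}$. By Lemma~\ref{lem:corecore}, $\Gamma'_{|K\cup\{0\}}$ is itself a core, so the hypothesis of the lemma (applied with $D'=K\cup\{0\}$, a subset of $\dom(\Gamma)$) forces $\Gamma'_{|K\cup\{0\}}$ to be weakly separable. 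Every value of $D^i\setminus(K\cup\{0\})$ is then degenerate and, by Proposition~\ref{prop:regproduce}, is produced by some value of $K$.

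Next I would verify the hypothesis of Lemma~\ref{lem:ubiquitous} for $I_i$ and apply it to obtain a solution in polynomial time. For a core value $d\in K$, the $c$-frequency supplies many variables $v$ where $d$ appears in a satisfying witness of size at most $k$; since $\pr_K$ is an endomorphism of $\Gamma'$, projecting such a witness and applying Lemma~\ref{lem:disjoint-decomp} inside the weakly separable $\Gamma'_{|K\cup\{0\}}$ extracts a minimal satisfying witness containing $d$ at $v$, and an additional bounded branching step (absorbed into the preprocessing) converts this into a $\delta_{v,d}$-witness. For a degenerate value $d$ produced by $d'\in K$, any $\delta_{v,d'}$-witness delivers a $\delta_{v,d}$-witness via the multivalued morphism that witnesses production (Lemma~\ref{lem:substitute2}). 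Together these yield, for every nonzero $d\in D^i$, at least $k|\dom(\Gamma)|$ variables $v$ with $\delta_{v,d}$ satisfying, so Lemma~\ref{lem:ubiquitous} produces a solution of $I_i$.

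The main obstacle I anticipate is the upgrade from $c$-frequency to $\delta$-frequency for core values: in a weakly separable language a minimal satisfying assignment containing $d$ at $v$ need not collapse to the single-variable $\delta_{v,d}$ (it may force other variables to be nonzero simultaneously, as already happens for the 0-valid relation $\{(0,0),(1,1)\}$). Resolving this will require either an extra branching on the small minimal satisfying assignments guaranteed by Lemma~\ref{lem:minsolution}, to be folded into the preprocessing by enlarging the frequency threshold $c$ accordingly, or a targeted strengthening of Lemma~\ref{lem:frequent} that directly produces instances in which $\delta$-frequency holds. Once that upgrade is in place, the remaining pieces assemble cleanly into the claimed FPT algorithm.
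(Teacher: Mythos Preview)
Your overall architecture matches the paper up to the point where you try to apply Lemma~\ref{lem:ubiquitous} directly to the frequent instance $I_i$. That is where the proposal breaks down, and the obstacle you yourself identify is fatal to the plan rather than a detail to be patched.

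The $\delta$-frequency hypothesis of Lemma~\ref{lem:ubiquitous} cannot be obtained for \emph{core} values by any amount of bounded branching or by strengthening Lemma~\ref{lem:frequent}. Your own example $R=\{(0,0),(1,1)\}$ already shows why: value $1$ is regular, the instance $\langle(x,y),R\rangle$ is $c$-frequent for arbitrarily large $c$ if you take enough copies, yet $\delta_{v,1}$ is \emph{never} a satisfying assignment. No preprocessing that merely substitutes constants from a bounded number of small witnesses will manufacture $\delta$-witnesses here; you would need to substitute in essentially a whole solution on the core, which is exactly the subproblem you are trying to solve. So neither of your two suggested fixes can close this gap.

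The paper resolves this not by establishing $\delta$-frequency on the core, but by bypassing it: it first restricts $I_i$ to $K_i\cup\{0\}$ and solves that weakly separable instance outright with Theorem~\ref{lem:finddisjoint}, obtaining a core solution $f'$. Only after substituting $f'$ does it invoke Lemma~\ref{lem:ubiquitous}, and then only on the residual instance $I''_i$ viewed over $D_i\setminus K_i$, i.e., over the \emph{degenerate} values. For those values the $\delta$-property \emph{does} follow: each $c\in D_i\setminus K_i$ is produced by some $d\in K_i$, and the production morphism collapses any witness with $d$ at $v$ to the single-variable assignment $\delta_{v,c}$. The frequency threshold is accordingly set to $F=k^2(|\dom(\Gamma)|+d_\Gamma(k))$ so that, after discarding the at most $k\cdot d_\Gamma(k)$ minimal witnesses intersecting $f'$, enough disjoint ones remain. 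The missing idea in your plan is thus the two-phase split: solve the core with the weak-separability algorithm, then extend to degenerate values with Lemma~\ref{lem:ubiquitous}.
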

\begin{proof}
Let $I=(V,\C,k,\pi)$ be an instance of \ccspg.
Set
$F:=k^2(|\dom(\Gamma)|+d_\Gamma(k))$, where $d_\Gamma(k)$ is the function from
Lemma~\ref{lem:minsolution}.  Let us use the algorithm of
Lemma~\ref{lem:frequent} to obtain instances $I_1$, $\dots$,
$I_\ell$ such that $I_i$ is an $F$-frequent instance of
$\ccsp(\Gamma_{|D_i})$ for some set $D_i\subseteq \dom(\Gamma)$.

Consider an instance $I_i=(V_i,\C_i,k_i,\pi_i)$. Let $K_i$ be the core
of $\Gamma_{|D_i}$. Let $I'_i=(V'_i,\C'_i,k'_i,\pi'_i)$ be the
instance restricted to $K_i\cup\{0\}$, that is, every constraint
$\ang{\bs,R}\in\C_i$ is replaced by $\ang{\bs,R_{|K_i\cup\{0\}}}$, and
$\pi'_i(d)=\pi_i(d)$ for $d\in K_i$ and $\pi'_i(d)=0$ otherwise.  Note
that the retraction $\pr_{K_i}$ ensures that $I'_i$ is $F$-frequent as
well (by definition $K_i$, is a component).  We show that $I_i$ has a
solution if and only if $I'_i$ has.  As
$\ccsp(\Gamma_{|K_i\cup\{0\}})$ is weakly separable by assumption, the
algorithm of Theorem~\ref{lem:finddisjoint} can be used to check in
fpt-time whether $I'_i$ has a solution.

The retraction $\pr_{K_i}$ shows that if $I_i$ has a solution $f$,
then $I'_i$ has a solution $f'=pr_{K_i} f$. 
For the other direction, let $f'$ be a
solution of $I'_i$ and let $I''_i$ be the instance of
$\ccsp(\Gamma_{|D_i})$ obtained from $I_i$ by substituting the nonzero
values of $f'$ as constants. Since $f'$ satisfies $\pi'_i$,
the cardinality constraint is 0 in instance $I''_i$ for every $d\in
K_i$. Thus $I''_i$ can be viewed as an $\ccsp(\Gamma_{|D_i\setminus
  K_i})$ instance.  We show that the conditions of
Lemma~\ref{lem:ubiquitous} hold for $I''_i$ (viewed as an
$\ccsp(\Gamma_{|D_i\setminus K_i})$ instance), hence it has a solution
$f''$. This means that solution $f'$ of $I'_i$ can be extended by
$f''$ to obtain a solution $f$ of $I_i$.

Let $c\in D_i\setminus K_i$. By Proposition~\ref{prop:regproduce},
there is a
$d\in K_i$ producing $c$ in $\Gamma_{|D_i}$. As $I_i$ is
$F$-frequent, $I_i$ has distinct variables $v_1$, $\dots$, $v_F$ and
(not necessarily distinct) satisfying assignments $g_1$, $\dots$,
$g_F$ of size at most $k$ such that $g_j(v_j)=d$. We can assume that
each $g_j$ is contained in $K_i\cup\{0\}$ (as $d\in K_i$ and $\pr_{K_i}$ is an 
endomorphism of $\Gamma_{|D_i}$). Since each $g_j$ has size at most
$k$, there are at least $F/k$ distinct assignments in the sequence
$g_1$, $\dots$, $g_F$. By Lemma~\ref{lem:disjoint-decomp}(2), we can assume
that every $g_j$ is a minimal assignment. By
Lemma~\ref{lem:minsolution}, each nonzero variable of $f'$ is nonzero
in at most $d_\Gamma(k)$ minimal assignments of size at most
$k$. Hence, among the $F/k$ distinct minimal assignments,
there are at most $k\cdot d_\Gamma(k)$ assignments nondisjoint with
$f'$, that is, there are at least $F/k-k\cdot d_\Gamma(k)\ge |\dom(\Gamma)|k$
assignments disjoint with $f'$. Let us consider such an assignment
$g_j$. As $g_j$ and $f'$ are disjoint, both use only values from $K_i$, and
$\Gamma_{|K_i\cup\{0\}}$ is weakly separable, their sum is a satisfying
assignment. This means that $I''_i$ has a satisfying assignment where
$v_j$ has value $d$. Using the fact that $d$ produces $c$, it follows
that $\delta_{v_j,c}$ is a satisfying assignment of $I''_i$. Thus for
every $c\in D_i\setminus K_i$, there are at least $|\dom(\Gamma)|k$ variables $v$
such that $\delta_{v,c}$ is a satisfying assignment of $I''_i$. By
Lemma~\ref{lem:ubiquitous}, this means that $I''_i$ has a solution.
\end{proof}

\subsection{Hardness}
A crucial difference between \ocspg\ and \ccspg\ is that for every
$0\in D'\sse\dom(\Gamma)$, it is trivial to reduce
$\ccsp(\Gamma_{|D'})$ to \ccspg. Indeed, a $\ccsp(\Gamma_{|D'})$
instance can be interpreted as a \ccspg\ instance with $\pi(d)=0$ for
every $d\in \dom(\Gamma)\setminus D'$.

\begin{proposition}\label{prop:restriction}
If $\ccsp(\Gamma_{|D'})$ is \textup{W[1]}-hard  for some 
$0\in D'\sse\dom(\Gamma)$, then \ccspg\ is \textup{W[1]}-hard.
\end{proposition}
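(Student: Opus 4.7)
The plan is to give a direct parameterized reduction from $\ccsp(\Gamma_{|D'})$ to \ccspg\ that preserves the parameter exactly, as suggested by the sentence immediately preceding the statement. Given an instance $I=(V,\C,k,\pi)$ of $\ccsp(\Gamma_{|D'})$, I construct an instance $I'=(V,\C',k,\pi')$ of \ccspg\ as follows. Each constraint $\ang{\bs, R_{|D'}}\in\C$, where $R\in\Gamma$, is replaced by the constraint $\ang{\bs, R}\in\C'$ (note that $R$ itself lies in $\Gamma$, so this is a legal constraint of \ccspg). The cardinality constraint is extended from $D'\setminus\{0\}$ to $\dom(\Gamma)\setminus\{0\}$ by setting $\pi'(d)=\pi(d)$ for $d\in D'\setminus\{0\}$ and $\pi'(d)=0$ for $d\in\dom(\Gamma)\setminus D'$. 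The parameter is preserved, since $\sum_{d\in\dom(\Gamma)\setminus\{0\}}\pi'(d)=\sum_{d\in D'\setminus\{0\}}\pi(d)=k$, and the construction is clearly computable in polynomial time, so all three conditions of a parameterized reduction are trivially met.

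For correctness I would check both directions. If $f$ is a solution of $I$, then $f$ only uses values in $D'$ (as $I$ is a $\ccsp(\Gamma_{|D'})$ instance), so it trivially satisfies $\pi'$ on the extra coordinates, and each new constraint $\ang{\bs,R}$ is satisfied because $R_{|D'}\subseteq R$. Conversely, if $f'$ is a solution of $I'$, the condition $\pi'(d)=0$ for $d\in\dom(\Gamma)\setminus D'$ forces $f'$ to take values only in $D'$; hence for each constraint $\ang{\bs,R}\in\C'$, the image $f'(\bs)$ lies in $R\cap (D')^{|\bs|}=R_{|D'}$, so $f'$ is also a solution of $I$. Combined with the parameter-preservation above, this yields a parameterized reduction from $\ccsp(\Gamma_{|D'})$ to $\ccsp(\Gamma)$, and W[1]-hardness transfers.

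There is essentially no technical obstacle; the only point that deserves a line of care is to verify that the relations appearing in $\C'$ are genuine relations of $\Gamma$ (they are: we use $R$, not $R_{|D'}$), and that the cardinality mechanism really does prevent $f'$ from escaping $D'$. Both points are immediate from the definitions in Section~\ref{sec:preliminaries}, so the proof reduces to spelling out the construction and the two containment checks above.
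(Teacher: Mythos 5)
Your proof is correct and takes essentially the same approach as the paper: the paper gives this reduction in the single sentence preceding the proposition (interpret the $\ccsp(\Gamma_{|D'})$ instance over $\Gamma$ and set $\pi(d)=0$ for $d\notin D'$), and you have simply spelled out the construction and the two containment checks.
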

In particular, if $\Gamma_{|\{0,a\}}$ is not weakly separable for some value $a\in D$,
then the result of \cite{Marx05:parametrized} on the Boolean case
implies that $\ccsp(\Gamma_{|\{0,a\}})$ and hence \ccspg\ are W[1]-hard
(see also Example~\ref{exa:weaklysep}).

Proposition~\ref{prop:restriction} allows us to assume that the
language $\Gamma$ satisfies the hardness condition of
Theorem~\ref{th:multimain}, but no restriction $\Gamma_{|D'}$
satisfies it for any $0\in D'\subset \dom(\Gamma)$. That is, $\Gamma$
is a core and not weakly separable, but every core $\Gamma_{|D'}$ with
$0\in D'\subset \dom(\Gamma)$ is weakly separable.   Indeed, if $0\in D'\subset \dom(\Gamma)$ is a set
such that $\Gamma_{|D'}$ is a core and not weakly separable, then it
is sufficient to prove hardness for the constraint language
$\Gamma_{|D'}$ and the hardness for $\Gamma$ follows by
Prop.~\ref{prop:restriction}.

We proceed in the following way. Lemma~\ref{lem:semihardmulti} of Section~\ref{sec:semiregular-values}
proves W[1]-hardness in the case when there is a semiregular value in
$\Gamma$.
Section~\ref{sec:self-prod-valu} considers the case when every element
is degenerate or self-producing.  The main part of the proof appears
in Section~\ref{sec:regular-values}, where we prove W[1]-hardness
using a counterexample involving regular values; as in
Section~\ref{sec:size}, the reader is encouraged to focus on this part
of the proof.  The proof of a technical claim is deferred to
Section~\ref{sec:claimproof}.

\subsubsection{Semiregular values}
\label{sec:semiregular-values}
In the case when there is a semiregular value, we can identify a
difference counterexample and use it to simulate the constraints in an
\textsc{Implications} instance. We say that a multivalued morphism $\phi$ 
{\em witnesses} that $d$ is semiregular if $0,d\in \phi(c)$ for some $c\in\dom(\Gamma)$.
\begin{lemma}\label{lem:semidiff}
  If $\Gamma$ contains a semiregular value, then there is a difference
  counterexample. Moreover, if $\phi$ witnesses that $d$ is
  semiregular, then there is a difference counterexample in
  $\Gamma_{|\phi(\dom(\Gamma))}$.
\end{lemma}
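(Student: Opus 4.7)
The plan is to prove the stronger ``moreover'' statement directly, from which the first sentence follows: any difference counterexample $(R_{|D'},\bt_1,\bt_2)$ in $\Gamma_{|D'}$ with $D'=\phi(\dom(\Gamma))$ is automatically a difference counterexample $(R,\bt_1,\bt_2)$ in $\Gamma$, since $R_{|D'}\subseteq R$ and $\bt_1\notin R_{|D'}$ together with $\bt_1$ having all entries in $D'$ forces $\bt_1\notin R$.

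Let $\phi$ witness the semiregularity of $d$, so $0,d\in\phi(c)$ for some $c\in \dom(\Gamma)$, and set $D':=\phi(\dom(\Gamma))$ (so $0,d\in D'$). The key step is to argue by contradiction: assume $\Gamma_{|D'}$ has no difference counterexample, and use this to construct a multivalued morphism witnessing that $c$ produces $d$, contradicting the hypothesis that $d$ is semiregular. Define $\tilde\phi:\dom(\Gamma)\to 2^{\dom(\Gamma)}$ by $\tilde\phi(0)=\{0\}$, $\tilde\phi(c)=\{0,d\}$, and $\tilde\phi(z)=\{0\}$ for every $z\notin\{0,c\}$. If $\tilde\phi$ is a multivalued morphism of $\Gamma$, then it is exactly a witness that $c$ produces $d$.

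The main work is verifying the multivalued morphism condition for $\tilde\phi$. Fix $R\in\Gamma$ of arity $r$ and $\bt=(a_1,\dots,a_r)\in R$; split the coordinates into $C=\{i:a_i=c\}$, $N=\{i:a_i\notin\{0,c\}\}$, and the remaining zero-coordinates. Every element of $\tilde\phi(\bt)$ has the form $\bu_S$ for some $S\subseteq C$, where $\bu_S$ places $d$ at positions in $S$ and $0$ everywhere else. For any choice of values $b_i\in\phi(a_i)$ for $i\in N$, the multivalued morphism property of $\phi$ yields $\bv_S\in R$, where $\bv_S$ agrees with $\bu_S$ on $C$ and with the tuple $\bt_N$ (which has $b_i$ at positions $i\in N$ and $0$ elsewhere) on $N$. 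Since $b_i\in\phi(a_i)\subseteq D'$ and $\{0,d\}\subseteq D'$, every such $\bv_S$ lies in $R_{|D'}$, and in particular $\bt_N=\bv_\emptyset\in R_{|D'}$. Writing $\bv_S=\bu_S+\bt_N$ as disjoint tuples (disjoint because $S\subseteq C$ and $N\cap C=\emptyset$), the absence of difference counterexamples in $\Gamma_{|D'}$ forces $\bu_S\in R_{|D'}\subseteq R$, as required.

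The main obstacle I expect is the bookkeeping around the fact that elements of $\phi(a_i)$ for $a_i\neq 0,c$ need not include $0$, so one cannot directly ``zero out'' coordinates of $N$ using $\phi$ alone; the no-difference-counterexample assumption in $\Gamma_{|D'}$ is exactly the tool that lets us do this. The rest is routine: once $\tilde\phi$ is shown to be a multivalued morphism, the contradiction with the semiregularity of $d$ (specifically, the clause that no $x$ produces $d$) is immediate, so $\Gamma_{|D'}$ must contain a difference counterexample, proving both the moreover statement and, by the transfer remark at the start, the first assertion of the lemma.
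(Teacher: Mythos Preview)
Your proof is correct and is essentially the contrapositive of the paper's argument. The paper proceeds directly: since $d$ is semiregular, $c$ does not produce $d$, so there exist $R\in\Gamma$, $\bt\in R$, and a tuple $\bt_d\notin R$ (with only $d$'s, placed at some $c$-positions of $\bt$); applying $\phi$ to $\bt$ and sending the $c$-coordinates to $0$ (resp.\ to the pattern of $\bt_d$) yields $\bt'\in R_{|D'}$ and $\bt_d+\bt'\in R_{|D'}$, so $(R,\bt_d,\bt')$ is the desired difference counterexample. Your version assumes no difference counterexample in $\Gamma_{|D'}$ and shows that then the map $\tilde\phi$ witnessing ``$c$ produces $d$'' is a multivalued morphism --- the verification uses exactly the same decomposition $\bv_S=\bu_S+\bt_N$ and the same application of $\phi$. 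The two arguments are thus the same construction read in opposite directions; the paper's is a bit shorter and explicitly exhibits the counterexample, while yours makes the logical equivalence between ``no difference counterexample'' and ``$c$ produces $d$'' more visible.
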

\begin{proof}
  Suppose that $0,d\in \phi(c)$.  As $d$ is semiregular, no value
  produces $d$. In particular, $c$ does not produce $d$, thus there is
  a relation $R\in \Gamma$, a tuple $\bt\in R$, and a nonzero tuple
  $\bt_d\not\in R$ such that $d$ is the only nonzero value appearing
  in $\bt_d$ and at every coordinate where $d$ appears in $\bt_d$,
  value $c$ appears in the same coordinate of $\bt$.  Applying $\phi$
  on $\bt$ and turning each $c$ into 0 yields a tuple $\bt'\in R$
  disjoint from $\bt_d$. Applying $\phi$ on $\bt$ also shows that
  $\bt'+\bt_d\in R$: instead of turning each $c$ into $0$, we can turn
  it to either 0 or $d$ (depending on the tuple $\bt_d$). Now
  $(R,\bt_d,\bt')$ is a difference counterexample in
  $\Gamma_{|\phi(\dom(\Gamma))}$.
\end{proof}

\begin{lemma}\label{lem:semihardmulti}
Let $\Gamma$ be a core. If there is a semiregular value $d$ in
$\Gamma$, then \ccspg\ is \textup{W[1]}-hard. 
\end{lemma}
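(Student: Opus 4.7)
The plan is to mirror the reduction of Lemma~\ref{lem:semihard}, but taking advantage of the fact that cardinality constraints let us prescribe exactly how many variables take each value. Among all semiregular values of $\Gamma$, I would select $d$ together with a witnessing multivalued morphism $\phi$ that minimizes $|S|$, where $S:=\phi(\dom(\Gamma))$. By Lemma~\ref{lem:semidiff}, $\Gamma_{|S}$ contains a difference counterexample $(R,\bt_1,\bt_2)$, which is the combinatorial object that will be used to enforce implications.

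The reduction is from \textsc{Implications}. Given an instance $(G,t)$, for each vertex $v_i$ I would introduce an $\mvm(\Gamma,S)$ gadget $G_i$ in which the bag $B_d$ has size $1$ and every other bag $B_a$ (for $a\in S\setminus\{0,d\}$) has size $Z:=t+1$; for each directed edge $\overrightarrow{v_xv_y}$ I add the $\imp(G_x,G_y)$ gadget. The cardinality constraint is set to $\pi(d)=t$, $\pi(a)=tZ$ for every $a\in S\setminus\{0,d\}$, and $\pi(a)=0$ for every $a\in\dom(\Gamma)\setminus S$. The forward direction is immediate: given $C$ with $|C|=t$, applying the standard assignment on the $G_i$ with $v_i\in C$ and the zero assignment elsewhere satisfies all $\mvm$-constraints by Proposition~\ref{prop:mvmgadget} and all $\imp$-constraints by Lemma~\ref{lem:impconstraint}(1); the cardinality counts match.

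For the backward direction I would argue in three steps, paralleling Lemma~\ref{lem:semihard}. First, $\pi(a)=0$ for $a\notin S$ together with the closed-set argument forces every value appearing in a solution to lie in $S$, so each $G_i$ yields an inner multivalued morphism of $\Gamma_{|S}$. Second, using the minimality of $|S|$: if some bag $B_c$ had a mixed assignment containing both $0$ and a nonzero value $d'$, then composing $\phi$ with a suitable inner homomorphism chosen from the gadget (with $g(c)=0$) would produce a multivalued morphism witnessing that $d'$ is semiregular over a strictly smaller image, contradicting the choice of $\phi$; hence each bag is either fully zero or fully equal to its representative. Similarly, if $B_d$ of $G_x$ is nonzero while some $B_c$ of $G_x$ is zero, one derives a smaller witness of semiregularity, so $B_d$ nonzero implies $G_x$ is fully standard. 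Third, $\pi(d)=t$ then forces exactly $t$ gadgets to be fully standard; letting $C$ be the corresponding set of vertices, Lemma~\ref{lem:impconstraint}(3) applied to the difference counterexample $(R,\bt_1,\bt_2)$ (with $\bt_1$-recoverability of a standard gadget being witnessed by the identity endomorphism, and $h_2(\bt_2)=0$ on zero gadgets) shows that no directed edge leaves $C$.

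The main obstacle is the ``all-or-nothing'' bag argument in the second step: unlike Lemma~\ref{lem:semihard}, the hypothesis here does not rule out self-producing or degenerate values a priori. I would handle this by invoking Proposition~\ref{prop:restriction} to reduce to the minimal setting where no proper $\Gamma_{|D'}$ is a non-weakly-separable core, and then verify that in this minimal setting the minimality of $|S|$ is strong enough to exclude the mixed-bag scenarios (any such mixture produces a semiregular witness on a strictly smaller image, which together with the core property of $\Gamma$ and Proposition~\ref{prop:comprestrict} contradicts the choice of $(d,\phi)$).
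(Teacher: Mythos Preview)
Your overall structure tracks the paper, but the backward direction has a genuine gap precisely where you flag the ``main obstacle''. The all-or-nothing bag argument does \emph{not} follow from minimizing $|S|$ alone. Suppose some bag $B_c$ contains both $0$ and a nonzero value $d'$. Your plan is to compose $\phi$ with an inner morphism from the gadget to get a multivalued morphism of $\Gamma$ with $0,d'$ in the image of some element, and then invoke minimality of $|S|$. But this only contradicts minimality if $d'$ is \emph{semiregular}: if $d'$ is self-producing or degenerate in $\Gamma$, then a multivalued morphism with $0,d'\in\phi'(a)$ may simply witness that $a$ produces $d'$, which is perfectly consistent with the definition and gives no contradiction. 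Nothing in ``$\Gamma$ is a minimal non-weakly-separable core'' rules out self-producing or degenerate values inside $S$, and neither Proposition~\ref{prop:restriction} nor Proposition~\ref{prop:comprestrict} helps here. The same problem afflicts your ``$B_d$ nonzero $\Rightarrow$ $G_x$ fully nonzero'' step when the value sitting in $B_d$ happens to be self-producing or degenerate. Finally, even for fully nonzero gadgets you assert they carry the standard assignment and invoke the identity for $\bt_1$-recoverability; this is not justified --- the gadget only gives some endomorphism of $\Gamma_{|S}$, and you must argue it is a permutation.

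The paper's proof fixes all three issues by (i) using a \emph{two-level} minimality: first minimize $|\{a:\psi(a)\neq\{0\}\}|$, then $|S|$; (ii) splitting $S\setminus\{0,d\}$ into $S_1$ (regular/semiregular in $\Gamma_{|S}$) and $S_2$ (self-producing/degenerate), giving bags of size $1$ to elements of $S_2\cup\{d\}$ so no mixing can occur there, and size $Z=2t|S|$ to $S_1$; (iii) a counting argument that tracks where values of $S_1\cup\{d\}$ can land (using Proposition~\ref{prop:homomtype}) to conclude exactly $t$ gadgets are fully nonzero; and (iv) showing that any endomorphism given by a fully nonzero gadget is a permutation of $S$ (via the second minimality condition), which yields $\bt_1$-recoverability. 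The first minimality condition is what drives the all-or-nothing argument on $S_1$-bags (through the auxiliary endomorphism $h$ obtained from $\psi$ by collapsing its multivalued coordinate to $0$), and this is the key idea missing from your sketch.
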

\begin{proof}
  Let $\psi:\dom(\Gamma)\to 2^{\dom(\Gamma)}$ be a multivalued morphism witnessing that
  $d$ is semiregular. Let us choose $d$ and $\psi$ such that
\begin{enumerate}
\item the size
  of $\{a\in \dom(\Gamma)\mid \psi(a)\neq \{0\}\}$ is minimum possible, and
\item among
  such $d$ and $\psi$, the size of $S:=\psi(\dom(\Gamma))$ is minimum possible.
\end{enumerate}
Observe that we can assume that $\psi(c)=\{0,d\}$ for a unique value
$c\in \dom(\Gamma)$ and $|\psi(a)|=1$ for every $a\neq c$.
Furthermore, we can assume that $d$ cannot be produced by any $a\in S$
in $\Gamma_{|S}$. Otherwise, if $\psi_d:S\to 2^{S}$ is the multivalued
morphism witnessing that $a\in S$ produces $d$ in $\Gamma_{|S}$, then $\psi
\circ \psi_d$ witnesses that $d$ is produced by some $a'\in
\dom(\Gamma)$ in $\Gamma$, hence $d$ is not semiregular in $\Gamma$.
Let $S_1\subseteq S\setminus \{d\}$ contain the regular and
semiregular values in $\Gamma_{|S}$ other than $d$ and let
$S_2\subseteq S$ contain the self-producing and degenerate values
(thus $S=S_1\cup S_2\cup\{d,0\}$). Note that by
Lemma~\ref{lem:semidiff}, there is a difference counterexample
$(R,\bt_1,\bt_2)$ in $\Gamma_{|S}$.

We show that $\ccsp(\Gamma_{|S})$ is W[1]-hard, hence (by
Proposition~\ref{prop:restriction}) \ccspg\ is W[1]-hard as well.
The proof is by reduction from \textsc{Implications}. For each vertex
$v_i$ of $G$, we introduce a gadget $\mvm(\Gamma_{|S},S)$ denoted by
$G_{i}$. The size 
of bag $B_c$ of each gadget is $Z:=2t|S|$ if $c\in S_1$ and it is 1 if
$c\in S_2\cup \{d\}$.  We set the cardinality constraint
$\pi'(c)=tZ$ for every $c\in S_1$ 
and $\pi'(c)=t$ for every $c\in S_2\cup \{d\}$, i.e., the parameter
$k$ equals $tZ|S_1|+t|S_2|+t$.
To finish the construction of the instance, we encode the directed
edges of the \textsc{Implications} instance by adding the constraint
$\imp(G_{x},G_{y})$ for each directed edge
$\overrightarrow{v_xv_y}$ of $G$.

Suppose that there is a solution $C$ of size exactly $t$ for the
\textsc{Implications} instance. If vertex $v_i$ is in $C$, then set
the standard assignment on gadget $G_{i}$. It is clear that this
results in an assignment of size exactly $tZ|S_1|+t|S_2|+t$ and the
constraints of the $\mvm(\Gamma_{|S},S)$ gadgets are satisfied. From the
fact that there is no directed edge $\overrightarrow{v_xv_y}$ with
$v_x\in C$, $v_y\not\in C$ and from Lemma~\ref{lem:impconstraint}(1),
it follows that the constraints of $\imp(G_{x},G_{y})$ are
also satisfied.

For the other direction, we have to show that if there is a solution
of the \ccspg\ instance satisfying the cardinality constraint, then there is a
solution $C$ of size exactly $t$ for \textsc{Implications}. 

We show first that if a value $c\in S_1\cup \{d\}$ appears in bag
$B_d$ of $G_x$, then every variable of $G_{x}$ is nonzero. Suppose
that $c$ appears on the variable in bag $B_d$, but 0 appears in some
variable of bag $B_{d'}$ of $G_x$ for some $d'\in S$. As bag $B_d$
contains only a single variable, we have $d'\neq d$. Let $g$ be an
endomorphism of $\Gamma_{|S}$ given by $G_{x}$ such that $g(d)=c$ and
$g(d')=0$. Now the multivalued morphism $\psi \circ g$ witnesses that
$c=g(d)$ is semiregular in $\Gamma$: value $c\in S_1\cup \{d\}$ is not
self-producing or degenerate, hence $\psi\circ g$ cannot show that $c$
is produced by some value. Moreover, $(\psi\circ
g)(\dom(\Gamma))=g(S)$ has size strictly less than $|S|$,
contradicting either the first or second minimality condition of
$\psi$. It follows that there can be at most $t$ gadgets where the bag
corresponding to $d$ contains a value from $S_1\cup \{d\}$: otherwise
there would be more than $t(Z|S_1|+|S_2|+1)$ nonzero variables.

Next we show that if a value $c\in S_1\cup \{d\}$ appears in a bag
$B_{c'}$ of a gadget $G_{x}$ for some $c'\in S_1$, then every variable
of that bag is nonzero. Otherwise, let $\psi'$ be a multivalued
morphism of $\Gamma_{|S}$ given by $G_{x}$ with $\psi'(c')=\{0,c\}$
and $|\psi'(b)|=1$ for every $b\neq c'$. Let $h$ be the endomorphism
of $\Gamma$ defined such that $h(a)=a'$ if $\psi(a)=\{a'\}$ and
$h(a)=0$ if $\psi(a)=\{0,d\}$. Now $h\circ \psi'$ cannot witness that
$c$ is produced by some value (as $c\in S_1\cup \{d\}$), hence it witnesses
that $c$ is semiregular in $\Gamma$ and
$|\{a\in\dom(\Gamma)\mid \psi'(a)\ne\{0\}\}|<|\{a\in\dom(\Gamma)\mid \psi(a)\ne\{0\}\}|$,
contradicting the minimality of $d$ and $\psi$.

Consider those bags that contain values from $S_1\cup \{d\}$.  Each
such bag represents a value in $S_1\cup\{d\}$: if $c\in S_1\cup\{d\}$
appeared in a bag representing a value from $S_2$, then by
Proposition~\ref{prop:homomtype}, $c$ would be self-producing or
degenerate in $\Gamma_{|S}$. We have seen that at most $t$ bags
representing $d$ can contain values from $S_1\cup \{d\}$. The total cardinality constraint of these values is $t|S_1|Z+t$. Thus at least $t|S_1|$ bags
representing $S_1$ contain values from $S_1\cup \{d\}$. Moreover,
there are exactly $t|S_1|$ such bags: as shown in the previous
paragraph, these bags are fully nonzero, thus $t|S_1|+1$ such bags would
mean that the size of the assignment is at least $(t|S_1|+1)Z>
tZ|S_1|+t|S_2|+t$. As $\sum_{c\in S_1\cup \{d\}}\pi'(c)$ is
exactly $tZ|S_1|+t$, it follows that there are exactly $t$ gadgets
where the variable in bag $B_d$ has a nonzero value from $S_1\cup
\{d\}$. We have already observed that the variables of these $t$ gadgets
are fully nonzero, and the cardinality constraint $\pi'$ imply that every
variable of every other gadget is zero.

Let us construct the set $C$ such that $v_x\in C$ if and only if the
variables of $G_{x}$ are nonzero; the previous paragraph implies that
$|C|=t$.  We claim that $C$ is a solution for the
\textsc{Implications} instance.  Suppose that there is an edge
$\overrightarrow{v_xv_y}$ with $v_x\in C$ and $v_y\not\in C$.  If
$v_x\in C$ and $h$ is an endomorphism of $\Gamma_{|S}$ given by
$G_{x}$, then $h$ has to be a permutation of $S$: otherwise, $\psi
\circ h$ witnesses that $h(d)$ is semiregular and $|(\psi \circ
h)(\dom(\Gamma))|$ is strictly less than $|S|$, contradicting the
choice of $d$ and $\psi$ (note that $h$ maps every nonzero value to a
nonzero value, thus $\psi \circ h$ cannot witness that $h(d)$ is
produced by some element).  We have seen that there is a difference
counterexample $(R,\bt_1,\bt_2)$ in $\Gamma_{|S}$. Since $h$ is a
permutation, $h^s(\bt_1)=\bt_1$ for some $s\ge 1$, i.e., $h$ is
$\bt_1$-recoverable. Thus if $G_{y}$ is fully zero, then
$\imp(G_{x},G_{y})$ is not satisfied by
Lemma~\ref{lem:impconstraint}(3).
\end{proof}

\subsubsection{Self-producing values}
\label{sec:self-prod-valu}
In this section, we consider the case when every element is either
self-producing or degenerate. By Proposition~\ref{prop:regproduce},
there is at least one self-producing element. It is not hard to see
that the component generated by self-producing elements contains only
self-producing elements. Indeed, the component generated by a
self-producing element $d\in D$ equals $\{d\}$, and by
Proposition~\ref{prop:compunion} the union of components is a
component.  Lemmas~\ref{lem:difference-multi} and~\ref{lem:biclique}
consider the two possibilities in this case: when there is a
difference counterexample, and when all counterexamples are union.

\begin{lemma}\label{lem:difference-multi}
Let $\Gamma$ be a core and let  $(R,\bt_1,\bt_2)$ be a difference
counterexample to weak separability 
satisfying the conditions of Lemma~\ref{lem:regularcounter}, and $a_1$
is self-producing. Then \ccspg\ is \textup{W[1]}-hard.
\end{lemma}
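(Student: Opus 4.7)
The plan is to reduce directly to the Boolean case, which was already classified in \cite{Marx05:parametrized}, by collapsing the counterexample into the two-element sub-language on $\{0,a_1\}$. The key structural observation is that since $a_1$ is self-producing, the remark at the start of Section~\ref{sec:self-prod-valu} tells us that the component of $\Gamma$ generated by $a_1$ is exactly $\{a_1\}$. Combined with the conclusion of Lemma~\ref{lem:regularcounter}, which guarantees that both $\bt_1$ and $\bt_2$ are contained in the component generated by $a_1$, this forces every nonzero entry of $\bt_1$ and of $\bt_2$ to be $a_1$.

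With this observation in hand, I would set $D':=\{0,a_1\}$ and consider the restricted language $\Gamma_{|D'}$, which is a cc0-language over a Boolean domain. Because all entries of $\bt_1$, $\bt_2$, and $\bt_1+\bt_2$ lie in $D'$, the triple $(R_{|D'},\bt_1,\bt_2)$ remains a difference counterexample for $\Gamma_{|D'}$: we still have $\bt_2,\bt_1+\bt_2\in R_{|D'}$ while $\bt_1\notin R_{|D'}$. Hence $\Gamma_{|D'}$ is not weakly separable.

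At this point I would invoke the Boolean dichotomy from \cite{Marx05:parametrized} (reviewed in Example~\ref{exa:weaklysep}): a Boolean cc0-language that is not weakly separable yields a W[1]-hard \textsc{$k$-Ones} problem, which in our framework is precisely $\ccsp(\Gamma_{|D'})$. Applying Proposition~\ref{prop:restriction} with $D'=\{0,a_1\}$ then transfers W[1]-hardness of $\ccsp(\Gamma_{|D'})$ back to \ccspg, completing the reduction.

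There is no serious obstacle in carrying out this plan: the combinatorial work has already been absorbed into Lemma~\ref{lem:regularcounter}, which localizes a difference counterexample to a single component, and the self-producing hypothesis is doing precisely the job of making that component a singleton. Once the counterexample lives in $\{0,a_1\}$, the Boolean classification of \cite{Marx05:parametrized} can be used as a black box, so no new gadget construction or multivalued-morphism analysis is required for this particular case.
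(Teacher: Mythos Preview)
Your proof is correct and follows essentially the same route as the paper's own argument: use that a self-producing value generates a singleton component, conclude via Lemma~\ref{lem:regularcounter} that the difference counterexample lives entirely in $\{0,a_1\}$, deduce that $\Gamma_{|\{0,a_1\}}$ is not weakly separable, and then appeal to the Boolean classification \cite{Marx05:parametrized} together with Proposition~\ref{prop:restriction}. The paper's proof is merely a two-sentence compression of the same steps you have spelled out.
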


\begin{proof}
Since $a_1$ is self-producing, $\{a_1\}$ is a component, hence $a_1$ is
the only nonzero value appearing in  $\bt_1$ and $\bt_2$. This means that
$\Gamma_{|\{0,a_1\}}$ is not weakly separable, hence \ccspg\ is
\textup{W[1]}-hard.
\end{proof}

\begin{lemma}\label{lem:biclique}
Let $\Gamma$ be a core and let $(R,\bt_1,\bt_2)$ be a union
counterexample satisfying the conditions of 
Lemma~\ref{lem:regularcounter} such that $a_1$ and $a_2$ are
self-producing. Then \ccspg\ is \textsc{Biclique}-hard.
\end{lemma}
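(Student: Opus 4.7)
The plan is to reduce \textsc{Biclique} to \ccspg, using the union counterexample $(R,\bt_1,\bt_2)$ as a ``non-edge'' gadget. Since both $a_1$ and $a_2$ are self-producing, $\{a_1\}$ and $\{a_2\}$ are components of $\Gamma$, and so is their union $\{a_1,a_2\}$ by Proposition~\ref{prop:compunion}. Applying Lemma~\ref{lem:corecore} together with Proposition~\ref{prop:restriction}, we restrict to $\Gamma_{|\{0,a_1,a_2\}}$, which remains a core and still carries the counterexample; hence we assume $\dom(\Gamma)=\{0,a_1,a_2\}$ throughout. Moreover, if $\Gamma_{|\{0,a_i\}}$ fails to be weakly separable for some $i\in\{1,2\}$, the Boolean classification (see Example~\ref{exa:weaklysep}) combined with Proposition~\ref{prop:restriction} already yields \textup{W[1]}-hardness, which is strictly stronger than Biclique-hardness, so we additionally assume both $\Gamma_{|\{0,a_1\}}$ and $\Gamma_{|\{0,a_2\}}$ are weakly separable.

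Set $A:=\supph(\bt_1)$ and $B:=\supph(\bt_2)$. By Lemma~\ref{lem:regularcounter}, $\bt_1$ uses only $a_1$ on its support $A$ and $\bt_2$ uses only $a_2$ on $B$; after substituting zeros outside $A\cup B$, we may take $\bt_1=(a_1^A,0^B)$ and $\bt_2=(0^A,a_2^B)$, with the key property $(a_1^A,a_2^B)\notin R$. Given a \textsc{Biclique} instance $(G(U,V),t)$, we introduce variables $x_u$ for each $u\in U$ and $y_v$ for each $v\in V$; for every non-edge $(u,v)$ of $G$ we add the constraint $\ang{\bs_{u,v},R}$ where $\bs_{u,v}$ places $x_u$ at every $A$-coordinate and $y_v$ at every $B$-coordinate; and we set $\pi(a_1):=t$, $\pi(a_2):=t$. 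For the forward direction, a biclique $(U',V')$ of size $t$ each yields a satisfying assignment by taking $x_u=a_1$ iff $u\in U'$ and $y_v=a_2$ iff $v\in V'$: the cardinality is met, and for every non-edge $(u,v)$ the biclique property excludes $u\in U'$ and $v\in V'$ simultaneously, making the induced tuple a subset of $\bt_1$, a subset of $\bt_2$, or the zero tuple, all of which lie in $R$ by the self-producing multivalued morphisms of $a_1$ and $a_2$.

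The main obstacle is the backward direction: extracting a size-$(t,t)$ biclique from an arbitrary satisfying assignment, where a priori the $a_1$-valued variables could sit on the $V$-side and vice versa. I split into two cases based on whether $a_1$ produces $a_2$ (equivalently, since both are self-producing, whether $a_2$ produces $a_1$). In the asymmetric case where neither value produces the other, Lemma~\ref{lem:unaryproduce} makes $\{0,a_1\}$ and $\{0,a_2\}$ intersection-definable unary relations in $\Gamma$; adding them as unary constraints on the $x_u$ and $y_v$ respectively, the cardinality and the unary restrictions jointly force exactly $t$ of the $x_u$ to equal $a_1$ and exactly $t$ of the $y_v$ to equal $a_2$, and the forbidden pair $(a_1,a_2)$ makes $(\{u:x_u=a_1\},\{v:y_v=a_2\})$ a biclique. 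In the symmetric case where both values produce each other, unary restrictions are not available; instead we use the mutual multivalued morphisms $\phi$ (with $\phi(a_1)=\{0,a_2\}$) and $\psi$ (with $\psi(a_2)=\{0,a_1\}$) together with the weak separability of $\Gamma_{|\{0,a_i\}}$ --- which implies $(a_i^A,a_i^B)\in R$ for $i=1,2$ --- to transform any satisfying assignment into an equivalent ``pure'' one, with all $a_1$-valued variables on the $x$-side and all $a_2$-valued variables on the $y$-side, from which the biclique is read off as in the asymmetric case. The symmetric case is the technically delicate step where the full strength of the mutual MVMs and weak separability must be combined.
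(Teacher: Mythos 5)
Your asymmetric case (neither $a_1$ produces $a_2$ nor vice versa) is essentially the paper's argument: the unary relations $U_i=\{0,a_i\}$ are intersection definable, and the reduction from \textsc{Biclique} goes through. But there is a genuine gap in the symmetric case, and it is precisely the key observation of the paper's proof that you are missing: \emph{the symmetric case cannot occur.} If both $a_1$ produces $a_2$ and $a_2$ produces $a_1$, let $h_{21}$ be the inner homomorphism from $\{0,a_2\}$ with $h_{21}(a_2)=a_1$, and let $\phi_{12}$ be the multivalued morphism witnessing that $a_1$ produces $a_2$. Then $h_{21}(\bt_2)\in R$, and since $\bt_1$ and $h_{21}(\bt_2)$ are disjoint tuples both contained in $\{0,a_1\}$ and $\Gamma_{|\{0,a_1\}}$ is weakly separable, $\bt_1+h_{21}(\bt_2)\in R$. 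Applying Lemma~\ref{lem:substitute2} with $\phi_{12}$ to the pair $\bt_1$, $\bt_1+h_{21}(\bt_2)$ now yields $\bt_1+\bt_2\in R$, contradicting that $(R,\bt_1,\bt_2)$ is a union counterexample.

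So your split into two cases is a false dichotomy: only the asymmetric branch exists. Moreover, the plan you sketch for the symmetric case would not work even if the case were real. You correctly compute that weak separability gives $(a_1^A,a_1^B),(a_2^A,a_2^B)\in R$, but the same chain of reasoning extended by one more step (as above) shows that in the symmetric case $(a_1^A,a_2^B)=\bt_1+\bt_2$ would itself lie in $R$, making the non-edge constraint $\ang{\bs_{u,v},R}$ vacuously satisfied by $x_u=a_1$, $y_v=a_2$. Your ``transform to a pure assignment'' step is therefore not a repair of the reduction but a sign that the reduction has lost all information; the way out is to notice the contradiction, conclude $\{0,a_1\}$ and $\{0,a_2\}$ are always intersection definable, and never leave the asymmetric case.
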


\begin{proof}

  We assume that $a_1$ and $a_2$ are weakly separable, otherwise we
  are done.  As $a_i$ is self-producing, $\{a_i\}$ is the component 
generated by $a_i$, hence $\bt_i$ is contained in $\{0,a_i\}$.

  First, we show that there is no inner homomorphism $h_{12}$ from
  $\{0,a_1\}$ with $h_{12}(a_1)=a_2$ and there is no inner homomorphism
  $h_{21}$ from $\{0,a_2\}$ with $h_{21}(a_2)=a_1$. Note that $a_1$
  produces itself, thus the existence of $h_{12}$ would mean that
  $a_1$ produces $a_2$. Since $a_2$ is self-producing, this would
  imply that $a_2$ produces $a_1$, and hence $h_{21}$ exists as well. A
  symmetrical argument shows that the existence of $h_{21}$ implies
  the existence of $h_{12}$. Suppose that both homomorphisms exist. In
  this case, $\bt_1+h_{21}(\bt_2)\in R$ follows from
  $\bt_1,h_{21}(\bt_2)\in R$ and from the fact that $a_1$ is weakly
  separable. By using $\bt_1\in R$ and Lemma~\ref{lem:substitute}, we
  get $\bt_1+h_{12}(h_{21}(\bt_2))=\bt_1+\bt_2\in R$, a contradiction.

We reduce \textsc{Biclique} (see Section~\ref{sec:reductions}) to
  $\ccsp(\Gamma_{|\{0,a_1,a_2\}})$.
  Consider the gadget $\mvm(\Gamma_{|\{0,a_1,a_2\}},\{0,a_1\})$ where the bag
  $B_{a_1}$ contains only one variable. Setting this variable to 0 or
  $a_1$ is a satisfying assignment of the gadget. However, there is no
  inner homomorphism $h$ from $\{0,a_1\}$ to $\{0,a_1,a_2\}$
  with $h(a_1)=a_2$, thus the variable cannot have value $a_2$. Thus
  the unary relation $U_1=\{0,a_1\}$ is intersection definable in
  $\Gamma_{|\{0,a_1,a_2\}}$ and the same holds for the unary relation
  $U_2=\{0,a_2\}$.

  Since $(R,\bt_1,\bt_2)$ is a union counterexample, we first obtain a
  relation $R'$ from $R$ by substituting constant 0 to positions in
  which both $\bt_1$ and $\bt_2$, and then identifying variables to
  intersection define a binary relation $R''$ such that $(0,0)$,
  $(a_1,0)$, $(0,a_2)\in R''$, but $(a_1,a_2)\not\in R''$. Let us
  consider the binary relation $R'''$ represented by the CSP instance
  $(\{x,y\},\C')$ where
  $\C'=\{\ang{(x,y),R''},\ang{(x),U_1},\ang{(y),U_2}\}$. Clearly, this
  relation is intersection definable in $\Gamma_{|\{0,a_1,a_2\}}$.  It
  is easy to see that $(0,0),(a_1,0),(0,a_2)\in R'''$ and $R'''$
  contains no other tuple.  Thus as observed in
  Example~\ref{exa:graphs}, $\ccsp(R''')$ is equivalent to
  \textsc{Biclique}.
\end{proof}

\subsubsection{Regular values}
\label{sec:regular-values}
A significant difference between the hardness proofs of \ocspg\ and
\ccspg\ is that it can be assumed in the case of \ocspg\ that no
proper contraction exists and this assumption can be used to show that
certain endomorphisms have to be permutations. In
Section~\ref{sec:ocsphardness}, we used such arguments to show that
gadgets are $\bt$-recoverable. For \ccspg, we cannot
make this assumption, thus the proof is based on a delicate argument
(Claim~\ref{claim:recoverable}), making use of the cardinality
constraint, to achieve a similar effect. The following lemma is not
used directly in the proof, but it demonstrates how we can deduce in
some cases that a multivalued morphism gadget essentially behaves as
if it had the standard assignment. Recall that for $D=\{0,1,\dots,\Delta\}$, we defined in Section~\ref{sec:gadgets} the constants
$$
Z_{i,d}^{t,D}=(4t\Delta)^{2t\Delta+(i\Delta+d)}+(4t\Delta)^{5t\Delta-(i\Delta+d)}.
$$
\begin{lemma}\label{lem:ccsp-demo}
  Let $\Gamma$ be a finite constraint language over
  $D=\{0,1,\dots,\Delta\}$. Consider an instance consisting of a
  single $\mvm(\Gamma,D)$ gadget where bag $B_b$ has size
  $Z^{1,D}_{1,b}$. Let the cardinality constraint be
  $\pi(b)=Z^{1,D}_{1,b}$. If $\phi$ is the maximal multivalued
  morphism given by the gadget in a solution, then there is a $p\ge 1$
  such that $b\in \phi^{p'}(b)$ for every $p'\ge p$ and nonzero $b\in D$.  In
  particular, the gadget is $\bt$-recoverable for any tuple $\bt$.
\end{lemma}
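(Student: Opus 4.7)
The plan is to extract from $\phi$ a weighted directed graph $G_\phi$ on $D\setminus\{0\}$ whose edge weights are pinned down by the bag sizes and the cardinality constraint, and then to analyze this graph. Let $N_{a,b}$ be the number of variables in bag $B_a$ assigned value $b$ ($a\in D\setminus\{0\}$, $b\in D$). The bag sizes give the row sums $\sum_{b\in D}N_{a,b}=Z^{1,D}_{1,a}$, and the cardinality constraint gives the column sums $\sum_{a\in D\setminus\{0\}}N_{a,b}=Z^{1,D}_{1,b}$ for nonzero $b$. Since the total row sum equals the total column sum, no variable is assigned $0$, so $N_{a,0}=0$ for every nonzero $a$; in particular $\phi(a)\subseteq D\setminus\{0\}$ for every nonzero $a$. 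Define $G_\phi$ as the digraph on $D\setminus\{0\}$ with edge $a\to b$ iff $N_{a,b}>0$ (equivalently $b\in\phi(a)$); every vertex then has both in- and out-degree at least one.

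I would then show that $G_\phi$ decomposes into isolated strongly connected components. For any SCC $S$, comparing the row sums over $a\in S$ with the column sums over $b\in S$ yields the flow-conservation identity
\[
\sum_{a\in S,\,b\notin S}N_{a,b} \;=\; \sum_{a\notin S,\,b\in S}N_{a,b}.
\]
A sink SCC in the condensation DAG has no outgoing edges, so the left-hand side vanishes, forcing the right-hand side to vanish as well; hence such an SCC has no incoming edges either and is actually isolated. Peeling off these isolated sink SCCs and iterating shows that the condensation DAG has no edges at all; therefore $G_\phi$ is a disjoint union of SCCs, each of which is non-trivial because every vertex retains positive in- and out-degree inside its own SCC. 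In particular every nonzero $b$ lies on a directed cycle of $G_\phi$.

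The key combinatorial step, which I expect to be the main obstacle, is to show that each of these SCCs is aperiodic; this is where the specific arithmetic of the $Z^{1,D}_{1,a}$'s enters, through Lemma~\ref{lem:unique}. If an SCC $S$ had period $d>1$, its vertices would partition into $d$ non-empty disjoint levels $S_0,\dots,S_{d-1}$ with every edge inside $S$ going from $S_i$ to $S_{i+1\bmod d}$. Because $S$ is isolated in $G_\phi$, the flow-conservation identity applied to consecutive levels would give
\[
\sum_{a\in S_i}Z^{1,D}_{1,a} \;=\; \sum_{b\in S_{i+1}}Z^{1,D}_{1,b}.
\]
Since each $S_j$ is a set, Lemma~\ref{lem:unique} (with $t=1$) forces $S_i=S_{i+1}$, contradicting the disjointness of the $S_j$. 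Hence every SCC of $G_\phi$ is strongly connected and aperiodic, and a standard result on aperiodic strongly connected digraphs then yields, for each nonzero $b$, an integer $p_b\ge 1$ such that $b\in\phi^{p'}(b)$ for every $p'\ge p_b$; taking $p:=\max_{b\ne 0}p_b$ establishes the first conclusion.

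For the $\bt$-recoverability conclusion, define $h(0):=0$ and, for each nonzero $a\in D$, choose $h(a)\in\phi(a)$ with $a\in\phi^{p-1}(h(a))$; such a choice exists because $a\in\phi^{p}(a)=\bigcup_{y\in\phi(a)}\phi^{p-1}(y)$. Since $h$ is single-valued and satisfies $h(a)\in\phi(a)$ for every $a$, Observation~\ref{obs:submorphisms} implies that $h$ is an inner homomorphism given by the gadget. The iterated power $\phi^{p-1}$ is itself a multivalued morphism of $\Gamma$ (composition of multivalued morphisms is a multivalued morphism by a routine verification from the definitions), and for every tuple $\bt=(a_1,\dots,a_r)$ we have $a_i\in\phi^{p-1}(h(a_i))$ for each $i$, i.e., $\bt\in(h\circ\phi^{p-1})(\bt)$, witnessing that the gadget is $\bt$-recoverable for every $\bt$.
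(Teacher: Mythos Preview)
Your approach is genuinely different from the paper's. The paper proves the first assertion by induction on $b$: it defines $T=\bigcup_{p\ge 1}(\pr_{\{0,\dots,b\}}\circ\phi)^p(b)$, observes that the bags $B_b$ and $B_a$ for $a\in T$, $a<b$, contain only values from $T$, and then compares the total size of those bags with $\sum_{a\in T}\pi(a)$ to force $b\in T$. Your flow/SCC argument is more structural and arguably cleaner: the row/column bookkeeping showing no variable is zero, and the peeling argument showing that the condensation of $G_\phi$ has no edges, are both correct and give immediately that every nonzero $b$ lies on a cycle in $G_\phi$.

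The gap is in the aperiodicity step. For $t=1$ the values $Z^{1,D}_{1,d}=(4\Delta)^{3\Delta+d}+(4\Delta)^{4\Delta-d}$ are \emph{not} pairwise distinct: whenever $d+d'=\Delta$ one has $Z^{1,D}_{1,d}=Z^{1,D}_{1,d'}$. Hence Lemma~\ref{lem:unique} cannot be invoked to conclude $S_i=S_{i+1}$; the hypothesis there is that $A$ is a set of $\mathcal Z$-values, but the multiset $\{Z^{1,D}_{1,a}:a\in S_i\}$ need not be a set. Concretely, if $\Gamma$ admits the transposition $d\leftrightarrow \Delta-d$ (fixing every other value) as an endomorphism, then the assignment that fills $B_d$ with $\Delta-d$, $B_{\Delta-d}$ with $d$, and every other $B_a$ with $a$ is a solution, and the maximal $\phi$ it gives has a $2$-periodic SCC $\{d,\Delta-d\}$; in that solution the first assertion of the lemma actually fails. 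So the obstacle you flagged as ``the main one'' is real and is not overcome by Lemma~\ref{lem:unique}. (For what it is worth, the paper's own proof of this demonstration lemma relies on the inequality $Z^{1,D}_{1,b}>|D|\,Z^{1,D}_{1,a}$ for $a>b$, which is also false in general for $t=1$, e.g.\ $b=1$, $a=\Delta$; the lemma is explicitly stated not to be used in the main argument, where the bag sizes are chosen more carefully.)

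Your derivation of $\bt$-recoverability is salvageable independently of aperiodicity. From ``every nonzero $b$ lies on a cycle'' you get $b\in\phi^{\ell_b}(b)$ for some $\ell_b\ge1$; taking $L=\operatorname{lcm}_{b\neq 0}\ell_b$ gives $b\in\phi^{L}(b)$ for every nonzero $b$, and then your choice of $h(b)\in\phi(b)$ with $b\in\phi^{L-1}(h(b))$, together with $\psi:=\phi^{L-1}$, yields an inner homomorphism given by the gadget that is $\bt$-recoverable for every $\bt$. The step that fails is only the stronger ``eventually always'' statement $b\in\phi^{p'}(b)$ for all $p'\ge p$.
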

\begin{proof}
  We prove the statement by induction on $b$. Suppose that for every
  $a<b$, there is a $p_{a}$ such that $a\in \phi^{p'}(a)$ for every
  $p'\ge p_{a}$ (this statement is vacuously true if $b$ is the
  smallest nonzero value). Let $\phi_b=\pr_{\{0,1,\dots,b\}}\circ
  \phi$. Let $T=\bigcup_{p\ge 1}\phi^p_b(b)$, that is, those values
  that can be reached from $b$ by repeated applications of $\phi_b$
  (note that $T$ can contain values larger than $b$, but because of
  $\pr_{\{0,1,\dots,b\}}$ in $\phi_b$, such values can appear only during
  the last application of $\phi_b$). As the total cardinality
  constraint is exactly the number of variables, all the variables are
  nonzero. The bag $B_b$ and the bags $B_{a}$ for $a<b$ and $a\in T$
  contain nonzero values only from $T$ by definition. The total size
  of these bags is
\begin{equation}
\sum_{a\in T, a<b}Z^{1,D}_{1,a}+Z^{1,D}_{1,b}.\label{eq:ccsp-demo1}
\end{equation}
We claim that $b\in T$. Otherwise, the total cardinality constraint of the values in $T$ is
\begin{equation}
\sum_{a\in T, a<b}Z^{1,D}_{1,a}+\sum_{a\in T, a>b}Z^{1,D}_{1,a}.\label{eq:ccsp-demo2}
\end{equation}
As $Z^{1,D}_{1,b}>|D|Z^{1,D}_{1,a}$ for every $a>b$, the second term
in \eqref{eq:ccsp-demo1} is strictly larger than the second term of
\eqref{eq:ccsp-demo2}, a contradiction. Thus $b\in T$, and therefore
$b\in \phi^s_b(b)$ for some $s\ge 1$. Consider the smallest such $s$.
If $s=1$, then $b\in \phi^p_b(b)$ for every $p\ge 1$. Otherwise, as
$b\in\phi_b^{s-1}(\phi_b(b))$, there is some $a<b$, $a\in T$ such that
$a\in \phi_b(b)$ and $b\in \phi^{s-1}_b(a)$. By the induction
assumption, $a\in\phi^{p'}(a)$ for every $p'\ge p_{a}$. This means
that $b\in \phi^{1+p'+s-1}(b)$ for every $p'\ge p_{a}$, or in other
words, $b\in \phi^{p'}(a)$ for every $p'\ge p_{a}+s$. Thus
$p_b:=p_{a}+s$ proves the induction statement.

To see that the gadget is $\bt$-recoverable, observe that $b\in
\phi^{p+1}(b)$ implies that there is a $c_b\in \phi(b)$ such that
$b\in \phi^p(c_b)$. Let $h$ be the endomorphism that maps each $b\in
D$ to such a $c_b$ (note that $h$ is an endomorphism, as it is a
subset of $\phi$). Then $h$ is $t$-recoverable, as witnessed by
$\phi^p$.
\end{proof}

Thus in a sense we can assume that a gadget has the standard
assignment, even if the values appearing in the bags are
arbitrary. However, the situation is more complicated in an actual
hardness proof, where there are several gadgets and moreover value 0 can also
appear in some of the bags.  The following lemma contains the most
generic part of the hardness proof of Theorem~\ref{th:multimain}: we are proving hardness using a
counterexample to weak separability.

\begin{lemma}\label{lem:multihardcore}
  Let $\Gamma$ be a core that is not weakly separable, $\Gamma_{|D'}$ is
  weakly separable for every core $\Gamma_{|D'}$ 
with $0\in D'\subset \dom(\Gamma)$, there are no semiregular values in $\Gamma$, and there is a regular value in
$\Gamma$. Then 
  \ccspg\ is \textup{\textup{W[1]}}-hard.
\end{lemma}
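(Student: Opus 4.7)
The plan is to mirror the structure of Lemmas~\ref{lem:regularhard}, \ref{lem:regularhardimp}, and \ref{lem:regularhardset} from Section~\ref{sec:ocsphardness}, reducing from \textsc{Multicolored Independent Set} (if $\Gamma$ has a union counterexample to weak separability) or \textsc{Multicolored Implications} (if it has a difference counterexample). By Lemma~\ref{lem:regularcounter} we fix a counterexample $(R,\bt_1,\bt_2)$ such that each $\bt_i$ lies in the component $C_{a_i}$ generated by a single nonzero value $a_i\in\dom(\Gamma)$; using the hypothesis that a regular value exists together with the minimality assumption on proper subcores, we arrange that $a_1,a_2$ are regular (otherwise Lemmas~\ref{lem:difference-multi} and~\ref{lem:biclique} already apply on a proper subcore). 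For each vertex $v_{x,y}$ of the input graph introduce a gadget $G_{x,y}=\mvm(\Gamma,\dom(\Gamma))$ in which the bag $B_d$ has size $Z^{t,D}_{x,d}$, and set the cardinality constraint to $\pi(d):=\sum_{x=1}^{t}Z^{t,D}_{x,d}$. Add $\nand(G_{x,y},G_{x',y'})$ constraints for adjacent vertices and for distinct pairs in the same color class (union case), or $\imp(G_{x,y},G_{x',y'})$ for each directed edge (difference case).

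The forward direction is routine: from a multicolored graph solution select the corresponding $t$ gadgets, place the standard assignment on them and the zero assignment elsewhere; Lemmas~\ref{lem:nandconstraint}(1) and~\ref{lem:impconstraint}(1) ensure satisfaction. For the reverse direction, first observe that only values from $\dom(\Gamma)$ appear in a solution by a closedness argument identical to Lemma~\ref{lem:regularhard}. Next, since $a_1,a_2$ are regular and there are no semiregular values in $\Gamma$, each bag carrying a value of $C_{a_1}\cup C_{a_2}$ is either fully zero or fully nonzero (a fractional occurrence would yield a multivalued morphism witnessing that a regular value is not regular). Combining this with Lemma~\ref{lem:unique} and the choice of $\pi$ forces exactly one fully-nonzero bag of size $Z^{t,D}_{x,d}$ per $(x,d)$ with $d$ regular. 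The hypothesis that every proper core restriction $\Gamma_{|D'}$ is weakly separable lets us decouple the contribution of any self-producing or degenerate values: the portion of the cardinality budget they absorb can be realized separately via weak separability on the proper subcore they live in, without disturbing the witness structure in the regular part.

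The main obstacle, and the crux of the lemma, is the recoverability claim (the promised Claim~\ref{claim:recoverable}): for each $1\le x\le t$ there exists $y_x$ such that the inner homomorphism given by $G_{x,y_x}$ is $\bt_1$-recoverable (and an analogous $y'_x$ works for $\bt_2$ in the union case). In Section~\ref{sec:ocsphardness} this was derived from the absence of proper contractions, which forced the sum of the disjoint endomorphisms given by $G_{x,1},\dots,G_{x,n}$ to be a permutation and therefore invertible by a high enough power. Here no such assumption is available, so the step must be replaced by a cardinality-driven argument in the spirit of Lemma~\ref{lem:ccsp-demo}. Specifically, for fixed $x$ the inner homomorphisms $g_1,\dots,g_n$ chosen from $G_{x,1},\dots,G_{x,n}$ form a partition set (by Lemma~\ref{lem:unique}), and their sum $g$ may fail to be a genuine endomorphism; however, the exact matching between $\pi(d)$ and the total bag sizes forces that iterating $g$ cannot decrease the mass supported on $C_{a_1}$, so a sufficiently high power $g^s$ restricted to $C_{a_1}$ coincides with $\pr_S$ for some component $S\supseteq C_{a_1}$. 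This retraction witnesses that the gadget $G_{x,y_x}$ whose chosen endomorphism realizes the ``first step'' mapping $a_1$ to a nonzero value is $\bt_1$-recoverable.

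With recoverability established, the endgame matches Lemmas~\ref{lem:regularhard} and~\ref{lem:regularhardimp}: if $C:=\{v_{x,y_x}\mid 1\le x\le t\}$ fails to be a multicolored independent set (union case) or to respect the implications (difference case), then Lemma~\ref{lem:nandconstraint}(3) or Lemma~\ref{lem:impconstraint}(3) produces an unsatisfied constraint in the corresponding $\nand$ or $\imp$ gadget, contradicting the assumed solution. Hence $C$ is a valid solution of the graph problem, completing the parameterized reduction and establishing \textup{W[1]}-hardness of \ccspg.
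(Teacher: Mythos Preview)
Your plan tracks the paper's architecture, but there are two substantive gaps.

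\textbf{The mixed case is not dismissable.} You assert that both $a_1,a_2$ can be taken regular, claiming that otherwise Lemmas~\ref{lem:difference-multi} and~\ref{lem:biclique} apply on a proper subcore. This is incorrect. After replacing degenerate generators by nondegenerate ones (as the paper does via Proposition~\ref{prop:setitemgen}), and ruling out semiregular values by hypothesis, each $a_i$ is regular or self-producing. If both are self-producing, minimality forces $\dom(\Gamma)=\{0,a_1,a_2\}$, contradicting the existence of a regular value; that case does vanish. But the case ``$a_1$ regular, $a_2$ self-producing, union counterexample'' survives: minimality gives $\dom(\Gamma)=C_{a_1}\cup\{a_2,0\}$, so there is no proper subcore to fall back to, and Lemmas~\ref{lem:difference-multi}/\ref{lem:biclique} (which require a self-producing $a_1$ in the difference case, or both self-producing in the union case) do not apply. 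The paper handles this as a separate Case~3 with a bespoke argument showing $a_2$ can only occur in bags representing $a_2$, and then forcing it onto the same gadget as the $\bt_1$-recoverable one.

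\textbf{The recoverability argument needs the two-tier bag sizes and a real induction.} You set every bag size to $Z^{t,D}_{x,d}$ and then appeal to Lemma~\ref{lem:unique} after claiming that bags carrying values of $C_{a_1}\cup C_{a_2}$ are fully zero or fully nonzero. But $C_{a_1}$ can contain degenerate values, and a bag $B_c$ containing $0$ together with such a degenerate value does not contradict the regularity of anything; only bags containing a \emph{regular} value from the component are forced to be fully nonzero (Claim~\ref{claim:regularbags}). The paper therefore assigns much smaller sizes $(4t\Delta)^{2t\Delta-(x\Delta+d)}$ to bags of nonregular $d$, so that the total nonregular budget is below $(4t\Delta)^{2t\Delta}$ and Lemma~\ref{lem:unique} still pins down the regular bags. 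With uniform sizes your counting step breaks. Moreover, your one-line recoverability sketch (``iterating $g$ cannot decrease the mass supported on $C_{a_1}$'') is not a proof: in the absence of a no-proper-contraction assumption, the sum $g$ need not be an endomorphism, and iterating it has no a~priori meaning. The paper replaces this with a double induction (Claims~\ref{claim:recoverable2} and~\ref{claim:recoverable3}) over $x$ and over the elements of the component, comparing bag-size sums to cardinality sums exactly as in Lemma~\ref{lem:ccsp-demo} but tracking, via the accumulated morphism $\psi_{x-1}$, what the earlier gadgets $G_{i,w_{i,d}}$ already guarantee. That bookkeeping is the heart of the lemma and cannot be elided.
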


\begin{proof}
  By Lemma~\ref{lem:regularcounter}, there is a counterexample with
  values contained in the component $C_1$ generated by some value
  $a_1\in \dom(\Gamma)$, or with values in $C_1\cup C_2$, where $C_1$
  (resp., $C_2$) is the component generated by some value $a_1$
  (resp., $a_2$). If $a_1$ is degenerate, then (as $\Gamma$ is a core)
  $a_1$ is in the component generated by the nondegenerate values.
  Thus by Proposition~\ref{prop:setitemgen}, there is a nondegenerate
  $a'_1$ such that $a_1$ is in the component $C'_1$ generated by
  $a'_1$. Since the intersection of components is also a component, we
  have $C_1\subseteq C'_1$. Thus we can assume that $a_1$ and $a_2$
  are nondegenerate. If $a_1$ and $a_2$ are both self-producing, then
  $C_1=\{a_1\}$, $C_2=\{a_2\}$, and hence $\Gamma_{|\{0,a_1,a_2\}}$ is
  not weakly separable. By Prop.~\ref{prop:compunion}, $C_1\cup C_2$
  is also a component. Therefore, as $a_1$ and $a_2$ are nondegenerate
  in $\Gamma_{|\{0,a_1,a_2\}}$ by Lemma~\ref{prop:comprestrict}(2), we
  have that $\Gamma_{|\{0,a_1,a_2\}}$ is a core. Thus
  $\dom(\Gamma)=\{0,a_1,a_2\}$ by the minimality of $\Gamma$, implying
  that there is no regular value in $\dom(\Gamma)$, a
  contradiction. Similarly, if the counterexample is contained in the
  component $C_1=\{a_1\}$ generated by the self-producing value $a_1$,
  then $\dom(\Gamma)=\{0,a_1\}$ and again there is no regular value in
  $\Gamma$.  By assumption, there are no semiregular values.
This means that there are
  only three cases to consider: we have a counterexample
  $(R,\bt_1,\bt_2)$ that is
\begin{enumerate}
\item a union or difference counterexample such that $\bt_1+\bt_2$ is contained in the
  component of a regular value $a_1$;
\item a union counterexample such that $\bt_1$ (resp., $\bt_2$) is contained in
  the component of some regular value $a_1$ (resp., regular value $a_2$);
\item a union counterexample such that $\bt_1$ (resp., $\bt_2$) is contained in
  the component of some regular value $a_1$ (resp., some
  self-producing value $a_2$);
\end{enumerate}
We present a unified \textup{W[1]}-hardness proof for the three cases.
The reduction is from \textsc{Multicolored Independent Set} in the
case of a union counterexample, while we are reducing from
\textsc{Multicolored Implications} in the case of a difference
counterexample. Let $v_{x,y}$ ($1 \le x \le t$, $1 \le y \le n$) be
the vertices of the graph in the instance we are reducing from. Let
$D:=\dom(\Gamma)$; we assume that $D=\{0,\dots,\Delta\}$.  For each
vertex $v_{x,y}$, we introduce an $\mvm(\Gamma,D)$ gadget
$G_{x,y}$. Without loss of generality, we can assume that the nonzero
values in $D$ are ordered such that the regular values precede all the
nonregular values. 
The bag $B_d$ of
$G_{x,y}$ has size
\[
z_{x,d}=
\begin{cases}
Z^{t,D}_{x,d}&\text{if $d$ is regular,}\\
(4t\Delta)^{2t\Delta-(x\Delta+d)}&\text{otherwise,}
\end{cases}
\]
where $Z^{t,D}_{x,d}$ is as defined in Section~\ref{sec:gadgets}.
Observe that the size of any bag representing a regular value is more
than $4t\Delta$ times larger than the size of any bag representing a
nonregular value.  The cardinality constraint $\pi(d)$ is set to be
$\sum_{x=1}^{t}z_{x,d}$. Observe that the cardinality constraint of
any regular value is larger than the total cardinality constraint of
all the nonregular values.

If the reduction is from \textsc{Multicolored Independent Set}, then we
introduce a $\nand(G_{x,y},G_{x',y'})$ constraint for each
edge $v_{x,y}v_{x',y'}$ of the graph. Furthermore, for every $1\le
x\le t$ and $1 \le y ,y' \le n$,  $y\neq y'$, we introduce a constraint
$\nand(G_{x,y},G_{x,y'})$. If the reduction is from
\textsc{Multicolored Implications}, then we introduce a
$\imp(G_{x,y},G_{x',y'})$ constraint for each edge
$\overrightarrow{v_{x,y}v_{x',y'}}$ of the graph. This completes the
description of the reduction. 

It is easy to see that the reduction works in one direction. Let $S$
be a set of vertices that form a solution for the 
instance we are reducing from. If $v_{x,y}\in S$, then let us give the
standard assignment to the gadget $G_{x,y}$. The fact that $S$
contains exactly one vertex of each color implies that the resulting
assignment satisfies the 
cardinality constraints. Furthermore, if $S$ is an independent set,
then by Lemma~\ref{lem:nandconstraint}(1), all the
$\nand(G_{x,y},G_{x',y'})$ constraints are satisfied; if $S$ is
solution of \textsc{Multicolored Implications}, then by
Lemma~\ref{lem:impconstraint}(1), all the 
$\imp(G_{x,y},G_{x',y'})$ constraints are satisfied.

For the other direction of the proof, we have to show that multivalued
morphisms given by the gadgets have certain properties that allow us
to invoke Lemma~\ref{lem:nandconstraint}(3) or
Lemma~\ref{lem:impconstraint}(3). For this purpose, we show that for $i=1,2$ and for
every $x$, $1\le x\le t$, there is an $y^i_x$ such that values from the component
generated by $a_i$ appear only on gadget $G_{x,y^i_x}$ and do not
appear on $G_{x,y}$ for any $y\neq y^i_x$. Furthermore, we have to
show that $G_{x,y^i_x}$ is $\bt_i$-recoverable. The
proof of these claims are based on the properties of regular
values and the way the cardinality constraints are defined.

\begin{claim}\label{claim:regularbags}
Let $K$ be the component generated by a regular value $d$ and let $\Kr$ be the regular values in $K$.
\begin{enumerate}
\item
  If $b\in \Kr$ appears in bag $B_a$ of some gadget
  $G_{x,y}$, then every value appearing in the bag is from $\Kr$.
  \item For every $1\le x \le t$,
  there is a unique $1\le w_{x,d} \le n$ such that values from $\Kr$ appear in bag $B_d$ of $G_{x,w_{x,d}}$ (and by (1), every value in  bag $B_d$ of
  $G_{x,w_{x,d}}$ is nonzero and from $\Kr$).
\end{enumerate}
\end{claim}
\begin{proof}
 If value $b\in \Kr$
  appears in bag $B_a$, then $a$ has to be regular as well by
  Proposition~\ref{prop:homomtype}.
  Furthermore, every variable of $B_a$ has to be nonzero and has to belong to
  $K$: otherwise the gadget would give a multivalued morphism $\phi$
  such that $0,b\in (\phi\circ \pr_{K})(a)$, contradicting the
  assumption that $b$ is regular.

We show next that every value appearing in $B_a$ is actually from $\Kr$, i.e., regular.
  Let $\B$ be the multiset consisting of the sizes of the bags
  containing values from $\Kr$ (as observed, each such bag
  represents a regular value hence its size
  is from $\Z^{t,D}$) and let $A=\{Z^{t,D}_{x,a}\mid \text{$1\le x
    \le t$, $a\in \Kr$}\}$.
 Note that the sum of the numbers in $A$ is exactly
  the sum of cardinality constraint $\pi(a)$ for every regular value
   $a\in \Kr$. The sum of the numbers in $\B$ cannot be less than that,
  but it might be larger: the bags where values
  from $\Kr$ appear might contain nonregular values from $K$ as well (but
no value outside $K$). However, the sum of cardinality constraints
  $\pi(a)$ of the nonregular values $a$ is at most
  $\Delta(4t\Delta)^{2t\Delta-(\Delta+1)}<(4t\Delta)^{2t\Delta}$. Thus
  Lemma~\ref{lem:unique} can be used to conclude that $A=\B$ and it
  follows that the bags where values from $\Kr$ appear contain
  only values from $\Kr$.

The second statement is also an immediate consequence of $A=\B$:
 there is exactly one bag with size $Z^{t,D}_{x,d}$ where values from
 $\Kr$ appear.
\end{proof}

Let $\phi_{x,d}$ be the maximal multivalued morphism given by
$G_{x,w_{x,d}}$, where $w_{x,d}$ is defined by
Claim~\ref{claim:regularbags}(2). Our aim is to show that then
$\phi_{x,a_i}$ is $\bt_i$-recoverable. For this purpose, we prove the
following claim, which is the main technical ingredient of the
proof. The proof idea was demonstrated in Lemma~\ref{lem:ccsp-demo},
but here we need additional arguments to handle zero values appearing
on variables, values not in $K$ appearing on variables, and the fact
that there are gadgets having bags of different sizes. The proof of the following
claim is delicate and technical, hence we defer it to
Section~\ref{sec:claimproof} to maintain the flow of the proof.
\begin{claim}\label{claim:recoverable}
Let $K$ be the component generated by a regular value $d$ and suppose that $d$ is the smallest value in $K$.
For every $1\le x \le t$, the following are true:
\begin{enumerate}
\item $\phi_{x,d}$ is $\bt$-recoverable if $\bt$ contains nonzero
  values only from $K$.

\item For any $a\in K$, bag $B_a$ of $G_{x,w_{x,d}}$ is fully nonzero
  and contains values only from $K$.
\item For any $a\in K$ and any $y\neq w_{x,d}$, bag $B_a$ of $G_{x,y}$ does not contain values from $K$.
\end{enumerate}
\end{claim}

Assuming Claim~\ref{claim:recoverable}, we consider the following three cases for
the counterexample $(R,\bt_1,\bt_2)$.
\medskip

\textit{Case 1:} {\em $(R,\bt_1,\bt_2)$ is a union or difference counterexample such that $\bt_1+\bt_2$ is contained in the
  component of a regular value $a_1$.}

Let us observe first that the minimality of $\Gamma$ implies that $D\setminus \{0\}$
is equal to the component $C_1$ generated by $a_1$. Indeed, by
Lemma~\ref{prop:comprestrict}(1-2), $a_1$ is regular in
$\Gamma_{|C_1\cup\{0\}}$ and $a_1$ generates $C_1$ in
$\Gamma_{|C_1\cup\{0\}}$ (thus $\Gamma_{|C_1\cup \{0\}}$ is also a core).  Set
$y_x=w_{x,a_1}$ and let $S$ contain $v_{x,y_x}$ for every $1\le x\le
t$. We may assume that $a_1$ is the smallest nonzero value. Therefore,
Claim~\ref{claim:recoverable}(1) implies that $G_{x,y_x}$ is both
$\bt_1$- and $\bt_2$-recoverable, while
Claim~\ref{claim:recoverable}(3) implies that $G_{x,y}$ gives a
homomorphism $h$ with $h(\bt_2)=0$ if $y\neq y_x$ (as we have $K=D\setminus \{0\}$,
Claim~\ref{claim:recoverable}(3) implies that $G_{x,y}$ is fully
zero). Thus we can use Lemma~\ref{lem:nandconstraint}(3) or
Lemma~\ref{lem:impconstraint}(3) to show that $S$ is a solution for
the instance we are reducing from.  \medskip

{\em Case 2: $(R,\bt_1,\bt_2)$ is a union counterexample such that for $i=1,2$, tuple $\bt_i$ is contained in
  the component $C_i$ of some regular value $a_i$.}

Set $y^1_x=w_{x,a_1}$ and $y^2_x=w_{x,a_2}$. We may assume that $a_i$
is the smallest value in component $C_i$ (if both $a_1$ and $a_2$
appear in the same component $C_i$, then we can set $a_1=a_2$ and we
are in Case 1).  Claim~\ref{claim:recoverable}(1) implies that
$G_{x,y^1_x}$ is $\bt_1$-recoverable and $G_{x,y^2_x}$ is
$\bt_2$-recoverable. Thus by Lemma~\ref{lem:nandconstraint}(3), a
$\nand(G_{x,y},G_{x',y'})$ constraint excludes the possibility that
$y^1_x=y$ and $y^2_x=y'$ for some $y\neq y'$. In particular, the constraints of the form
$\nand(G_{x,y},G_{x,y'})$, $y\neq y'$ ensure that $y^1_x=y^2_x$ for
every $1\le x \le t$; let $S$ be the set of vertices that contains
$v_{x,y}$ if and only if $y=y^1_x=y^2_x$. Note that $G_{x,y^1_x}$ is
both $\bt_1$- and $\bt_2$-recoverable. Now it is easy to see that $S$
is a multicolored independent set: if $v_{x,y},v_{x',y'}\in S$ are
adjacent vertices, then the constraint $\nand(G_{x,y},G_{x',y'})$
would be violated (Lemma~\ref{lem:nandconstraint}(3)).  \medskip

{\em Case 3: $(R,\bt_1,\bt_2)$ is a union counterexample such that for $i=1,2$, tuple $\bt_i$ is contained in
  the component $C_i$ of value $a_i$, where $a_1$ is regular and $a_2$ is self-producing.}

As in Case 1, the minimality of $\Gamma$ implies that $D=C_1\cup
C_2\cup\{0\}$; note that $C_2=\{a_2\}$ as $a_2$ is self-producing.
Furthermore, there are no self-producing values $c\in C_1$: since
$\{c\}$ is a component and $C_1\setminus \{c\}$ is not a component (as
$C_1$ is the smallest component containing $a_1\neq c$), by
Proposition~\ref{prop:compdiff}, this would imply that there is a
difference counterexample in $C_1$, contradicting the minimality of
$D$.  Thus $a_2$ is the only self-producing value in $D$. We may
assume that $a_1$ is the smallest value of $C_1$.

Set $y^1_x=y^2_x=w_{x,a_1}$. As in the previous case, $G_{x,y^1_x}$ is
$\bt_1$-recoverable. Let us show  that $a_2$ can only appear in a
bag representing $a_2$.  Indeed, suppose that $a_2$ appears in bag
$B_c$ of $G_{x,y}$ for some $c\in C_1$.  It is clear that $a_2$ cannot
appear in a bag representing a degenerate value
(Proposition~\ref{prop:homomtype}) and there are no self-producing
values in $C_1$, thus $c$ is regular.  If $y=w_{x,a_1}$, then
Claim~\ref{claim:regularbags}(2) states that bag $B_c$ contains only
values from $C_1$. Thus $y\neq w_{x,a_1}$, and hence by
Claim~\ref{claim:recoverable}(3), the bag does not contain any values
from $C_1$. As $c$ is regular, the size of the bag $B_c$ is larger
than the cardinality constraint $\pi(a_2)$, implying that 0 also
appears in bag $B_c$. Thus gadget $G_{x,y}$ gives a multivalued
morphism $\phi$ with $0,a_2\in \phi(c)$. If $\phi_{a_2}$ is the
multivalued morphism witnessing that $a_2$ is self-producing, then we
have $0,a_2\in (\phi\circ \phi_{a_2})(c)$ and $0\in
(\phi\circ \phi_{a_2})(c')$ for every value $c'$.
This shows that $c$ produces $a_2$. As $c$ is regular, value $a_2$
does not produce $c$, contradicting the assumption that $a_2$ is
self-producing. Thus value $a_2$ appears only in bags representing
$a_2$, and there are at least $t$ gadgets where $a_2$ appears. Note
that such a gadget is clearly $\bt_2$-recoverable. Furthermore, it is
not possible that $a_2$ appears in bag $B_{a_2}$ of $G_{x,y}$ for
$y\neq y^1_x$: as $G_{y,y^1_x}$ is $\bt_1$-recoverable and $G_{x,y}$
is $\bt_2$-recoverable, the $\nand(G_{x,y^1_x},G_{x,y})$ constraint
would not be satisfied. Thus value $a_2$ has to appear in the gadgets
$G_{x,y^1_x}$, $1\le x\le t$, implying that these gadgets are both
$\bt_1$- and $\bt_2$-recoverable. From this point, we can finish the
proof as in the previous case.
\end{proof}

\subsubsection{Proof of Claim~\ref{claim:recoverable}}\label{sec:claimproof}

 First, we show that Claim~\ref{claim:recoverable} follows from the following claim:
\begin{claim}\label{claim:recoverable2}
  Let $K$ be the component generated by a regular value $d$ and
  suppose that $d$ is the smallest value in $K$.
\begin{enumerate}
\item For every $1\le x \le t$ and $a\in K$, bag $B_a$ of
  $G_{x,w_{x,d}}$ contains only values from $K$, i.e.,
  $\phi_{x,d}(a)\subseteq K$.
\item   For every $1\le x
  \le t$, there is a multivalued morphism $\beta_x$ such that $\hat
  \phi_{x,d}=\phi_{x,d}\circ \beta_x$ is a multivalued morphism with
  $\phi_{x,d}(a)\cup \{a\}\subseteq \hat \phi_{x,d}(a)$ for every
  $a\in K$ and $\hat \phi_{x,d}(a)=\{0\}$ for every $a\not\in
  K$.
\end{enumerate}
\end{claim}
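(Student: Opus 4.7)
The plan is to prove parts (1) and (2) of Claim~\ref{claim:recoverable2} in sequence, using an iteration argument modeled on Lemma~\ref{lem:ccsp-demo} but adapted to multiple gadgets and the restriction to the component $K$. The central object is the multivalued morphism $\tilde\phi_x := \pr_K \circ \phi_{x,d}$, which is a multivalued morphism of $\Gamma$ because $K$ is a component (so $\pr_K$ is an endomorphism) and which sends every $a \notin K$ to $\{0\}$.

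I would first prove part (1) by induction on $a \in K$ in increasing order. The base case $a = d$ follows immediately from Claim~\ref{claim:regularbags}(2), giving $\phi_{x,d}(d) \subseteq \Kr \subseteq K$. For the inductive step at some $a > d$ in $K$, suppose toward contradiction that $b \in \phi_{x,d}(a)$ with $b \notin K$. Since $\Gamma$ has no semiregular values by the hypothesis of Lemma~\ref{lem:multihardcore}, $a$ is regular, self-producing, or degenerate. In each case I would compose $\phi_{x,d}$ with iterates of $\tilde\phi_x$ starting from $d$ (whose image under $\phi_{x,d}$ lies in $\Kr$ by the base case) and morphisms witnessing the type of $a$; in particular, for degenerate $a$, Proposition~\ref{prop:regproduce} yields a nondegenerate $a' \in K$ producing $a$, and the producing morphism enters the composition. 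The resulting multivalued morphism of $\Gamma$ either violates the regularity of a value in $\Kr$ or forces $b$ into the component generated by $d$, contradicting $b \notin K$.

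Once part (1) is established, $\tilde\phi_x$ satisfies $\tilde\phi_x(K) \subseteq K$ and $\tilde\phi_x(D \setminus K) = \{0\}$. For part (2), I claim that for each $a \in K$ there exists $p_a \geq 1$ with $a \in \tilde\phi_x^p(a)$ for every $p \geq p_a$. The proof proceeds by induction on the size of $a$ in $K$, as in Lemma~\ref{lem:ccsp-demo}: define the reachable set $T_a = \bigcup_{p \geq 1} (\pr_{\{b \in K : b \leq a\}} \circ \tilde\phi_x)^p(a)$, and use the cardinality constraint on values of $K$, which by Claim~\ref{claim:regularbags} and Lemma~\ref{lem:unique} forces a precise accounting of which bags of $G_{x,w_{x,d}}$ must be fully nonzero, to force $a \in T_a$. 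Setting $\beta_x := \tilde\phi_x^P$ for $P$ a sufficiently large common multiple of the $p_a$'s gives $\beta_x(a) = \{0\}$ for $a \notin K$ and $a \in \beta_x(a) \subseteq K$ for $a \in K$; combined with part (1), this yields the properties of $\hat\phi_{x,d} = \phi_{x,d} \circ \beta_x$ stated in the claim.

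The main obstacle is the inductive step for part (1) in the nonregular cases. When $a \in K$ is self-producing, the singleton $\{a\}$ is itself a component, so the contradiction must be extracted from the interaction between $\phi_{x,d}$ and the morphism witnessing that $a$ produces itself; when $a$ is degenerate, the argument hinges on locating the correct nondegenerate $a' \in K$ via Proposition~\ref{prop:regproduce} and showing that the composition of its producing morphism with $\phi_{x,d}$ concentrates $b$ into $K$. The minimality hypothesis on $\Gamma$---that no proper subdomain $0 \in D' \subset \dom(\Gamma)$ yields a non-weakly-separable core---is essential throughout to rule out the possibility that $b$ sits in a secondary component without producing a structural violation. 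A secondary difficulty is the delicate accounting in part (2), where one separates the contributions of regular versus nonregular bags of $K$ by comparing magnitudes, as in the proof of Claim~\ref{claim:regularbags}, to justify the inductive step of the reachability argument.
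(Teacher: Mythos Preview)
Your plan has a real gap in the cardinality step for part~(2). The claim that $a\in\tilde\phi_x^{\,p}(a)$ for all large $p$, where $\tilde\phi_x=\pr_K\circ\phi_{x,d}$, cannot be forced from a single gadget. The constraint $\pi(c)=\sum_{i=1}^t z_{i,c}$ is global, so the counting of Lemma~\ref{lem:ccsp-demo} does not transfer: the bags of $G_{x,w_{x,d}}$ alone have total size $\sum_{c\in K}z_{x,c}$, which is far smaller than $\sum_{c\in K}\pi(c)$. Concretely, take $t=2$, $K=\{1,2\}=D\setminus\{0\}$ with both values regular and $d=1$. It is consistent with Claim~\ref{claim:regularbags} that gadget $G_{2,w_{2,1}}$ satisfies $\phi_{2,1}(1)=\phi_{2,1}(2)=\{2\}$ (provided $\Gamma$ has the endomorphism $1\mapsto 2$, $2\mapsto 2$); then $\tilde\phi_2^{\,p}(1)=\{2\}$ for every $p\ge1$ and your iteration never recovers $1$. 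The excess demand for value~$2$ is absorbed by gadget $G_{1,w_{1,1}}$, so no contradiction arises from looking at $G_{2,w_{2,1}}$ in isolation.

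The paper fixes this by an outer induction on $x$: one first establishes the claim for all $G_{i,w_{i,d}}$ with $i<x$, obtaining the morphisms $\hat\phi_{i,d}$, and then replaces your $\tilde\phi_x$ by $\phi_{x,d}\circ\psi_{x-1}$, where $\psi_{x-1}$ is the closure under composition of $\pr_K$ and all $\hat\phi_{i,d}$ for $i<x$. The reachability set $T$ is defined with this composite, and the counting now legitimately includes all bags $B_a$ ($a\in T$) of $G_{i,w_{i,d}}$ for $i<x$ on the left side, which is exactly what makes the comparison with $\sum_{c\in T}\pi(c)$ go through. A second point you miss is that parts~(1) and~(2) are not proved separately: the paper first shows (Claim~\ref{claim:recoverable3}) that for every $b\in K$ either $0$ or $b$ lies in $(\phi_{x,d}\circ\psi_{x-1})^p(b)$ for large $p$, and then derives both statements at once---part~(1) because the ``$0$'' alternative would make $\pr_{K\setminus\{b\}}$ a sub-morphism of $(\phi_{x,d}\circ\psi_{x-1})^p$, hence $K\setminus\{b\}$ a component containing $d$, contradicting that $K$ is generated by $d$. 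Your proposed direct induction on $a$ for part~(1), with contradictions coming from ``violating regularity'' or ``forcing $b$ into $K$,'' is not an argument: nothing in the compositions you describe prevents $\phi_{x,d}(a)$ from landing outside $K$ without any structural violation.
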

Note that $\phi_{x,d}(a)\subseteq K$ implies, in particular, $0\not\in \phi_{x,d}(a)$.
\begin{proof}[Proof (of Claim~\ref{claim:recoverable} from Section~\ref{sec:regular-values})]
  To see this, we first show that if $\bt$ contains nonzero values
  only from $K$, then $a\in\hat\phi_{x,d}(a)$ implies that a subset of
  $\phi_{x,d}$ is a $\bt$-recoverable endomorphism.  Indeed, it means
  that for every $a\in K$, there is a $c_a\in\phi_{x,d}(a)$ such that
  $a\in\beta_x(c_a)$. Then any endomorphism that maps $a$ to $c_a$ is
  $\bt$-recoverable, as witnessed by $\beta_x$. This proves Statement
  1 of Claim~\ref{claim:recoverable}.
  Statement 2 of Claim~\ref{claim:recoverable} is the same as
  Statement 1 of Claim~\ref{claim:recoverable2}.  Statement 3 of
  Claim~\ref{claim:recoverable} follows from Statement 2 of
  Claim~\ref{claim:recoverable} and from the fact that the cardinality
  requirement for the values in $K$ is exactly the same as the total
  size of the bags $B_a$ of the gadgets $G_{x,w_{x,d}}$ for every
  $a\in K$ and $1\le x \le t$. 
\end{proof}

In the rest of the section, we prove Claim~\ref{claim:recoverable2} by
induction on $x$.
\begin{proof}[Proof (of Claim~\ref{claim:recoverable2})]
We show that $\beta_x$ and $\hat \phi_{x,d}$ exist
if $\hat \phi_{i,d}$ exists for every $1\le i <x$ (in case of $x=1$,
this condition is vacuously true).  Let $\Psi_i$ be the set of those
multivalued morphisms that can be obtained as the product of an
arbitrary sequence of at least one multivalued morphisms composed from $\pr_{K}$, $\hat
\phi_{1,d}$, $\dots$, $\hat \phi_{i,d}$ (possibly using some of them multiple
times).  Observe that $\Psi_i$ has a unique maximal element $\psi_i$,
i.e, $\psi_i(a)\supseteq \psi'_i(a)$ for every $a\in K$ and
$\psi'_i\in \Psi_i$. This follows from the fact that if
$\psi'_i,\psi''_i\in\Psi_i$, then $a\in \psi'_i(a)$ and
$a\in\psi''_i(a)$ for every $a\in K$ implies $\psi'_i(a)\cup
\psi''_i(a) \subseteq (\psi'_i \circ \psi''_i)(a)$.  For notational
convenience, we define $\psi_0:=\pr_{K}$. Note that $\psi_i(a)=\{0\}$ for every $a\not\in K$ by definition.

First, we prove Claim~\ref{claim:recoverable2} assuming that the following claim is true:

\begin{claim}\label{claim:recoverable3} For every $b\in K$, there is an
  integer $p_b\ge 1$ such that for every $p\ge p_b$, either $0$ or $b$
  is in $( \phi_{x,d}\circ \psi_{x-1})^p(b)$. Furthermore, for $b=d$
  we actually have $d\in ( \phi_{x,d}\circ \psi_{x-1})^p(d)$.
  \end{claim}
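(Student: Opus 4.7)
The plan is to prove Claim~\ref{claim:recoverable3} by induction on $b\in K$ taken in increasing order, with base case $b=d$ (the smallest value in $K$). Throughout, write $\phi:=\phi_{x,d}\circ\psi_{x-1}$. The first observation is that, since $\pr_K\in\Psi_{x-1}$ and $\psi_{x-1}$ is the maximal element of $\Psi_{x-1}$, one has $a\in\psi_{x-1}(a)$ for every $a\in K$ and $\psi_{x-1}(a)=\{0\}$ for $a\notin K$. Consequently $\phi(a)\subseteq K\cup\{0\}$ whenever $a\in K$, and every $c\in\phi_{x,d}(a)\cap K$ satisfies $c\in\phi(a)$. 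This ``reflexivity'' of $\psi_{x-1}$ on $K$ is what lets us transfer information from bags of $G_{x,w_{x,d}}$ to $\phi$ itself.

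For the inductive step at $b$, I would mimic the truncation argument from Lemma~\ref{lem:ccsp-demo}. Set $S_b:=\{0\}\cup(K\cap\{1,\ldots,b\})$ and $\phi_b:=\pr_{S_b}\circ\phi$, and let $T_b:=\bigcup_{p\ge1}\phi_b^p(b)$. Unwinding the definitions, $b\in T_b$ is equivalent to the existence of a closed $\phi$-walk $b=a_0,a_1,\ldots,a_s=b$ whose intermediate vertices all lie in $S_b$. The crux is to establish $b\in T_b$ by a cardinality argument: if $b\notin T_b$, then for each $a\in T_b\cup\{b\}$ with $a\le b$ the reflexivity of $\psi_{x-1}$ forces $\phi_{x,d}(a)\cap K\subseteq T_b$, so every $K$-valued entry of bag $B_a$ of $G_{x,w_{x,d}}$ lies in $T_b$; any non-$K$ entry in such a bag injects $0\in T_b$ via $\psi_{x-1}$, a case handled separately since such a bag contains no $\Kr$-value by Claim~\ref{claim:regularbags}(1). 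Totaling $|B_a|$ over $a\in(T_b\cup\{b\})\cap S_b$ and comparing with $\sum_{a\in T_b}\pi(a)$, the magnitude separation of the $Z^{t,D}_{y,c}$'s together with Lemma~\ref{lem:unique} yields a contradiction. The base case $b=d$ is particularly clean: $S_d=\{0,d\}$, and Claim~\ref{claim:regularbags}(2) already guarantees bag $B_d$ of $G_{x,w_{x,d}}$ consists solely of $\Kr$-values, so the cardinality accounting simplifies dramatically.

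Once $b\in T_b$ is proved, let $b=a_0,a_1,\ldots,a_s=b$ be the corresponding walk. If some intermediate $a_i=0$, then $\phi(0)=\{0\}$ forces $0\in\phi^p(b)$ for every $p\ge i$, finishing that case. Otherwise every $a_i$ lies in $K$ with $a_i<b$ for $1\le i\le s-1$, and the induction hypothesis supplies an integer $p_{a_1}$ such that $0$ or $a_1$ lies in $\phi^{p'}(a_1)$ for every $p'\ge p_{a_1}$. Given $p\ge s+p_{a_1}$, inserting a loop of length $p-s$ at $a_1$ into the closed walk proceeds as follows: if $a_1\in\phi^{p-s}(a_1)$ the walk closes at $b$ in $p$ steps, giving $b\in\phi^p(b)$; if instead $0\in\phi^{p-s}(a_1)$, then $0\in\phi^{p-s+1}(b)\subseteq\phi^p(b)$ by the absorbing property of $0$. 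Setting $p_b:=s+p_{a_1}$ yields $\phi^p(b)\cap\{0,b\}\neq\emptyset$ for every $p\ge p_b$. For the strengthened statement at $b=d$, the walk lives in $S_d=\{0,d\}$ and any zero at an intermediate $a_i$ would propagate to $a_s=0\neq d$; hence $d\in\phi(d)$, and an immediate induction on $p$ yields $d\in\phi^p(d)$ for every $p\ge1$.

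The main obstacle is the cardinality argument for $b\in T_b$: bags of $G_{x,w_{x,d}}$ may carry values in $K\setminus\Kr$ or outside $K$, and the composite structure $\phi=\phi_{x,d}\circ\psi_{x-1}$ means the count of $T_b$-valued slots in the full assignment mixes contributions from bags across several of the gadgets $G_{i,w_{i,d}}$, $1\le i\le x$. The accounting must exploit simultaneously the magnitude gaps between the $Z^{t,D}_{y,c}$'s, the fact that bags representing nonregular values have total size bounded by $\Delta(4t\Delta)^{2t\Delta-(\Delta+1)}<(4t\Delta)^{2t\Delta}$, and the inductive control encoded by the $\hat\phi_{i,d}$ for $i<x$ underlying $\psi_{x-1}$.
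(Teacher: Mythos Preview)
Your overall architecture matches the paper's: induct on $b\in K$, define a truncated iteration $T$, show $b\in T$ by a cardinality count, and then recover the claim from $b\in T$ via the induction hypothesis. The ``once $b\in T_b$'' part of your argument is correct and essentially identical to the paper's.

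The genuine gap is in the cardinality count. You propose to sum $|B_a|$ over $a\in(T_b\cup\{b\})\cap S_b$ for the single gadget $G_{x,w_{x,d}}$ and compare with $\sum_{a\in T_b}\pi(a)$. But $\pi(a)=\sum_{i=1}^t z_{i,a}$, and for $i<x$ the numbers $z_{i,a}$ dominate every $z_{x,c}$ by a factor of at least $4t\Delta$. Hence your left-hand side is far smaller than your right-hand side and no contradiction follows. You correctly flag this as ``the main obstacle'' in your last paragraph and name the right tool (the $\hat\phi_{i,d}$ for $i<x$), but you never supply the missing step. What the paper does is include on the bag side, for every $i<x$ and every $a\in T$, the bag $B_a$ of $G_{i,w_{i,d}}$; this cancels the $\sum_{i<x}$ part of $\pi$ exactly, and then the remaining comparison is the one-line magnitude estimate you have in mind. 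To justify including those bags one must show that $\phi_{i,d}(a)\subseteq T$ for $a\in T$. This uses two inductive facts: Statement~1 of Claim~\ref{claim:recoverable2} for $i<x$ gives $\phi_{i,d}(a)\subseteq K$, and the absorption identity $(\psi_{x-1}\circ\hat\phi_{i,d})(a')\subseteq\psi_{x-1}(a')$ (which holds because $\psi_{x-1}$ is the maximal product in $\Psi_{x-1}$ and $\hat\phi_{i,d}\in\Psi_{x-1}$) lets one append an application of $\phi_{i,d}$ to the sequence witnessing $a\in T$ without leaving $T$. This absorption step is the heart of the argument and is not present in your sketch.

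A minor point: Lemma~\ref{lem:unique} is not the right tool here. That lemma compares a set and a multiset whose sums are \emph{close}; the present argument needs an \emph{inequality} between two explicit sums, obtained directly from $z_{x,b}\ge(4t\Delta)z_{x,a}$ for $a>b$ and $z_{x,b}\ge(4t\Delta)z_{i,a}$ for $i>x$ (and the analogous estimates for nonregular $b$). Also, your appeal to Claim~\ref{claim:regularbags}(1) when a non-$K$ value appears in a bag is beside the point: the relevant observation is simply that such a value is sent to $0$ by $\psi_{x-1}$, so $0\in\phi^p(b)$ for some $p$, and the claim for that $b$ is already proved.
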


  Let $p:=\max_{b\in K}p_b$. For every $b\in K$ and $p'\ge p$, either
  $0$ or $b$ is in $(\phi_{x,d}\circ \psi_{x-1} )^{p'}(b)$ and in
  particular $d\in (\phi_{x,d}\circ \psi_{x-1})^{p'}(d)$ holds.  For
  some $p'\ge p$, let $S$ contain those elements $b$ of $K$ for
  which $b\in (\phi_{x,d}\circ \psi_{x-1} )^{p'}(b)$ holds; we have
  $d\in S\subseteq K$.  Observe that $S$ is a component  containing $d$ (as $\pr_S$ is
  a subset of $(\psi_{x-1} \circ \phi_{x,d})^{p'}$),
  thus $S$ has to contain the component generated by $d$, i.e.,
  $S=K$. Therefore, we can assume that $b\in (\phi_{x,d}\circ
  \psi_{x-1})^{p'}(b)$ for every $b\in K$ and $p'\ge p$. Let us define
  $\beta_x:=\psi_{x-1} \circ (\phi_{x,d}\circ \psi_{x-1})^{p}$ and
  $\hat\phi_{x,d}=\phi_{x,d}\circ
  \beta_x=(\phi_{x,d}\circ\psi_{x-1})^{p+1}$; it is clear that $b\in
  \hat \phi_{x,d}(b)$ for every $b\in K$. Furthermore, $b\in
  (\phi_{x,d}\circ \psi_{x-1})^{p}(b)$ implies $\phi_{x,d}(b)
  \subseteq \hat \phi_{x,d}(b)$, proving the second statement of
  Claim~\ref{claim:recoverable2}.

  To prove the first statement, observe that we know from
  Claim~\ref{claim:regularbags}(2) that $\phi_{x,d}(d)\subseteq
  K$. For $b\neq d$, if $\phi_{x,d}(b)$ contains a value not in $K$,
  then $0\in ( \phi_{x,d}\circ \psi_{x-1})^{p}(b)$ would follow and
  then $\pr_{K\setminus \{b\}}$ is a subset of $( \phi_{x,d}\circ
  \psi_{x-1})^{p}$, again contradicting that $K$ is the component generated by $d$.  This proves the first
  statement of Claim~\ref{claim:recoverable2}.  
\end{proof}

Now to finish the proof of Claim~\ref{claim:recoverable2}, it suffices
to prove Claim~\ref{claim:recoverable3}.  We prove
Claim~\ref{claim:recoverable3} by double induction: for a fixed $1\le
x\le t$ and $b\in K$, we assume that Claim~\ref{claim:recoverable2} is
true for every $i <x$, and that Claim~\ref{claim:recoverable3} is true
for $x$ and for every $a<b$. It is clear that such a proof and the
proof of Claim~\ref{claim:recoverable2} above together prove both
Claim~\ref{claim:recoverable2} and \ref{claim:recoverable3}.
\begin{proof}[Proof of Claim~\ref{claim:recoverable3}]
  We prove the statement by induction on $b$. Suppose that the
  statement holds for every $a<b$; let us prove it for $b$.

Let $K_{b}=\{a \mid a\in K,a\le b\}$ and let us define
\[
T=\bigcup_{p\ge 1}( \pr_{K_{b}}\circ\phi_{x,d}  \circ
\psi_{x-1}\circ \pr_{K})^p(b)\setminus \{0\},
\]
which is a subset of $K$. (Note that we have no reason to assume that
$\pr_{K_{b}}$ is an endomorphism.) Intuitively, $T$ is the set of
values that can be obtained from $b$ by a sequence of endomorphisms
using $\hat \phi_{1,d}$, $\dots$, $\hat\phi_{x-1,d}$ and the endomorphisms given by
gadget $G_{x,w_{x,d}}$. The two retractions in the definition of $T$
introduce two additional technical conditions: we never leave $K$
(i.e., we consider every value outside $K$ to be 0) and we apply
$\phi_{x,d}$ only on values at most $b$ (i.e., we imagine the bags
$B_a$ of $G_{x,w_{x,d}}$ with $a>b$ to be fully zero).

We show that if $b\in T$, then Claim~\ref{claim:recoverable3} follows. Indeed, in that case if 
$b\in (\phi_{x,d}\circ \psi_{x-1})(b)$, then we can define $p_b:=1$
since $b\in (\phi_{x,d}\circ \psi_{x-1})^p(b)$ for every $p\ge
1$. Thus we are done in this case.  Suppose that $b\not\in
(\phi_{x,d}\circ\psi_{x-1})(b)$. As $b\in T$, this means that there is
an $a\in T$, $a<b$ such that $a\in (\phi_{x,d}\circ \psi_{x-1})(b)$ and
$b\in (\pr_{K_{b}} \circ \phi_{x,d}\circ \psi_{x-1}\circ
\pr_{K})^p(a)$ for some $p\ge 1$.  By the induction hypothesis on
$a<b$, there is a $p_{a}\ge 1$ such that either $a$ or $0$ is in
$(\phi_{x,d}\circ\psi_{x-1} )^{p_{a}+s}(a)$ for every $s\ge
0$. Putting together, we get that either $0$ or $b$ is in
$(\phi_{x,d}\circ \psi_{x-1})^{1+p_{a}+s+p}(b)$ for every $s\ge
0$. Thus we can define $p_b:=1+p_{a}+p$. To prove the second
statement, observe that if $b=d$, then (as $d$ is the smallest element
of $K$), there is no $a<b$ with $a\in T$. Thus we can only get $b\in
(\phi_{x,d}\circ \psi_{x-1})(b)$ in this case.

In the rest of the proof our goal is to argue that we can assume $b\in
T$.  We show next that Claim~\ref{claim:recoverable3} is true if a
value not in $T$ (including 0, which is not in $T$ by definition)
appears in any of the sets of bags (B1)--(B3) defined below, and then argue
that if all values in these bags are from $T$, then $b\in T$ as well.
\begin{enumerate}
\item[(B1)] bag $B_a$ ($a\in T$) of gadget $G_{i,w_{i,d}}$, $1
\le i < x$, 
\item[(B2)] bag $B_b$ of $G_{x,w_{x,d}}$
\item[(B3)] bag $B_{a}$ ($a\in T$, $a< b$) of gadget $G_{x,w_{x,d}}$
\end{enumerate}

For (B1), we know by induction that Statement 1 of
Claim~\ref{claim:recoverable2} holds for every $i<x$, showing that bag
$B_a$ ($a\in T$) of gadget $G_{i,w_{i,d}}$ contains values only from
$K$. This is the point where the definition of $\psi_{x-1}$ becomes
crucial. Intuitively, we can consider a sequence of multivalued
morphisms showing that $a\in T$, and then append an application of
$\phi_{i,d}$ to show that $\phi_{i,d}(a)$ is in $T$ as well; the
important point here is that $\psi_{x-1}\circ \phi_{i,x}$ can be
replaced by $\psi_{x-1}$. Formally, if $a\in T$, then there is a
$p'\ge 0$ and an $a'\in K$ such that $a'\in
\left((\pr_{K_{b}}\circ\phi_{x,d} \circ \psi_{x-1}\circ
  \pr_{K})^{p'}\circ (\pr_{K_b}\circ \phi_{x,d})\right)(b)$ and $a\in
\psi_{x-1}(a')$.  The definition of $\psi_{x-1}$ implies that
$(\psi_{x-1}\circ \phi_{i,d})(a')\subseteq \psi_{x-1}(a')$, and
therefore $\phi_{i,d}(a)\subseteq \psi_{x-1}(a')$ holds as well.  This
means that $\phi_{i,d}(a)\subseteq (\pr_{K_{b}}\circ\phi_{i,d} \circ
\psi_{x-1}\circ \pr_{K})^{p'+1}(b)$, that is, $\phi_{i,d}(a)\subseteq T$.
Thus, bags from (B1) cannot have values of $K$ not in $T$.

For (B2), observe that if we show that only values from $K$ appear in
the bag $B_b$ of $G_{x,w_{x,d}}$, then it follows that all these
values are in $T$.  For $b=d$, Claim~\ref{claim:regularbags}(2)
implies that all these values are in $K$. If $b\neq d$, then a value
not in $K$ appearing in bag $B_b$ of $G_{x,w_{x,d}}$ would imply that
$0\in (\phi_{x,d}\circ \psi_{x-1})(b)$ (recall that
$\psi_{x-1}(a)=\{0\}$ for any $a\not\in K$), and the statement of
Claim~\ref{claim:recoverable3} is true for $b$ with $p_b=1$.

For (B3), it is again sufficient to show that the bags from this set 
contain only values from
$K$; by the definition of $T$, then these values are in $T$. If $a=d$,
then Claim~\ref{claim:regularbags}(2) implies that the bag contains
values only from $K$. Otherwise, suppose that $a\neq d$, $a\in T$,
$a<b$, and there is a value $c\not\in K$ in the bag $B_a$ of
$G_{x,w_{x,d}}$.  As $a\in T$, we have that $a\in ( \pr_{K_{b}}\circ\phi_{x,d}
\circ \psi_{x-1}\circ \pr_{K})^p(b)$ for some $p\ge 1$, which clearly
implies $a\in (\phi_{x,d}\circ \psi_{x-1})^p(b)$. Since $0\in
(\phi_{x,d} \circ \psi_{x-1})(a)$ (recall that $\psi_{x-1}(c)=\{0\}$
if $c\not\in K$), it follows that $0\in
(\phi_{x,d}\circ \psi_{x-1})^{p+1}(b)$, proving Claim~\ref{claim:recoverable3} for $b$ with
$p_b=p+1$. Note that in these cases we assumed $a<b$, i.e., 
$b\neq d$ (as $d$ is the smallest value in $K$) and therefore we do not
have to prove the second statement of Claim~\ref{claim:recoverable3}.

Therefore, in the following, we can assume that bags (B1)--(B3) contain values only from $T$.
The total size of  these bags is
\begin{equation}
\sum_{a\in
  T}\sum_{i=1}^{x-1}z_{i,a}+\sum_{a\in T,a<b}z_{x,a}+z_{x,b}, \label{eq:totalbags}
\end{equation}
while the sum of cardinality constraints of the values in $T$  is (assuming $b\not\in T$) 
\begin{equation}
\sum_{a\in
  T}\sum_{i=1}^{t}z_{i,a}=
\sum_{a\in
  T}\sum_{i=1}^{x-1}z_{i,a}+
\sum_{a\in T,a<b}z_{x,a}+
\sum_{a\in T,a>b}z_{x,a}+
\sum_{a\in  T}\sum_{i=x+1}^{t}z_{x,a}. \label{eq:totalcard}
\end{equation}
Let us compare the last term of \eqref{eq:totalbags} with the last two
terms of \eqref{eq:totalcard}. Suppose first that $b$ is regular.
Then $z_{x,b}\ge (2t\Delta) z_{x,a}$ for any $a>b$ and $z_{x,b}\ge
(2t\Delta) z_{i,a}$ for any $a\in T$ and $i\ge x+1$.  Thus
\eqref{eq:totalbags} is strictly larger than \eqref{eq:totalcard} and
this contradiction proves that $b\in T$.  Suppose now that $b$ is not
regular. Then $T$ cannot contain any regular value
(Proposition~\ref{prop:homomtype}).  Again, we have $z_{x,b}\ge
(2t\Delta)z_{x,a}$ for any $a>b$ and $z_{x,b}\ge (2t\Delta) z_{i,a}$ for any
$a\in T$ and $i\ge x+1$, leading to a contradiction.  Therefore, we
can conclude that $b\in T$, concluding the proof of
Claim~\ref{claim:recoverable3}.
\end{proof}

\section{Examples}\label{sec:examples}

\begin{example}\label{exa:graphs}\rm
A number of graph problems can be represented in the form of
$\ccsp(\Gamma)$ or $\ocsp(\Gamma)$.\\[1mm] 
\textsc{Independent Set:} Given a graph $G$ with
  vertices $v_{i}$ ( $1\le j \le n$), find an
  independent set of size $t$. \textsc{Independent Set} is equivalent
  to $\ccsp(\{R_{IS}\})$, or, equivalently to $\ocsp(\{R_{IS}\})$,
  where $R_{IS}$ is a binary relation on $\{0,1\}$ given by 
$$
R_{IS}=\left(\begin{array}{ccc} 0&1&0\\ 0&0&1 \end{array} \right)
$$
(tuples are written vertically). The size constraint is set to be $t$.\\[1mm]
\textsc{$p$-Colorable Subgraph}: Given a graph $G$ 
and an integer $k$, find a set $S$ of $k$ vertices that induces a
$p$-colorable subgraph.
This problem is equivalent to
  $\ocsp(\{R_{p-COL}\})$, where $R_{p-COL}$ is a binary relation on
  $p+1$-element set $D=\{0,1,\ldots p\}$ given by 
$$
R_{p-COL}=D^2\setminus \left\{\left(\begin{array}{c} i\\i\end{array}
  \right) \mid i\in\{1,\ldots,p\}\right\}.
$$
The size constraint is $k$, the size of the $p$-colorable graph to be found.
\\[1mm] 
\textsc{Implications:} Given a directed graph $G$ and an integer
  $t$, find a set $C$ of vertices with exactly $t$ vertices such that
  there is no directed edge $\overrightarrow{uv}$ with $u\in C$ and
  $v\not\in C$. \textsc{Implications} can be represented as
  $\ccsp(\{R_{IM}\})$ and $\ocsp(\{R_{IM}\})$, where 
$$
R_{IM}=\left(\begin{array}{ccc} 0&0&1\\ 0&1&1 \end{array} \right)
$$
The size constraint is set to be $t$.\\[1mm]
\textsc{Vertex Cover:} Given a graph $G$ and an integer $t$,
  find a set $C$ of vertices such that every edge of $G$ is incident
  to at least one vertex from $C$. \textsc{Vertex Cover} is equivalent
  to the $\ocsp(R_{VC})$, where
$$
R_{VC}=\left(\begin{array}{ccc} 0&1&1\\ 1&0&1 \end{array} \right),
$$
and the size constraint is $t$.
\end{example}

Some problems reduce to the CSP with cardinality constraints in a less
straightforward way. 

\begin{example}\label{exa:biclique}\rm
\textsc{Biclique:} Given a bipartite graph $G(A,B)$, find two
  sets $A'\subseteq A$ and $B'\subseteq B$, each  of size exactly $t$, such
  that every vertex of $A'$ is adjacent with every vertex of $B'$. As
  it is mentioned in the introduction, \textsc{Biclique} is equivalent
  to $\ccsp(\{R_{BC}\})$, where $R_{BC}$ is a relation on $\{0,1,2\}$
  given by 
$$
R_{BC}=\left(\begin{array}{ccc} 0&1&0\\ 0&0&2 \end{array} \right).
$$
A \textsc{Biclique} instance $G(A,B)$ is reduced to $\ccsp(\{R_{BC}\})$
by first taking the (bipartite) complement $\overline{G}$ of $G$ and imposing
constraint $\ang{(v,w),R_{BC}}$ on every pair $v\in A$, $w\in B$ such
that $v,w$ are adjacent in $\overline{G}$. The cardinality constraint
$\pi:\{0,1,2\}\to \nat$ is chosen to be $\pi(1)=\pi(2)=t$.

The variant of \textsc{Biclique}, where the graph $G$ is not
necessarily bipartite, is equivalent to $\ccsp(\{R'_{BC}\})$, where $R'_{BC}$ is a relation on $\{0,1,2\}$
  given by 
$$
R'_{BC}=\left(\begin{array}{ccccc} 0&1&0&2&0\\ 0&0&2&0&1 \end{array} \right).
$$ 
\end{example}

The following examples generalize \textsc{Biclique} to finding
complete $p$-partite graphs. Let us first consider  the version
where the input graph is also $p$-partite:

\begin{example}\label{exa:multipartite-clique}\rm
\textsc{$p$-Partite Clique:} Given a $p$-partite graph $G$ with
partition $A_1,\ldots A_p$, find sets $A'_1\subseteq A_1,\ldots
A'_p\subseteq A_p$, each of size exactly $t$, such that for any $i,j$,
$1\le i<j\le k$ every vertex from $A'_i$ is adjacent with every vertex
of $A'_j$. The equivalent $\ccsp$ problem is $\ccsp(\{R_{p-MC}\})$ where
$R_{p-MC}$ is the $p$-ary relation given by 
$$
R_{p-MC}=\{0,1\}\times\{0,2\}\times\ldots\times\{0,p\}-\{(1,2,\ldots,p)\}.
$$

Reduction goes as follows. Given a $p$-partite graph $G$ with
partition $A_1,\ldots A_p$ we first take the ($p$-partite) complement $\overline{G}$
of $G$, and then introduce constraint $\ang{(v_1,\ldots, v_p),R_{p-MC}}$
for each $p$-tuple $(v_1,\ldots,v_p)$ such that $v_i\in A_i$ and some
of the vertices $v_1,\ldots, v_p$ are adjacent in $\overline{G}$. The cardinality 
constraint is chosen to be $\pi(1)=\ldots=\pi(p)=t$. 

Another way to represent \textsc{$p$-Partite Clique} by a \ccsp\ is
the following. Let
$$
R_{p-PC}=\left(\begin{array}{cccccc} 0&1&0&\cdots&p&0\\ 
0&0&1&\cdots&0&p \end{array} \right),
$$
and $R_i$ is the unary relation $\{0,i\}$ for
$i\in\{1,\ldots,p\}$. Then \textsc{$p$-Partite Clique} reduces to
$\ccsp(\{R_{p-PC},R_1,\ldots, R_p\})$ by imposing the constraint
$\ang{(v),R_i}$ on each $v\in A_i$, and $\ang{(v,w),R_{p-PC}}$ on each
pair $v,w$ adjacent in $\overline{G}$.
\end{example}

The following example formulates the version of \textsc{$p$-Partite
  Clique} where the input graph is not $p$-partite:

\begin{example}\label{exa:k-partite-complete-subgraph}\rm
\textsc{$p$-Partite Complete Subgraph}: Given a graph $G$ 
and integers
  $t_1,\ldots, t_p$, find sets $S_1,\ldots S_p$ of vertices such that
  $S_1\cup\ldots\cup S_p$ induces a complete $p$-partite graph with
  partition $S_1,\ldots,S_p$, and $|S_i|=t_i$ for $1\le i\le p$. 
This problem is equivalent to
  $\ccsp(\{R_{p-CS}\})$, where $R_{p-CS}$ is a binary relation on
  $p+1$-element set $D=\{0,1,\ldots p\}$ given by 
$$
R_{p-CS}=\{0,1\}^2\cup\{0,2\}^2\cup\ldots\cup\{0,p\}^2.
$$
To reduce the \textsc{$p$-Partite Complete Subgraph} problem to
$\ccsp(\{R_{p-CS}\})$ we, first, take the complement $\overline{G}$ of $G$, and
then impose constraint $\ang{(v,w),R_{p-CS}}$ on every pair $v,w$ such
that $v,w$ are adjacent in $\overline{G}$. The cardinality constraint is set to
be $\pi$ such that $\pi(i)=t_i$ 
for all $i$.\\[1mm] 
\end{example}

\begin{example}\label{exa:cc-hard}\rm
  Let $R=(\{0,1\}\times \{0,1\}\times \{0,2\})\setminus \{(1,1,2)\}$.
  Let $\Gamma$ be the cc-closure of $R$. Note that $\Gamma$ contains
  $R^{|3;2}=\{(0,0),(1,0),(0,1)\}$, which is the relation $R_{IS}$ of
  Example~\ref{exa:graphs}, showing that \ocspg\ is
  W[1]-hard. On the other hand, we show that $\ocsp(\{R\})$ is fixed-parameter tractable. 

  Let $S_1$ be the set of variables $v$ where value 1 can appear, that
  is, there is no constraint $\ang{(v',v'',v),R}$ on $v$. Observe that
  any combination of 0 and 1 on $S_1$ is a satisfying
  assignment. Therefore, if $|S_1|\ge k$, then there is a solution.
  Let $S_2$ be the set of variables $v$ where value 2 can appear, that
  is, there is no constraint $\ang{(v,v',v''),R}$ or
  $\ang{(v',v,v''),R}$ on $v$. Observe that any combination of 0 and 2
  on $S_2$ is a satisfying assignment. Therefore, if $|S_2|\ge k$,
  then there is a solution. On the variables not in $S_1$ and $S_2$,
  only value 0 can appear. Therefore, if $|S_1|,|S_2|<k$, then the
  number of possible assignments that we need to try is
  $2^{|S_1|+|S_2|}<2^{2k}$.
\end{example}

\begin{example}\label{exa:cc-hard2}\rm
  Let $R=(\{0,1,2\}\times \{0,1,2\}\times \{0,2\})\setminus
  \{(1,1,2)\}$.  Let $\Gamma$ be the cc-closure of $R$. Note that
  $\Gamma$ contains $R^{|3;2}=(\{0,1,2\}\times \{0,1,2\})\setminus
  \{(1,1)\}$. Therefore, $\Gamma_{|\{0,1\}}$ contains the relation
    $\{(0,0),(1,0),(0,1)\}$, which is the relation $R_{IS}$ of
    Example~\ref{exa:graphs}, showing that $\ccsp(\Gamma_{|\{0,1\}})$
    is W[1]-hard. We can reduce $\ccsp(\Gamma_{|\{0,1\}})$ to \ccspg\
    by setting the cardinality constraint of value 2 to 0, thus the
    W[1]-hardness of \ccspg\ follows. On the other hand, we show that
    $\ccsp(\{R\})$ is fixed-parameter tractable.

    Let $S$ be the set of variables $v$ where value 1 can appear, that
    is, there is no constraint $\ang{(v',v'',v),R}$ on $v$. Observe
    that any combination of 0, 1, and 2 on $S$ is a satisfying
    assignment. Therefore, if $|S|\ge k$, then there is a solution.
    If $|S|<k$, then we can try every possible substitution of
    constants into $S$ and obtain instances where 1 can no longer
    appear. As the problem on $\Gamma_{|\{0,2\}}$ is trivial,
    fixed-parameter tractability follows.
\end{example}

\begin{example}\label{exa:cc0-languages}\rm
The language consisting of the relations
\[
R_1=\left(\begin{array}{llll}
0&1&0&1\\
0&1&0&1\\
0&0&2&2\\
  \end{array}\right) \
R_2=\left(\begin{array}{ll}
0&0\\
0&2\\
  \end{array}\right) \
R_3=\left(\begin{array}{ll}
0&2
  \end{array}\right) \
R_4=\left(\begin{array}{ll}
0&1\\
0&1\\
  \end{array}\right) \
R_5=\left(\begin{array}{ll}
0&1\\
  \end{array}\right) \
R_6=\left(\begin{array}{l}
0\\
  \end{array}\right) 
  \]
is a cc0-language, but not a cc-language: substituting 1 into the
first coordinate of $R_1$ results in the (non 0-valid) relation $\{(1,0),(1,2)\}$, which is not in the language.
\end{example}

\begin{example}\label{exa:weaklysep1}\rm
\begin{itemize}
\item Relation $R_{IS}$ is not weakly separable:
  $(R_{IS},(1,0),(0,1))$ is a union counterexample (i.e., $(1,1)\not\in R_{IS}$).
\item Relation $R_{IM}$ is not weakly separable:
  $(R_{IM},(1,0),(0,1))$ is a difference counterexample.
\item The $r$-ary relation \[R_\text{even}=\{(a_1,\dots,a_r)\in
  \{0,1\}^r \mid \text{$a_1+\dots +a_r$ is even}\}\] is weakly
  separable. Indeed,  if there is an even number
  of 1's in each of $\bt_1$, $\bt_2$ and they are disjoint, then their
  union also contains an even number of 1's. Similarly, if $\bt_1+\bt_2$
  and $\bt_2$ contain even number of 1's, then so does $\bt_1$.

\item We can generalize $R_\text{even}$ the following way: let us define the
  $r$-ary $R_\text{mod-$p$}$ relation over the domain $\{0,\dots, d\}$
  the following way: \[
R_\text{mod-$p$}=\{(a_1,\dots,a_r)\in
  \{0,\dots,d\}^r \mid \text{$a_1+\dots +a_r=0 \mod p$} \}.\]
It is easy to see that this relation is weakly separable as well.
\end{itemize}

\end{example}

\begin{example}\label{exa:weaklysep}\rm
We demonstrate how hardness is proved in the Boolean case \cite{Marx05:parametrized} if
there is a relation that is not weakly separable.
Let $\Gamma$ be a cc0-language over $\{0,1\}$ that is not weakly
  separable. Suppose that there is a union counterexample
  $(R,\bt_1,\bt_2)$ in $\Gamma$; suppose for example that
\[
  \bt_1=(\overbrace{1,\dots,1}^p,0,\dots,0),\
  \bt_2=(0,\dots,0,\overbrace{1,\dots,1}^q,0,\dots,0).
\] Since $\Gamma$ is a
  cc0-language, by substituting 0's in the last coordinates, we can
  obtain a relation $R'\in \Gamma$ and a union counterexample
  $(R',\bt'_1,\bt'_2)$ with 
\[
\bt'_1=(\overbrace{1,\dots,1}^p,0,\dots,0),\ 
 \bt'_2=(0,\dots,0,\overbrace{1,\dots,1}^q).\] Now it is easy to see that
 the W[1]-hard problem $\ocsp(R_{IS})$ is reducible to $\ocsp(\Gamma)$: a constraint
  $\ang{(x,y),R_{IS}}$ can be simulated by a constraint
  $\ang{(x,\dots,x,y,\dots,y),R'}$. The correctness of this
  reduction follows from 
\begin{align*}
(0,\dots,0,0,\dots,0)&\in R'\\
(\overbrace{1,\dots,1}^p,0,\dots,0)&\in R'\\
(0,\dots,0,\overbrace{1,\dots,1}^q)&\in R'\\
(\overbrace{1,\dots,1,1,\dots,1}^{p+q})&\not\in R'.
\end{align*}

Suppose now that there is a difference counterexample
$(R,\bt_1,\bt_2)$. As above, let us suppose that there is also a
difference counterexample $(R',\bt'_1,\bt'_2)$ with
$\bt'_1=(1,\dots,1,0,\dots,0)$ and $\bt'_2=(0,\dots,0,1,\dots,1)$. Now
we can reduce the W[1]-hard problem $\ocsp(R_{IM})$ to
$\ocsp(\Gamma)$. Since we have $(0,\dots,0),(1,\dots,1),
(0,\dots,0,1,\dots,1)\in R'$ and $(1,\dots,1,0,\dots,0)\not\in R'$,
now a constraint $\ang{(x,\dots,x,y,\dots,y),R'}$ expresses the
relation $R_{IM}$.
\end{example}

\begin{example}\label{exa:nonweaklysepeasy}\rm
Consider the cc0-closure $\Gamma$ of the following relation:
\[
R=\left(\begin{array}{lllll}
0 & 1 & 2 & 0 & 2\\
0 & 0 & 0 & 2 & 2\\
\end{array}\right),
  \]
  Clearly, $\Gamma$ is not weakly separable: $(R,(1,0),(0,2))$ is a
  union counterexample. Nevertheless, both \ocspg\ and \ccspg\ are
  fixed-parameter tractable. In the case of \ocspg, every 1 in a
  solution can be replaced by 2. Hence it is sufficient to solve the
  problem restricted to $\{0,2\}$, in which case the problem is
  trivial, as every combination of 0 and 2 is a satisfying assignment.
  For \ccspg, we argue as follows.  Let $S$ be the set of variables
  $v$ where value 1 can appear, that is, there is no $\ang{(v',v),R}$
  or any unary constraint excluding 1 on $v$. Observe that any
  combination of 0, 1, and 2 on $S$ is a satisfying assignment. Therefore,
  if $|S|\ge k$, then there is a solution.  If $|S|<k$, then we can
  try every possible substitution of constants into $S$ and obtain
  instances where 1 can no longer appear. As
  $\ccsp(\Gamma_{|\{0,2\}})$ is trivial, solving these instances is
  fixed-parameter tractable.
\end{example}

\begin{example}\label{exa:nonweaklysep}\rm
The relation $R_{IM}$ of Example~\ref{exa:graphs} is not weakly
separable. Consider a CSP instance on three variables $v_1$, $v_2$,
$v_3$ having two constraints $\ang{(v_1,v_3),R_{IM}}$ and
$\ang{(v_2,v_3),R_{IM}}$. The assignment $(0,0,1)$ is the only minimal
satisfying assignment. Assignment $(1,1,1)$ is satisfying, but it
cannot be obtained as the union of pairwise disjoint satisfying
assignments (even if we do not require {\em minimal} satisfying assignments).
\end{example}

\begin{example}\label{exa:endo-homo}\rm
\begin{itemize}
\item
The cc0-closure $\Gamma_{p-PC}$ of $\{R_{p-PC}\}$ (defined in Example~\ref{exa:multipartite-clique}) consists of
$R_{p-PC}$ itself and unary relations $\{(0)\}$ and
$D=\{0,1,\ldots,p\}$. It is easy to see that every mapping $h:D\to D$
with $h(0)=0$ is an endomorphism of $\Gamma_{p-PC}$, for any sets
$0\in D_1,D_2\sse D$, any mapping $f:D_1\to D_2$ with
$f(0)=0$ is an inner homomorphism of $\Gamma_{p-PC}$.
 Finally, any mapping $\phi:D_1\to 2^{D_2}$ such that $0\in
 D_1,D_2\sse D$ and $\phi(0)=\{0\}$ is a
 multivalued morphism or an inner multivalued morphism of $\Gamma_{p-PC}$. 

\item Multivalued morphisms of relation $R_{p-MC}$ and relations from
  its cc0-closure are the mappings $\phi:D\to 2^D$ satisfying
  $\phi(d)\subseteq\{0,d\}$ for every $d\in D$. A mapping $g:D\to D$
  is an endomorphism of $R_{p-MC}$ if and only if $g(d)\in\{0,d\}$ for
  each $d\in D-\{0\}$ and $g(0)=0$. Inner homomorphisms and inner
  multivalued morphisms can be described in a similar way.

\item The constraint $R_{\le,d}$ generalizes $R_{IM}$ to the domain
  $\{0,1,\dots,d\}$: $R_{\le,d}=\{(x,y)\in \{0,1,\dots,d\}^2 \mid x\le
  y\}$. A function $h$ with $h(0)=0$ is an endomorphism of $R_{\le,d}$
  if and only if it is monotone, i.e., $h(x)\le h(y)$ for every $x\le
  y$. $R_{\le,d}$ does not have a multivalued morphism that is not an
  endomorphism: if $\psi$ is a multivalued morphism of $R_{\le,d}$
  with $a,b\in \psi(x)$, then $(x,x)\in R_{\le,d}$ implies that both
  $(a,b)$ and $(b,a)$ are in $R_{\le,d}$.

\item The constraint $R_{<,d}$ is defined similarly to $R_{\le,d}$,
  but we also add the tuple $(0,0)$ to make it 0-valid:
  $R_{\le,d}=(0,0)\cup \{(x,y)\in \{0,1,\dots,d\}^2 \mid x< y\}$. In
  this case, there are nontrivial multivalued morphisms: for example,
  $\psi(0)=\psi(1)=\psi(2)=\{0\}$, $\psi(3)=\{1,2\}$,
  $\psi(4)=\{3\}$ is a multivalued morphism of $R_{<,4}$.
\end{itemize}
\end{example}

\begin{example}\label{exa:produces}\rm
In the cc0-closure of the relation
\[
R=\left(\begin{array}{lllll}
0 & 1 & 0 & 1 & 2\\
0 & 0 & 1 & 1 & 2\\
\end{array}\right),
  \]
both 1 and 2 produce 1, but neither 1 nor 2 produces 2.
\end{example}

\begin{example}\label{exa:types}\rm
Consider the cc0-closure of the following two relations:
\[
R_1=\left(\begin{array}{lllllllllll}
0 & 1 & 1 & 2 & 0 & 2 & 3 & 0 & 3 & 4& 4 \\
0 & 0 & 1 & 0 & 2 & 2 & 0 & 3 & 3 & 5& 4\\
\end{array}\right)\
R_2=\left(\begin{array}{lllllll}
0& 1& 0 & 1&3 & 4& 5\\
0& 0& 1 & 1&3 & 4& 5\\
\end{array}\right).
  \]
Value 2 produces 2 and 3, but there are no other producing
relations. Thus 3 is degenerate and 2 is self-producing. The mapping
$\psi(0)=\psi(1)=\psi(2)=\psi(3)=\{0\}$, $\psi(4)=\{1\}$, $\psi(5)=\{0,1\}$ is
a multivalued morphism, thus 1 is semiregular. Values 4 and 5 are regular.
\end{example}

\begin{example}\label{exa:nonweaklysep-easy}\rm

Consider the cc0-closure $\Gamma$ of following two relations:
\[
R_1=\left(\begin{array}{lll}
0 & 3 & 2\\
0 & 1 & 2\\
0 & 2 & 1\\
\end{array}\right)\
R_2=\left(\begin{array}{lllllllllllll}
0 & 1 & 0 & 1 & 2 & 0 & 2 & 0 & 3 & 2 & 1 & 2 & 1\\
0 & 0 & 1 & 1 & 0 & 2 & 2 & 3 & 3 & 1 & 2 & 3 & 3\\
\end{array}\right).
  \]
$\Gamma$ is not weakly separable: $(R_2,(3,0),(0,3))$ is a difference
counterexample. Let us observe that $h(0)=0$, $h(1)=2$, $h(2)=1$,
$h(3)=2$ is an endomorphism (a proper contraction) of $\Gamma$. Therefore, by applying $h$ on a
solution for an \ocspg\ instance, we can obtain a solution using only
the values 0, 1, and 2. $\Gamma$ restricted to $\{0,1,2\}$ is weakly
separable, thus finding such solutions is FPT.
\end{example}

\begin{example}\label{exa:ccsphard}\rm
We have seen in Example~\ref{exa:nonweaklysepeasy} that \ccspg\ is
 fixed-parameter tractable for the cc0-closure $\Gamma$ of the following relation:
\[
R=\left(\begin{array}{lllll}
0& 1& 2 & 0&2 \\
0& 0& 0 & 2&2 \\
\end{array}\right).
\]
Let us verify that Theorem~\ref{th:multimain} indeed classifies \ccspg\ as
fixed-parameter tractable.  $\Gamma$ is not weakly separable:
$(R,(1,0),(0,2))$ is a union counterexample. However, $\Gamma$ is not
a core: value 1 is self-producing, 2 is degenerate, and the component
generated by 1 is $\{1\}$. $\Gamma_{|\{0,1\}}$ and $\Gamma_{|\{0,2\}}$
are cores, but they are weakly separable. Thus by
Theorem~\ref{th:multimain}, \ccspg\ is fixed-parameter tractable.

Consider the cc0-closure $\Gamma$ of following relation:
\[
R=\left(\begin{array}{lllllll}
0& 0& 0 & 3&3 & 0&3\\
0& 1& 2 & 0&2 & 4&4\\
0& 2& 0 & 0&0 & 0&0\\
\end{array}\right).
\]
Value 3 produces 3, 1 produces 4, and both 1 and 4 produces 2, but
there are no other producing relations. Thus 2 and 4 are degenerate
and 3 is self-producing. We can also see that 1 is regular. The core
of $\Gamma$ is the component generated by $\{1,3\}$ which is
$K=\{1,2,3\}$. Thus $\Gamma$ is not a core. However,
$\Gamma_{|\{0,1,2,3\}}$ is a core and it is not weakly separable:
$(R,(3,0,0),(0,1,2))$ is a union counterexample. Thus by
Theorem~\ref{th:multimain}, \ccspg\ is W[1]-hard.
\end{example}

 \bibliographystyle{abbrv}  
\bibliography{global}

\end{document}